\newcommand{\Nats}{\mathbb{N}}
\newtheorem{problem}{Problem}
\newtheorem{example}{Example}
\newtheorem{remark}{Remark}
\newcommand{\figlabel}[1]{\label{fig:#1}}
\newcommand{\figref}[1]{Figure~\ref{fig:#1}}
\newcommand{\seclabel}[1]{\label{sec:#1}}
\newcommand{\secref}[1]{Section~\ref{sec:#1}}
\newcommand{\problabel}[1]{\label{prob:#1}}
\newcommand{\thmlabel}[1]{\label{thm:#1}}
\newcommand{\thmref}[1]{Theorem~\ref{thm:#1}}
\newcommand{\lemlabel}[1]{\label{lem:#1}}
\newcommand{\lemref}[1]{Lemma~\ref{lem:#1}}
\newcommand{\applabel}[1]{\label{app:#1}}
\newcommand{\appref}[1]{Appendix~\ref{app:#1}}
\newcommand{\tablabel}[1]{\label{tab:#1}}
\newcommand{\tabref}[1]{Table~\ref{tab:#1}}
\newcommand{\set}[1]{\{#1\}}
\newcommand{\setpred}[2]{\{#1 \,|\, #2\}}
\newcommand{\tuple}[1]{\langle #1 \rangle}
\renewcommand{\emptyset}{\varnothing}
\newcommand{\s}[1]{\mathsf{#1}}
\newcommand{\proj}[2]{{#1}{\downharpoonright}_{#2}}
\colorlet{colorSND}{blue!80!black}
\colorlet{colorRCV}{green!40!black}
\colorlet{colorCREATE}{yellow!50!black}
\colorlet{colorSPAWN}{magenta}
\colorlet{colorCLOSE}{red!80!black}
\colorlet{colorRF}{purple!80!black}
\colorlet{colorClause}{red!30!gray}
\colorlet{colorVar}{gray!80}
\colorlet{colorSTAR}{green!40!black}
\newcommand{\ch}{\mathtt{ch}}
\newcommand{\val}{\s{val}}
\newcommand{\snd}{\mathtt{\color{colorSND}snd}}
\newcommand{\rcv}{\mathtt{\color{colorRCV}rcv}}
\newcommand{\createCh}{\mathtt{\color{colorCREATE}create}}
\newcommand{\closeCh}{\mathtt{\color{colorCLOSE}close}}
\newcommand{\spawnThr}{\mathtt{\color{colorSPAWN}spawn}}
\newcommand{\rd}{\rcv}
\newcommand{\wt}{\snd}
\newcommand{\pop}{\rcv}
\newcommand{\push}{\snd}
\newcommand{\wtJustVal}{\snd}
\newcommand{\rdJustVal}{\rcv}
\newcommand{\wtMem}{\s{w}}
\newcommand{\rdMem}{\s{r}}
\newcommand{\trace}{\sigma}
\newcommand{\thread}{\tau}
\newcommand{\tr}{\trace}
\newcommand{\prefix}{\pi}
\newcommand{\ev}[1]{\tuple{{#1}}}
\newcommand{\op}{\s{op}}
\newcommand{\reordering}{\rho}
\newcommand{\ThreadOf}[1]{\s{th}(#1)}
\newcommand{\OpOf}[1]{\s{op}(#1)}
\newcommand{\ChOf}[1]{\s{ch}(#1)}
\newcommand{\ValueOf}[1]{\s{val}(#1)}
\newcommand{\cp}[1]{\cpFunc\s{(#1)}}
\newcommand{\cpFunc}{\s{cap}}
\newcommand{\eventSet}{\mathsf{S}}
\newcommand{\AbstractExecution}{\mathcal{X}}
\newcommand{\events}[1]{\s{Events(#1)}}
\newcommand{\threads}[1]{\s{Threads(#1)}}
\newcommand{\channels}[1]{\s{Channels(#1)}}
\newcommand{\lk}{\ell}
\newcommand{\Channels}{\mathcal{C}}
\newcommand{\fgEventSet}{Y}
\newcommand{\fgChanMap}{Q}
\newcommand{\fgSyncSend}{I}
\newcommand{\ord}[2]{\leq^{#1}_{\mathsf{#2}}}
\newcommand{\strictOrd}[2]{<^{#1}_{\mathsf{#2}}}
\newcommand{\strictord}[2]{<^{#1}_{\mathsf{#2}}}
\newcommand{\acrtr}{\textsf{tr}\xspace}
\newcommand{\acrpo}{\textsf{po}\xspace}
\newcommand{\acrrf}{\textsf{rf}\xspace}
\newcommand{\trord}[1]{\strictOrd{#1}{\acrtr}}
\newcommand{\poord}[1]{\ord{#1}{\acrpo}}
\newcommand{\stricttrord}[1]{\strictord{#1}{\acrtr}}
\newcommand{\po}[1]{{\acrpo}_{#1}}
\newcommand{\rf}[1]{{\acrrf}_{#1}}
\newcommand{\satord}{\strictord{}{sat}}
\newcommand{\NP}{\textsf{NP}}
\newcommand{\NumEvents}{n}
\newcommand{\NumThreads}{t}
\newcommand{\NumChannels}{m}
\newcommand{\MaxCapacity}{k}
\newcommand{\vch}{\textsf{VCh}\xspace}
\newcommand{\vchRf}{\textsf{\vch{-}rf}\xspace}
\newcommand{\vscRd}{\textsf{VSC{-}read}\xspace}
\newcommand{\gfrontier}{G_{\s{frontier}}}
\newcommand{\gfrontierRf}{G_{\s{frontier}}^{\rf{}}}
\newcommand{\gsync}{G_{\s{sync}}}
\newcommand{\form}{\varphi}
\newcommand{\trans}{\textsf{trans}}
\newcommand{\exactlyOne}{\textsf{exactly-1}}
\newcommand{\fifo}{\textsf{FIFO}}
\newcommand{\unmatched}{\textsf{unmatched}}
\newcommand{\capOne}{\textsf{cap=1}}
\newcommand{\sync}{\textsf{sync}}
\newcommand{\pred}[2]{\s{pred}_{#1}(#2)}
\newcommand{\sucr}[2]{\s{succ}_{#1}(#2)}
\newcommand{\numVar}{{n_v}}
\newcommand{\numClause}{{n_c}}
\newcommand{\chanCnt}{{\sf cnt}}
\newcommand{\numEdge}{{n_e}}
\newcommand{\numNode}{{n_v}}
\newcommand{\outDegree}[1]{\s{out}(#1)}
\newcommand{\inDegree}[1]{\s{in}(#1)}
\newcommand{\atom}{\textcolor{cyan!80!black}{\sf atomic}}
\tikzstyle{graphNode}=[circle,  fill=gray!10, draw=black, very thick]
\tikzstyle{thread}=[->, line width=1mm]
\tikzstyle{threadName}=[text width=2cm, align=center, font=\normalfont]
\tikzstyle{event}=[draw=black, align=center, fill=white, line width=0.8pt, minimum width=\wid cm,minimum height=\height cm]
\tikzstyle{nodeRectangle}=[draw=black, align=center, fill=white, line width=0.8pt, minimum width=\wid cm,minimum height=\height cm, rounded corners]
\tikzstyle{edge}=[->, line width=0.4mm, draw=black]
\tikzstyle{rfEdge}=[->, line width=0.4mm, draw=colorRF]
\tikzstyle{executeEdge}=[->, line width=0.4mm, draw=black]
\newcommand{\pointsTo}{=}
\newcommand{\fgAlgo}{\mathsf{FG}}
\newcommand{\fgAlgoSat}{\mathsf{FG\text{-}Sat}}
\newcommand{\tsan}{\textsc{ThreadSanitizer}\xspace}
\newcommand{\smt}{\mathsf{SMT}}
\newcommand{\smtSat}{\mathsf{SMT\text{-}Sat}}
\newcommand{\satPO}{\satord}
\newcommand{\revision}[1]{#1}
\newlist{compactitem}{itemize}{3} 
\setlist[compactitem]{label=\textbullet,nosep,leftmargin=*}
\newlist{compactenum}{enumerate}{3}
\setlist[compactenum,1]{label=\arabic*., nosep, leftmargin=*}
\setlist[compactenum,2]{label=\alph*), nosep, leftmargin=*}
\setlist[compactenum,3]{label=\roman*), nosep, leftmargin=*}
\newlist{compactwideenum}{enumerate}{3} 
\setlist[compactwideenum]{label=\textit{\arabic*}., nosep,leftmargin=*,wide}
\newcommand{\execution}[2]{
\scalebox{0.85}{
  \begin{tikzpicture}%
    \foreach \x in {1,...,#1}
    \node at (1.9*\x+0.3,0.25) {$\thread_{\x}$};
    \draw (1.2,0) -- (#1*1.9+1.1,0);%
    \pgfmathsetmacro{\y}{1};%
    #2%
    \draw (1.2,0) -- (1.2,-0.5*\y);%
    \draw (#1*1.9+1.1,0) -- (#1*1.9+1.1,-0.5*\y);%
    \foreach \x in {2,...,#1}
    \draw (1.2,-0.5*\y) -- (#1*1.9+1.1,-0.5*\y);%
  \end{tikzpicture}
}
}
\newcommand{\figev}[2]{
\pgfmathsetmacro{\y}{\y+1};
\pgfmathsetmacro{\y}{\y-1};
\node [left] at (1.2, -0.5*\y)  
{\pgfmathprintnumber{\y}};%
\node at (#1*1.9 + 0.2, -0.5*\y) {$ #2 $};%
\pgfmathsetmacro{\y}{\y+1};
}
\begin{document}






\title{The Complexity of Testing Message-Passing Concurrency}

\author{Zheng Shi}
\orcid{0000-0001-5021-7134}
\affiliation{%
  \institution{National University of Singapore}
  \city{Singapore}
  \country{Singapore}
}
\email{shizheng@u.nus.edu}

\author{Lasse M\o{}ldrup}
\orcid{0009-0005-9670-7039}
\affiliation{%
  \institution{Aarhus University}
  \city{Aarhus}
  \country{Denmark}
}
\email{moeldrup@cs.au.dk}

\author{Umang Mathur}
\orcid{0000-0002-7610-0660}
\affiliation{%
  \institution{National University of Singapore}
  \city{Singapore}
  \country{Singapore}
}
\email{umathur@comp.nus.edu.sg}

\author{Andreas Pavlogiannis}
\orcid{0000-0002-8943-0722}
\affiliation{%
  \institution{Aarhus University}
  \city{Aarhus}
  \country{Denmark}
}
\email{pavlogiannis@cs.au.dk}


\begin{abstract}
A key computational question underpinning the automated testing and verification of concurrent programs is the \emph{consistency question} --- \emph{given a partial execution history, can it be completed in a consistent manner?}
Due to its importance, consistency testing has been studied extensively for memory models, as well as for database isolation levels. 
A common theme in all these settings is the use of shared-memory as the primal mode of interthread communication.
On the other hand, modern programming languages, such as Go, Rust and Kotlin, advocate a paradigm shift towards channel-based (i.e., message-passing) communication.
However, the consistency question for channel-based concurrency is currently poorly understood.

In this paper we lift the study of fundamental consistency problems to channels, taking into account various input parameters, such as the number of threads executing, the number of channels, and the channel capacities.
We draw a rich complexity landscape, including upper bounds that become polynomial when certain input parameters are fixed,
as well as hardness lower bounds.
Our upper bounds are based on 
algorithms that can drive the verification of channel consistency in automated verification tools.
Our lower bounds characterize minimal input parameters that are sufficient for hardness to arise, and thus shed light on the intricacies of testing channel-based concurrency.
In combination, our upper and lower bounds characterize the boundary of \emph{tractability/intractability} of verifying channel consistency, and imply that our algorithms are often (nearly) optimal. 
We have also implemented our main consistency checking algorithm
and designed optimizations to enhance its performance.
We evaluated the performance of our implementation
over a set of 103 instances obtained from open source Go projects, and compared it
against a constraint-solving based algorithm. 
Our experimental results demonstrate the power of our consistency-checking algorithm; it scales to around 1M events, and is
significantly faster in running-time performance,
compared to a constraint-solving approach.
\end{abstract}

\keywords{concurrency, message passing, testing, consistency, channels, Golang}

\setcopyright{cc}
\setcctype{by}
\acmDOI{10.1145/3776643}
\acmYear{2026}
\acmJournal{PACMPL}
\acmVolume{10}
\acmNumber{POPL}
\acmArticle{1}
\acmMonth{1}
\received{2025-06-30}
\received[accepted]{2025-11-06}

\begin{CCSXML}
<ccs2012>
   <concept>
       <concept_id>10011007.10011074.10011099</concept_id>
       <concept_desc>Software and its engineering~Software verification and validation</concept_desc>
       <concept_significance>500</concept_significance>
       </concept>
   <concept>
       <concept_id>10003752.10010070</concept_id>
       <concept_desc>Theory of computation~Theory and algorithms for application domains</concept_desc>
       <concept_significance>300</concept_significance>
       </concept>
 </ccs2012>
\end{CCSXML}

\ccsdesc[500]{Software and its engineering~Software verification and validation}
\ccsdesc[300]{Theory of computation~Theory and algorithms for application domains}

\maketitle

\section{Introduction}
\seclabel{intro}


The verification and testing of concurrent programs has been a 
major challenge in programming languages and formal methods.
Inter-thread communication leads to a combinatorial blow-up in the set of program behaviors, 
which makes program development error prone and program analysis computationally challenging.
Nevertheless, a multitude of techniques have been developed for analyzing concurrent programs automatically, such as bounded model checking~\cite{wu2023model, dal2024model}, partial order reduction~\cite{storey2021sound, abdulla2014optimal, Kokologiannakis2022tRULY}, predictive runtime testing~\cite{Said11,wcp2017,mathur2021optimal},
fuzz testing~\cite{wolff2024greybox, wen2022controlled, musuvathi2008finding}, 
and static analysis~\cite{liu2021threads, muller2024language}.
The vast majority of these techniques operate under the assumption that 
interthread communication takes place over \emph{shared memory}.

One key problem that has driven the development 
of algorithmic techniques for concurrency testing and verification is \emph{consistency testing}.
At a high level, the input to the problem is a thread-level observable execution of the program (e.g., a sequence of events executed by each thread), without memory-level information about how threads interacted (e.g., a precise thread interleaving, or the order in which writes appeared in the shared memory).
The output to the problem is YES iff the thread-level behavior is aligned 
with the specifics of the underlying architecture (e.g., the memory model).
The complexity of consistency testing has been a subject of systematic study for both Sequential Consistency (SC)~\cite{gibbons1994testing,gibbons1997testing,cantin2005complexity,Mathur2020,Tunc2024} and weak memory models~\cite{Furbach2015,Lahav2015,chakraborty2024hard,tuncc2023optimal},
as  well as for database isolation levels \cite{biswas2019complexity,biswas2019complexity2, 
bouajjani2023dynamic,Moldrup2025}.
These results have propelled the development of techniques for model checking programs under 
SC~\cite{Chalupa2018,Abdulla2019b,Chatterjee2019,Agarwal2021,Kokologiannakis2022} and 
weak memory~\cite{Abdulla2018,Bui2021,PoncedeLeon2021,Kokologiannakis2021}, 
as well as for effective testing~\cite{Huang2014,Kalhauge2018,Pavlogiannis2020,Luo2021,mathur2021optimal,Biswas2021}.
In the context of model checking for concurrent programs, 
efficient consistency testing is key for balancing
optimality (i.e., non-redundancy) and performance of the partial-order-reduction based
exploration~\cite{Kokologiannakis2022,abdulla2014optimal,Abdulla2018}.
On the other hand, 
in the context of runtime predictive analysis, consistency checking 
plays the dual role of efficiently exploring an 
entire class of executions that can be inferred from a given execution, without
explicitly enumerating members of the class~\cite{Mathur2020,Said11,Huang2014}.

In order to make concurrent programming more seamless and reliable, 
modern programming languages advocate for interthread communication
mechanisms that are structured and offer clean abstractions.
One such case is the use of \emph{message-passing}, popularized by the use of \emph{channels} in Go~\cite{Go-language}, and also used frequently in other mainstream languages, such as
Rust~\cite{Rust-language},
Scala~\cite{Scala-language}, 
Erlang~\cite{Erlang-language} 
and Kotlin~\cite{Kotlin-language}.

Despite the structured communication offered by explicit channel-based 
concurrency in such languages, 
subtle concurrency bugs in large-scale software written in 
this paradigm remain widespread~\cite{GoBugsStudy2019,GoBugsUber2022,GFuzz2022}.
In turn, this requires the development of verification and testing methods 
that are capable of reasoning about channels effectively, in order to capture the 
behaviors entailed by the programs they target~\cite{Sulzmann2017,Sulzmann2018,Chabbi2022}.
However, the core problem of consistency testing 
has thus far been elusive for channel-based communication:~\emph{How fast can we verify the consistency of message-passing executions?}
We address this question in this work, by drawing a rich landscape  of the complexity of the problem depending on various input parameters.
Besides the technical merit of our results, they also provide a precise characterization of the ingredients that make the consistency problem  hard.
Likewise, the algorithms we propose can be employed in techniques where soundness 
and completeness are paramount and a strict upper bound on computational resources is desired.



\subsection{Motivating Example}\label{SUBSEC:MOTIVATING_EXAMPLE}

We illustrate the need for consistency checking on channels by means of a simple example where this problem arises naturally.
The Go programming language primarily uses the message-passing concurrency paradigm,
and offers \emph{channels} as a first class abstraction for interthread communication.
A channel in Go is a $\fifo$ queue, possibly with some capacity~\cite{Go-channel-impl},
which a thread can create, close, send to and receive from~\cite{Go-channel}. 

\begin{figure}[htbp]
\definecolor{LightGray}{gray}{0.95}
\centering
\begin{subfigure}[b]{0.38\textwidth}
\centering
\definecolor{MyKeywordColor}{RGB}{0,128,63}
\lstdefinelanguage{GoCustom}{
    language=Go,
    morekeywords={go, func, chan, make, close}, 
}

\lstset{
    language=GoCustom,
    basicstyle=\scriptsize\ttfamily,
    numbers=left,
    numbersep=2pt,
    xleftmargin=1.0em,
    frame=lines,
    framesep=2mm,
    keywordstyle=\color{MyKeywordColor}\bfseries,      
    commentstyle=\color{green!50!black}\itshape, 
    stringstyle=\color{orange},             
    showstringspaces=false,
    breaklines=true,
    lineskip=1.5pt
}
\begin{lstlisting}
func main() {
    asyncCh := make(chan int, 2)
    go func(asyncCh chan int) {
        asyncCh <- 1
    }(asyncCh)
    asyncCh <- 1
    <-asyncCh
    close(asyncCh)
}
\end{lstlisting}
\caption{A buggy Go code snippet}
\figlabel{go-code-demo}
\end{subfigure}
\hfill
\begin{subfigure}[b]{0.30\textwidth}
\centering
\execution{2}{
        \figev{1}{\createCh(\ch)}
	\figev{1}{\spawnThr(\thread_2)}
        \figev{2}{\wt(\ch, 1)}
        \figev{1}{\wt(\ch, 1)}
        \figev{1}{\rd(\ch, 1)}
	\figev{1}{\closeCh(\ch)}
}
\vspace{0.3cm}
\caption{A non-buggy execution $\trace$.}
\figlabel{go-code-demo-valid-exec}
\end{subfigure}
\begin{subfigure}[b]{0.30\textwidth}
\centering
\execution{2}{
        \figev{1}{\createCh(\ch)}
	\figev{1}{\spawnThr(\thread_2)}
        \figev{1}{\wt(\ch, 1)}
        \figev{1}{\rd(\ch, 1)}
        \figev{1}{\closeCh(\ch)}
        \figev{2}{\wt(\ch, 1)}
}
\vspace{0.3cm}
\caption{A buggy execution $\trace'$.}
\figlabel{go-code-demo-buggy-exec}
\end{subfigure}
\caption{A buggy Go code snippet on channels with two possible executions}
\figlabel{go-language}
\end{figure}

\myparagraph{Channel operations} 
\figref{go-code-demo} presents a snippet showing the basic channel operations in Go.
The main thread creates an asynchronous channel of capacity~2 (Line~2),
and passes it as an argument to a child thread executing the goroutine (Line~3--Line~5).
The child thread sends value $1$ to the channel (Line~4).
The main thread sends value $1$ to the channel (Line~6), and then receives from it (Line~7), before closing it (Line~8).

\myparagraph{Consistency checking in predictive testing}
Observe that the program in \figref{go-code-demo} has a bug:~the main thread may execute all its operations and close the channel before the child thread begins to execute.
This will cause the child thread to attempt to send to a closed channel, causing the program to panic.
Like with many concurrency bugs, exposing this faulty program behavior depends 
on the scheduler and can be quite challenging.
One popular approach to tackle this challenge is through \emph{predictive} runtime testing~\cite{Said11,mathur2021optimal,wcp2017}.
Here, as the first step, the program is executed without any explicit schedule control, giving rise to an execution $\trace$.
Due to the lack of explicit control, $\trace$ is likely to be error-free, i.e., it does not expose the presence of a bug.
\figref{go-code-demo-valid-exec} shows such an execution of the program in \figref{go-code-demo}.
In the second step, $\trace$ is analyzed with the goal of constructing an alternative execution $\trace'$ that exposes the bug.
Here $\trace'$ is a permutation of (a slice of) $\trace$ and is required to be \emph{sound}, i.e.,
it can provably be executed by any program that produced $\trace$.
\figref{go-code-demo-buggy-exec} shows such a permutation $\trace'$.

The soundness requirement for $\trace'$ essentially entails a consistency check.
Specifically, each thread must execute the same sequence of operations 
in $\trace'$ as it did in $\trace$, but the interleaving between
threads may differ.
This leads us to consider an \emph{abstract execution} that can be extracted
from $\trace$ and captures the per-thread
event sequences together with additional ordering constraints;
for instance, in our example, we additionally require that
$\closeCh(\texttt{ch})$ of thread $\thread_1$ executes before $\snd(\texttt{ch}, 1)$ of thread $\thread_2$, 
while leaving the precise interleaving unspecified. 
The core question then becomes: can this abstract execution be realized as a concrete
trace $\trace'$ that respects channel semantics? This is precisely the consistency checking problem.


\subsection{The consistency checking problem: Two variants}
\revision{
We begin with a brief overview of the message-passing consistency checking problem, which is formally defined in \secref{preliminaries}.
In general, this problem takes as input a set $\eventSet$ of send and receive events, along with a partial order $P$ over $\eventSet$.
The partial order $P$ captures the ordering of events within each thread and may also order a channel send before its corresponding receive.
The objective is to determine whether there exists a total order on $\eventSet$ that is consistent with $P$ and satisfies the FIFO constraints induced by the channels.
As an intuitive example, the input to the consistency checking problem may be the set of $6$ events in the execution of \figref{go-code-demo-valid-exec} together with the program order (for example, the send event $\wt(\ch, 1)$ of thread $\tau_1$ is ordered before the receive event $\rd(\ch, 1)$ of the same thread), but without the interleaving of the two threads (and the output would be true in this case, since this set can be linearized to both the total order of \figref{go-code-demo-valid-exec} as well as of \figref{go-code-demo-buggy-exec}).
}

More formally, the message-passing consistency problem takes as input either a pair
$\tuple{\AbstractExecution, \cpFunc}$ or a triplet $\tuple{\AbstractExecution, \cpFunc, \rf{}}$, where 
\begin{compactitem}
	\item $\AbstractExecution$ is an abstract execution of the 
	form $\AbstractExecution=\tuple{\eventSet, \po{}}$,
	where $\eventSet$ is a set of events  and $\po{}$ 
	(\emph{program order}) specifies a total order on 
	the events of each thread.
	The optional component $\rf{}$ (\emph{reads-from} relation) specifies, 
	for each channel receive event $\rcv$, the corresponding channel 
	send event $\snd$ that $\rcv$ obtains its value from. 

	\item The function 
	$\cpFunc\colon \channels{\AbstractExecution}\to \Nats$, 
	specifies the capacity of each channel;
	here $\channels{\AbstractExecution}$ 
	is the set of channels accessed in $\AbstractExecution$.
\end{compactitem}

In line with prior works on consistency testing~\cite{gibbons1994testing,tuncc2023optimal,chakraborty2024hard}, we distinguish between the following two variants.
\begin{compactitem}
\item The \emph{verify channel consistency ($\vch$)} problem takes as input 
the pair $\tuple{\AbstractExecution, \cpFunc}$ without any reads-from information.
This is the most general variant. 
\item The \emph{verify channel consistency with reads-from ($\vchRf$)}  problem takes
as input a triplet $\tuple{\AbstractExecution, \cpFunc, \rf{}}$ that contains reads-from information.
This variant naturally arises when, e.g., every write to a channel writes a distinct value (for example, this is often imposed during litmus testing~\cite{Alglave2011Litmus, Alglave2014Herding}), or as a general abstraction mechanism~\cite{Chalupa2018,Abdulla2019b,Kokologiannakis2022}.
\end{compactitem}
In each case, the task is to find a linear trace $\trace$ realizing $\AbstractExecution$, i.e., $\trace$ consists of the events $\eventSet$ and agrees with $\AbstractExecution$ on the $\po{}$ (and $\rf{}$, in the case of $\vchRf$).

\begin{remark}\label{rem:shared_memory}
For simplicity of presentation, we consider that all interthread communication occurs via channels, and there is no shared memory.
This is not a limitation, since a shared register can be simulated by a channel of capacity~1, as we prove in~\secref{vch-rf-np-hard-k=1}.
\end{remark}

\subsection{Summary of Results}
We now present the main results of the paper, summarized in \tabref{results_vch} and \tabref{results_vchrf},
while we refer to the following sections for details.
%
In the following, we let $\NumEvents$, $\NumThreads$ and $\NumChannels$ be the 
total number of events, threads and channels, respectively, in $\AbstractExecution$. 	
We also let $\MaxCapacity=\max_{\ch} \cp{\ch}$ be the maximum channel capacity.
	To capture common paradigms of channel programming, 
	we distinguish between channels $\ch$ that are \emph{synchronous} $(\cp{\ch}=0)$, 
	\emph{capacity-bounded} or \emph{capacity-unbounded}.
	We remark that, in our setting, $\ch$ is regarded as capacity-unbounded 
	if $\AbstractExecution$ contains $\leq \cp{\ch}$ $\wt$ events to $\ch$, 
	since then $\ch$ cannot block, 
	regardless of how $\AbstractExecution$ is 
	scheduled\footnote{This is in contrast to the colloquial 
	use of ``unbounded'' meaning ``of infinite capacity''.}.
	For example, if $\cp{\ch}=3$ but $\AbstractExecution$ only 
	contains two send events to $\ch$, then $\ch$ behaves as 
	a capacity-unbounded channel in $\AbstractExecution$ 
	(even though its capacity is finite).


\begin{table}
\caption{\tablabel{results_vch}
Results for the channel consistency problem $\vch$ on abstract executions of $\NumEvents$ events, $\NumThreads$ threads, $\NumChannels$ channels, each with capacity $\leq \MaxCapacity$.
}
\small
\begin{tabular}{l l l}
\toprule
\textbf{Reference} & \textbf{Variant} & \textbf{Complexity} \\
\midrule
\thmref{vch-same-value-hardness} & Every event sends/receives the same value & $\NP$-complete\\
\thmref{vch-two-threads-hardness} & $\NumThreads=2$ and each channel is capacity-unbounded & $\NP$-complete\\ 
\thmref{vch-one-chan-hardness} & $\NumChannels=1$ and either $\MaxCapacity=0$ (synchronous channel) or $\MaxCapacity=1$ & $\NP$-complete\\
\thmref{vch-solution} & General case & $O\left(\NumEvents ^ {\NumThreads + 1} \cdot \NumThreads ^{\MaxCapacity \NumChannels}\right)$\\ 
\bottomrule
\end{tabular}
\end{table}

To illustrate the intricacies of channels, 
we begin with two restricted cases of $\vch$ for which the problem is nevertheless intractable.
First, let us consider the case where all channel 
events send/receive the same value, and thus each receive may observe from any send.
In this setting, we show the following 
\revision{
via a reduction from the Hamiltonian cycle problem.
}
\begin{restatable}{theorem}{vchSameValHardness}
\thmlabel{vch-same-value-hardness}
$\vch$ is $\NP$-complete even if all events send/receive the same value.
\end{restatable}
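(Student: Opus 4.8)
The plan is to establish the two directions of $\NP$-completeness separately: membership in $\NP$, and $\NP$-hardness via a polynomial-time reduction from the directed Hamiltonian cycle problem.

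Membership is the routine part. Given a candidate linear trace $\trace$ over the event set $\eventSet$, I would check in polynomial time that (i) $\trace$ is a linear extension of $\po{}$, and (ii) $\trace$ respects channel semantics, by a single left-to-right simulation that maintains one FIFO buffer per channel: a $\snd$ to $\ch$ enqueues unless the buffer already holds $\cp{\ch}$ elements (capacity violation $\Rightarrow$ reject), and a $\rcv$ from $\ch$ dequeues the head unless the buffer is empty (reject); synchronous channels ($\cp{\ch}=0$) are handled as rendezvous. Since this verifier runs in $O(\NumEvents)$ time and the same-value restriction only removes constraints, the problem (even in this restricted form) is in $\NP$, with a nondeterministically guessed $\trace$ as the certificate.

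The hardness is the substantial part. Given a directed graph $G=(V,E)$, I would build in polynomial time an instance $\tuple{\AbstractExecution,\cpFunc}$ that uses only one fixed value, so that $\AbstractExecution$ admits a consistent linearization iff $G$ has a Hamiltonian cycle. The intended design introduces one thread per vertex and routes a single ``token'' through the vertex-threads using a capacity-$1$ channel, so that in any consistent trace the token visits the vertices one at a time and returns to its origin. The point of a single token on a capacity-$1$ channel is that it forces the visited vertices to form a \emph{single} cycle covering all of $V$ --- exactly the subtour-freeness requirement that makes Hamiltonicity hard --- because there is only ever one token in flight, so a permutation that decomposes into several subtours cannot be realized. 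On top of this I would add per-vertex \emph{edge-guard} gadgets (auxiliary events and channels whose program-order, FIFO, and capacity constraints) that permit the token to move from $v$ to $w$ precisely when $(v,w)\in E$ and block the move otherwise. Reading the token's visiting order off of any consistent trace then yields a Hamiltonian cycle, and conversely a Hamiltonian cycle $w_1,\dots,w_n,w_1$ dictates a schedule in which the token is passed $w_1\to\cdots\to w_n\to w_1$, which I would verify is consistent.

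The main obstacle is the edge-guard gadget. Because every event carries the same value and no reads-from is supplied, the token is indistinguishable across senders: on any shared channel a receive matches sends purely by FIFO order, so ``who passed the token'' cannot be recovered from its value. Consequently, forbidding non-edge moves must be encoded entirely through program order together with FIFO and capacity constraints, and the delicate point is to make these local gadgets simultaneously (i)~admit every edge move, (ii)~block every non-edge move, and (iii)~not interfere with the global single-token invariant --- in particular, neither allowing the token to duplicate (which would reintroduce subtours) nor stalling a legitimate traversal. I expect the crux of the correctness argument to be an invariant asserting that in every consistent trace exactly one token is in flight at all times and it can transition from $v$ to $w$ only across an edge of $G$; proving this invariant against all interleavings, and checking that the gadgets compose without unintended deadlocks, is where the real work lies.
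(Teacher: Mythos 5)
Your NP-membership argument and your choice of reduction source (directed Hamiltonian cycle) match the paper, and the ``single token passed through channels'' intuition is also the core of the paper's construction. However, your proposal has a genuine gap, and you name it yourself: the edge-guard gadget --- the mechanism that permits exactly the edge moves while blocking non-edge moves, using only program order, FIFO, and capacities on indistinguishable messages --- is left entirely unspecified, and you defer to it the ``real work'' of the correctness invariant. That gadget \emph{is} the theorem; without it the proposal is a plan, not a proof. The paper dissolves this obstacle by a different structural choice: instead of one thread per vertex with guards attached, it creates one thread $\thread_{(u,v)}$ \emph{per edge}, whose program order is hard-wired as $\push(\lk)\cdot\pop(\ch_u)\cdot\pop(\chanCnt)\cdot\pop(\ch'_v)\cdot\push(\ch_v)\cdot\pop(\lk)$. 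A non-edge move then needs no blocking at all --- no thread exists that could perform it --- and atomicity of each move is enforced by the capacity-$1$ lock channel $\lk$.

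Two further ingredients that your sketch does not account for are essential, and your stated invariant is false without them. First, a single in-flight token does \emph{not} force the tour to cover all of $V$: the token could traverse a short cycle and return to its origin. The paper rules this out with a counter channel $\chanCnt$ (pre-loaded with $\numNode$ messages, one consumed per edge move, with capacities arranged so that \emph{exactly} $\numNode$ moves occur in the cycle-building phase) together with per-vertex channels $\ch'_v$ pre-loaded with exactly one message each, so no vertex can be entered twice. Second, $\vch$ requires that \emph{every} event of the abstract execution appears in the concretization, so the threads encoding edges \emph{not} used by the Hamiltonian cycle must still run to completion --- but must be prevented from running (and corrupting the token) during the cycle-building phase. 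The paper needs an entire third phase for this: a free thread and $\alpha$-gated per-vertex threads $\thread_v$ that inject fresh tokens and drain the channels only after the cycle has been certified, with channel capacities (e.g.\ $\cp{\ch_v}=\outDegree{v}+\inDegree{v}$) sized so this drain phase cannot leak back into the earlier phases. Your proposal's conditions (i)--(iii) on the gadget do not even mention this completability requirement, and satisfying it simultaneously with (ii) is precisely where a vertex-local gadget design is likely to fail.
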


The corresponding consistency problem for shared memory is trivial:~as reads/writes are on the same value, any linearization that starts with a write is a valid trace.
This is not the case for $\vch$, as $\trace$ must also respect channel capacities.
Second, we show that the problem is intractable already with just two threads
\revision{
using a reduction from postive 1-in-3 3SAT
}.
\begin{restatable}{theorem}{vchTwoThreadsHardness}
\thmlabel{vch-two-threads-hardness}
$\vch$ is $\NP$-complete even if $\NumThreads = 2$ and each channel is capacity-unbounded. 
\end{restatable}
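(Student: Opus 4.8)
The plan is to first settle membership in $\NP$ and then prove hardness by a reduction from positive 1-in-3 3SAT, as announced. Membership is the easy part: a linear trace $\trace$ realizing $\AbstractExecution$ is a polynomial-size certificate, and checking that it respects $\po{}$ and the $\fifo$ semantics of every channel is a linear scan.

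The first substantive step is to simplify what ``consistency'' means once every channel is capacity-unbounded. Since in that regime no send can ever block, I would argue that a candidate trace $\trace$ is valid precisely when, for each channel $\ch$: (i) at every prefix of $\trace$ the number of receives from $\ch$ never exceeds the number of sends to $\ch$, and (ii) reading the sends and the receives of $\ch$ in $\trace$-order, the $j$-th receive carries the same value as the $j$-th send. Thus the problem becomes: \emph{merge the two per-thread event sequences into one total order whose per-channel send- and receive-value streams agree positionally} (subject to prefix non-negativity). This reformulation is what I would reduce into.

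Next I would design the gadgets. Given a formula $\form$ with variables $x_1,\dots,x_{\numVar}$ and clauses $C_1,\dots,C_{\numClause}$ (each a set of three variables), I build a two-thread abstract execution $\AbstractExecution$ over capacity-unbounded channels. For each clause $C_j=\{x_a,x_b,x_c\}$ I introduce a \emph{clause channel} $d_j$ carrying exactly three sends (one contributed by each of $x_a,x_b,x_c$, holding that variable's chosen \emph{true}/\emph{false} token) and exactly three receives whose fixed value pattern is one \emph{true} and two \emph{false} tokens. Scheduling all three sends before the receives satisfies~(i), and by~(ii) the channel is linearizable iff the multiset of sent tokens equals $\{\mathit{true},\mathit{false},\mathit{false}\}$ --- i.e.\ iff \emph{exactly one} of $x_a,x_b,x_c$ is true, which is exactly the 1-in-3 condition. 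The value of each $x_i$ I would encode as a single binary scheduling decision (the relative order in $\trace$ of a dedicated pair of events, one per thread), and I would use causal \emph{barriers} to control it: since a receive must follow its matching send by~(i), a uniquely-valued send in one thread that is received in the other imposes a cross-thread ordering edge, and pairs of such edges act as two-sided barriers that carve $\trace$ into per-variable phases and serialize the copying of each variable's token onto all the clause channels in which it occurs.

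Correctness would then be argued in both directions: from an exactly-one-true assignment I schedule each phase to emit the matching token and order each clause channel's sends/receives positionally, producing a valid trace; conversely, any valid trace induces a well-defined token per variable (via the barrier structure), and linearizability of every clause channel forces exactly one true token per clause, yielding a 1-in-3 satisfying assignment. I expect the main obstacle to be precisely the feature that makes two threads suffice: a single merge of two sequences couples all scheduling choices globally, and --- unlike the synchronous case --- capacity-unbounded channels provide no rendezvous for synchronization. The delicate part is therefore to engineer the per-thread orders and per-channel value patterns so that the $\numVar$ variable bits are simultaneously (a) genuinely free, so that every assignment is realizable, and (b) transmitted faithfully and consistently to all clause occurrences, while the only available global constraints --- prefix non-negativity and positional value-matching --- capture the 1-in-3 conditions and nothing more.
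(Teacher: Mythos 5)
Your $\NP$-membership argument and your reformulation of well-formedness for capacity-unbounded channels (prefix non-negativity plus positional value agreement) are both correct, and your overall plan---a reduction from positive 1-in-3 SAT with clause channels checked by value matching and cross-thread barriers carving $\trace$ into per-variable phases---is the same strategy the paper follows. However, there is a genuine gap at the heart of your clause gadget: you introduce, for each clause, ``exactly three sends, one contributed by each variable, \emph{holding that variable's chosen true/false token}.'' In $\vch$ the value of every event is fixed as part of the input abstract execution; it cannot depend on the schedule. A variable's truth value in your encoding is a scheduling decision (the relative order of a dedicated pair of events), so no single send event can carry the ``chosen'' token, and consequently your claim that the clause channel is linearizable iff the multiset of sent tokens equals $\set{\top, \bot, \bot}$ has no implementable instantiation. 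A secondary weakness is that FIFO matching constrains the \emph{sequence} of values, not the multiset, so even with fixed values you would still have to argue that the send permutations achievable under your barrier structure are exactly the ones you intend---this is precisely the tension you flag in your last paragraph but do not resolve.

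The repair, which is what the paper does, is to double the sends: for each occurrence of variable $x_i$ in clause $C_j$, one thread contributes $\snd(C_j, \top)$ and the other thread contributes $\snd(C_j, \bot)$, and each variable's $\top$-sends (to all clauses containing it) are grouped into one atomic block, its $\bot$-sends into another, so that the relative order of the two blocks is a single global bit, uniform across every clause containing $x_i$. Atomicity here cannot come from capacities (unbounded channels never block); it is enforced by a two-channel value-matching gadget, which is itself a nontrivial construction you would need to supply. Each clause channel then carries six sends forming three ordered pairs, each pair reading $\top\bot$ or $\bot\top$ according to the assignment, and the six receives are arranged as three atomic blocks with patterns $\top\bot$, $\bot\top$, $\bot\top$; the possible interleavings of these receive blocks produce exactly the value sequences that match send sequences in which precisely one pair is $\top\bot$, i.e., exactly one literal per clause is true. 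Without this doubling-and-ordering device, your reduction as stated cannot be carried out.
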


In contrast, the smallest number of threads which make consistency for 
shared memory intractable is $\NumThreads=3$~\cite{gibbons1997testing}.
Third, we show that the problem becomes intractable already with just a single channel, which is either synchronous or has capacity $1$ 
\revision{
using a reduction from the VSC-read studied and proved to be $\NP$-hard in~\cite{gibbons1997testing}; the VSC-read problem is the analogue of the $\vchRf$ problem we study in this work, but for the case of executions with registers instead of channels.
}
This result is analogous to the hardness for shared memory on a single location~\cite{cantin2005complexity}
(but is not subsumed by it, since synchronous channels are blocking, in contrast to shared memory).

\begin{restatable}{theorem}{vchOneChanHardness}
\thmlabel{vch-one-chan-hardness}
$\vch$ is $\NP$-complete even if $\NumChannels = 1$ and either $\MaxCapacity=0$ (synchronous channel) or $\MaxCapacity = 1$. 
\end{restatable}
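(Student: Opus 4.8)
The plan is to establish membership in \NP\ and \NP-hardness separately. Membership is immediate: a linear trace \trace\ is a polynomial-size witness, and given a candidate total order on \eventSet\ one simulates the single channel in a single left-to-right pass, checking that \trace\ extends the program order $\po{}$, that the number of outstanding (unreceived) sends never exceeds the capacity at any prefix, and that the \fifo\ matching pairs each \rcv\ with a \snd\ carrying the same value. All of these checks are polynomial, so \vch\ lies in \NP.

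For hardness I would reduce from \vscRd\ restricted to a single memory location, which is \NP-hard and is precisely the shared-memory analogue referenced after the statement. Fix such an instance, consisting of reads and writes on one register together with a program order and a reads-from map assigning each read its write. I give every write $w$ a distinct tag value $\s{tag}(w)$ and build a single-channel \vch\ instance as follows. Each write $w$ becomes a send $\snd(\ch,\s{tag}(w))$ placed at $w$'s position (inheriting its program order), together with one auxiliary ``drain'' receive $\rcv(\ch,\s{tag}(w))$ placed in a fresh singleton thread so that it carries no program-order constraints. Each read $R$ that reads from $w$ becomes the two-event block $\rcv(\ch,\s{tag}(w));\snd(\ch,\s{tag}(w))$ at $R$'s position: $R$ first \emph{consumes} the current channel value and then immediately \emph{restores} it. The single channel is set to capacity $1$ for the first case and to capacity $0$ for the synchronous case.

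The purpose of this consume-and-restore encoding is to circumvent the fundamental mismatch between registers and channels --- a register write persists and may be read many times, whereas a channel send is consumed exactly once. By re-sending $\s{tag}(w)$ after each read, the value of $w$ stays available for the next reader, so arbitrarily many reads of $w$ can be threaded through the single channel; the trailing drain receive then removes $w$'s value so the next write may proceed. Crucially, because the reads-from map is given, every receive value is statically known (it is the tag of the write the read observes), and the drain uses the write's \emph{own} tag; hence the construction never needs to know which write a given write overwrites --- the coherence predecessor, which is exactly the quantity the \vscRd\ instance leaves undetermined. A capacity-$1$ channel forces sends and receives to strictly alternate, and distinct tags across writes force each maximal run of operations carrying a common tag to form a contiguous ``block'' (the write, its readers, and its drain); the synchronous case produces the same block structure, now realized as rendezvous pairs of adjacent matched send/receive events, so both capacity regimes are handled by essentially one construction.

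I expect the main obstacle to be the soundness direction: showing that every capacity-respecting, \fifo-consistent linearization of the constructed instance decomposes into non-interleaved per-write blocks. The danger is that a second write's send could fire during a momentary empty-buffer gap inside another write's block; I would rule this out by observing that once a competing tag occupies the single buffer slot, the remaining reads of the interrupted block can never fire, since their tag never reappears at the head of the FIFO channel (all of that write's sends have already occurred), so any feasible schedule necessarily keeps blocks atomic. Given block atomicity, the order of blocks is a coherence order on the writes, each read sits inside its own write's block and therefore observes its reads-from write with no intervening write, and the program order is inherited on both sides; this yields that the \vch\ instance is consistent if and only if the original single-location \vscRd\ instance is. Since the reduction adds only a constant number of events per register operation and is computable in linear time, this completes the proof.
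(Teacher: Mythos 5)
Your \NP-membership argument is fine, but the hardness part fails for a reason that precedes any detail of your gadgets: the problem you reduce from is not \NP-hard. For a \emph{single} memory location, \vscRd\ (i.e., with the reads-from map given) is solvable in polynomial time. Since a read $r$ with $\rf{}(r)=w$ forbids any other write strictly between $w$ and $r$, every witnessing linearization arranges the events into contiguous blocks, each consisting of a write followed by exactly its readers; consistency is then equivalent to (i)~no read being $\po{}$-before its own write, and (ii)~acyclicity of the digraph on blocks that has an edge $A\to B$ whenever some event of $A$ is $\po{}$-before some event of $B$ --- both checkable in polynomial time. The single-location hardness you appeal to (the Cantin et al.\ result cited after the theorem) concerns the problem \emph{without} a reads-from map, where hardness stems precisely from the ambiguity of which write each read observes. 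Your distinct-tag encoding pins that ambiguity down, so even a perfectly sound and complete reduction of this kind could not establish \NP-hardness. This is also why the paper's proof goes a different route: it reduces from positive 1-in-3 SAT, and its single-channel construction deliberately sends \emph{equal} values many times so that a receive can be matched against several different sends --- exactly the combinatorial freedom your encoding eliminates.

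Independently of this, the reduction itself is unsound, and the soundness step you flagged as the main obstacle is exactly where it breaks. The claim that an interrupted block can never resume because ``all of that write's sends have already occurred'' is false: the reader blocks themselves contain restore sends of the same tag, so the tag can reappear after a competing write has come and gone. Concretely, take the single-location instance with thread $\thread_1$: $w,\ w',\ r_2$ (where $\rf{}(r_2)=w$) and thread $\thread_2$: $r_1$ (where $\rf{}(r_1)=w$). It is SC-inconsistent, since $\po{}$ forces $w'$ between $w$ and $r_2$. Yet its encoding is consistent under capacity $1$ (and, with the same schedule read as rendezvous pairs, under capacity $0$): execute $\snd(\ch,\s{tag}(w))$, then $r_1$'s consume, then $\snd(\ch,\s{tag}(w'))$, then the drain of $w'$, then $r_1$'s restore, then $r_2$'s consume (which FIFO-matches $r_1$'s restore), then $r_2$'s restore, and finally the drain of $w$. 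All capacity, FIFO and value constraints are respected, so channel-consistency does not imply SC-consistency: $r_1$'s pending restore carries $w$'s value across the lifetime of $w'$. Any repair would have to derive hardness from receive/send matching ambiguity rather than from a fixed reads-from relation, which is what the paper's 1-in-3 SAT construction does.
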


Given the above hardness results even on restricted inputs, it is imperative to ask --- how fast can we solve $\vch$ in general? 
The following theorem establishes an upper bound 
with explicit dependence on the input parameters.
\revision{
Our algorithm for checking $\vch$ traverses a \emph{frontier graph} whose nodes track succinct information about configurations of threads and channels.
}

\begin{restatable}{theorem}{vchSolution}
\thmlabel{vch-solution}
$\vch$ can be solved in $O\left(\NumEvents ^ {\NumThreads + 1} \cdot \NumThreads ^{\MaxCapacity \NumChannels}\right)$ time.
\end{restatable}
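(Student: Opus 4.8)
The plan is to recast $\vch$ as a reachability problem on a succinctly-represented \emph{frontier graph} $\gfrontier$, whose nodes are the reachable configurations of a sequential simulator that replays the events of $\AbstractExecution$ one thread-step at a time. A node records two pieces of information. First, a \emph{frontier} $\vec{\imath} = (i_1, \dots, i_{\NumThreads})$, where $i_j$ is the number of events of thread $\thread_j$ already replayed; since $\po{}$ totally orders each thread, $\vec{\imath}$ uniquely determines the set of executed events. Second, for each channel $\ch \in \channels{\AbstractExecution}$, the current contents of its $\fifo$ buffer, represented \emph{not} by the buffered values but by the sequence of thread identities of the buffered $\snd$ events, listed in buffer (i.e., channel) order. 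I would let the initial node be $(\vec{0}, \text{empty buffers})$ and declare a node \emph{accepting} if its frontier is maximal (every event replayed), with no requirement that the buffers be empty, since sends that are never received are permitted; the answer to $\vch$ is then YES iff some accepting node is reachable from the initial node.

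The crux --- and the main obstacle --- is to show that this bounded-size state is a \emph{faithful} abstraction, i.e., it carries exactly the information needed to continue the simulation, so that paths in $\gfrontier$ correspond to $\po{}$-respecting, $\fifo$-consistent linearizations. The key lemma is that, under $\fifo$ matching, the set of buffered (as-yet-unmatched) $\snd$ events contributed by a single thread $\thread_j$ to a channel $\ch$ is always a \emph{suffix} of the $\snd$ events to $\ch$ that $\thread_j$ has executed: this holds because program order forces $\thread_j$'s sends to $\ch$ to appear in that same order in the channel, while the matched sends form a prefix of the channel order. Consequently, from $\vec{\imath}$ together with the per-channel thread-sequence one can recover each buffered $\snd$ event exactly (and hence its $\ValueOf{\cdot}$), by reading off, for each thread, how many of its executed sends remain in the buffer. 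With this, the transitions are determined: replaying a $\snd(\ch, \cdot)$ appends $\thread_j$ to $\ch$'s sequence provided the new length does not exceed $\cp{\ch}$; replaying a $\rcv(\ch, v)$ requires $\ch$'s buffer to be nonempty, pops its front, and checks that the recovered front value equals $v$. A synchronous channel ($\cp{\ch}=0$) is handled by permitting its buffer to hold the pending send in one transient slot while flagging the configuration so that its only outgoing transition is the matching receive; this realises the rendezvous without ever buffering two elements and keeps the out-degree of every node $O(\NumThreads)$. Soundness and completeness then follow by the correspondence between finite runs of the simulator and prefixes of a linearization: any valid $\trace$ realizing $\AbstractExecution$ induces a path to an accepting node, and conversely any such path reads off a valid $\trace$.

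Finally I would bound the size of $\gfrontier$. The number of frontiers is $\prod_{j}(\text{\#events of }\thread_j + 1) \le (\NumEvents+1)^{\NumThreads} = O(\NumEvents^{\NumThreads})$. For a fixed channel $\ch$ the buffer holds at most $\min(\cp{\ch}, \#\snd\text{s to }\ch) \le \MaxCapacity$ elements, each labelled by one of $\NumThreads$ threads, giving $O(\NumThreads^{\MaxCapacity})$ distinct buffer sequences; across the $\NumChannels$ channels this multiplies to $O(\NumThreads^{\MaxCapacity \NumChannels})$. Hence $\gfrontier$ has $O(\NumEvents^{\NumThreads}\cdot \NumThreads^{\MaxCapacity \NumChannels})$ nodes and, as each has out-degree $O(\NumThreads)$, at most $O(\NumEvents^{\NumThreads}\cdot \NumThreads^{\MaxCapacity \NumChannels}\cdot \NumThreads)$ edges; using $\NumThreads \le \NumEvents$ this is $O(\NumEvents^{\NumThreads+1}\cdot \NumThreads^{\MaxCapacity \NumChannels})$. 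A breadth-first search over $\gfrontier$, constructed lazily so that only reachable nodes are materialised and each transition costs polynomially-bounded (effectively constant, with the send-lists of each thread pre-indexed) work, then decides reachability within the claimed $O\!\left(\NumEvents^{\NumThreads+1}\cdot \NumThreads^{\MaxCapacity\NumChannels}\right)$ time.
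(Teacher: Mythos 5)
Your construction is, in substance, the paper's own: your frontier vector $(i_1,\dots,i_{\NumThreads})$ is a reparameterization of the paper's $\po{}$-downward-closed event set $\fgEventSet$; your per-channel sequences of thread identifiers, justified by the suffix observation, are exactly the device the paper uses to bound the number of possible channel contents $\fgChanMap$ by $O(\NumThreads^{\MaxCapacity\NumChannels})$; and your flagged transient slot for synchronous channels is the paper's component $\fgSyncSend$. The node/edge counting and the reduction to reachability proceed the same way in both.

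There is, however, one step that fails: your acceptance condition. You accept any node with maximal frontier, ``with no requirement that the buffers be empty, since sends that are never received are permitted.'' That waiver is sound for asynchronous channels but not for the synchronous transient slot. Well-formedness requires every send on a synchronous channel to be \emph{immediately} followed by a matching receive from another thread, so no well-formed trace can end with a pending synchronous send. Concretely: let $\ch$ be synchronous, let thread $\thread_1$ consist of the single event $\snd(\ch,1)$, and let no receive on $\ch$ occur anywhere. This instance is inconsistent (the send can never execute), yet your simulator replays the send, reaches a node whose frontier is maximal but whose flag is set, and accepts. The paper avoids exactly this by defining sink nodes as $\tuple{\eventSet, \fgChanMap, \bot}$, i.e., requiring the pending-synchronous-send component to be $\bot$ while allowing asynchronous buffers to be arbitrary; your acceptance condition needs the same restriction. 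A second, more minor point: the claim that each BFS transition costs ``effectively constant'' work glosses over copying and deduplicating node states, which cost time proportional to the state size rather than $O(1)$; the paper charges an explicit $O(\NumEvents)$ per node for this and absorbs it using its sharper $\NumEvents^{\NumThreads}/\NumThreads^{\NumThreads}$ bound (via AM--GM) on the number of downward-closed sets, and your analysis should account for this overhead in the same way to land exactly on the stated bound.
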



\begin{table}
\caption{\tablabel{results_vchrf}
Results for the channel consistency problem with a reads-from relation $\vchRf$ on abstract executions of $\NumEvents$ events, $\NumThreads$ threads, $\NumChannels$ channels, each with capacity $\leq \MaxCapacity$.
($\dagger$) holds under SETH.
}
\small
\newcolumntype{L}[1]{>{\raggedright\arraybackslash}p{#1}}
\begin{tabular}{L{1.8cm} L{7.7cm} L{3.1cm}}
\toprule
\textbf{Reference} & \textbf{Variant} & \textbf{Complexity} \\
\midrule
\thmref{vch-rf-solution} & General case  & $O(\NumEvents^{\NumThreads + 1} \cdot (\min(\MaxCapacity!, \NumThreads^\MaxCapacity)^\NumChannels)$  \vspace{0.2cm} \\
\thmref{vch-rf-hardness} & $\MaxCapacity=1$ and every channel is asynchronous, or & $\NP$-complete\\
& $\NumThreads = 3$ and $\MaxCapacity = 2$, or &\\
& $\NumThreads = 3$ and $\NumChannels = 5$ and each channel is capacity-unbounded & \vspace{0.2cm}\\
\thmref{vch-rf-tree-topology} & Acyclic topology and each channel has capacity $\leq 1$ or is unbounded  & $O(\NumEvents^2)$\vspace{0.2cm} \\
\thmref{vch-rf-two-threads-lower-bound} & $\NumThreads=2$ and each channel has capacity 1, or  & Not in$^{\dagger}$ $O(\NumEvents^{2-\epsilon})$  \\
& $\NumThreads=2$ and each channel is capacity-unbounded  \vspace{0.2cm}\\
\thmref{vch-rf-sync-upper} & Each channel is synchronous  & $O(\NumEvents)$ \\
\bottomrule
\end{tabular}
\end{table}

Let us now turn our attention to the simpler problem, $\vchRf$.
Since $\vchRf$ is a special case of $\vch$, the upper bound in \thmref{vch-solution} 
also applies to $\vchRf$. 
We show that \vchRf admits, in fact, a somewhat faster algorithm \revision{using a more succinct frontier graph}.

\begin{restatable}{theorem}{vchRfsolution}
\thmlabel{vch-rf-solution}
$\vchRf$ can be solved in $O(\NumEvents^{\NumThreads + 1} \cdot (\min(\MaxCapacity!, \NumThreads^\MaxCapacity)^\NumChannels)$ time.
\end{restatable}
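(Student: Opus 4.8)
The plan is to solve $\vchRf$ by reachability in a \emph{frontier graph} $\gfrontierRf$, a more succinct variant of the graph that underlies \thmref{vch-solution}. A node of $\gfrontierRf$ is a pair $\tuple{f, B}$, where $f = \tuple{i_1,\dots,i_{\NumThreads}}$ is a \emph{frontier} recording that the first $i_j$ events (in $\po{}$) of each thread $\thread_j$ have been executed, and $B$ records, for every channel $\ch$, the $\fifo$ contents of $\ch$'s buffer, i.e.\ the sequence of sends that have been issued but whose matched receive has not yet executed. The source node is the all-zero frontier with empty buffers, and any node whose frontier is maximal (all events executed) is a target. An edge advances a single thread $\thread_j$ by its next event $e$: if $e$ is a $\snd$ to $\ch$ it is enabled only when $\ch$'s buffer holds fewer than $\cp{\ch}$ entries, and $e$ is appended to the back of that buffer; if $e$ is a $\rcv$ from $\ch$, then, writing $s$ for the send $\rf{}$-matched to $e$, the event is enabled only when $s$ sits at the \emph{front} of $\ch$'s buffer, whereupon $s$ is popped. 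Synchronous channels ($\cp{\ch}=0$) are handled by a combined rendezvous transition that fires a $\snd$ together with its matched $\rcv$, and the remaining event kinds (create/close/spawn) merely advance the frontier subject to their ordering preconditions, leaving $B$ untouched. I would report YES iff some target is reachable, reconstructing the witness $\trace$ from the path.

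Correctness of collapsing distinct prefixes onto a shared node rests on the key lemma that $\tuple{f, B}$ is a \emph{sufficient statistic}: any two prefixes reaching the same node admit exactly the same set of valid completions. The send/receive enabling conditions encode precisely the capacity and $\fifo$ constraints, and the $\rf{}$ requirement is enforced because a $\rcv$ may fire only against its matched send at the buffer front; hence soundness (every source-to-target path is a linearization realizing $\AbstractExecution$) and completeness (every linearization realizing $\AbstractExecution$, processed in trace order, traces out a source-to-target path) both follow by induction on prefix length.

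The heart of the bound is counting nodes. There are at most $\prod_j (n_j+1) = O(\NumEvents^{\NumThreads})$ frontiers, since each of the $\NumThreads$ coordinates is at most $\NumEvents$. For the buffer component I would establish two independent bounds per channel $\ch$ and take their minimum. First, the \emph{set} of pending sends on $\ch$ is a function of $f$ alone: under a fixed $\rf{}$, a send is pending exactly when its matched receive has not executed, which $f$ determines; hence a buffer state is merely a linear order of a frontier-fixed set of size at most $\cp{\ch}$, giving at most $\cp{\ch}! \le \MaxCapacity!$ possibilities. Second, I would encode a buffer state as the length-$(\le\cp{\ch})$ sequence of \emph{thread identities} of its entries; there are at most $\NumThreads^{\cp{\ch}} \le \NumThreads^{\MaxCapacity}$ such sequences, and this encoding is lossless because the pending sends of any single thread occupy $\ch$'s buffer in $\po{}$ order, so the thread-label sequence together with $f$ reconstructs the identity in every slot. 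Combining, each channel contributes at most $\min(\MaxCapacity!, \NumThreads^{\MaxCapacity})$ states, so the buffer component ranges over at most $\min(\MaxCapacity!, \NumThreads^{\MaxCapacity})^{\NumChannels}$ values and the node count is $O(\NumEvents^{\NumThreads} \cdot \min(\MaxCapacity!, \NumThreads^{\MaxCapacity})^{\NumChannels})$. Since each node has out-degree at most $\NumThreads$ and every transition is computed in $O(1)$ time with suitable indexing, processing a node costs $O(\NumThreads) = O(\NumEvents)$, and the search runs in the claimed $O(\NumEvents^{\NumThreads+1} \cdot \min(\MaxCapacity!, \NumThreads^{\MaxCapacity})^{\NumChannels})$ time.

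The main obstacle I anticipate is the sufficient-statistic lemma together with the lossless thread-label encoding: I must argue carefully that, because $\rf{}$ is fixed, the pending set is genuinely frontier-determined, so that only the $\fifo$ \emph{order} of pending sends needs to be stored, and that the two competing encodings of this order yield exactly the $\min(\MaxCapacity!, \NumThreads^{\MaxCapacity})$ factor that separates $\vchRf$ from $\vch$. Fitting the synchronous-channel rendezvous and the create/close/spawn preconditions into the same transition framework without inflating the state is the remaining bookkeeping.
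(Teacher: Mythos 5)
Your proposal matches the paper's proof essentially step for step: the same reads-from frontier graph (your frontier vector and per-channel buffer are an equivalent encoding of the paper's $\po{}$-downward-closed set $\fgEventSet$ together with $\fgChanMap$, and your rendezvous transition plays the role of the paper's $\fgSyncSend$ component), the same key observation that the fixed $\rf{}$ determines the pending-send set from the frontier so that only buffer \emph{permutations} need to be stored, and the same two counting arguments (permutation count and thread-label-sequence encoding) yielding the $\min(\MaxCapacity!, \NumThreads^{\MaxCapacity})^{\NumChannels}$ factor. The correctness lemma and the reachability-based time accounting likewise coincide with the paper's, so your argument is correct and essentially identical in approach.
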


Observe that both upper bounds (\thmref{vch-solution} and \thmref{vch-rf-solution}) become polynomial when the input parameters are bounded (i.e., fixed constants).
When this is not the case, we ask whether one has to suffer an exponential dependence 
on each of these parameters.
In other words, \emph{does the problem become tractable when only some, but not all, of the parameters are bounded?}
Unfortunately, as the next theorem states, even the easier problem $\vchRf$ remains intractable when only some parameters are bounded, 
\revision{via reductions from 3SAT and VSC-read problem.}

\begin{restatable}{theorem}{vchRfHardness}
\thmlabel{vch-rf-hardness}
$\vchRf$ is $\NP$-complete if any of the following three conditions holds:
(i)~ $\MaxCapacity = 1$ and every channel is asynchronous, or
(ii)~$\NumThreads = 3$ and $\MaxCapacity = 2$, or
(iii)~$\NumThreads = 3$ and $\NumChannels = 5$ and each channel is capacity-unbounded.
\end{restatable}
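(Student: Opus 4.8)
The plan is to prove membership in \NP\ uniformly for all three cases, and then establish \NP-hardness through a separate reduction tailored to each parameter restriction.

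For membership, observe that a \emph{yes}-instance of \vchRf\ is witnessed by a single linear trace $\trace$ over $\eventSet$. Given any candidate $\trace$, one can check in polynomial time that: (a) $\trace$ is a permutation of $\eventSet$ respecting $\po{}$; (b) each receive reads from the send prescribed by $\rf{}$; and (c) a single left-to-right scan of $\trace$ maintains, for every channel, a message count that stays nonnegative and never exceeds its capacity, while matching sends to receives in \fifo\ order consistently with $\rf{}$. Hence $\vchRf \in \NP$, regardless of which of the three parameter restrictions is imposed.

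For hardness in case~(i) ($\MaxCapacity=1$ with every channel asynchronous, hence of capacity exactly $1$), the plan is to reduce from the \vscRd\ problem over registers, which is \NP-hard~\cite{gibbons1997testing}. The central device is the simulation promised in Remark~\ref{rem:shared_memory}: each register is encoded by a distinct capacity-$1$ channel, reads and writes are translated into receive and send events, and the register reads-from map is transferred to a channel reads-from map. I would then argue that the capacity-$1$ \fifo\ semantics forces exactly the last-write-wins orderings of \vscRd: a receive can be matched to its designated send only when no interfering message occupies the single buffer slot in between, mirroring the requirement that no conflicting write lies between a read and the write it observes. The produced instance uses only capacity-$1$ asynchronous channels and is consistent iff the \vscRd\ instance is.

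For hardness in cases~(ii) ($\NumThreads=3$, $\MaxCapacity=2$) and~(iii) ($\NumThreads=3$, $\NumChannels=5$, all channels capacity-unbounded), the plan is to reduce from $3$SAT. In both reductions I would place a \emph{variable gadget} per variable and a \emph{clause gadget} per clause along the three threads, composing gadgets sequentially in $\po{}$ on each thread so that only a constant number of channels is live at any time. A variable's truth value is encoded by a local scheduling choice --- which of two concurrent events the linearization commits to first --- and the fixed $\rf{}$ together with the \fifo\ and capacity constraints propagate this choice into the clause gadgets, so that a clause gadget can be completed iff at least one literal is set to true. For case~(iii) the ordering pressure comes purely from \fifo\ consistency on capacity-unbounded channels (matched send/receive pairs must keep a consistent relative order), and the work is to realize all gadgets from a fixed palette of only $5$ channels by reusing them across sequentially-composed gadgets; for case~(ii) the blocking behavior of capacity-$2$ channels supplies the ordering constraints so that the same combinatorics fits within three threads. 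In each direction I would show that satisfying assignments correspond exactly to consistent linearizations.

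The main obstacle is the gadget design under these severe resource bounds. With only three threads, a single buffered slot, or five channels, there is little room to simultaneously encode the nondeterministic choice of a truth assignment and the verification of clause satisfaction, while ensuring the fixed $\rf{}$ admits no \emph{spurious} linearization that would render an unsatisfiable formula consistent. The delicate step is making the \fifo\ and capacity constraints enforce \emph{precisely} the intended orderings and nothing more; for case~(i) the analogous difficulty is arguing that a capacity-$1$ channel, whose receives consume in \fifo\ order, faithfully reproduces non-consuming, multiply-read register semantics.
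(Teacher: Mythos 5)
Your choice of source problems matches the paper exactly --- $\vscRd$ for case~(i) and 3SAT for cases~(ii) and~(iii) --- but the proposal stops at the point where the actual work begins, and in case~(i) the encoding as stated would fail. You propose that ``each register is encoded by a distinct capacity-$1$ channel'' with the register reads-from map transferred directly to the channel. This cannot work: a register write may be observed by many reads, but a channel receive \emph{consumes} the message, so at most one receive can ever be matched to a given send on that single channel, and the remaining reads of that write have no send to read from. You name this mismatch yourself as ``the analogous difficulty,'' but naming the obstacle is not resolving it. The paper's resolution is the key idea your proposal lacks: for each register $x$ it introduces $m_x$ channels $\ch_x^1,\ldots,\ch_x^{m_x}$, one per reader index, where $m_x$ is the maximum number of reads observing any single write of $x$. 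A write event becomes an atomic block (enclosed in a send/receive pair on a capacity-$1$ lock channel $\lk$) that sends to \emph{all} of $\ch_x^1,\ldots,\ch_x^{m_x}$ and then itself receives back the sends destined for reader indices it does not have, while the $i$-th read of that write becomes an atomic block receiving from $\ch_x^i$. The atomicity gadget is what forces the channel semantics to mirror the ``no interfering write in between'' condition of sequential consistency; without it, and without the per-reader channel split and the writer's self-receives of unmatched sends, the claimed equivalence does not hold.

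For cases~(ii) and~(iii) the proposal is a generic template (variable gadgets encode a scheduling choice, clause gadgets check satisfaction) with the explicit admission that ``the main obstacle is the gadget design under these severe resource bounds.'' That obstacle is precisely the content of the theorem. For case~(iii) the paper's construction is not a one-shot assignment followed by clause checks: because receives consume the assignment-encoding messages, the instance is organized into $\numClause+1$ sequential phases in which every phase \emph{re-sends} (copies) the truth assignment of all variables on the two channels $\ch_1,\ch_2$ --- the value being encoded in the relative order of the $\snd_\top$/$\snd_\bot$ events --- and the reads-from relation ties each phase's receives to the previous phase's sends so the assignment is propagated unchanged; the three clause channels $c_1,c_2,c_3$ are then consumed by a cyclic pattern $\rcv_\top(c_q)\cdot\rcv_\bot(c_q)$ spread over the three threads, which is completable iff at least one literal of the clause is true. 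For case~(ii) the paper similarly needs capacity-$2$ channels to buffer both the $\top$ and $\bot$ sends of a variable plus capacity-$1$ lock and sequencing channels to force atomicity and phase ordering. Without these constructions (and the two-directional correctness arguments that the fixed $\rf{}$ admits no spurious linearization), the proposal establishes membership in $\NP$ but none of the three hardness claims.
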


Given the hardness of \thmref{vch-rf-hardness}, 
the next natural question is whether $\vchRf$ becomes tractable for any natural (semantic) classes besides the (syntactic) restrictions governed by the parameters above.
Towards this, we consider the \emph{communication topology} $G=(V,E)$ of $\AbstractExecution$, 
where $V$ contains the set of threads of $\AbstractExecution$, 
and we have an edge $(\thread_1, \thread_2)\in E$ iff threads $\thread_1$ and $\thread_2$ access a common channel.
We prove that the problem becomes tractable when $G$ is acyclic 
\revision{
and can be solved via a reduction to the satisfiability problem of a quadratically long 2CNF formula.
}
.

\begin{restatable}{theorem}{vchRfTreeTopology}
\thmlabel{vch-rf-tree-topology}
$\vchRf$ is solvable in $O(\NumEvents^2)$ time on acyclic communication topologies 
if each channel is either capacity-unbounded or has capacity $\leq 1$.
\end{restatable}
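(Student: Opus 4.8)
The plan is to follow the hint and reduce the instance $\tuple{\AbstractExecution,\cpFunc,\rf{}}$ to the satisfiability of a 2CNF formula of size $O(\NumEvents^2)$, which is then decided in time linear in its size by the standard implication-graph (SCC) algorithm for 2SAT. The first fact I would establish is that acyclicity of the communication topology forces every channel to be shared by at most two threads: three threads accessing a common channel would induce a triangle, contradicting acyclicity. Hence each channel $\ch$ corresponds to an edge of the forest joining two threads $\thread_a,\thread_b$, and all ordering constraints that $\ch$ imposes are local to that edge.

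For the encoding I would introduce, for each channel $\ch$ on edge $(\thread_a,\thread_b)$ and each cross-thread pair of $\ch$-events $(e,e')$ with $e\in\thread_a$ and $e'\in\thread_b$, a Boolean variable $x_{e,e'}$ whose intended meaning is that $e\prec e'$, where $\prec$ denotes the order in the sought trace; orderings of same-thread events are already fixed by $\po{}$. The constraints then become clauses of width $\le 2$: (i) \emph{interleaving monotonicity}, stating that if $e_1$ precedes $e_2$ in program order then $e_2\prec e'\Rightarrow e_1\prec e'$ (and the symmetric rule on the $\thread_b$-side), which guarantees that the $x$-variables describe a genuine interleaving of the two event streams; (ii) \emph{reads-from}, a unit clause forcing each $\snd$ before its matched $\rcv$; (iii) \emph{FIFO}, encoding for matched pairs $(s,r),(s',r')$ the equivalence $(s\prec s')\Leftrightarrow(r\prec r')$ as two implications; and (iv) \emph{capacity}, which is where the hypothesis on $\cp{\ch}$ is used. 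For a capacity-unbounded channel nothing is added beyond (ii)--(iii); for $\cp{\ch}\le 1$ the requirement that no message be in flight when a send executes is captured by the binary clauses $\neg(s\prec s')\lor\neg(s'\prec r)$ ranging over all matched pairs $(s,r)$ and sends $s'$ on $\ch$; and a synchronous channel ($\cp{\ch}=0$) is handled analogously by forbidding any $\ch$-event strictly between a matched $\snd$ and $\rcv$. Every literal above is $x_{e,e'}$, its negation, or a constant determined by $\po{}$, so summing over all $O(\NumEvents^2)$ event pairs yields a 2CNF formula of size $O(\NumEvents^2)$.

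The correctness argument has two directions. Soundness is routine: any linear trace realizing $\AbstractExecution$ induces an assignment satisfying all clauses. The delicate direction is completeness — reconstructing a global total order on \emph{all} of $\eventSet$ from a satisfying assignment — and this is the step I expect to be the main obstacle, precisely because it is where acyclicity is indispensable. The assignment fixes, for every tree edge, a consistent interleaving of the two incident threads' channel events; I would root the forest and merge these interleavings by a DFS, inserting each child thread's events into the order built so far. Since the topology is a forest, a child thread interacts only with its parent among the already-placed threads, so its events are constrained only relative to the parent's and can be slotted into the current order while respecting $\po{}$, with no competing constraint from a non-adjacent thread able to create a conflict. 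The key lemma to prove is thus that the union of the per-edge orders with $\po{}$ is acyclic: a hypothetical cycle would project to a closed walk in the forest, which must traverse some edge and return along it, so the cycle would reduce to one confined to a single edge's bipartite interleaving, and clauses (i)--(iv) rule out such a cycle. A topological sort of the resulting acyclic relation then gives the desired trace.

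Finally, since constructing the formula, running 2SAT, and materializing the total order each cost $O(\NumEvents^2)$, the overall bound is $O(\NumEvents^2)$. I would also remark why the capacity hypothesis cannot be relaxed: for $\cp{\ch}\ge 2$ the ``at most $\cp{\ch}$ messages in flight'' condition is inherently a constraint of width exceeding $2$ and escapes the 2CNF encoding, consistent with the hardness already established in \thmref{vch-rf-hardness}.
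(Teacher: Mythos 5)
Your high-level route coincides with the paper's: use acyclicity to decompose the instance into one two-thread sub-instance per topology edge (each channel is shared by at most two threads, as you note), encode each sub-instance as a quadratic-size 2SAT formula whose transitivity constraints become binary because among any three events two share a thread, and then argue that the union of the per-edge orders with $\po{}$ is acyclic (this is exactly the paper's compositionality lemma, \lemref{acyclic-topology-compositionality}). However, your encoding itself is unsound as written: there are instances satisfying the theorem's hypotheses that are inconsistent but whose formula is satisfiable, for three separate reasons.

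First, you only declare variables $x_{e,e'}$ for cross-thread pairs of events \emph{on the same channel}, so your monotonicity rule (i) cannot couple two different channels shared by the same thread pair. Take $\thread_1 = \rcv_2(\ch_2)\cdot\snd_1(\ch_1)$ and $\thread_2 = \rcv_1(\ch_1)\cdot\snd_2(\ch_2)$ with $(\snd_1,\rcv_1),(\snd_2,\rcv_2)\in\rf{}$ and both channels capacity-unbounded: the instance is inconsistent because of the cycle $\snd_1 \prec \rcv_1 \prec \snd_2 \prec \rcv_2 \prec \snd_1$, yet your formula degenerates to two unit clauses and is satisfiable, since no clause relates a $\ch_1$-event to a $\ch_2$-event. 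The paper avoids this by letting $x_{e,f}$ range over \emph{all} pairs of events of the two threads, which is also what your rule (i) implicitly needs in order to ``guarantee a genuine interleaving.'' Second, your synchronous rule forbids only other $\ch$-\emph{events} between a matched send/receive pair, but synchronicity requires that \emph{no event whatsoever} intervene: for $\thread_1 = \snd_1(\ch_1)\cdot\snd_2(\ch_2)$, $\thread_2 = \rcv_2(\ch_2)\cdot\rcv_1(\ch_1)$ with both channels synchronous, the only linearization of $\po{}\cup\rf{}$ is $\snd_1\snd_2\rcv_2\rcv_1$, which is not a valid trace, yet your synchronous clauses are vacuous (each channel has a single matched pair); this failure persists even under the generous reading of your variable set. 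The paper's $\form_{\sync}$ instead pins adjacency through $\po{}$-neighbours, via $x_{f,\sucr{}{e}} \land x_{\pred{}{f},e}$. Third, you have no analogue of the paper's $\form_{\unmatched}$: if an $\rf{}$-unmatched send $\po{}$-precedes a matched send on the same channel (e.g.\ $\thread_1 = \snd_1(\ch)\cdot\snd_2(\ch)$, $\thread_2 = \rcv(\ch)$ with $(\snd_2,\rcv)\in\rf{}$ and $\snd_1$ unmatched), FIFO forces the receive to take the unmatched message, so the instance is inconsistent, yet none of your clauses (ii)--(iv) is violated, for unbounded or capacity-$1$ channels alike. All three defects are repairable within 2CNF --- enlarge the variable set to all cross-thread pairs and add the paper's unmatched and synchronous clauses --- so your skeleton survives, but as stated the reduction does not prove the theorem.
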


Common acyclic topologies include pipelines, server-client architectures, and general tree structures.
We remark that \thmref{vch-rf-tree-topology} allows for any combination of channels that are capacity-unbounded, have capacity $1$, or are synchronous (i.e., have capacity $0$).
Observe that the case $\NumThreads = 2$ results in an acyclic communication topology.
Due to \thmref{vch-two-threads-hardness}, an analogous polynomial bound for $\vch$ on acyclic topologies is not possible, as the problem is $\NP$-complete already for $\NumThreads=2$ threads.

At this point, it is natural to ask whether any improvements are possible over this quadratic bound, e.g., does the problem admit a linear-time solution on acyclic topologies?
To answer this question, we equip techniques from fine-grained complexity theory, and in particular, the popular strong exponential time hypothesis (SETH).
We establish the following lower bound 
\revision{
with a reduction from \emph{Orthogonal Vector} (OV) problem, which is hard for quadratic time conditioned on SETH.
}

\begin{restatable}{theorem}{vchRFTwoThreadsLowerBound}
\thmlabel{vch-rf-two-threads-lower-bound}
Under SETH, $\vchRf$ cannot be solved in time $O(\NumEvents^{2 - \epsilon})$ for any $\epsilon > 0$, 
even if $\NumThreads = 2$ and either
(i)~all channels are capacity-unbounded, or
(ii)~all channels have capacity $1$.
\end{restatable}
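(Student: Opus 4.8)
The plan is to reduce from the \emph{Orthogonal Vectors} (OV) problem: given sets $A, B \subseteq \{0,1\}^d$ with $|A| = |B| = N$ and $d = \Theta(\log^2 N)$, decide whether there exist $a \in A$ and $b \in B$ with $\sum_i a_i b_i = 0$. Under SETH, OV admits no $O(N^{2-\epsilon})$-time algorithm for any $\epsilon > 0$ in this dimension regime. From an OV instance I would construct a \vchRf instance on $\NumThreads = 2$ threads whose total number of events is $\NumEvents = O(N d) = N^{1+o(1)}$, and which is consistent if and only if $A, B$ contain an orthogonal pair. Thread $\tau_1$ would be a concatenation of $N$ \emph{vector blocks}, one per $a \in A$, and $\tau_2$ a concatenation of $N$ blocks, one per $b \in B$; each block spends $O(d)$ events encoding the coordinates of its vector, preserving the target event count.

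The core of the construction is a \emph{coordinate gadget}. When a block of $\tau_1$ for some $a$ is interleaved (aligned) against a block of $\tau_2$ for some $b$, I would arrange send/receive events so that, at each coordinate $i$, the case $a_i = b_i = 1$ forces two mutually contradictory ordering requirements --- one edge demanding that $\tau_1$'s coordinate-$i$ event precede $\tau_2$'s, and the reverse edge demanded by the \rf{}/FIFO matching (or, in the capacity-$1$ variant, by the blocking capacity constraint) --- yielding a cycle that no linearization can satisfy. If $a$ and $b$ are orthogonal, no coordinate imposes both edges and the two aligned blocks can be serialized. The residual freedom is precisely \emph{which} $a$ is aligned with \emph{which} $b$, so I would insert \emph{gate} gadgets between consecutive blocks that force a valid linearization to commit to exactly one fully aligned pair $(a,b)$ while letting all other blocks pass without constraint. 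A valid linearization thus exists exactly when some aligned pair is orthogonal, i.e.\ when the OV instance is positive.

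Both capacity regimes in the statement would be handled by the same skeleton with different gadget instantiations: in the capacity-unbounded case the contradictory edges arise purely from the fixed \rf{} matching together with FIFO order-preservation, whereas in the capacity-$1$ case the ``reverse'' edge is generated by the blocking semantics (a send cannot complete until the buffer is drained by its matched receive). After establishing correctness in both directions, the complexity bookkeeping is routine: an $O(\NumEvents^{2-\epsilon})$ algorithm for \vchRf would decide OV in time $O((Nd)^{2-\epsilon}) = O(N^{2-\epsilon}\cdot \mathrm{polylog}(N)) = O(N^{2-\epsilon'})$ for some $\epsilon' > 0$, contradicting the SETH-based lower bound for OV.

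I expect the main obstacle to be the gate design enforcing a \emph{clean} alignment. Since a single linearization must interleave \emph{all} of $\tau_1$ with \emph{all} of $\tau_2$, I must guarantee that exactly one $a$-block is matched against exactly one $b$-block, that the match is coordinate-for-coordinate (with no ``sliding'' that mixes coordinates across distinct vectors), and that unmatched blocks impose no spurious ordering constraints --- all while keeping each block to $O(d)$ events. Making the two structurally different mechanisms (pure \rf{}/FIFO cycles versus capacity-induced blocking) realize the very same alignment-and-conflict behaviour is the second delicate point.
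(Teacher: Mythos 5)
Your skeleton matches the paper's at the coarsest level: both reduce from OV, both use $\NumThreads=2$ with $O(Nd)$ events per thread (one block per vector), and your complexity bookkeeping at the end is the same and is fine. But the core mechanism you propose is not the paper's, and the part you yourself flag as the ``main obstacle'' --- the gate gadgets enforcing that every linearization commits to exactly one tightly aligned pair $(a,b)$ --- is precisely where the whole difficulty lives, and it is left entirely unconstructed. This is a genuine gap, not a routine detail. Worse, the framing is suspect: in a $\vchRf$ instance the $\rf{}$ relation is fixed, so ``contradictory edges'' derived from $\rf{}$ and FIFO closure are global properties of the constraint system, not properties of a chosen interleaving. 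A ``cycle that appears only when two blocks are aligned'' cannot literally be a cycle in the derived order; what you would actually need is a disjunctive constraint structure (the 2-thread problem is essentially 2SAT, as the paper shows elsewhere) in which \emph{every} resolution of the disjunctions fails when no orthogonal pair exists, and in which unaligned blocks genuinely impose no constraints. Nothing in the proposal indicates how FIFO order-preservation or capacity-1 blocking yields an ``exactly one aligned pair, coordinate-for-coordinate, no sliding'' semantics with $O(d)$-size blocks; this is not a known gadget and is plausibly the wrong shape for this problem.

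The paper avoids alignment and gates altogether. It places two distinguished channels $\gamma$ and $\delta$ with a single fixed $\rf{}$ edge each, one at the ``front'' ($A_1 \to B_1$) and one at the ``back'' ($B_n \to A_n$, with $\thread_B$'s blocks laid out in reverse order). The $\gamma$ edge triggers a deterministic cascade of forced orderings (saturation: ordered sends on a channel force ordered receives, and vice versa) that implicitly compares $a_1$ with $b_1$, then --- whenever the compared pair shares a coordinate, i.e.\ is \emph{not} orthogonal --- propagates to compare $a_1$ with $b_2$, and so on through all $N^2$ pairs in order; the $\delta$ edge drives a symmetric cascade in reverse. An orthogonal pair is exactly what halts the cascade (the two vectors share no channel, so no further ordering is forced), and the paper shows the resulting partial order is acyclic and linearizable; if no orthogonal pair exists, the two cascades wrap around and force a cycle against the $\gamma$/$\delta$ edges, so the instance is inconsistent. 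So ``which pair is checked'' is never a choice made by the linearization --- all pairs are checked, for free, by transitivity. Finally, for item (ii) the paper does not re-instantiate gadgets with blocking semantics as you propose: it composes the known OV-hardness of $\vscRd$ on two threads with its own linear-size reduction from $\vscRd$ to $\vchRf$ with capacity-$1$ channels (linear because the number of reads per write in that prior reduction is constant). If you want to salvage your route, the honest statement is that you have a reduction template in the style of the alignment-based SETH lower bounds (edit distance, LCS), but the transfer of ``alignment gadgets'' to a fixed-$\rf{}$ consistency problem is an open design problem in your write-up, not a proof.
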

Together, \thmref{vch-rf-tree-topology} and \thmref{vch-rf-two-threads-lower-bound} yield a tight dichotomy:~the problem takes quadratic time on acyclic topologies, and this bound is optimal, even for the simplest such topology. 
Finally, we consider fully synchronous channels, and show that \revision{the problem reduces to cycle detection on a graph}, which admits a linear time algorithm
(no matter what topology).
\begin{restatable}{theorem}{vchRfSyncUpper}
\thmlabel{vch-rf-sync-upper}
$\vchRf$ is solvable in $O(\NumEvents)$ time if all channels are synchronous.
\end{restatable}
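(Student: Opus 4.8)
The plan is to reduce $\vchRf$ on synchronous channels to cycle detection in a graph of linear size. The crucial observation is that a synchronous (capacity-$0$) channel permits no pending sends: a send $\snd$ can complete only at the instant its matching receive $\rcv$ fires, so the two form an atomic \emph{rendezvous}. Since the reads-from relation $\rf{}$ is supplied as input, we already know, for each receive, exactly which send it rendezvouses with. I would therefore build a graph $G$ whose vertices are obtained from $\eventSet$ by contracting each $\rf{}$-matched pair $\{\snd, \rcv\}$ into a single node (unmatched events, if any, remain singleton nodes), and whose edges are the images under this contraction of the immediate program-order relation: for each pair of events $e_1, e_2$ that are consecutive in $\po{}$ within a thread, I add an edge from the node of $e_1$ to the node of $e_2$. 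The central claim is that $\tuple{\AbstractExecution, \cpFunc, \rf{}}$ is realizable if and only if $G$ is acyclic.

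For the forward direction, suppose a linear trace $\trace$ realizes the abstract execution. Under synchronous semantics each matched send is immediately consumed by its receive, so the two events occupy adjacent positions in $\trace$; hence $\trace$ induces a well-defined total order on the contracted nodes of $G$. Because $\trace$ respects $\po{}$, this total order is a linear extension of the edge relation of $G$, which is therefore acyclic.

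For the backward direction, suppose $G$ is acyclic. I would take any topological order of its nodes and unfold it into a sequence of events by replacing each rendezvous node with the two events $\snd, \rcv$ placed consecutively (send first). The resulting trace $\trace$ respects $\po{}$ by construction, realizes $\rf{}$ since each send is immediately followed by its matched receive, and satisfies the capacity-$0$ constraint precisely because no second send to any channel can be pending between a send and its receive. Thus $\trace$ witnesses realizability. Note that this argument already captures the classic deadlock pattern: two pairs whose sends each precede the other's receive in program order induce a $2$-cycle in $G$, correctly flagging the instance as unrealizable.

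Finally, for complexity: $G$ has $O(\NumEvents)$ vertices and $O(\NumEvents)$ edges (each event contributes at most one outgoing program-order edge, and contraction does not increase this count asymptotically), so acyclicity can be decided in $O(\NumEvents)$ time by a single depth-first search (or Kahn's algorithm), producing a witnessing topological order within the same bound. The main obstacle I anticipate is not the algorithm but the equivalence proof: I must argue rigorously, against the paper's formal trace semantics, that rendezvous forces matched sends and receives to be consecutive (so that contraction discards no legal schedule), and I must handle edge cases such as unmatched sends or receives --- which on a synchronous channel can appear only as final, blocked operations --- without breaking either direction of the correctness argument.
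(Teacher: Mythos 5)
Your construction is essentially the paper's own: contracting each $\rf{}$-matched pair into one node and connecting nodes by immediate program-order edges is exactly the paper's send-receive graph $\gsync$ (whose nodes are the pairs $\tuple{\snd,\rcv}$ with $(\snd,\rcv)\in\rf{}$), and both directions of your equivalence mirror the paper's proof of \lemref{graph-cycle-sync-consistency}: a realizing trace schedules each rendezvous atomically and hence linearizes $\gsync$, and conversely a topological order of $\gsync$, unfolded as $\snd_1\cdot\rcv_1\cdots\snd_k\cdot\rcv_k$, is a concretization. The complexity accounting ($O(\NumEvents)$ nodes and edges, linear-time cycle detection) is also the same.

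The one genuine flaw is your treatment of unmatched events. You keep them as singleton nodes and suggest they ``can appear only as final, blocked operations,'' but under the paper's trace semantics there is no such thing as a blocked operation inside a trace: well-formedness for a synchronous channel requires that \emph{every} send event appearing in the trace be immediately followed by its matching receive (and every receive be immediately preceded by its matching send, from a different thread). Hence an instance containing any $\rf{}$-unmatched send or receive on a synchronous channel is inconsistent outright, whereas your algorithm would --- for example, on a single thread performing one unmatched send --- build an acyclic one-node graph, declare consistency, and unfold it into a sequence that is not a well-formed execution. The fix is the preprocessing step the paper performs as its ``wlog'' assumption: reject immediately unless $\rf{}$ perfectly pairs sends with receives, each pair spanning two distinct threads. (Same-thread matched pairs, incidentally, are already caught by your construction, since contraction turns them into self-loops or short cycles; it is only the unmatched case that silently goes wrong.)
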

Observe that this is in sharp contrast to $\vch$, for which the problem is intractable already with only one synchronous channel (\thmref{vch-one-chan-hardness}).

\myparagraph{Overview of empirical evaluation}
We have implemented our algorithm for channel consistency with 
reads-from (\thmref{vch-rf-solution}), primarily to demonstrate 
its efficacy over a 
vanilla approach of encoding the ($\NP$-complete) channel consistency problem as an $\smt$ formula.
We evaluated the performance of our algorithm 
as well as the vanilla $\smt$ solving based approach
on a comprehensive suite of 103 benchmarks derived from real-world Golang programs. 
The results indicate that our algorithm exhibits superior scalability compared to a $\smt$-based approach, achieving a faster completion time while encountering fewer timeouts. 
Furthermore, despite $\vchRf$ being an $\NP$-hard problem, an optimized version of our algorithm successfully scales to large instances, handling up to ~35k events, ~2k threads, and ~14k channels. 
These findings confirm our hypothesis that our frontier graph based algorithm (\thmref{vch-rf-solution}) is a highly efficient solution for channel consistency checking. 

\myparagraph{Outline}{
The technical parts of the paper are organized as follows.
In \secref{preliminaries}, we set up relevant notation and define the consistency problem for channels.
In \secref{hardness-algo}, we develop algorithms for the upper bounds in \thmref{vch-solution}, \thmref{vch-rf-solution}, \thmref{vch-rf-tree-topology} and \thmref{vch-rf-sync-upper}.
Finally, in \secref{lower-bounds-vchRf} we prove item (iii) of \thmref{vch-rf-hardness} and \thmref{vch-rf-two-threads-lower-bound}. 
Due to space restrictions, all formal proofs are relegated to the appendix.
Moreover, the remaining theorems, 
namely \thmref{vch-same-value-hardness}, \thmref{vch-two-threads-hardness}, \thmref{vch-one-chan-hardness}, and items (i) and (ii) of \thmref{vch-rf-hardness} are proven in the appendix.
}


\section{Preliminaries}
\seclabel{preliminaries}
In this section we formalize the basic concepts of channel-based executions and define the corresponding consistency-checking problems. 

\subsection{Events and executions}

\myparagraph{Channels}{
We model channels as $\fifo$ queues with (bounded or unbounded) capacities.
A \emph{send} operation on a channel $\ch$ enqueues a message to the $\fifo$ queue,
while a \emph{receive} operation pops a message from the queue.
The \emph{capacity} $\cp{\ch}$ of $\ch$ dictates how many messages can be enqueued in it simultaneously.
When $\ch$ is full (i.e., contains $\cp{\ch}$ messages), send operations on it will block, until at least one receive operation is executed on it.
We further call $\ch$ \emph{synchronous} if $\cp{\ch} = 0$.
Intuitively synchronous channels do not buffer any messages, and thus
a send operation on $\ch$ must be immediately followed by a receive operation.
An \emph{asynchronous} channel $\ch$, on the other hand, has $\cp{\ch} > 0$ and allows for
asynchronous send and receive operations. 
}

\myparagraph{Events}{
An event is a tuple $e= \ev{id, \thread, \op(\ch, \val)}$, consisting of the unique identifier $id$ of $e$,
the identifier $\thread$ of the thread that performs $e$,
the operation $\op \in \set{\snd, \rcv}$ (a channel send or receive)\footnote{Our results are easily extended to a setting that contains other common events such as thread \texttt{fork}/\texttt{join} and channel \texttt{create}/\texttt{close}. We omit such events for ease of presentation.} performed by $e$, the
identifier of the channel $\ch$ involved in the event $e$  
and the value $\val$ sent or received. 
We write $\ThreadOf{e}$, $\OpOf{e}$, $\ChOf{e}$, $\ValueOf{e}$ 
for the thread, operation, channel and value of $e$, respectively. 
We often use the more succinct notation $\snd(\ch, \val)$/ $\rcv(\ch,\val)$, when the unique identifier $id$ and thread identifier $\thread$ are clear from the context, or not important.
}

\myparagraph{Executions and well-formedness}{
An execution is a finite sequence of events $\tr = e_1e_2 \ldots e_n$ of length $|\tr| = \NumEvents$.
We denote 
by $\events{\tr}=\{e_1,\dots, e_n\}$ the set of events,
by  $\threads{\tr}$ the set of threads, and
by $\channels{\tr}$ the set of channels appearing in $\tr$.
For some channel $\ch \in \channels{\tr}$, we use $\proj{\trace}{\ch}$ 
to denote the maximal subsequence of $\tr$ containing only events accessing $\ch$. 
Likewise, we use $\proj{\trace}{\snd(\ch)}$ (resp. $\proj{\trace}{\rcv(\ch)}$) 
to denote the projection of $\tr$ onto the send (resp. receive) events on $\ch$. 
We require that executions are \emph{well-formed}, meaning that they respect the channel semantics.
Well-formedness requires that $\tr$ satisfies the following two types of constraints.
}

\mysubparagraph{Capacity Constraints}{
These require that $\tr$ respects the channel capacities.
In particular, for each channel $\ch \in \channels{\tr}$, the following hold.
\begin{compactenum}
\item (\emph{Asynchronous channels}) If $\cp{\ch} > 0$, then for each prefix $\prefix$ of $\tr$, we have
\[
|\,\proj{\prefix}{\rcv(\ch)}\,| \le |\,\proj{\prefix}{\snd(\ch)}\,| \le |\,\proj{\prefix}{\rcv(\ch)}\,| + \cp{\ch}.
\] 
In other words, every receive event should observe a send event and the number of buffered send events cannot exceed the channel capacity. 
\item (\emph{Synchronous channels}) If $\cp{\ch} = 0$, then
each send event $e = \ev{\thread, \snd(\ch)}$ on $\ch$ must immediately be followed by 
a matching receive event $e' = \ev{\thread', \rcv(\ch)}$
from a different thread $\thread' \neq \thread$. 
Likewise, each receive event $e = \ev{\thread, \rcv(\ch)}$ on $\tr$ must be immediately preceded
by a matching send event $e' = \ev{\thread', \snd(\ch)}$ from a different thread $\thread' \neq \thread$.
Observe that  this implies an equal number of send and receive events on $\ch$.
\end{compactenum}
}
A thread attempting to send on a full channel is blocked (normally by the runtime),
until the channel is received, freeing up space for the new incoming message.
The events listed in $\tr$ are executed events, meaning that each channel send completed successfully, and was thus performed on a non-full channel.
For synchronous channels, a send operation is executed simultaneously with its matching receive, since capacity $0$ does not allow storing the message sent.

\mysubparagraph{Value Constraints}{
These require that matching $\snd$/$\rcv$ events on the same channel observe identical values.
In particular, for each channel $\ch \in \channels{\tr}$, for each $1 \leq i \leq |\proj{\trace}{\rcv(\ch)}|$,
if the $i$-th send (resp., receive) event in $\ch$ is $\snd(\ch, \val_1)$ (resp., $\rcv(\ch, \val_2)$),
then $\val_1=\val_2$.
}

\begin{figure}[htbp!]
\centering
\begin{subfigure}[b]{0.24\textwidth}
\centering
\scalebox{0.9}{
\execution{2}{
        \figev{1}{\wt(\ch_1, 1)}
	\figev{1}{\wt(\ch_1, 2)}
        \figev{2}{\rd(\ch_1, 1)}
	\figev{2}{\rd(\ch_1, 2)}
        \figev{2}{\wt(\ch_2, 3)}
	\figev{1}{\rd(\ch_2, 3)}
}
}
\caption{Execution $\tr_1$}
\figlabel{well-formed}
\end{subfigure}
\hfill
\begin{subfigure}[b]{0.24\textwidth}
\centering
\scalebox{0.9}{
\execution{2}{
        \figev{1}{\wt(\ch_1, 1)}
	\figev{1}{\wt(\ch_1, 2)}
        \figev{2}{\wt(\ch_1, 3)}
	\figev{1}{\rd(\ch_1, 1)}
	\figev{2}{\rd(\ch_1, 2)}
	\figev{2}{\rd(\ch_1, 3)}
}
}
\caption{Execution $\tr_2$}
\figlabel{ill-formed-capacity-async}
\end{subfigure}
\hfill
\begin{subfigure}[b]{0.24\textwidth}
\centering
\scalebox{0.9}{
\execution{2}{
        \figev{1}{\wt(\ch_1, 1)}
	\figev{1}{\wt(\ch_2, 2)}
	\figev{1}{\rd(\ch_1, 1)}
        \figev{2}{\wt(\ch_1, 3)}
	\figev{2}{\rd(\ch_2, 2)}
	\figev{2}{\rd(\ch_1, 3)}
}
}
\caption{Execution $\tr_3$}
\figlabel{ill-formed-capacity-sync}
\end{subfigure}
\hfill
\begin{subfigure}[b]{0.24\textwidth}
\centering
\scalebox{0.9}{
\execution{2}{
        \figev{1}{\wt(\ch_1, 1)}
	\figev{1}{\wt(\ch_1, 2)}
	\figev{1}{\rd(\ch_1, 1)}
        \figev{2}{\wt(\ch_1, 3)}
	\figev{2}{\rd(\ch_1, 3)}
	\figev{2}{\rd(\ch_1, 2)}
}
}
\caption{Execution $\tr_4$}
\figlabel{ill-formed-value}
\end{subfigure}
\caption{
Four executions on two channels $\ch_1$ and $\ch_2$ with capacities $\cp{\ch_1} = 2$ and $\cp{\ch_2} = 0$. 
Execution $\tr_1$ is well-formed but $\tr_2, \tr_3, \tr_4$ are not.
}
\figlabel{well-formedness}
\end{figure}
\raggedbottom

\begin{example}
Consider the four executions $\tr_1, \tr_2, \tr_3$ and $\tr_4$ in \figref{well-formedness}.
Each $\tr_i$ contains $6$ events and uses two channels $\ch_1$ and $\ch_2$ whose capacities
are $\cp{\ch_1} = 2$ (i.e., asynchronous channel) and $\cp{\ch_2} = 0$ (i.e., synchronous channel) respectively.
We use $e_i$ to denote the $i^\text{th}$ event of an execution. 
First, consider the execution $\tr_1$ (\figref{well-formed}), which is well-formed. 
The capacity constraint on $\ch_2$ is met because the (unique) send ($e_5$) and receive ($e_6$) events on $\ch_2$ appear consecutively.
Further, the two events access the same value.
Moreover, in every prefix of $\tr_1$, the number of buffered messages in $\ch_1$ never exceeds its capacity $2$, 
and the order of values being sent ($1 \to 2$)
matches that of the values being received on $\ch_1$, 
ensuring the value constraint for $\ch_1$ as well.
Now, consider $\tr_2$ in \figref{ill-formed-capacity-async},
which is not well-formed since, at $e_3$,  $\ch_1$ contains $3$ messages, exceeding its capacity.
Next, the execution $\tr_3$ in \figref{ill-formed-capacity-sync} is
not well-formed, because the send and receive events ($e_2$ and $e_5$)
on the synchronous channel $\ch_2$ are not consecutive.
Finally, the execution $\tr_4$ in \figref{ill-formed-value}
is not well-formed since the order of values sent ($1 \to 2 \to 3$)
is not the same as the order of values received ($1 \to 3 \to 2$).
\end{example}


\myparagraph{Trace order, program order and the reads-from relation}{
The \emph{trace order} of an execution $\tr$, denoted $\stricttrord{\tr}$, is the total order on $\events{\tr}$ induced by the sequence $\tr$.
The \emph{program order} $\po{\tr}$ of $\tr$ defines a total order on the events of each thread, i.e., for any two events $e_1, e_2 \in \events{\tr}$, we have $(e_1, e_2) \in \po{\tr}$ iff $e_1 \trord{\tr} e_2$ and $\ThreadOf{e_1} = \ThreadOf{e_2}$.
The (binary) \emph{reads-from} relation $\rf{\tr}$ induced by  $\tr$ maps receive events to their matching send events.
That is, $(\snd, \rcv) \in \rf{\tr}$,  iff there is a channel $\ch \in \channels{\tr}$ and some $i\in\Nats$ such that $\snd$ is the $i^\text{th}$ send event and $\rcv$ is the $i^\text{th}$ receive event on $\ch$.
We often use the shorthand $\rf{\tr}(\rcv)$ for the event $\snd$ such that $(\snd, \rcv) \in \rf{\tr}$.

}

\begin{example}
Consider again the execution $\tr_1$ in \figref{well-formed}.
We have $\rf{\tr_1}(e_3)=e_1$, $\rf{\tr_1}(e_4)=e_2$ and $\rf{\tr_1}(e_6)=e_5$.
We have $(e_1, e_3) \in \rf{\tr_1}$ and $(e_2, e_4)\in \rf{\tr_1}$. 
The program order of $\tr_1$ is $\po{\tr_1} = \set{(e_1, e_2), (e_2, e_6), (e_3, e_4), (e_4, e_5)}^+$, where, $R^+$ denotes the  transitive closure of the binary relation $R$.
\end{example}

\subsection{Verifying the Consistency of Message-Passing Concurrency}

We now state the consistency problem we study in this work.

\myparagraph{Abstract executions and consistency}{
The consistency problem is phrased on a pair $\tuple{\AbstractExecution, \cpFunc}$, 
where an \emph{abstract} execution $\AbstractExecution$ captures the local execution of each thread
and a capacity function $\cpFunc\colon \channels{\AbstractExecution}\to \Nats$ specifies the capacity of each channel,
where $\channels{\AbstractExecution}$ is the set of channels accessed by events in $\AbstractExecution$.
An abstract execution is a tuple
$\AbstractExecution = \tuple{\eventSet, \po{}}$,
where $\eventSet$ is some set of events, and $\po{}$
describes a per-thread total order on events in $\eventSet$.
An execution $\tr$ is a \emph{concretization} of $\AbstractExecution = \tuple{\eventSet, \po{}}$ with capacity function $\cpFunc$ if 
(i)$~\events{\tr} = \eventSet$,
(ii)~$\po{\tr} = \po{}$, and
(iii)~$\tr$ is well-formed with respect to the channel capacities specified by $\cpFunc$.
Finally, $\tuple{\AbstractExecution, \cpFunc}$ is \emph{consistent} if there exists an execution $\tr$ that concretizes it.
The consistency checking problem is thus formally stated below.

\begin{problem}[Verify channel consistency, $\vch$]
\problabel{VCH}
Given an abstract execution $\AbstractExecution = \tuple{\eventSet, \po{}}$ and capacity function $\cpFunc$, decide if $\tuple{\AbstractExecution, \cpFunc}$ is consistent.
\end{problem}

\myparagraph{Consistency with a reads-from relation}{
The \emph{consistency problem with a reads-from relation} is a tuple 
$\tuple{\AbstractExecution, \cpFunc, \rf{}}$, where $\eventSet$ and $\po{}$ are, as before, respectively a set of events and a per-thread total order on this set,
while $\rf{}$ matches send and receive events of $\eventSet$ on the same channel.
An execution $\tr$ concretizes $\AbstractExecution = \tuple{\eventSet, \po{}}$ and $\rf{}$
if it concretizes $\tuple{\eventSet, \po{}}$ (as in $\vch$), and moreover $\rf{\tr}=\rf{}$. 
\revision{In later sections, we omit the values of events in $\vchRf$ instances, as the values are not relevant once the reads-from relation is given.}
The corresponding consistency problem is defined analogously.

\begin{problem}[Verify channel consistency with reads-from, $\vchRf$]
\problabel{VCH-rf}
Given an abstract execution $\AbstractExecution = \tuple{\eventSet, \po{}}$ with reads-from relation $\rf{}$ and capacity function $\cpFunc$, decide if $\tuple{\AbstractExecution, \cpFunc, \rf{}}$ is consistent.
\end{problem}
}
It is not hard to see that $\vchRf$ is an easier problem than $\vch$, in the sense that the former is a special case of the latter (e.g., by requiring that every send uses a unique value).


\begin{figure}[t]
\centering
\begin{subfigure}[b]{0.46\textwidth}
\newcommand{\xstep}{2.8}
\newcommand{\ystep}{-0.8}
\newcommand{\height}{0.5}
\newcommand{\wid}{1.6}
\tikzstyle{event}=[draw=black, align=center, fill=white, line width=0.8pt, minimum width=\wid cm,minimum height=\height cm]
\centering
\scalebox{0.9}{
\begin{tikzpicture}[font=\small]
\node[event] (t1-1) at (0 * \xstep, 0 * \ystep)  {$\snd_1(\ch, 1)$};
\node[event] (t1-2) at (0 * \xstep, 1 * \ystep)  {$\snd_2(\ch, 2)$};
\node[event] (t1-3) at (0 * \xstep, 2 * \ystep)  {$\rcv_1(\ch, 2)$};

\node[event] (t2-1) at (1 * \xstep, 0 * \ystep) {$\rcv_2(\ch, 1)$};
\node[event] (t2-2) at (1 * \xstep, 1 * \ystep)  {$\rcv_3(\ch, 2)$};
\node[event] (t2-3) at (1 * \xstep, 2 * \ystep) {$\snd_3(\ch, 2)$};

\node[threadName] at (0 * \xstep, -1.0 * \ystep) {$\thread_{1}$};
\node[threadName] at (1 * \xstep, -1.0 * \ystep) {$\thread_{2}$};

\begin{scope}[on background layer]
\draw [thread] (0 * \xstep, -0.6 * \ystep) -- (0 * \xstep, 2.7 * \ystep);
\draw [thread] (1 * \xstep, -0.6 * \ystep) -- (1 * \xstep, 2.7 * \ystep);
\end{scope}
\end{tikzpicture}
}
\caption{A $\vch$ instance $\tuple{\AbstractExecution_1, \cpFunc_1}$}
\figlabel{vch-demo}
\end{subfigure}
\hfill
\begin{subfigure}[b]{0.53\textwidth}
\newcommand{\xstep}{3}
\newcommand{\ystep}{-0.8}
\newcommand{\height}{0.5}
\newcommand{\wid}{1.6}
\tikzstyle{event}=[draw=black, align=center, fill=white, line width=0.8pt, minimum width=\wid cm,minimum height=\height cm]
\tikzstyle{edge}=[->, line width=0.4mm, color=colorRF]
\centering
\scalebox{0.9}{
\begin{tikzpicture}[font=\small]
\node[event] (t1-1) at (0 * \xstep, 0 * \ystep) {$\snd_1(\ch)$};
\node[event] (t1-2) at (0 * \xstep, 1 * \ystep) {$\rcv_3(\ch)$};
\node[event] (t1-3) at (0 * \xstep, 2 * \ystep) {$\rcv_2(\ch)$};

\node[event] (t2-1) at (1 * \xstep, 0 * \ystep) {$\snd_2(\ch)$};
\node[event] (t2-2) at (1 * \xstep, 1 * \ystep) {$\snd_3(\ch)$};
\node[event] (t2-3) at (1 * \xstep, 2 * \ystep) {$\rcv_1(\ch)$};
    
\draw[rfEdge, out=0, in=180] (t1-1) to (t2-3);
\draw[rfEdge] (t2-2) to (t1-2);
\draw[rfEdge, out=180, in=0] (t2-1) to (t1-3);

\node [threadName] at (0 * \xstep, -1.0 * \ystep) {$\thread_{1}$};
\node [threadName] at (1 * \xstep, -1.0 * \ystep) {$\thread_{2}$};
    
\begin{scope}[on background layer]
\draw [thread] (0 * \xstep, -0.6 * \ystep) -- (0 * \xstep, 2.7 * \ystep);
\draw [thread] (1 * \xstep, -0.6 * \ystep) -- (1 * \xstep, 2.7 * \ystep);
\end{scope}
\end{tikzpicture}
}
\caption{A $\vchRf$ instance $\tuple{\AbstractExecution_2, \cpFunc_2, \rf{}}$}
\figlabel{vch-rf-demo}
\end{subfigure}
\caption{
A positive $\vch$ instance (\subref{fig:vch-demo}) and a  negative $\vchRf$ instance (\subref{fig:vch-rf-demo}). 
$\cpFunc_1(\ch) = \cpFunc_2(\ch) = 1$. 
\revision{Event subscripts are used to distinguish send/receive events. The same convention applies in subsequent figures.} 
}
\figlabel{vsc-example}
\end{figure}

\begin{example}
\figref{vch-demo} is a positive instance of $\vch$, witnessed by the execution $\tr_1 = \snd_1 \cdot \rcv_2 \cdot \snd_2 \cdot \rcv_3 \cdot \snd_3 \cdot \rcv_1$.
\figref{vch-rf-demo} is a negative  instance of $\vchRf$.
This is because any execution $\tr$ that concretizes $\tuple{\AbstractExecution_2, \cpFunc_2, \rf{}}$ must satisfy $\rcv_3 \stricttrord{\tr} \rcv_2$ and $\snd_2 \stricttrord{\tr} \snd_3$, due to the imposed program order.
The former, however, implies $\snd_3 \stricttrord{\tr} \snd_2$, contradicting the latter.
\end{example}

\section{Algorithms For Checking Consistency}
\seclabel{hardness-algo}

In this section we present algorithms for solving $\vch$ and $\vchRf$.
In particular, in \secref{general-solution} we develop the general algorithms for the two problems, leading to \thmref{vch-solution} and \thmref{vch-rf-solution}.
Then, in \secref{vch-rf-sync-solution-sec}, we focus on the special case of fully synchronous channels, and develop an efficient (linear-time) algorithm towards \thmref{vch-rf-sync-upper}.
Finally, in \secref{tree-topo-upper-bound} we focus on acyclic communication topologies, and develop a quadratic-time algorithm towards \thmref{vch-rf-tree-topology}.

\subsection{Algorithms for $\vch$ and $\vchRf$}
\seclabel{general-solution}
We now present our algorithms  for $\vch$ and $\vchRf$.
A naive algorithm for either problem would enumerate all possible permutations of the input set of events and look for one permutation that serves as the witness of consistency.
However, this approach takes $\Omega(\NumEvents!)$ time, which is significantly worse than the bounds we aim for.

Our algorithms for each problem circumvent this prohibitive complexity by succinctly encoding executions as paths in a \emph{frontier graph}, which has polynomial size when the number of threads $\NumThreads$, the number of channels $\NumChannels$ and  the maximum channel capacity $\MaxCapacity$ are bounded.


\revision{
Frontier graphs have previously been developed for consistency testing under shared memory~\cite{gibbons1994testing,Abdulla2019b,Agarwal2021}, but not for channel-based concurrency.
In the shared-memory setting, each read observes the most recent write, so nodes in the frontier graph only need to record the latest value of each memory location. 
In contrast, channels pose greater challenges for constructing succinct frontier graphs due to the vastly larger space of possible consistent executions. 
Here, nodes must also encode channel contents, making state counting both complex and subtle. 
}

\myparagraph{The frontier graph for $\vch$}{
Given a $\vch$ instance $\tuple{\AbstractExecution, \cpFunc}$ where $\AbstractExecution = \tuple{\eventSet, \po{}}$, we define its frontier graph $\gfrontier = (V, E)$ as follows. 

\mysubparagraph{The node set $V$}{
Each node $v \in V$ is a tuple of the form $v = \tuple{\fgEventSet, \fgChanMap, \fgSyncSend}$. 
Intuitively, ${\fgEventSet}$ specifies the subset of events of $\AbstractExecution$ that an execution has executed when it reaches the corresponding node in $\gfrontier$.
${\fgChanMap}$ specifies the contents of the asynchronous channels, while ${\fgSyncSend}$ specifies the (at most one) send event on a synchronous channel that must be matched in the next step.
We now formally specify $\fgEventSet$, $\fgChanMap$ and $\fgSyncSend$ as follows.
\begin{compactenum}
\item $\fgEventSet \subseteq \eventSet$, and $\fgEventSet$ is downward closed with respect to $\po{}$, i.e., for each $(e, f) \in \po{}$ and if $f \in \fgEventSet$, then $e \in \fgEventSet$. 
Given a channel $\ch$, let  $\s{num}_{\fgEventSet}(\snd(\ch))$ and $\s{num}_{\fgEventSet}(\rcv(\ch))$ denote the number of send and receive events on $\ch$ in $\fgEventSet$.
First, we require that there is at most one synchronous channel $\ch$ with $\s{num}_{\fgEventSet}(\rcv(\ch)) = \s{num}_{\fgEventSet}(\snd(\ch)) - 1$, while for all other synchronous channels $\ch'$, we have $\s{num}_{\fgEventSet}(\rcv(\ch')) = \s{num}_{\fgEventSet}(\snd(\ch'))$. 
Second, we require that for any asynchronous channel $\ch$, 
the following holds. 
\[
\s{num}_{\fgEventSet}(\rcv(\ch)) \le \s{num}_{\fgEventSet}(\snd(\ch)) \le \s{num}_{\fgEventSet}(\rcv(\ch)) + \cp{\ch}
\]

\item ${\fgChanMap}\colon \channels{\AbstractExecution} \to {\fgEventSet}^{\leq k}$ maps each asynchronous channel $\ch$ in $\eventSet$ (i.e., $\cp{\ch} > 0$)  to a sequence of events in ${\fgEventSet}$, whose length is bounded by $\cp{\ch}$, i.e., ${\fgChanMap}(\ch) = e_1 \cdot e_2 \cdots e_p$, where $0 \leq p \leq \cp{\ch} \leq k$, and $e_1, \ldots, e_p \in {\fgEventSet}$.
Moreover, for any asynchronous channel $\ch$, ${\fgChanMap}$ must satisfy that 
if some event $e$ appears in ${\fgChanMap}(\ch)$, then $e$ is one of the last $|{\fgChanMap}(\ch)|$ send events to channel $\ch$ in thread $\ThreadOf{e}$.
\item ${\fgSyncSend}$ is either $\bot$ or points to a send event of a synchronous channel. 
{\fgSyncSend}n particular, if there is a synchronous channel $\ch$ such that $\s{num}_{\fgEventSet}(\rcv(\ch)) = \s{num}_{\fgEventSet}(\snd(\ch)) - 1$, then ${\fgSyncSend}=\snd$, for some send event $\snd$ on $\ch$.
Otherwise, ${\fgSyncSend}=\bot$.
\end{compactenum}

Finally, we have a distinguished \emph{source node}, defined as $\tuple{\emptyset, \lambda~\ch. \epsilon, \bot}$, 
as well as one or more \emph{sink nodes}, defined as $\tuple{\eventSet, {\fgChanMap}, \bot}$.
{\fgSyncSend}n words, the source node captures the case that no event of $\AbstractExecution$ has been executed, while a sink node captures that all events of $\AbstractExecution$ have been executed (sink nodes might differ on the contents of the channels ${\fgChanMap}$, containing messages that are never received).
}

\mysubparagraph{The edge set $E$}{
Concrete executions that serve as potential witnesses of the consistency of $\tuple{\AbstractExecution, \cpFunc}$ are captured as paths 
in $\gfrontier$ starting from the source node.
An edge $(v_1, v_2)\in E$ intuitively captures whether \emph{any} execution reaching $v_1$ can be extended to $v_2$.
The information contained in $v_1$ is sufficient to decide whether this is possible.
{\fgSyncSend}n particular, let  $v_1 = \tuple{{\fgEventSet}_1, {\fgChanMap}_1, {\fgSyncSend}_1}$ and $v_2 = \tuple{{\fgEventSet}_2, {\fgChanMap}_2, {\fgSyncSend}_2}$.
We have $(v_1, v_2)\in E$ if there is an event $e \in \eventSet\setminus {\fgEventSet}_1$ such that ${\fgEventSet}_2 = {\fgEventSet}_1 \cup \set{e}$ and the following conditions hold, where $\ch = \ChOf{e}$.
\begin{compactenum}
\item {\fgSyncSend}f $\ch$ is asynchronous and $\OpOf{e} = \rcv$,  then we require that the following hold.
\begin{compactenum}
\item ${\fgSyncSend}_1 \pointsTo \bot, {\fgSyncSend}_2 \pointsTo \bot$.
\item ${\fgChanMap}_1(\ch) \neq \epsilon$, and the first event of ${\fgChanMap}_1(\ch)$, i.e., $e_{{\fgChanMap}_1, \ch, \s{first}} = {\fgChanMap}_1(\ch)[0]$ satisfies $\ValueOf{e_{{\fgChanMap}_1, \ch, \s{first}}} = \ValueOf{e}$.
Moreover, ${\fgChanMap}_2(\ch)$ is obtained by removing the first event of ${\fgChanMap}_1(\ch)$, i.e., ${\fgChanMap}_1(\ch) = e_{{\fgChanMap}_1, \ch, \s{first}} \cdot {\fgChanMap}_2(\ch)$.
\item For all other asynchronous channels $\ch' \neq \ch$, we have ${\fgChanMap}_2(\ch') = {\fgChanMap}_1(\ch')$. 
\end{compactenum}

\item {\fgSyncSend}f $\ch$ is asynchronous and $\OpOf{e} = \snd$, then we require that the following hold.
\begin{compactenum}
\item ${\fgSyncSend}_1 = \bot, {\fgSyncSend}_2 = \bot$.
\item $|{\fgChanMap}_1(\ch)| < $ $\cp{\ch}$, and ${\fgChanMap}_2(\ch)$ is obtained by appending $e$ at the end of ${\fgChanMap}_1(\ch)$, i.e., ${\fgChanMap}_2(\ch) = {\fgChanMap}_1(\ch) \cdot e$.
\item For all other asynchronous channels $\ch' \neq \ch$, we have ${\fgChanMap}_2(\ch') = {\fgChanMap}_1(\ch')$. 
\end{compactenum}
    
\item {\fgSyncSend}f $\ch$ is synchronous and $\OpOf{e} = \snd$,  then we require that
\begin{enumerate*}
\item ${\fgSyncSend}_1 \pointsTo \bot, {\fgSyncSend}_2 \pointsTo e$, and 
\item for all asynchronous channels $\ch'$, ${\fgChanMap}_1(\ch') = {\fgChanMap}_2(\ch')$. 
\end{enumerate*}

\item {\fgSyncSend}f $\ch$ is synchronous and $\OpOf{e} = \rcv$, then we require that
\begin{enumerate*}
\item ${\fgSyncSend}_1 \pointsTo e'\neq \bot, {\fgSyncSend}_2 \pointsTo \bot$, and $e'$ satisfies $\OpOf{e'} = \snd$, $\ChOf{e'} = \ch$, $\ValueOf{e'} = \ValueOf{e}$, and $\ThreadOf{e} \neq \ThreadOf{e'}$, and,
\item for all asynchronous channels $\ch'$, ${\fgChanMap}_1(\ch') = {\fgChanMap}_2(\ch')$. 
\end{enumerate*}
\end{compactenum} 
{\fgSyncSend}f the above hold, we say that the edge $(v_1, v_2)$ is labeled by $e$, and often write $v_1 \xrightarrow{e} v_2$.
See \figref{frontier-algo-demo} for an example.
The following lemma states that $\gfrontier$ captures the consistency of $\tuple{\AbstractExecution, \cpFunc}$.
}

\begin{restatable}{lemma}{vchConsToReach}
\lemlabel{consistency-to-reachability}
$\tuple{\AbstractExecution, \cpFunc}$ is consistent iff there is a sink node reachable from the source node in $\gfrontier$.
\end{restatable}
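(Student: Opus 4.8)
The plan is to prove both directions of the biconditional by establishing a tight correspondence between concretizing executions of $\tuple{\AbstractExecution, \cpFunc}$ and source-to-sink paths in $\gfrontier$. The central invariant I would maintain is that a node $v = \tuple{\fgEventSet, \fgChanMap, \fgSyncSend}$ faithfully records the ``state'' of the channel system after executing some set of events: $\fgEventSet$ is exactly the set of events executed so far, $\fgChanMap(\ch)$ is the current buffer contents of each asynchronous channel $\ch$ (in FIFO order), and $\fgSyncSend$ records a pending unmatched send on a synchronous channel, if any. I would make this precise by defining, for any well-formed prefix $\prefix$ of an execution, the node $\s{state}(\prefix)$ obtained by reading off these three quantities, and then check that $\s{state}(\prefix)$ satisfies all the node-membership constraints (downward closure of $\fgEventSet$ under $\po{}$, the capacity inequalities, the ``last sends'' condition on $\fgChanMap(\ch)$, and the at-most-one pending synchronous send condition). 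This last verification is essentially a restatement of well-formedness and is the routine part.

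For the \emph{forward direction}, I would take a concretization $\tr = e_1 e_2 \cdots e_\NumEvents$ and consider the sequence of prefixes $\prefix_0 = \epsilon, \prefix_1, \ldots, \prefix_\NumEvents = \tr$, setting $v_i = \s{state}(\prefix_i)$. The key claim is that $v_0$ is the source node, $v_\NumEvents$ is a sink node, and each consecutive pair $(v_{i-1}, v_i)$ is an edge labeled by $e_i$. I would verify the edge condition by a case analysis on whether $\ChOf{e_i}$ is asynchronous or synchronous and whether $\OpOf{e_i}$ is a send or receive, showing in each of the four cases that the local update to $\fgChanMap$ or $\fgSyncSend$ dictated by $e_i$ matches exactly the edge-transition rules (1)--(4). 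For instance, an asynchronous receive pops the front of the buffer and requires the value to match, which is precisely rule~(1); a synchronous send sets $\fgSyncSend$ to point at $e_i$, which is rule~(3). That $v_\NumEvents$ has $\fgSyncSend = \bot$ follows because a well-formed execution leaves no synchronous send unmatched at the end.

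For the \emph{backward direction}, I would take a source-to-sink path $v_0 \xrightarrow{e_1} v_1 \xrightarrow{e_2} \cdots \xrightarrow{e_\NumEvents} v_\NumEvents$ and define $\tr = e_1 e_2 \cdots e_\NumEvents$ from the edge labels. First I would argue that $\tr$ is a genuine permutation of $\eventSet$: since each edge adds exactly one fresh event ($\fgEventSet_{i} = \fgEventSet_{i-1} \cup \set{e_i}$ with $e_i \notin \fgEventSet_{i-1}$) and the path ends at a sink with $\fgEventSet = \eventSet$, all $\NumEvents$ events appear exactly once. Next I would show $\tr$ respects $\po{}$: because each $\fgEventSet_i$ is downward closed, an event is executed only after all its $\po{}$-predecessors, so $\po{\tr} = \po{}$. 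Finally I would show $\tr$ is well-formed by proving, via induction along the path, that $\s{state}(\prefix_i) = v_i$; the edge rules guarantee the capacity inequalities are never violated (rule~(2) checks $|\fgChanMap_1(\ch)| < \cp{\ch}$ before appending) and that every receive matches its FIFO-correct send with equal value (rules (1) and (4)), which yields both the capacity and value constraints of well-formedness.

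The main obstacle I anticipate is the bookkeeping in the backward direction, specifically verifying that the purely \emph{local} transition rules on nodes genuinely reconstruct a \emph{globally} well-formed execution --- in particular that $\fgChanMap_i(\ch)$ always equals the true buffer contents of $\ch$ in $\prefix_i$. The subtlety is that the node only stores buffer contents and a single pending synchronous send, discarding the full history; I must confirm this summary is a \emph{sufficient statistic}, i.e.\ that no two prefixes with the same $\s{state}$ can be distinguished by any legal future extension. This is exactly where the ``last $|\fgChanMap(\ch)|$ send events'' side condition on $\fgChanMap$ does real work: it ensures the recorded buffer is consistent with the sends already counted in $\fgEventSet$, so that the induction $\s{state}(\prefix_i) = v_i$ closes cleanly and the reconstructed $\tr$ is well-formed.
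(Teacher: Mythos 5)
Your proposal is correct and follows essentially the same route as the paper's proof: both directions establish the correspondence between concretizations and source-to-sink paths, reading off the trace from edge labels in one direction and building the path prefix-by-prefix (with a case analysis on the four edge rules) in the other. Your explicit $\s{state}(\prefix)$ invariant is a more careful formalization of what the paper argues informally, but it is the same argument.
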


\begin{figure}[t]
\centering
\newcommand{\xstep}{1.7}
\newcommand{\ystep}{-0.8}
\newcommand{\ystepsnd}{-1.3}
\newcommand{\height}{0.5}
\newcommand{\wid}{1.2}

\begin{subfigure}[b]{0.42\textwidth}
\centering
\scalebox{0.9}{
\begin{tikzpicture}[font=\small]
\node [event] (n1) at (0 * \xstep, 0 * \ystep) {$\wt_1(\ch, 1)$};

\node [event] (n1) at (1 * \xstep, 1 * \ystep) {$\wt_2(\ch,2)$};
\node [event] (n1) at (1 * \xstep, 2 * \ystep) {$\rd_3(\ch,1)$};

\node [event] (n1) at (2 * \xstep, 3 * \ystep) {$\rd_4(\ch,2)$};

\node[threadName] at (0 * \xstep, -1.0 * \ystep) {$\thread_{1}$};
\node[threadName] at (1 * \xstep, -1.0 * \ystep) {$\thread_{2}$};
\node[threadName] at (2 * \xstep, -1.0 * \ystep) {$\thread_{3}$};

\begin{scope}[on background layer]
\draw [thread] (0 * \xstep, -0.6 * \ystep) -- (0 * \xstep, 3.7 * \ystep);
\draw [thread] (1 * \xstep, -0.6 * \ystep) -- (1 * \xstep, 3.7 * \ystep);
\draw [thread] (2 * \xstep, -0.6 * \ystep) -- (2 * \xstep, 3.7 * \ystep);
\end{scope}
\end{tikzpicture}
}
\caption{A $\vch$ instance $\tuple{\AbstractExecution, \cpFunc}$ with $\cpFunc(\ch)=2$.}
\figlabel{frontier-algo-demo-trace}
\end{subfigure}
\hfill
\begin{subfigure}[b]{0.57\textwidth}
\renewcommand{\xstep}{0.85}
\tikzstyle{labelEvent}=[text width=1cm]
\centering
\scalebox{0.75}{
\begin{tikzpicture}[font=\small]
\node [dotted, event] (n1) at (0,0) {$Y = \emptyset$\\$Q(\ch) = \epsilon$};
\node [event] (n2) at (-1.5*\xstep, 1 * \ystepsnd) {$Y = \set{\snd_1}$\\$Q(\ch) = \wt_1$};
\node [event] (n3) at (1.5*\xstep, 1 * \ystepsnd) {$Y = \set{\snd_2}$\\$Q(\ch) = \wt_2$};
\node [event] (n4) at (-3*\xstep,2 * \ystepsnd) {$Y = \set{\snd_1, \snd_2}$\\$Q(\ch) = \wt_1 \cdot \wt_2$};
\node [event] (n5) at (0.35*\xstep,2 * \ystepsnd) {$Y = \set{\snd_1, \snd_2}$\\$Q(\ch) = \wt_2 \cdot \wt_1$};
\node [event] (n6) at (3.5*\xstep,2 * \ystepsnd) {$Y = \set{\snd_2, \rcv_4}$\\$Q(\ch) = \epsilon$};
\node [event] (n7) at (-2.0*\xstep,3 * \ystepsnd) {$Y = \set{\snd_1, \snd_2, \rcv_3}$\\$Q(\ch) = \wt_2$};
\node [event] (n8) at (2.0*\xstep,3 * \ystepsnd) {$Y = \set{\snd_1, \snd_2, \rcv_4}$\\$Q(\ch) = \wt_1$};
\node [dashed, event] (n9) at (0*\xstep,4 * \ystepsnd) {$Y = \set{\snd_1, \snd_2, \rcv_3, \rcv_4}$\\$Q(\ch) = \epsilon$};

\draw[executeEdge] (n1) to node[left, pos=0.3]{$\snd_1$} (n2);
\draw[executeEdge] (n1) to node[right, pos=0.3]{$\snd_2$} (n3);
\draw[executeEdge] (n2) to node[left, pos=0.3]{$\snd_2$} (n4);
\draw[executeEdge] (n3) to node[left, pos=0.3]{$\snd_1$} (n5);
\draw[executeEdge] (n3) to node[right, pos=0.3]{$\rcv_4$} (n6);
\draw[executeEdge] (n4) to node[left, pos=0.6]{$\rcv_3$} (n7);
\draw[executeEdge] (n5) to node[left, pos=0.6]{$\rcv_4$} (n8);
\draw[executeEdge] (n6) to node[right, pos=0.7]{$\snd_1$} (n8);
\draw[executeEdge] (n7) to node[left, pos=0.6]{$\rcv_4$} (n9);
\draw[executeEdge] (n8) to node[right, pos=0.6]{$\rcv_3$} (n9);

\end{tikzpicture}
}
\caption{The frontier graph $\gfrontier$ for $\tuple{\AbstractExecution, \cpFunc}$}
\figlabel{frontier-algo-demo-graph}
\end{subfigure}
\caption{A $\vch$ instance (\subref{fig:frontier-algo-demo-trace})  and its frontier graph (\subref{fig:frontier-algo-demo-graph}), witnessing the consistency of $\tuple{\AbstractExecution, \cpFunc}$.
There is a path from source (dotted node) to sink (dashed node), 
and the events labelling this path form a valid concretization, 
i.e., $\trace = \snd_1 \cdot \snd_2 \cdot \rcv_3 \cdot \rcv_4$. 
Therefore, 
$\tuple{\AbstractExecution, \cpFunc}$ is consistent. 
}
\figlabel{frontier-algo-demo}
\end{figure}

\myparagraph{Time complexity}{
Given  \lemref{consistency-to-reachability}, we can solve $\vch$ by constructing $\gfrontier$ and solving standard graph reachability on it.
Recall that each node is a tuple $\tuple{{\fgEventSet}, {\fgChanMap}, {\fgSyncSend}}$.
${\fgEventSet}$ is a $\po{}$-downward closed set, and there are at most 
$(\NumEvents^\NumThreads/\NumThreads^\NumThreads)$ many distinct subsets of $\eventSet$ of this form (using AM-GM inequality).
For each fixed ${\fgEventSet}$, the number of different possible ${\fgSyncSend}$ is upper bounded by $\NumThreads+1$, since ${\fgSyncSend}$ is either $\bot$ or points to the last event of a thread in ${\fgEventSet}$. 
Finally, consider the component ${\fgChanMap}$. 
For any asynchronous channel $\ch$, 
the number of messages in ${\fgChanMap}(\ch)$ is 
$\ell = \s{num}_{\fgEventSet}(\snd(\ch)) - \s{num}_{\fgEventSet}(\rcv(\ch))$. 
\revision{
A naive counting approach would enumerate the last $\ell$ sends to channel $\ch$ in every thread, 
forming a potential set of unreceived sends of size $\NumThreads \ell$. 
Selecting $\ell$ elements from this set yields a total of $O((\NumThreads \MaxCapacity)^{\MaxCapacity})$ possible combinations, as $\ell \le \MaxCapacity$. 
However, our counting method is more refined.}
We notice that the sequence of events in the queue ${\fgChanMap}(\ch)$
is uniquely determined by the sequence 
$\thread_{i_0}, \thread_{i_1} \ldots, \thread_{i_{\ell-1}}$
of their thread identifiers --- if $\tau_{i_{\ell - j}} = \tau$, 
then the $j^\text{th}$ \emph{last} event $e$ in ${\fgChanMap}(\ch)$ 
belongs to thread $\tau$ and, further $e$ is the $m^\text{th}$ 
last send event on channel $\ch$ that thread $\tau$ performs, where
$m = |\setpred{z}{\tau_z = \tau, \ell - j \leq z}|$.
Since the number of threads is $\NumThreads$, 
the total number of possible sequences corresponding to ${\fgChanMap}(\ch)$ is thus 
$\leq \NumThreads^\ell \in O(\NumThreads^{\MaxCapacity})$. 
This implies that the total number of different values that ${\fgChanMap}$ can take on 
is in $O(\NumThreads^{\MaxCapacity\NumChannels})$.
Thus, the total number of nodes of $\gfrontier$ is
$O(\NumEvents^\NumThreads/\NumThreads^\NumThreads \cdot \NumThreads \cdot \NumThreads^{\MaxCapacity \NumChannels})$.

We now count the number of edges in $\gfrontier$. 
Each node has at most $\NumThreads$ out-degree since the set ${\fgEventSet}$ is $\po{}$ downward closed for each node.
Hence the number of edges in $\gfrontier$ is bounded by 
$(\NumEvents^\NumThreads/\NumThreads^\NumThreads \cdot \NumThreads^2 \cdot \NumThreads^{\MaxCapacity \NumChannels})$. 
Thus, the time for reachability checking is $O(|V|+|E|) = O(\NumEvents^\NumThreads \cdot \NumThreads^{\MaxCapacity \NumChannels})$. 

The graph can be constructed using a simple worklist algorithm. 
The worklist is initialized with only the source node. 
The algorithm proceeds by repeatedly extracting a node $v$ from the worklist and inserting its successors until the worklist is empty. 
To compute the successor node $v'$ of the current node $v$ by extending $v$ with event $e$, 
we first copy $v$ into $v'$ and update $v'$ according to the rules of the frontier graph.  
Copying $v$ takes $O(\NumEvents)$ time 
and updating $v'$ takes constant time, and thus time to construct the graph
is $O(\NumEvents \cdot V + E) = O(\NumEvents^{\NumThreads+1} \cdot \NumThreads^{\MaxCapacity \NumChannels})$ 
as in \thmref{vch-solution}.
}

The algorithm for $\vchRf$ has similar flavor to that for $\vch$, but relies on a different frontier graph.


\myparagraph{Frontier graph for $\vchRf$}{
The reads-from frontier graph $\gfrontierRf$ of $\tuple{\AbstractExecution, \cpFunc, \rf{}}$ is slightly different from $\gfrontier$. 
First, for a node $v = \tuple{{\fgEventSet}, {\fgChanMap}, {\fgSyncSend}}$, 
the set of unmatched send events buffered in ${\fgChanMap}(\ch)$ and ${\fgSyncSend}$ is already determined by ${\fgEventSet}$ and $\rf{}$. 
Therefore, we only need to consider the permutations of these events in ${\fgChanMap}(\ch)$. 
Moreover, for an edge $v_1 \xrightarrow{e} v_2$ labeled with a receive event $e = \rcv(\ch)$ over an asynchronous (resp. synchronous) channel $\ch$, we require that the first entry $f = v_1.{\fgChanMap}(\ch)$ (resp. unique element $f = v_1.{\fgSyncSend}$) is such that $(e, f) \in \rf{}$.
The following lemma states how $\gfrontierRf$ captures the consistency of $\tuple{\AbstractExecution, \cpFunc, \rf{}}$.
}

\begin{restatable}{lemma}{vchRfConsToReach}
\lemlabel{consistency-to-reachability-vchRf}
$\tuple{\AbstractExecution, \cpFunc, \rf{}}$ is consistent iff there is a sink node reachable from the source node in $\gfrontierRf$.
\end{restatable}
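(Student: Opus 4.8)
The plan is to establish a tight correspondence between concretizations of $\tuple{\AbstractExecution, \cpFunc, \rf{}}$ and source-to-sink paths in $\gfrontierRf$, mirroring the argument behind \lemref{consistency-to-reachability} for $\vch$ but exploiting the extra structure that $\rf{}$ provides. Concretely, to each prefix $\prefix$ of a candidate linear execution I associate a \emph{frontier node} $\nu(\prefix) = \tuple{\fgEventSet, \fgChanMap, \fgSyncSend}$, where $\fgEventSet = \events{\prefix}$, each $\fgChanMap(\ch)$ is the sequence of sends on the asynchronous channel $\ch$ that have been executed in $\prefix$ but whose matching receive has not, listed in the order they were sent, and $\fgSyncSend$ is the pending unmatched send on a synchronous channel, if any. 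The crucial observation specific to $\vchRf$ is that the \emph{set} of buffered sends on each channel is already pinned down by $\fgEventSet$ and $\rf{}$: a send $s$ with $\ChOf{s} = \ch$ lies in the buffer exactly when $s \in \fgEventSet$ while its unique $\rf{}$-matched receive $\rcv$ with $(s,\rcv)\in\rf{}$ is not in $\fgEventSet$. Hence only the \emph{order} of these sends is a genuine degree of freedom, which is precisely why $\gfrontierRf$ records a permutation rather than re-deriving membership, and why it is more succinct than $\gfrontier$.

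For the forward direction ($\Rightarrow$), suppose $\tr = e_1 \cdots e_n$ concretizes $\tuple{\AbstractExecution, \cpFunc, \rf{}}$, and let $\prefix_i = e_1 \cdots e_i$ with $\prefix_0$ empty. I would show that $\nu(\prefix_0), \ldots, \nu(\prefix_n)$ is a source-to-sink path. That $\nu(\prefix_0)$ is the source and $\nu(\prefix_n)$ is a sink is immediate from $\events{\prefix_0} = \emptyset$ and $\events{\prefix_n} = \eventSet$. For each step I would verify that $\nu(\prefix_{i-1}) \xrightarrow{e_i} \nu(\prefix_i)$ satisfies the edge conditions: the well-formedness of $\tr$ guarantees the capacity bound $|\fgChanMap(\ch)| \le \cp{\ch}$ when $e_i$ is an asynchronous send, non-emptiness of the relevant buffer when $e_i$ is a receive, and the synchronous send/receive adjacency for capacity-$0$ channels; and because $\rf{\tr} = \rf{}$, the front element popped by a receive $e_i$ is exactly its $\rf{}$-matched send. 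Along the way one checks that each $\nu(\prefix_i)$ is a legal node, i.e.\ $\fgEventSet$ is $\po{}$-downward closed (as $\prefix_i$ is a prefix of a $\po{}$-respecting trace) and the per-channel counts satisfy the node invariants.

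For the backward direction ($\Leftarrow$), take any source-to-sink path and read off its edge labels $e_1, \ldots, e_n$; I claim $\tr = e_1 \cdots e_n$ is a concretization. Since consecutive nodes differ by adding a single fresh event and the terminal node has $\fgEventSet = \eventSet$, the labels enumerate $\eventSet$ without repetition, so $\tr$ is a permutation of $\eventSet$. Downward-closedness of each node's $\fgEventSet$ forces every $\po{}$-predecessor of $e_i$ to appear earlier, giving $\po{\tr} = \po{}$. The edge conditions for receives enforce that each receive pops the front of its channel buffer (resp.\ the pending synchronous send) and that this front is its $\rf{}$-prescribed send, while the same conditions maintain the capacity and synchronous-adjacency constraints, establishing well-formedness. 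The main obstacle, and where the argument needs the most care, is an inductive invariant tying $\fgChanMap$ to an actual FIFO queue: I must show that appending a send always occurs at the tail and popping always removes the head, so that the recorded permutation is consistent with a single global FIFO discipline and the $\rf{\tr}$ it induces coincides with the given $\rf{}$ on every channel simultaneously.
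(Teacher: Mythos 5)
Your proposal is correct and takes essentially the same approach as the paper: the paper establishes the analogous lemma for $\vch$ (and treats the $\vchRf$ case as its direct adaptation) via exactly this bidirectional correspondence, reading a concretization off the edge labels of a source-to-sink path in one direction, and mapping the prefixes of a concretization to frontier nodes while checking the edge conditions in the other. Your observation that $\rf{}$ pins down the buffered sends so that only their order is a degree of freedom is precisely the succinctness point the paper itself uses to define $\gfrontierRf$, so nothing in your plan diverges from the paper's argument.
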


\myparagraph{Time complexity for $\vchRf$}{
For each node, the set ${\fgEventSet}$, together with $\rf{}$, uniquely determine send events that are unmatched, giving us a better bound on the number of possible values for the ${\fgChanMap}$ and ${\fgSyncSend}$ components of the node. 
The number of distinct ${\fgEventSet}$ sets is still $(\NumEvents^\NumThreads/\NumThreads^\NumThreads)$. 
For each ${\fgEventSet}$, ${\fgSyncSend}$ is uniquely determined by ${\fgEventSet}$ and $\rf{}$. 
Likewise, the set of events in ${\fgChanMap}(\ch)$ for an asynchronous channel is the set of unmatched send events in ${\fgEventSet}$, whose size is bounded by $\cp{\ch} \leq \MaxCapacity$. 
The total number of permutations for ${\fgChanMap}(\ch)$ is thus $\leq \cp{\ch}! \le \MaxCapacity!$
and is also $\le \NumThreads^\MaxCapacity$ as argued before. 
Considering all $\NumChannels$ channels, the number of ${\fgChanMap}$ is bounded by $O((\min(\MaxCapacity!, \NumThreads^\MaxCapacity)^\NumChannels)$. 
{\fgSyncSend}n total, the number of nodes in the graph is 
$O(\NumEvents^\NumThreads/\NumThreads^\NumThreads \cdot (\min(\MaxCapacity!, \NumThreads^\MaxCapacity)^\NumChannels)$,
while the number of edges is
$O(\NumEvents^\NumThreads/\NumThreads^\NumThreads \cdot \NumThreads \cdot (\min(\MaxCapacity!, \NumThreads^\MaxCapacity)^\NumChannels)$,
thereby concluding \thmref{vch-rf-solution}.
}
\subsection{$\vchRf$ with Synchronous Channels}
\seclabel{vch-rf-sync-solution-sec}
We now focus on $\vchRf$ in the case where all channels are synchronous and present a linear-time algorithm for \thmref{vch-rf-sync-upper}.
Previous work shows that purely synchronous communication  enjoys some sort of ``deterministic replay'', which implies a $O(\NumEvents \cdot \NumThreads)$-time consistency algorithm~\cite{sulzmann2018two}.
Here we show that this setting admits a linear-time solution, irrespectively of the number of threads.

Our algorithm is based on the following insight.
Since all channels are synchronous, every pair of events $(\snd, \rcv)$ related by reads-from must execute consecutively.
Our algorithm packs such event pairs in a single atomic event,
and checks whether all atomic events can be scheduled in a way that respects partial order dependencies due to $\po{}$.
In turn, this reduces to checking for cycles in a suitably defined graph. 
We now make the above insight formal. 

We assume wlog that the input instance $\tuple{\AbstractExecution, \cpFunc, \rf{}}$, where $\AbstractExecution = \tuple{\eventSet, \po{}}$, is such that each send (resp. receive) event has exactly one receive 
(resp. send) event matched to it using $\rf{}$, and the two events belong to different threads. 
Otherwise, the instance is clearly inconsistent. 

\myparagraph{The send-receive graph}{
The \emph{send-receive graph} of $\tuple{\AbstractExecution, \cpFunc, \rf{}}$ is a directed graph $\gsync = (V, E)$ where $V$ is the node set and $E$ is the edge set, defined as follows.
\begin{enumerate*}
\item $V \subseteq \eventSet \times \eventSet$ is the set of matching send and receive pairs, i.e., $\tuple{\snd, \rcv} \in V$ iff $(\snd, \rcv)\in\rf{}$
\item edges $E$ capture $\po{}$ dependencies, i.e., $(\tuple{\snd_1, \rcv_1}, \tuple{\snd_2, \rcv_2}) \in E$
iff some $e_1\in \{\snd_1, \rcv_1\}$ is the immediate $\po{}$ predecessor of some $e_2\in \{\snd_2, \rcv_2\}$.
\end{enumerate*}
See \figref{vsc-sync-graph} for an illustration.
The send-receive graph precisely captures consistency, as stated in the following lemma.
}

\begin{restatable}{lemma}{vchRfSyncConsToCyclic}
\lemlabel{graph-cycle-sync-consistency}
$\tuple{\AbstractExecution, \cpFunc, \rf{}}$ is consistent iff $\gsync$ is acyclic.
\end{restatable}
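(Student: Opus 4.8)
The plan is to exploit the atomicity that synchronous channels impose on matched pairs. Because every channel is synchronous, well-formedness forces each pair $\tuple{\snd, \rcv} \in \rf{}$ to occupy two consecutive positions in any concretization (the send is immediately followed by its matching receive, and vice versa). Thus a concretization is, up to bundling each matched pair into a single atomic block, nothing more than a linear arrangement of the node set of $\gsync$ that is compatible with the $\po{}$-induced edges. Viewed this way, \lemref{graph-cycle-sync-consistency} reduces to the standard fact that a collection of precedence constraints is linearizable exactly when it is acyclic. Before proving either direction I would invoke the preprocessing already assumed in the text: each send (resp. receive) is matched to exactly one receive (resp. send) in a \emph{different} thread, so every event lies in a unique pair and the two events of a pair never share a thread.

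For the forward direction (consistency $\Rightarrow$ acyclicity) I would start from a concretizing execution $\trace$. Each pair is a consecutive two-event block of $\trace$, so assigning to every node of $\gsync$ the trace index of its block yields a strict linear order $\prec$ on the node set. I then check that every edge of $\gsync$ points forward in $\prec$: if $(\tuple{\snd_1,\rcv_1},\tuple{\snd_2,\rcv_2}) \in E$ because some $e_1 \in \{\snd_1,\rcv_1\}$ is the immediate $\po{}$-predecessor of some $e_2 \in \{\snd_2,\rcv_2\}$, then $e_1 \trord{\trace} e_2$ since $\trace$ respects $\po{}$; as the two blocks are disjoint consecutive segments, $e_1 \trord{\trace} e_2$ forces the whole block of the first pair to precede that of the second, i.e.\ the edge agrees with $\prec$. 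A digraph that embeds into a strict linear order is acyclic, which closes this direction.

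For the backward direction (acyclicity $\Rightarrow$ consistency) I would take a topological ordering $\tuple{\snd_1,\rcv_1}, \dots, \tuple{\snd_p,\rcv_p}$ of $\gsync$ and build $\trace = \snd_1 \cdot \rcv_1 \cdots \snd_p \cdot \rcv_p$. That $\events{\trace} = \eventSet$ follows from the preprocessing (each event sits in exactly one block), and well-formedness for synchronous channels is immediate because every send is followed by its matched receive from a distinct thread. The reads-from check $\rf{\trace} = \rf{}$ is also routine: on each channel the sends appear in block order and the receives in the same block order, so the $i$-th send matches the $i$-th receive exactly as $\rf{}$ dictates. The one step that needs real care — and the main obstacle — is proving $\po{\trace} = \po{}$, since the edge relation of $\gsync$ only records \emph{immediate} $\po{}$-predecessors while $\po{}$ is the full per-thread order. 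Here I would argue by chaining: for same-thread events with $(e_1,e_2) \in \po{}$, walking along the consecutive-in-$\po{}$ chain inside thread $\ThreadOf{e_1}$ produces, at each step, an edge of $\gsync$ between the blocks of adjacent events (these blocks are always distinct, as a block spans two threads), hence a directed path from the block of $e_1$ to the block of $e_2$. Because the topological order honors all edges, it places these distinct blocks in order, so $e_1 \trord{\trace} e_2$; the converse inclusion follows since $\po{}$ totally orders each thread. This yields $\po{\trace} = \po{}$ and completes the concretization.
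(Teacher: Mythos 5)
Your proposal is correct and follows essentially the same route as the paper's proof: the forward direction reads off a linear order on send-receive blocks from any concretization (the paper phrases this contrapositively via a cycle yielding contradictory trace orderings), and the backward direction interleaves a topological sort of $\gsync$ into $\trace = \snd_1 \cdot \rcv_1 \cdots \snd_p \cdot \rcv_p$ and verifies well-formedness, $\rf{}$, and $\po{}$. If anything, your chaining argument for $\po{\trace} = \po{}$ is more careful than the paper's, which glosses over the fact that $\gsync$ only has edges for \emph{immediate} $\po{}$-predecessors and so a non-immediate violating pair requires exactly the walk through consecutive same-thread events that you spell out.
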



\begin{figure}[htbp]
\centering
\begin{subfigure}[b]{0.63\textwidth}
\centering
\newcommand{\xstep}{2.4}
\newcommand{\ystep}{-0.8}
\newcommand{\height}{0.5}
\newcommand{\wid}{1.7}

\scalebox{0.8}{
\begin{tikzpicture}[font=\small]
\node [event] (t1-1) at (0 * \xstep, 0 * \ystep) {$\snd_1(\ch_1)$};
\node [event] (t1-2) at (0 * \xstep, 1 * \ystep) {$\rcv_3(\ch_1)$};
\node [event] (t1-3) at (0 * \xstep, 2 * \ystep) {$\snd_4(\ch_2)$};

\node [event] (t2-1) at (1 * \xstep, 0 * \ystep) {$\rcv_1(\ch_1)$};
\node [event] (t2-2) at (1 * \xstep, 1 * \ystep) {$\rcv_4(\ch_2)$};
\node [event] (t2-3) at (1 * \xstep, 2 * \ystep) {$\snd_2(\ch_2)$};

\node [event] (t3-1) at (2 * \xstep, 1 * \ystep)  {$\snd_3(\ch_1)$};
\node [event] (t3-2) at (2 * \xstep, 2 * \ystep)  {$\rcv_2(\ch_2)$};

\draw [rfEdge] (t1-1) to (t2-1);
\draw [rfEdge] (t2-3) to (t3-2);
\draw [rfEdge] (t1-3) to (t2-2);
\draw [rfEdge, bend right=13] (t3-1) to (t1-2);

\node [threadName] at (0 * \xstep, -1.0 * \ystep) {$\thread_{1}$};
\node [threadName] at (1 * \xstep, -1.0 * \ystep) {$\thread_{2}$};
\node [threadName] at (2 * \xstep, -1.0 * \ystep) {$\thread_{3}$};
    
\begin{scope}[on background layer]
\draw [thread] (0 * \xstep, -0.6 * \ystep) -- (0 * \xstep, 2.8 * \ystep);
\draw [thread] (1 * \xstep, -0.6 * \ystep) -- (1 * \xstep, 2.8 * \ystep);
\draw [thread] (2 * \xstep, -0.6 * \ystep) -- (2 * \xstep, 2.8 * \ystep);
\end{scope}
\end{tikzpicture}
}
\caption{$\vchRf$ instance $\tuple{\AbstractExecution, \cpFunc, \rf{}}$ with synchronous channels.}
\figlabel{vsc-sync-graph-1}
\end{subfigure}
\hfill
\begin{subfigure}[b]{0.35\textwidth}
\centering
\newcommand{\height}{0.6}
\newcommand{\wid}{1.8}
\newcommand{\xstep}{3}
\newcommand{\ystep}{2}
\scalebox{0.8}{
\begin{tikzpicture}[font=\small]
\node [nodeRectangle, fill=gray!10] (n4) at (0*\xstep, 0*\ystep) {$\tuple{\snd_4, \rcv_4}$};
\node [nodeRectangle, fill=gray!10] (n3) at (1*\xstep, 0*\ystep) {$\tuple{\snd_3, \rcv_3}$};
\node [nodeRectangle, fill=gray!10] (n1) at (0*\xstep, 1*\ystep) {$\tuple{\snd_1, \rcv_1}$};
\node [nodeRectangle, fill=gray!10] (n2) at (1*\xstep, 1*\ystep) {$\tuple{\snd_2, \rcv_2}$};
\draw [edge] (n1) to (n4);
\draw [edge] (n1) to (n3);
\draw [edge] (n3) to (n2);
\draw [edge] (n3) to (n4);
\draw [edge] (n4) to (n2);
\end{tikzpicture}
}
\caption{The graph $\gsync$.}
\figlabel{vsc-sync-graph-2}
\end{subfigure}
\caption{A $\vchRf$ instance $\tuple{\AbstractExecution, \cpFunc, \rf{}}$ (\subref{fig:vsc-sync-graph-1}) and the corresponding send-receive graph $\gsync$ (\subref{fig:vsc-sync-graph-2}). 
As $\gsync$ is acyclic, 
$\tuple{\AbstractExecution, \cpFunc, \rf{}}$ is consistent. 
}
\figlabel{vsc-sync-graph}
\end{figure}

\myparagraph{Algorithm and time complexity}{
Following \lemref{graph-cycle-sync-consistency}, the
algorithm for checking $\vchRf$ when all channels are synchronous is 
straightforward --- construct $\gsync$ and check for acyclicity. 
For each pair $\tuple{\snd, \rcv}$, 
there are at most two immediate $\po{}$ predecessors, 
so the in-degree of each node is at most 2. 
Therefore, $\gsync$ has $O(\NumEvents)$ nodes and $O(\NumEvents)$ edges and the time to construct the graph is also $O(\NumEvents)$.
Checking for a cycle in $\gsync$ also takes $O(\NumEvents)$ time, which concludes the proof of \thmref{vch-rf-sync-upper}.
}

\subsection{Acyclic Communication Topologies}
\seclabel{tree-topo-upper-bound}

Finally, we turn our attention to acyclic communication topologies and prove that $\vchRf$ can be solved in quadratic time, establishing \thmref{vch-rf-tree-topology}. 
We first formally define the communication topology of an abstract execution. 

\myparagraph{Communication topologies}
A set of events $\eventSet$ induces a communication topology,
represented as an undirected graph $G=(V,E)$ where $V$ is the set of threads appearing in $\eventSet$,
and we have $(\thread_i, \thread_j)\in E$ iff $\thread_i$ and $\thread_j$ access a common channel, i.e.,
there exist two events $e_1, e_2\in \eventSet$ such that $\ThreadOf{e_1}=\thread_i$, $\ThreadOf{e_2}=\thread_j$, and $\ChOf{e_1}=\ChOf{e_2}$.
The communication topology induced by an abstract execution $\AbstractExecution=\tuple{\eventSet, \po{}}$  is the topology induced by its event set $\eventSet$.

Given two threads $\thread_i$ and $\thread_j$, let $\proj{\channels{\AbstractExecution}}{\thread_i, \thread_j}$ be the set of channels accessed by both $\thread_i, \thread_j$,
and $\proj{\cpFunc}{\thread_i,\thread_j}$ be the restriction of the capacity function $\cpFunc$ to the channels in $\proj{\channels{\AbstractExecution}}{\thread_i, \thread_j}$.
We define $\proj{\AbstractExecution}{\thread_i, \thread_j}$ and $\proj{\rf{}}{\thread_i, \thread_j}$ as the abstract execution obtained from $\AbstractExecution$ and reads-from relation obtained from $\rf{}$ by only keeping events from $\thread_i, \thread_j$ that access a channel in $\proj{\channels{\AbstractExecution}}{\thread_i, \thread_j}$. 
Our proof of \thmref{vch-rf-tree-topology} is based on two key insights.
First, we prove that $\vchRf$ on acyclic topologies is \emph{compositional}:~$\tuple{\AbstractExecution, \cpFunc, \rf{}}$ is consistent iff $\tuple{\proj{\AbstractExecution}{\thread_i, \thread_j}, \proj{\cpFunc}{\thread_i,\thread_j}, \proj{\rf{}}{\thread_i, \thread_j}}$ is consistent, for every $(\thread_i, \thread_j)\in E$.
Second, we show that $\vchRf$ over two threads is solvable in quadratic time, by a reduction to 2SAT on formulas of size quadratic in the size of the input.

\myparagraph{Comparison with 2SAT encodings for shared-memory setting}
\revision{The 2SAT encodings have also been explored in the shared-memory setting~\cite{Chalupa2018}, but our work introduces several novel aspects.
First, our encoding captures constraints unique to FIFO channel semantics and bounded capacities (including synchronous and capacity-1 channels), 
while naturally extending to unbounded channels without extra constraints.
Second, our acyclic-topology result relies on a new compositionality lemma (\lemref{acyclic-topology-compositionality}), showing that any efficient solution for two threads extends compositionally to arbitrary acyclic systems, enabling direct performance gains from improved two-thread algorithms.
}

\myparagraph{Compositionality}{
The compositionality lemma is formally stated as follows.

\begin{restatable}{lemma}{acyclictopologycompositionality}
\lemlabel{acyclic-topology-compositionality}
Let $\tuple{\AbstractExecution, \cpFunc, \rf{}}$ be a $\vchRf$ instance, and $G=(V,E)$ the communication topology of $\AbstractExecution$ such that $G$ is acyclic.
Then $\tuple{\AbstractExecution, \cpFunc, \rf{}}$ is consistent iff $\tuple{\proj{\AbstractExecution}{\thread_i,\thread_j}, , \proj{\cpFunc}{\thread_i,\thread_j}, 
\proj{\rf{}}{\thread_i, \thread_j}}$ is consistent, for every pair of threads $(\thread_i, \thread_j)\in E$.
\end{restatable}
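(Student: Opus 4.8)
The starting observation is that acyclicity of the topology $G$ forces every channel to be \emph{point-to-point}: if a channel $\ch$ were accessed by three distinct threads, those threads would form a triangle in $G$, contradicting acyclicity. Hence each channel is shared by at most two threads, and for every topology edge $(\thread_i,\thread_j)$ the projection $\proj{\AbstractExecution}{\thread_i,\thread_j}$ retains \emph{all} events lying on the channels shared between $\thread_i$ and $\thread_j$. With this in hand, the forward (necessity) direction is immediate: given a global witness $\trace$, I project it onto the events of $\thread_i$ and $\thread_j$ on their shared channels. Projection preserves the relative order of retained events, so $\po{}$ and $\rf{}$ are respected; and because each shared channel is point-to-point, $\proj{\trace}{\ch}$ is unchanged for every such $\ch$, so capacity bounds and (for synchronous channels) the immediate send/receive adjacency carry over verbatim. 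Thus each pairwise projection is consistent.

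For the converse (sufficiency) I will argue by induction on the number of threads, exploiting that $G$ is a forest. If $G$ has no edges the right-hand side is vacuous and the instance is consistent by running the threads one after another; otherwise I pick a leaf thread $\thread_\ell$ whose unique neighbour is $\thread_p$, and set $C = \proj{\channels{\AbstractExecution}}{\thread_\ell,\thread_p}$. Because $\thread_\ell$ is a leaf and every channel is point-to-point, \emph{all} of $\thread_\ell$'s events lie on channels in $C$, and the channels of $C$ are touched only by $\thread_\ell$ and $\thread_p$. I form a smaller instance $\AbstractExecution^-$ by deleting $\thread_\ell$ together with the events of $\thread_p$ on $C$; its topology is $G$ minus the leaf, still a forest, and every pairwise projection of $\AbstractExecution^-$ coincides with the corresponding projection of $\AbstractExecution$ (removing $C$-events of $\thread_p$ cannot affect a channel shared between $\thread_p$ and a \emph{different} neighbour). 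By the induction hypothesis $\AbstractExecution^-$ admits a witness $\trace^-$, and by assumption the pair instance $\proj{\AbstractExecution}{\thread_\ell,\thread_p}$ admits a witness $\trace_{\ell p}$; note that the event sets of $\trace^-$ and $\trace_{\ell p}$ partition $\eventSet$.

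The heart of the proof is to \emph{glue} $\trace^-$ and $\trace_{\ell p}$ into a global trace $\trace$, using the articulation thread $\thread_p$ as the anchor. The events of $\thread_p$ split into its $C$-events (ordered by $\trace_{\ell p}$) and its non-$C$-events (ordered by $\trace^-$), and both orders agree with the full program order $\po{}$ of $\thread_p$. I therefore interleave the two witnesses so that $\thread_p$'s events appear in full $\po{}$ order, inserting each $C$-event of $\thread_p$ (and the $\thread_\ell$-events adjacent to it in $\trace_{\ell p}$) into $\trace^-$ at the position dictated by $\po{}$ relative to $\thread_p$'s non-$C$-events. Well-formedness is then inherited channel by channel, precisely because of the point-to-point property: for $\ch \in C$ the construction guarantees $\proj{\trace}{\ch} = \proj{\trace_{\ell p}}{\ch}$, while for every other channel $\proj{\trace}{\ch} = \proj{\trace^-}{\ch}$, so capacities and $\rf{}$ are satisfied; and $\po{}$ holds by construction, since $\po{}$ only relates same-thread events.

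The main obstacle is verifying the gluing for \emph{synchronous} channels, whose well-formedness demands that a send and its matching receive be \emph{immediately} adjacent in the whole trace --- a genuinely global condition that mere equality of channel projections does not by itself guarantee. I will resolve this by placing matched synchronous pairs contiguously during the merge: each such pair consists of one $\thread_p$-event and one event of the opposite endpoint, so when I insert a synchronous $C$-event of $\thread_p$ I attach its $\thread_\ell$-partner immediately beside it, and dually I insert $C$-material just after (or before) any synchronous pair of $\trace^-$ that would otherwise be split. These placements never conflict, again because acyclicity makes channels point-to-point: distinct synchronous pairs live on distinct two-thread channels and impose no competing adjacency on any third thread, and the freshly inserted events carry no program-order or reads-from obligation towards the partner on the far side of a pair. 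Establishing that all these local adjacencies can be realised simultaneously is the delicate step; once it is in place, $\trace$ is a well-formed concretization of $\tuple{\AbstractExecution,\cpFunc,\rf{}}$, which completes the induction and hence \lemref{acyclic-topology-compositionality}.
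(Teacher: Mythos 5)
Your proof is correct, but it follows a genuinely different route from the paper's formal proof. For the sufficiency direction, the paper does not glue witness traces at all: it takes, for each topology edge, a satisfying assignment of the two-thread 2SAT encoding (\lemref{twosatcorrectness}), builds a single global digraph $G'$ over all of $\eventSet$ whose edges are $\po{}$ together with every ordering asserted by those assignments, and shows $G'$ is acyclic --- a cycle could involve at most two distinct threads (three or more distinct threads along a cycle would induce a cycle in the topology $G$), and a two-thread cycle would contradict that pair's consistency --- so that any topological sort of $G'$ is the desired global witness. You instead induct on the forest: peel off a leaf thread $\thread_\ell$, obtain a witness $\trace^-$ for the remainder from the induction hypothesis, and merge it with the pairwise witness $\trace_{\ell p}$ along the program order of the hub thread $\thread_p$, using the point-to-point property of channels to argue that every channel's projection is inherited from exactly one of the two witnesses. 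Your route is self-contained (no dependence on the 2SAT machinery), produces the global trace explicitly, and is essentially a formalization of the intuition the paper sketches in the main text below the lemma; moreover, it confronts the synchronous-adjacency issue head-on, a point the paper's linearization step actually glosses over (nothing in $G'$ forbids a third thread's events from landing between a synchronous send and its matching receive, so strictly the paper needs an extra commutation argument there). The paper's route, in exchange, is shorter and non-inductive, its only combinatorial core being the clean observation that a cycle among the pairwise orders would project to a cycle in $G$ or to a bad pair.

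Two loose ends in your write-up can be tightened. First, your ``delicate step'' needs no simultaneous-realizability argument: before merging, contract every matched synchronous pair, in both $\trace_{\ell p}$ and $\trace^-$, into one atomic super-event; each super-event contains at most one $\thread_p$-event, so the spine merge applies verbatim, and expanding the super-events afterwards restores adjacency. Second, your claim that \emph{all} of $\thread_\ell$'s events lie on channels in $C$ presumes that no channel is accessed by a single thread only; events on such thread-private channels belong to no pairwise instance and hence to neither witness. This blind spot is shared by the paper's proof (its graph $G'$ likewise imposes no capacity or FIFO constraints on private channels), so it concerns the lemma's implicit assumptions rather than a defect of your argument relative to the paper's.
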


The intuition behind \lemref{acyclic-topology-compositionality} is as follows.
First, clearly for $\tuple{\AbstractExecution, \cpFunc, \rf{}}$ to be consistent, we must have that $\tuple{\proj{\AbstractExecution}{\thread_i,\thread_j}, \proj{\cpFunc}{\thread_i,\thread_j}, \proj{\rf{}}{\thread_i, \thread_j}}$ is consistent for every two threads $\thread_i, \thread_j$.
The other direction is more interesting.
Consider a thread $\thread_1$ with two neighbors $\thread_2, \thread_3$ in the communication topology, $(\thread_1, \thread_2), (\thread_1, \thread_3)\in E$, such that $\tuple{\proj{\AbstractExecution}{\thread_1,\thread_2}, \proj{\cpFunc}{\thread_1,\thread_2}, \proj{\rf{}}{\thread_1, \thread_2}}$ and $\tuple{\proj{\AbstractExecution}{\thread_1,\thread_3}, \proj{\cpFunc}{\thread_1,\thread_3}, \proj{\rf{}}{\thread_1, \thread_3}}$ are consistent, witnessed by the corresponding executions $\trace_{1,2}$ and $\trace_{1,3}$.
Then we can interleave $\trace_{1,2}$ and $\trace_{1,3}$ in any way that respects the program order of thread $\thread_1$, and the resulting execution $\trace_1$ will be well-formed.
This is because, owning to the acyclicity of $G$, we have $(\thread_2, \thread_3)\not \in E$, meaning that $\thread_2$ and $\thread_3$ do not communicate over a common channel.
In turn, this implies that the interleaving of events from $\thread_2$ and $\thread_3$ in $\trace$ cannot violate the well-formedness of $\trace_1$.
Composing all executions along edges of $G$ in such a way results in an execution $\trace$ that witnesses the consistency of  $\tuple{\AbstractExecution, \cpFunc, \rf{}}$.
}

\myparagraph{The case of $\NumThreads=2$ threads}{
Given \lemref{acyclic-topology-compositionality}, we now focus on the case of $\vchRf$ over $2$ threads,
when every channel is capacity-unbounded, has capacity $1$, or is synchronous (i.e., the setting captured in \thmref{vch-rf-tree-topology}).
We obtain a quadratic bound based on two insights.
First, for each channel, channel-related constraints on the order of events accessing it can be encoded as 2SAT.
The search for well-formed execution must also satisfy transitivity constraints, i.e., if $e_1\to e_2$ and $e_2\to e_3$, then $e_1\to e_3$.
Transitivity involves three events, and thus does not immediately fit our 2SAT approach.
Our second observation is that, with $2$ threads, every three events $e_1, e_2, e_3$, must contain two events in the same thread, thus already ordered by $\po{}$.
Then, transitivity can be succinctly captured by a 2SAT formula as well.
In the following we make these insights formal.

Consider a $\vchRf$ instance $\tuple{\AbstractExecution, \cpFunc, \rf{}}$ where ${\AbstractExecution} = \tuple{\eventSet, \po{}}$ is an abstract execution involving two threads $\thread_1, \thread_2$.
We construct a 2SAT formula $\form_{\tuple{\AbstractExecution, \cpFunc, \rf{}}}$ over propositional variables $x_{e,f}$, where $e, f\in \eventSet$.
Assigning $x_{e,f}=\top$ means ordering $e$ before $f$ in the execution witnessing the consistency of $\tuple{\AbstractExecution, \cpFunc, \rf{}}$.
Overall, $\form_{\tuple{\AbstractExecution, \cpFunc, \rf{}}}$ is a conjunction of 8 subformulae:
\begin{align*}
\form_{\tuple{\AbstractExecution, \cpFunc, \rf{}}} \equiv \form_{\exactlyOne} 
\land \form_{\po{}} 
\land \form_{\rf{}}  \land \form_{\unmatched} \land \form_{\fifo} \land \form_{\trans} \land  \form_{\capOne} \land \form_{\sync}
\end{align*}
We now proceed with defining each subformula.
}

\mysubparagraph{Exactly one}{
This formula requires that the order of two events must be resolved exactly in one way.
\begin{align*}
\form_{\exactlyOne} \equiv  \bigwedge\limits_{
e, f \in \eventSet
} \left( x_{e, f} \implies \neg x_{f, e}\right)
\end{align*}
}
\vspace{-0.1in}

\mysubparagraph{Program order}{
This formula requires that the order of two events must respect $\po{}$. 
}

\mysubparagraph{Reads from}{
This formula requires that each receive event is ordered after its matched send event.
\begin{align*}
\form_{\po{}} &\equiv \bigwedge\limits_{(e, f) \in \po{}} x_{e, f}
\quad\text{and}\quad
\form_{\rf{}} \equiv \bigwedge\limits_{(e, f) \in \rf{}} x_{e, f}
\end{align*}
}
\vspace{-0.1in}



\mysubparagraph{Unmatched sends}{
This formula requires that all unmatched send events are scheduled after all send events that have a matching receive event.
Given a channel $\ch$, let 
\begin{align*}
\textsf{Unmatched}_{\ch} &= \setpred{e \in \eventSet}{\OpOf{e} = \snd, \ChOf{e} = \ch, \nexists f \text{ s.t. } (e, f) \in \rf{}} \text{, and}\\
\textsf{Matched}_{\ch} &= \setpred{e \in \eventSet}{\OpOf{e} = \snd, \ChOf{e} = \ch, \exists f \text{ s.t. } (e, f) \in \rf{}}
\end{align*}
denote the set of unmatched and matched send events, respectively.
We have
\begin{align*}
\form_{\unmatched} \equiv
\bigwedge\limits_{
\scriptsize
\begin{aligned}
\begin{array}{c}
\ch \in \channels{\AbstractExecution},
e \in \textsf{Matched}_{\ch}, \\
f \in \textsf{Unmatched}_{\ch} 
\end{array}
\end{aligned}
}   
x_{e, f}
\end{align*}
}
\vspace{-0.1in}

\mysubparagraph{FIFO}{
This formula requires that the order of two receive events on the same channel matches the order of the corresponding send events.
\begin{align*}
\form_{\fifo} \equiv \bigwedge\limits_{
\scriptsize
\begin{aligned}
\begin{array}{c}
(e,e')\in \rf{}, (f, f') \in \rf{} \\
e \neq f, \ChOf{e} = \ChOf{f} \\ 
\end{array}
\end{aligned}
}
\left(\left(x_{e,f} \implies x_{e', f'}\right) \wedge \left(x_{e', f'} \implies x_{e,f}\right)\right)
\end{align*}
}
\vspace{-0.1in}

\mysubparagraph{Transitivity}{
This formula requires that the ordering of events is transitive.
Let $\pred{}{e}$ (resp. $\sucr{}{e}$) be the unique event (if one exists) that precedes (resp. succeeds) $e$ in $\po{}$. 
If $\pred{}{e}$ (resp. $\sucr{}{e}$) doesn't exist, 
then $\pred{}{e} = \bot$ (resp. $\sucr{}{e} = \bot$) .
We have $\form_{\trans} \equiv \form^{\s{pred}}_{\trans} \land \form^{\s{succ}}_{\trans}$, where 
\begin{align*}
\begin{array}{lcr}
\form^{\s{pred}}_{\trans} 
\equiv 
\bigwedge\limits_{
e, f \; \in \; \eventSet, \;
e' = \pred{}{e} \neq \bot
} 
(x_{e, f} \implies x_{e', f})
&
\qquad
&
\form^{\s{succ}}_{\trans} 
\equiv 
\bigwedge\limits_{
e, f \; \in \; \eventSet, \;
f' = \sucr{}{f} \neq \bot
} 
\left(x_{e, f} \implies x_{e, f'}\right)
\end{array}
\end{align*}
}
\vspace{-0.1in}

\mysubparagraph{Capacity}{
This formula requires that the capacity constraints of channels $\ch$ with $\cpFunc(\ch)\leq 1$ are met.
In particular, for two different send events $\snd_1(\ch) \neq \snd_2(\ch)$, the matching receive event of the earlier send event also precedes the other send event. 
For a synchronous channel, we encode the fact that send and receive events are consecutive.
For asynchronous channels that are capacity-unbounded, we do not need any capacity constraint.
\begin{align*}
\begin{array}{lcr}
\form_{\capOne} 
&
\equiv
&
\bigwedge\limits_{
\scriptsize
\begin{aligned}
\begin{array}{c}
\left(e, f\right) \in \rf{}, 
e' \in \eventSet, 
\OpOf{e} = \OpOf{e'} = \snd \\
\ChOf{e} = \ChOf{e'} \text{ is asynchronous} 
\end{array}
\end{aligned}
} \left(x_{e,e'} \implies x_{f, e'}\right)
\\
\form_{\sync} 
&
\equiv 
&
\bigwedge\limits_{
\scriptsize
\begin{aligned}
\begin{array}{c}
\left(e, f\right) \in \rf{},
\ChOf{e} \text{ is synchronous} \\
e' = \sucr{}{e}, f' = \pred{}{f}
\end{array}
\end{aligned}
} \left(x_{f,e'} \land x_{f', e}\right)
\end{array}
\end{align*}
}

The following lemma states the correctness of the encoding.

\begin{restatable}{lemma}{twosatcorrectness}
\lemlabel{twosatcorrectness}
$\tuple{\AbstractExecution, \cpFunc, \rf{}}$ is consistent iff $\form_{\tuple{\AbstractExecution, \cpFunc, \rf{}}}$ is satisfiable.
\end{restatable}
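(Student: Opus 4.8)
The plan is to prove both directions of the equivalence by relating truth assignments of $\form_{\tuple{\AbstractExecution, \cpFunc, \rf{}}}$ to total orders on $\eventSet$ that witness consistency. For the forward direction, I would start from a concretizing execution $\trace$ of $\tuple{\AbstractExecution, \cpFunc, \rf{}}$ and define the assignment $x_{e,f} = \top$ iff $e \stricttrord{\tr} f$. Since $\trace$ is a total order, $\form_{\exactlyOne}$ holds; since $\trace$ respects program order and the reads-from relation, $\form_{\po{}}$ and $\form_{\rf{}}$ hold; transitivity of $\stricttrord{\tr}$ gives $\form_{\trans}$ (here I only need to check the restricted predecessor/successor form, which follows because $e' = \pred{}{e}$ means $e' \stricttrord{\tr} e$). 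The clauses $\form_{\unmatched}$, $\form_{\fifo}$, $\form_{\capOne}$ and $\form_{\sync}$ follow from well-formedness: the FIFO queue discipline forces matched sends and receives to appear in the same relative order, unmatched sends to linger past all matched ones, and the capacity-$1$/synchronous constraints to hold as encoded. This direction is largely a routine verification, clause by clause.

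The reverse direction is where the real content lies. Given a satisfying assignment, I would define a binary relation $\to$ on $\eventSet$ by $e \to f$ iff $x_{e,f} = \top$, and the goal is to show $\to$ is a total order whose induced sequence $\trace$ is a concretization. Totality and antisymmetry come from $\form_{\exactlyOne}$ together with a completeness observation (every pair must be oriented). The key obstacle is establishing that $\to$ is \emph{transitive}, because the 2SAT formula only encodes the restricted predecessor/successor transitivity constraints $\form^{\s{pred}}_{\trans}$ and $\form^{\s{succ}}_{\trans}$, not the full three-event transitivity. Here I would crucially exploit the two-thread assumption: given any three events $e_1, e_2, e_3$ with $e_1 \to e_2$ and $e_2 \to e_3$, at least two of them lie in the same thread and are thus comparable under $\po{}$, so they are already oriented consistently by $\form_{\po{}}$. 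A careful case analysis on which pair shares a thread, combined with repeated application of the predecessor/successor clauses to ``walk'' the order along $\po{}$ chains, should let me derive $e_1 \to e_3$. This is the step I expect to require the most care.

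Once $\to$ is shown to be a total order, it remains to verify that the resulting execution $\trace$ is well-formed with respect to $\cpFunc$ and agrees with $\rf{}$. Agreement with $\po{}$ and $\rf{}$ is immediate from $\form_{\po{}}$ and $\form_{\rf{}}$. For well-formedness, I would argue that $\form_{\fifo}$ guarantees the $i$-th receive on each channel is matched to the $i$-th send (so the reads-from of $\trace$ equals the given $\rf{}$), that $\form_{\unmatched}$ ensures unmatched sends do not interfere with the matched prefix, and that the capacity constraints in every prefix follow from $\form_{\capOne}$ (for capacity-$1$ channels, by showing no two sends can be simultaneously buffered) and $\form_{\sync}$ (for synchronous channels, by showing each matched send-receive pair is consecutive). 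For capacity-unbounded channels no buffering constraint is needed, matching the fact that the formula imposes none. Assembling these observations yields that $\trace$ concretizes $\tuple{\AbstractExecution, \cpFunc, \rf{}}$, completing the proof.
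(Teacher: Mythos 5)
Your proposal follows essentially the same approach as the paper's proof: both directions use the same correspondence (assignment $x_{e,f}=\top$ iff $e$ precedes $f$ in a concretization, and conversely), and your key step—deriving full transitivity from the restricted $\form^{\s{pred}}_{\trans}/\form^{\s{succ}}_{\trans}$ clauses via the observation that among any three events two must share a thread—is exactly the case analysis the paper carries out. The remaining clause-by-clause verifications of well-formedness, FIFO, unmatched-send, and capacity constraints also match the paper's argument.
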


Finally, observe that the number of propositional variables $x_{e,f}$ is bounded by $\NumEvents^2$,
while the number of clauses is also $O(\NumEvents^2)$.
Since 2SAT is solvable in time that is linear in the size of the formula~\cite{aspvall1979linear}, together with \lemref{twosatcorrectness}, we arrive at an algorithm that solves $\vchRf$ for $2$ threads in $O(\NumEvents^2)$ time.

\myparagraph{Acyclic topologies}{
We now have all the ingredients to solve $\vchRf$ on acyclic communication topologies.
Given an input $\tuple{\AbstractExecution, \cpFunc, \rf{}}$, the algorithm iterates over all edges $(\thread_i, \thread_j)$  of the communication topology of $\AbstractExecution$, and uses the 2SAT encoding to decide the consistency of $\tuple{\proj{\AbstractExecution}{\thread_i, \thread_j}, \proj{\cpFunc}{\thread_i,\thread_j}, \proj{\rf{}}{\thread_i, \thread_j}}$.

For analyzing the time complexity, observe that every two events $e, f\in \eventSet$ appear in some propositional variable $x_{e,f}$ of at most one 2SAT instance.
In particular, let $\thread_1=\ThreadOf{e}$ and $\thread_2 = \ThreadOf{f}$.
If $\thread_1\neq \thread_2$, then $x_{e,f}$ appears in the 2SAT instance of the topology edge $(\thread_1, \thread_2)$.
On the other hand, if $\thread_1=\thread_2=\thread$, then $x_{e,f}$ appears in the 2SAT instance of the topology edge $(\thread, \thread')$, where $\thread'$ is the unique thread accessing the channels that $e$ and $f$ operate.
We thus arrive at \thmref{vch-rf-tree-topology}.
}

\section{The Hardness of Verifying Channel Consistency with Reads From}
\seclabel{lower-bounds-vchRf}
We now present some of the hardness results for $\vchRf$. 
\revision{
We first present how channels can be used to encode \emph{atomicity gadgets} -- that is, 
how to execute a sequence of events from the same thread without interleaving with other threads, 
as this encoding will be required in later reductions.
}
We then show that the problem is intractable for case (i) and (iii) stated in \thmref{vch-rf-hardness} in \secref{vch-rf-np-hard-k=1} and \secref{hardness-const-t-m}). 
In \secref{two-threads-lower-bound-sec}, 
we prove the quadratic lower bound of $\vchRf$ on $2$ threads, as stated in \thmref{vch-rf-two-threads-lower-bound}. 
The other lower bounds of $\vch$ and $\vchRf$ stated in \thmref{vch-same-value-hardness}, 
\thmref{vch-two-threads-hardness}, 
\thmref{vch-one-chan-hardness}, 
\thmref{vch-rf-hardness}, 
are proven with reductions of similar flavor, and appear in \appref{lower-bounds-vch} 
and \appref{lower-bounds-vch-rf} 
due to space limits. 
\subsection{Atomicity Gadgets}
\seclabel{atomic-gadget}
\revision{
Our reductions in later sections make use of \emph{atomic blocks} of events as gadgets.
An atomic block $\atom$ in a thread is a sequence of events such that any two such blocks $\atom_{\textcolor{cyan!80!black} 1}, \atom_{\textcolor{cyan!80!black} 2}$ cannot overlap in any concretization.
Here we show how to construct atomicity gadgets, both by using channels with capacity $1$, and by using channels with unbounded capacity.
The latter might sound counter-intuitive, in the sense that send operations to capacity-unbounded channels never block. 
}


\begin{figure}[h]
\centering
\begin{subfigure}[b]{0.45\textwidth}
\centering
\newcommand{\xstep}{2.8}
\newcommand{\ystep}{-0.7}
\newcommand{\height}{0.5}
\newcommand{\wid}{1.8}
\newcommand{\outOne}{195}
\newcommand{\inOne}{165}
\newcommand{\outTwo}{-15}
\newcommand{\inTwo}{15}
\newcommand{\Looseness}{0.7}

\scalebox{0.8}{
\begin{tikzpicture}
\node[event] (t1-1) at (0 * \xstep, 0 * \ystep) {$\wt_1({\lk})$};
\node[event] (t1-2) at (0 * \xstep, 1 * \ystep) {$\atom_{\textcolor{cyan!80!black} 1}$};
\node[event] (t1-3) at (0 * \xstep, 2 * \ystep) {$\rd_1(\lk)$};

\node[event] (t2-1) at (1 * \xstep, 0 * \ystep) {$\wt_2( {\lk})$};
\node[event] (t2-2) at (1 * \xstep, 1 * \ystep) {$\atom_{\textcolor{cyan!80!black} 2}$};
\node[event] (t2-3) at (1 * \xstep, 2 * \ystep) {$\rd_2( {\lk})$};

\node[threadName] at (0 * \xstep, -1.0 * \ystep) {$\thread_{1}$};
\node[threadName] at (1 * \xstep, -1.0 * \ystep) {$\thread_{2}$};

\draw[rfEdge, out=\outOne, in=\inOne, looseness=\Looseness] (t1-1) to (t1-3);
\draw[rfEdge, out=\outTwo, in=\inTwo, looseness=\Looseness] (t2-1) to (t2-3);

\begin{scope}[on background layer]
\draw [thread] (0 * \xstep, -0.6 * \ystep) -- (0 * \xstep, 2.7 * \ystep);
\draw [thread] (1 * \xstep, -0.6 * \ystep) -- (1 * \xstep, 2.7 * \ystep);
\end{scope}
\end{tikzpicture}
}
\vspace{0.6cm}
\caption{Atomicity using capacity $1$ channel.}
\figlabel{lock-demo-capacity-one}
\end{subfigure}
\hfill
\begin{subfigure}[b]{0.54\textwidth}
\centering
\newcommand{\xstep}{2.8}
\newcommand{\ystep}{-0.7}
\newcommand{\height}{0.5}
\newcommand{\wid}{1.8}
\newcommand{\outOne}{195}
\newcommand{\inOne}{165}
\newcommand{\outTwo}{-15}
\newcommand{\inTwo}{15}
\newcommand{\Looseness}{0.7}

\newcommand{\outOneInner}{185}
\newcommand{\inOneInner}{175}
\newcommand{\outTwoInner}{-5}
\newcommand{\inTwoInner}{5}
\newcommand{\LoosenessInner}{0.9}

\scalebox{0.8}{
\begin{tikzpicture}
\node[event] (t1-1) at (0 * \xstep, 0 * \ystep)  {$\wt_1( {\lk_1})$};
\node[event] (t1-2) at (0 * \xstep, 1 * \ystep) {$\wt_2( {\lk_2})$};
\node[event] (t1-3) at (0 * \xstep, 2 * \ystep) {$\rd_2( {\lk_2})$};
\node[event] (t1-4) at (0 * \xstep, 3 * \ystep) {$\atom_{\textcolor{cyan!80!black} 1}$};
\node[event] (t1-5) at (0 * \xstep, 4 * \ystep) {$\rd_1( {\lk_1})$};

\node[event] (t2-1) at (1 * \xstep, 0 * \ystep) {$\wt_3( {\lk_2})$};
\node[event] (t2-2) at (1 * \xstep, 1 * \ystep) {$\wt_4( {\lk_1})$};
\node[event] (t2-3) at (1 * \xstep, 2 * \ystep) {$\rd_4( {\lk_1})$};
\node[event] (t2-4) at (1 * \xstep, 3 * \ystep) {$\atom_{\textcolor{cyan!80!black} 2}$};
\node[event] (t2-5) at (1 * \xstep, 4 * \ystep) {$\rd_3( {\lk_2})$};

\node[threadName] at (0 * \xstep, -1.0 * \ystep) {$\thread_{1}$};
\node[threadName] at (1 * \xstep, -1.0 * \ystep) {$\thread_{2}$};

\draw[rfEdge, out=\outOne, in=\inOne, looseness=\Looseness] (t1-1) to (t1-5);
\draw[rfEdge, out=\outTwo, in=\inTwo, looseness=\Looseness] (t2-1) to (t2-5);

\draw[rfEdge, out=\outOneInner, in=\inOneInner, looseness=\LoosenessInner] (t1-2) to (t1-3);
\draw[rfEdge, out=\outTwoInner, in=\inTwoInner, looseness=\LoosenessInner] (t2-2) to (t2-3);

\begin{scope}[on background layer]
\draw [thread] (0 * \xstep, -0.6 * \ystep) -- (0 * \xstep, 4.7 * \ystep);
\draw [thread] (1 * \xstep, -0.6 * \ystep) -- (1 * \xstep, 4.7 * \ystep);
\end{scope}
\end{tikzpicture}
}
\caption{Atomicity using two capacity-unbounded channels $\lk_1, \lk_2$}
\figlabel{lock-demo-unbounded}
\end{subfigure}
\caption{Gadgets for implementing atomic blocks using capacity $1$ (\subref{fig:lock-demo-capacity-one}) or unbounded-capacity channels (\subref{fig:lock-demo-unbounded}).
Reads-from edges are represented by arrows.}
\figlabel{atomicity-gadgets}
\end{figure}

\myparagraph{Atomicity with capacity 1}{
\revision{
The atomicity gadget relying on channels of capacity~1 is shown in \figref{lock-demo-capacity-one}, using one channel $\lk$, 
which resembles a lock.
The thread that sends to $\lk$ first fills the channel capacity, and must execute the corresponding receive before the other thread can send to the channel.
The events between the first send and receive are thus executed atomically.
}
}

\myparagraph{Atomicity with unbounded capacity}{
\revision{
The atomicity gadget using unbounded channels is shown in \figref{lock-demo-unbounded}, relying on two channels $\lk_1$ and $\lk_2$.
Its principle of operation is as follows.
If $\snd_1(\lk_1)$ is executed before $\snd_4(\lk_1)$, then
$\rcv_1(\lk_1)$ is also executed before $\rcv_4(\lk_1)$, making the atomic section of the first thread execute before the second.
The inverse order is imposed if $\snd_4(\lk_1)$ is executed before $\snd_1(\lk_1)$,
as this orders $\snd_3(\lk_2)$ before $\snd_2(\lk_2)$, and the argument repeats.
}
}

\revision{
We note that the atomicity gadgets can be generalized to an arbitrary number of threads. 
For brevity, \figref{atomicity-gadgets} illustrates the case for two threads only.
}

\subsection{Hardness with Asynchronous Channels of Capacity $1$}
\seclabel{vch-rf-np-hard-k=1}

We establish a reduction from the $\vscRd$ problem~\cite{gibbons1997testing}.
An instance of the $\vscRd$ problem is a tuple ${\AbstractExecution} = \tuple{\eventSet, \po{}, \rf{}}$, where $\eventSet$ is a set of events of the form $\ev{\thread, \rdMem(x)}$
or $\ev{\thread, \wtMem(x)}$, 
in which $\thread$ is a thread identifier and $x$ is 
a memory location, $\po{}$ is the per-thread total order (a.k.a program order)
and $\rf{}$ maps each read event to a write event of the same register.
Such an instance is sequentially consistent (SC) if there is a total order
over $\eventSet$ that respects $\po{}$ and $\rf{}$, and ensures that
for every $(e, f) \in \rf{}$ pair on register $x$, there is no other $\wtMem(x)$
event ordered between $e$ and $f$.

\myparagraph{Overview}{
Let $\AbstractExecution = \tuple{\eventSet, \po{}, \rf{}}$ be an instance of $\vscRd$.
We construct an instance $\tuple{\AbstractExecution', \cpFunc', \rf{}'}$ of $\vchRf$, 
where $\AbstractExecution' = \tuple{\eventSet', \po{}'}$.
At a high level, each write event (and each read event) in $\AbstractExecution$ 
is mapped to a sequence of send and receive instructions in $\AbstractExecution'$ that essentially appear atomically in every concretization.
Further the reads-from relation of $\AbstractExecution$ is also accurately
reflected in $\AbstractExecution'$ through reads-from on channels.
}

\myparagraph{Reduction}{}
\figref{np-hard-vsc-read} illustrates the reduction on a small example.
The set of threads in $\AbstractExecution'$ 
is the same as ${\AbstractExecution}$.
The set of channels used in ${\AbstractExecution}'$
is $\setpred{\ch^i_x}{x \in \mathcal{R}, 1 \leq i \leq m_x} \uplus \set{\lk}$,
where $\mathcal{R}$ is the set of registers accessed in ${\AbstractExecution}$,
$m_x = \max \setpred{p_e}{e \text{ is a write on }x}$
and $p_e$ is the number of read events $f$ with $(e, f) \in \rf{}$. 
The capacity function $\cpFunc$ assigns capacity $1$ to every channel. 
At a high level, the thread-wise event sequences in ${\AbstractExecution}'$
are structurally similar to those in ${\AbstractExecution}$,
and can be characterized using a map $M$ that maps
events in $\eventSet$ to distinct atomic, thread-local sequences of events in $\eventSet'$,
so that $\eventSet' = \bigcup_{e \in \eventSet} \setpred{f}{f \in M(e)}$. 
Atomicity is guaranteed by channel $\lk$ with capacity 1. 
In \secref{atomic-gadget}, we have detailed an explanation about atomicity gadgets. 
We now describe the map $M$.


\begin{figure}[t]
\centering
\begin{subfigure}[b]{0.28\textwidth}
\centering
\newcommand{\xstep}{2.5}
\newcommand{\ystep}{-0.8}
\newcommand{\height}{0.5}
\newcommand{\wid}{1.2}
\scalebox{0.8}{
\begin{tikzpicture}[font=\small]
\node[event] (t1-1) at (0 * \xstep, 0.3 * \ystep) {$\wtMem_1(x)$};
\node[event] (t1-2) at (0 * \xstep, 1.7 * \ystep) {$\rdMem_3(y)$};

\node[event] (t2-1) at (1 * \xstep, 0 * \ystep)  {$\rdMem_1(x)$};
\node[event] (t2-2) at (1 * \xstep, 1 * \ystep)  {$\rdMem_2(x)$};
\node[event] (t2-3) at (1 * \xstep, 2 * \ystep)  {$\wtMem_2(y)$};

\node[threadName] at (0 * \xstep, -1.0 * \ystep) {$\thread_{1}$};
\node[threadName] at (1 * \xstep, -1.0 * \ystep) {$\thread_{2}$};

\draw [rfEdge] (t1-1) to (t2-1);
\draw [rfEdge] (t1-1) to (t2-2);
\draw [rfEdge] (t2-3) to (t1-2);
    
\begin{scope}[on background layer]
\draw [thread] (0 * \xstep, -0.6 * \ystep) -- (0 * \xstep, 2.7 * \ystep);
\draw [thread] (1 * \xstep, -0.6 * \ystep) -- (1 * \xstep, 2.7 * \ystep);
\end{scope}
\end{tikzpicture}
}
\caption{A $\vscRd$ instance.}
\figlabel{np-hard-vsc-read-bounded-demo}
\end{subfigure}
\hfill
\begin{subfigure}[b]{0.7\textwidth}
\centering
\newcommand{\xstep}{3.2}
\newcommand{\ystep}{-0.8}
\newcommand{\height}{0.5}
\newcommand{\wid}{1.5}
\newcommand{\memDistance}{2}
\newcommand{\outOne}{195}
\newcommand{\inOne}{165}
\newcommand{\outTwo}{-15}
\newcommand{\inTwo}{15}
\newcommand{\Looseness}{0.7}
\newcommand{\bracketWid}{0.6}
\newcommand{\bracketHeight}{0.25}
\tikzstyle{memEvent}=[draw=none]
\scalebox{0.7}{
\begin{tikzpicture}[font=\small]
\node[event] (t1-1) at (0 * \xstep, 0 * \ystep)  {$\wt({\color{red} \lk})$};
\node[event] (t1-2) at (0 * \xstep, 1 * \ystep) {$\wt(\ch_x^1)$};
\node[event] (t1-3) at (0 * \xstep, 2 * \ystep) {$\wt(\ch_x^2)$};
\node[event] (t1-4) at (0 * \xstep, 3 * \ystep) {$\rd({\color{red} \lk})$};
\draw [memEvent] ($ (t1-1) + (-\memDistance, 0) $)  to node [midway,fill=white] {$\wtMem_1(x)$} ($ (t1-4) + (-\memDistance, 0) $);
\draw[line width=1.5pt] ($(t1-1) + (-\memDistance - 0.5 * \bracketWid, 0.5*\height - \bracketHeight)$) rectangle ($(t1-1) + (-\memDistance + 0.5 * \bracketWid, 0.5*\height)$);    
\draw[white, line width=3pt] ($(t1-1) + (-\memDistance - 0.6 * \bracketWid, 0.5*\height - \bracketHeight)$) -- ($(t1-1) + (-\memDistance + 0.6 * \bracketWid, 0.5*\height - \bracketHeight)$); 

\draw[line width=1.5pt] ($(t1-4) + (-\memDistance - 0.5 * \bracketWid, -0.5*\height)$) rectangle ($(t1-4) + (-\memDistance + 0.5 * \bracketWid, -0.5*\height + \bracketHeight)$);    
\draw[white, line width=3pt] ($(t1-4) + (-\memDistance - 0.6 * \bracketWid, -0.5*\height + \bracketHeight)$) -- ($(t1-4) + (-\memDistance + 0.6 * \bracketWid, -0.5*\height + \bracketHeight)$); 

\node[event] (t1-5) at (0 * \xstep, 4 * \ystep)  {$\wt({\color{red} \lk})$};
\node[event] (t1-6) at (0 * \xstep, 5 * \ystep)  {$\rd(\ch_y^1)$};
\node[event] (t1-7) at (0 * \xstep, 6 * \ystep)  {$\rd({\color{red} \lk})$};
\draw [memEvent] ($ (t1-5) + (-\memDistance, 0) $)  to node {$\rdMem_3(y)$} ($ (t1-7) + (-\memDistance, 0) $);
\draw[line width=1.5pt] ($(t1-5) + (-\memDistance - 0.5 * \bracketWid, 0.5*\height - \bracketHeight)$) rectangle ($(t1-5) + (-\memDistance + 0.5 * \bracketWid, 0.5*\height)$);    
\draw[white, line width=3pt] ($(t1-5) + (-\memDistance - 0.6 * \bracketWid, 0.5*\height - \bracketHeight)$) -- ($(t1-5) + (-\memDistance + 0.6 * \bracketWid, 0.5*\height - \bracketHeight)$); 

\draw[line width=1.5pt] ($(t1-7) + (-\memDistance - 0.5 * \bracketWid, -0.5*\height)$) rectangle ($(t1-7) + (-\memDistance + 0.5 * \bracketWid, -0.5*\height + \bracketHeight)$);    
\draw[white, line width=3pt] ($(t1-7) + (-\memDistance - 0.6 * \bracketWid, -0.5*\height + \bracketHeight)$) -- ($(t1-7) + (-\memDistance + 0.6 * \bracketWid, -0.5*\height + \bracketHeight)$); 

\node[event] (t2-1) at (1 * \xstep, 0 * \ystep)  {$\wt({\color{red} \lk})$};
\node[event] (t2-2) at (1 * \xstep, 1 * \ystep)  {$\rd(\ch_x^1)$};
\node[event] (t2-3) at (1 * \xstep, 2 * \ystep)  {$\rd({\color{red} \lk})$};
\draw [memEvent] ($ (t2-1) + (\memDistance, 0) $)  to node [midway,fill=white] {$\rdMem_1(x)$} ($ (t2-3) + (\memDistance, 0) $);  
\draw[line width=1.5pt] ($(t2-1) + (\memDistance - 0.5 * \bracketWid, 0.5*\height - \bracketHeight)$) rectangle ($(t2-1) + (\memDistance + 0.5 * \bracketWid, 0.5*\height)$);    
\draw[white, line width=3pt] ($(t2-1) + (\memDistance - 0.6 * \bracketWid, 0.5*\height - \bracketHeight)$) -- ($(t2-1) + (\memDistance + 0.6 * \bracketWid, 0.5*\height - \bracketHeight)$); 

\draw[line width=1.5pt] ($(t2-3) + (\memDistance - 0.5 * \bracketWid, -0.5*\height)$) rectangle ($(t2-3) + (\memDistance + 0.5 * \bracketWid, -0.5*\height + \bracketHeight)$);    
\draw[white, line width=3pt] ($(t2-3) + (\memDistance - 0.6 * \bracketWid, -0.5*\height + \bracketHeight)$) -- ($(t2-3) + (\memDistance + 0.6 * \bracketWid, -0.5*\height + \bracketHeight)$); 
    
\node[event] (t2-4) at (1 * \xstep, 3 * \ystep) {$\wt({\color{red} \lk})$};
\node[event] (t2-5) at (1 * \xstep, 4 * \ystep) {$\rd(\ch_x^2)$};
\node[event] (t2-6) at (1 * \xstep, 5 * \ystep) {$\rd({\color{red} \lk})$};
\draw [memEvent] ($ (t2-4) + (\memDistance, 0) $)  to node [midway,fill=white] {$\rdMem_2(x)$} ($ (t2-6) + (\memDistance, 0) $);
\draw[line width=1.5pt] ($(t2-4) + (\memDistance - 0.5 * \bracketWid, 0.5*\height - \bracketHeight)$) rectangle ($(t2-4) + (\memDistance + 0.5 * \bracketWid, 0.5*\height)$);    
\draw[white, line width=3pt] ($(t2-4) + (\memDistance - 0.6 * \bracketWid, 0.5*\height - \bracketHeight)$) -- ($(t2-4) + (\memDistance + 0.6 * \bracketWid, 0.5*\height - \bracketHeight)$); 

\draw[line width=1.5pt] ($(t2-6) + (\memDistance - 0.5 * \bracketWid, -0.5*\height)$) rectangle ($(t2-6) + (\memDistance + 0.5 * \bracketWid, -0.5*\height + \bracketHeight)$);    
\draw[white, line width=3pt] ($(t2-6) + (\memDistance - 0.6 * \bracketWid, -0.5*\height + \bracketHeight)$) -- ($(t2-6) + (\memDistance + 0.6 * \bracketWid, -0.5*\height + \bracketHeight)$); 

\node[event] (t2-7) at (1 * \xstep, 6 * \ystep) {$\wt({\color{red} \lk})$};
\node[event] (t2-8) at (1 * \xstep, 7 * \ystep) {$\wt(\ch_y^1)$};
\node[event] (t2-9) at (1 * \xstep, 8 * \ystep) {$\rd({\color{red} \lk})$};
\draw [memEvent] ($ (t2-7) + (\memDistance, 0) $)  to node [midway,fill=white] {$\wtMem_2(y)$} ($ (t2-9) + (\memDistance, 0) $);
\draw[line width=1.5pt] ($(t2-7) + (\memDistance - 0.5 * \bracketWid, 0.5*\height - \bracketHeight)$) rectangle ($(t2-7) + (\memDistance + 0.5 * \bracketWid, 0.5*\height)$);    
\draw[white, line width=3pt] ($(t2-7) + (\memDistance - 0.6 * \bracketWid, 0.5*\height - \bracketHeight)$) -- ($(t2-7) + (\memDistance + 0.6 * \bracketWid, 0.5*\height - \bracketHeight)$); 

\draw[line width=1.5pt] ($(t2-9) + (\memDistance - 0.5 * \bracketWid, -0.5*\height)$) rectangle ($(t2-9) + (\memDistance + 0.5 * \bracketWid, -0.5*\height + \bracketHeight)$);    
\draw[white, line width=3pt] ($(t2-9) + (\memDistance - 0.6 * \bracketWid, -0.5*\height + \bracketHeight)$) -- ($(t2-9) + (\memDistance + 0.6 * \bracketWid, -0.5*\height + \bracketHeight)$); 
    
\node[threadName] at (0 * \xstep, -1.0 * \ystep) {$\thread_{1}$};
\node[threadName] at (1 * \xstep, -1.0 * \ystep) {$\thread_{2}$};

\draw[rfEdge] (t1-2) to (t2-2);
\draw[rfEdge] (t1-3) to (t2-5);
\draw[rfEdge] (t2-8) to (t1-6);
\draw[rfEdge, out=\outOne, in=\inOne, looseness=\Looseness] (t1-1) to (t1-4);
\draw[rfEdge, out=\outOne, in=\inOne, looseness=\Looseness] (t1-5) to (t1-7);

\draw[rfEdge, out=\outTwo, in=\inTwo, looseness=\Looseness] (t2-1) to (t2-3);
\draw[rfEdge, out=\outTwo, in=\inTwo, looseness=\Looseness] (t2-4) to (t2-6);
\draw[rfEdge, out=\outTwo, in=\inTwo, looseness=\Looseness] (t2-7) to (t2-9);
    
\begin{scope}[on background layer]
\draw [thread] (0 * \xstep, -0.6 * \ystep) -- (0 * \xstep, 8.7 * \ystep);
\draw [thread] (1 * \xstep, -0.6 * \ystep) -- (1 * \xstep, 8.7 * \ystep);
\end{scope}
\end{tikzpicture}
}
\caption{The corresponding $\vchRf$ instance.}
\figlabel{np-hard-vsc-read-unbounded-demo}
\end{subfigure}
\caption{
A $\vscRd$ instance (\subref{fig:np-hard-vsc-read-bounded-demo}) and the corresponding $\vchRf$ instance (\subref{fig:np-hard-vsc-read-unbounded-demo}) with channel capacities of $1$.}
\figlabel{np-hard-vsc-read}
\end{figure}

For a write event $e = \ev{t, \wtMem(x)}$, $M(e)$ is a sequence
of $m_x$-many $\snd$ events, followed by $m_x - p_x$ $\rcv$ events,
all enclosed in a block of send-receive pair on channel $\lk$;
the thread identifier of each of the following event is $\thread$,
and we omit explicitly mentioning it.
\begin{align*}
M(e) = \snd(\s{\lk}) \cdot \snd(\ch_x^1) \cdots \snd(\ch_x^{m_x}) \cdot \rcv(\ch_x^{p_e+1}) \cdots \rcv(\ch_x^{m_x})\cdot \rcv(\s{\lk})
\end{align*}
Let us now discuss the encoding of read events.
For this, we assume some arbitrary ordering $\set{f_1, f_2, \ldots, f_{p_e}}$
of the set of read events reading from some write event $e$.
Then, the event sequence corresponding to the $i^\text{th}$
read event $e = \ev{\thread, \rdMem(x)}$ of some write event is:
\begin{align*}
M(e) = \snd(\s{\lk}) \cdot \rcv(\ch_x^i) \cdot \rcv(\s{\lk})
\end{align*}

The program order $\po{}'$ is then obtained by considering all pairs
of events of the form $(e_1, e_2)$ in $\eventSet'$ such that
either they belong to $M(e)$ for some $e$ and $e_1$ appears before $e_2$ in $M(e)$,
or they belong to $M(e)$ and $M(e')$ respectively with $(e, e') \in \po{}$.
The $\rf{}'$ relation is also straightforward.
For each event of the form $\rcv(\lk)$ in $M(e)$,
its corresponding send event is the unique $\snd(\lk)$ event in $M(e)$.
The send and receive events on channels of the form $\ch_x^i$ are paired
as follows.
Let $(e, f_i) \in \rf{}$ be a pair of write and its $i^\text{th}$
read event in $\eventSet$.
Then the send event $e'_i = \snd(\ch_x^i)$ in $M(e)$ is paired to
the event $f'_i = \rcv(\snd(\ch_x^i))$ in $M(f_i)$ (i.e., $(e'_i, f'_i) \in \rf{}'$).
Further, the unmatched send event
$e'_j = \snd(\ch_x^j)$ in $M(e)$, where $p_e+1 \leq j \leq m_x$
is paired with the $(j-p_e)^\text{th}$ receive event $f'_j = \rcv(\ch_x^j)$ in $M(e)$,
(i.e., $(e'_j, f'_j) \in \rf{}'$).


The correctness of the construction is relatively straightforward, and stated in the following lemma.

\begin{restatable}{lemma}{vchRfConstK}
    \lemlabel{vscRd-vchRf-relation}
    ${\AbstractExecution}$ is SC consistent iff  
    $\tuple{\AbstractExecution', \cpFunc', \rf{}'}$ is consistent.
\end{restatable}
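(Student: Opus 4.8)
The plan is to prove both directions of the equivalence by exploiting a tight correspondence between linearizations of the original register events and linearizations of the atomic blocks $M(e)$ in the channel instance. The key structural fact I would establish first is that in every concretization $\trace'$ of $\tuple{\AbstractExecution', \cpFunc', \rf{}'}$, the events of each block $M(e)$ occur contiguously, i.e.\ no two blocks interleave. This is precisely the atomicity guaranteed by the capacity-$1$ lock channel $\lk$, as discussed in \secref{atomic-gadget}: the opening $\snd(\lk)$ fills $\lk$, so the matching $\rcv(\lk)$ that closes the block must be scheduled before any other thread can send to $\lk$ and begin its own block. Consequently, every concretization $\trace'$ induces a total order $\trace$ on $\eventSet$ by collapsing each block $M(e)$ to the event $e$ and ordering the $e$'s by the positions of their blocks; conversely, any total order on $\eventSet$ yields a candidate $\trace'$ by expanding each $e$ into $M(e)$ laid out contiguously.

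For the forward direction ($\Rightarrow$), I would take an SC witness $\trace$ of $\AbstractExecution$ and form $\trace'$ by expanding the blocks in the order given by $\trace$. It then remains to check that $\trace'$ is well-formed and realizes $\rf{}'$. The channel $\lk$ trivially respects its capacity since each block is contiguous. For a data channel $\ch_x^i$, I would argue that the capacity-$1$ constraint is met using the coherence property of $\trace$: the send $\snd(\ch_x^i)$ issued inside a write block $M(e)$ is either cleaned up immediately by the internal $\rcv(\ch_x^i)$ of the same block (when $i > p_e$), or is received by the $i$-th read block of $e$; in the latter case, SC coherence guarantees no write to $x$ is ordered between $e$ and that read, hence no other write block sends on $\ch_x^i$ before the buffered message is consumed. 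This simultaneously shows the capacity is never exceeded and that the $k$-th send on $\ch_x^i$ is matched with the $k$-th receive exactly as prescribed by $\rf{}'$.

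For the backward direction ($\Leftarrow$), I would take a concretization $\trace'$, use atomicity to obtain the induced total order $\trace$ on $\eventSet$, and verify that $\trace$ is an SC witness. Respect for $\po{}$ is immediate from $\po{}'$ together with atomicity. The relation $\rf{}$ is recovered because each external match on some $\ch_x^i$ in $\rf{}'$ links precisely a write $e$ with its designated $i$-th reader. The crucial property is coherence: if a read $f$ receives from a write $e$ on $x$, then the matched pair on $\ch_x^i$ together with the capacity-$1$ bound forbids any intervening send on $\ch_x^i$; since every write to $x$ emits a send on $\ch_x^i$ within its (atomic) block, no write block to $x$ can fall between $M(e)$ and $M(f)$, which is exactly the SC no-intervening-write condition on $\trace$.

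The main obstacle I anticipate is the careful bookkeeping of the data-channel matchings --- in particular, reconciling the two roles of the sends on $\ch_x^i$ (externally matched to a read block versus internally cleaned up within the same write block) and verifying that the FIFO/$\rf{}'$ matching forced by the capacity-$1$ channels corresponds exactly to SC coherence in both directions. Establishing the equivalence between ``the $k$-th send equals the $k$-th receive on $\ch_x^i$'' and ``the writes to $x$ appear in coherence order with their reads nested correctly'' is the delicate part; the remaining well-formedness and $\po{}$/$\rf{}$ checks are routine once atomicity and this correspondence are in place.
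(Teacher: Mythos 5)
Your proposal is correct and follows essentially the same route as the paper's proof: both directions hinge on the atomicity of the blocks $M(e)$ enforced by the capacity-$1$ lock channel $\lk$, the resulting collapse/expand correspondence between linearizations of $\eventSet$ and concretizations of $\tuple{\AbstractExecution', \cpFunc', \rf{}'}$, and the capacity-$1$ argument on the channels $\ch_x^i$ (no second send can be buffered between a matched send and its receive) to translate between SC coherence and the channel $\rf{}'$/capacity constraints. Your explicit treatment of the two roles of sends on $\ch_x^i$ (externally matched versus internally cleaned up) is a slightly more detailed bookkeeping of what the paper's well-formedness argument handles implicitly, but it is the same proof.
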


We now argue about the time taken to construct $\tuple{\AbstractExecution', \cpFunc', \rf{}'}$.
Each write event in $\eventSet$ can be observed by at most $|\eventSet|$ different read events.
Each $e \in \eventSet$ is thus mapped to a sequence consisting of $O(|\eventSet|)$ events. 
Thus, $|\eventSet'| \in  O(|\eventSet^2|)$, which concludes case (i) of \thmref{vch-rf-hardness}.


\subsection{Hardness with $3$ Threads, $5$ Channels and no Capacity Restrictions}
\seclabel{hardness-const-t-m}
We show that $\vchRf$ remains intractable when both the number of threads and channels are constant, 
and there are no restrictions on channel capacities.

\begin{figure}[t]
\centering
\begin{subfigure}[b]{0.36\textwidth}
\centering
\newcommand{\xstep}{1.8}
\newcommand{\ystep}{-0.8}
\newcommand{\height}{0.8}
\newcommand{\wid}{4.6}
\newcommand{\sqrwid}{4.6}
\newcommand{\sqrheight}{4.6}
\tikzstyle{phase}=[event, draw=colorClause, rounded corners]
\tikzstyle{dots}=[align=center, line width=1.5pt, minimum width=\sqrwid cm,minimum height=\sqrheight cm]
\scalebox{0.70}{
\begin{tikzpicture}
    \node (t1-3) at (1 * \xstep, 0.5 * \ystep) [phase] { Phase-$0$ };
    \node (t1-3) at (1 * \xstep, 2 * \ystep) [phase] { Phase-$1$ };
    \node (t1-4) at (1 * \xstep, 3.1 * \ystep) [dots] {\bf $\vdots$ \quad\quad\quad $\vdots$ };
    \node (t1-5) at (1 * \xstep, 4.5 * \ystep) [phase] {Phase-$j$};
    \node (t1-4) at (1 * \xstep, 5.6 * \ystep) [dots] {\bf $\vdots$ \quad\quad\quad $\vdots$ };
    \node (t1-5) at (1 * \xstep, 7 * \ystep) [phase] {Phase-$\numClause$};

    \node[threadName] at (0 * \xstep, -1.1 * \ystep) {$\thread_{1}$ };
    \node[threadName] at (1 * \xstep, -1.1 * \ystep) {$\thread_{2}$ };
    \node[threadName] at (2 * \xstep, -1.1 * \ystep) {$\thread_{3}$ };

    \begin{scope}[on background layer]
        \draw [thread] (0 * \xstep, -0.8 * \ystep) -- (0 * \xstep, 8.5 * \ystep);
        \draw [thread] (1 * \xstep, -0.8 * \ystep) -- (1 * \xstep, 8.5 * \ystep);
        \draw [thread] (2 * \xstep, -0.8 * \ystep) -- (2 * \xstep, 8.5 * \ystep);
    \end{scope}
\end{tikzpicture}
}
\vspace{0.5cm}
\caption{Overall scheme}
\figlabel{reduction-const-t-m-vchRf-overall}
\end{subfigure}
\hfill
\begin{subfigure}[b]{0.63\textwidth}
\centering
\newcommand{\ybias}{-0.5 }
\newcommand{\xstep}{2.5}
\newcommand{\ystep}{-0.8}
\newcommand{\height}{0.5}
\newcommand{\wid}{2.0}
\newcommand{\sqrwid}{7.5}
\newcommand{\sqrheight}{0.4}
\newcommand{\memDistance}{2.3}
\newcommand{\bracketWid}{0.6}
\newcommand{\bracketHeight}{0.25}
\tikzstyle{memEvent}=[draw=none]
\tikzstyle{var}=[draw=colorVar, align=center, fill=white, line width=0.4mm, minimum width=\sqrwid cm,minimum height=\sqrheight cm, dashed, rounded corners]
\tikzstyle{dots}=[align=center, line width=1.5pt, minimum width=0.2*\sqrwid cm,minimum height=0.7*\sqrheight cm]
\tikzstyle{doubleLine}=[thin, double distance=1.0pt]
\scalebox{0.65}{
\begin{tikzpicture}
    \node (var-1) at (1 * \xstep, 0 * \ystep) [var] { $x_1$};
    \node (var-dots-1) at (1 * \xstep, 0.6 * \ystep) [dots] { $\vdots$ \quad\quad\quad $\vdots$};

   \node (bounderies) at (1 * \xstep, 3.7 * \ystep) [draw=colorVar, align=center, line width=0.4mm, minimum width=\sqrwid cm,minimum height=-5.2*\ystep cm, dashed, rounded corners] {};


    \node (var-i-1-1) at (0 * \xstep, 1.7 * \ystep) [event] {$\wt^i_\bot(\ch_1)$};
    \node (var-i-1-2) at (0 * \xstep, 2.7 * \ystep) [doubleLine, event] {$\rd^i_\bot(\ch_2)$};
    \node (var-i-1-3) at (0 * \xstep, 3.7 * \ystep) [doubleLine, event] {$\wt_\bot(c_q)$ \textcolor{colorSTAR}{$\star$}};
    \node (var-i-1-4) at (0 * \xstep, 4.7 * \ystep) [doubleLine, event] {$\rd^i_\bot(\ch_1)$};
    \node (var-i-1-5) at (0 * \xstep, 5.7 * \ystep) [event] {$\wt^i_\bot(\ch_2)$};

    \node (var-i-2-1) at (1 * \xstep, 1.7 * \ystep) [event] {$\wt^i_\top(\ch_2)$};
    \node (var-i-2-2) at (1 * \xstep, 2.7 * \ystep) [doubleLine, event] {$\rd^i_\top(\ch_1)$};
    \node (var-i-2-3) at (1 * \xstep, 3.7 * \ystep) [doubleLine, event] {$\wt_\top(c_q)$ \textcolor{colorSTAR}{$\star$}};
    \node (var-i-2-4) at (1 * \xstep, 4.7 * \ystep) [doubleLine, event] {$\rd^i_\top(\ch_2)$};
    \node (var-i-2-5) at (1 * \xstep, 5.7 * \ystep) [event] {$\wt^i_\top(\ch_1)$};

    \node [dots] (var-dots-2) at (1 * \xstep, 6.7 * \ystep) { $\vdots$ \quad\quad\quad $\vdots$};
    \node [var] (var-n) at (1 * \xstep, 7.5 * \ystep){ $x_{\numVar}$};

    \node (t1-6) at (0 * \xstep, 8.5 * \ystep) [doubleLine, event] {$\rd_\top(c_1)$};
    \node (t1-7) at (0 * \xstep, 9.5 * \ystep) [doubleLine, event] {$\rd_\bot(c_2)$};

    \node (t2-6) at (1 * \xstep, 8.5 * \ystep) [doubleLine, event] {$\rd_\top(c_2)$};
    \node (t2-7) at (1 * \xstep, 9.5 * \ystep) [doubleLine, event] {$\rd_\bot(c_3)$};

    \node (t3-6) at (2 * \xstep, 8.5 * \ystep) [doubleLine, event] {$\rd_\top(c_3)$};
    \node (t3-7) at (2 * \xstep, 9.5 * \ystep) [doubleLine, event] {$\rd_\bot(c_1)$};
    
    \draw [memEvent] ($ (var-i-1-1) + (-\memDistance, 0) $) to node [midway, fill=white, align=center] {$I^r_i$ or $A^r_{j, i}$ \\ for $x_i$} 
    ($ (var-i-1-5) + (-\memDistance, 0) $);
    \draw[line width=1.5pt] ($(var-i-1-1) + (-\memDistance - 0.5 * \bracketWid, 0.5*\height - \bracketHeight)$) rectangle ($(var-i-1-1) + (-\memDistance + 0.5 * \bracketWid, 0.5*\height)$);    
    \draw[white, line width=3pt] ($(var-i-1-1) + (-\memDistance - 0.6 * \bracketWid, 0.5*\height - \bracketHeight)$) -- ($(var-i-1-1) + (-\memDistance + 0.6 * \bracketWid, 0.5*\height - \bracketHeight)$); 

    \draw[line width=1.5pt] ($(var-i-1-5) + (-\memDistance - 0.5 * \bracketWid, -0.5*\height)$) rectangle ($(var-i-1-5) + (-\memDistance + 0.5 * \bracketWid, -0.5*\height + \bracketHeight)$);    
    \draw[white, line width=3pt] ($(var-i-1-5) + (-\memDistance - 0.6 * \bracketWid, -0.5*\height + \bracketHeight)$) -- ($(var-i-1-5) + (-\memDistance + 0.6 * \bracketWid, -0.5*\height + \bracketHeight)$); 

    \draw [memEvent] ($ (t1-6) + (-\memDistance, \height/
    2) $)  to node [midway,fill=white] {$B^r_j$} ($ (t1-7) + (-\memDistance, -\height/2) $);
    \draw[line width=1.5pt] ($(t1-6) + (-\memDistance - 0.5 * \bracketWid, 0.5*\height - \bracketHeight)$) rectangle ($(t1-6) + (-\memDistance + 0.5 * \bracketWid, 0.5*\height)$);    
    \draw[white, line width=3pt] ($(t1-6) + (-\memDistance - 0.6 * \bracketWid, 0.5*\height - \bracketHeight)$) -- ($(t1-6) + (-\memDistance + 0.6 * \bracketWid, 0.5*\height - \bracketHeight)$); 

    \draw[line width=1.5pt] ($(t1-7) + (-\memDistance - 0.5 * \bracketWid, -0.5*\height)$) rectangle ($(t1-7) + (-\memDistance + 0.5 * \bracketWid, -0.5*\height + \bracketHeight)$);    
    \draw[white, line width=3pt] ($(t1-7) + (-\memDistance - 0.6 * \bracketWid, -0.5*\height + \bracketHeight)$) -- ($(t1-7) + (-\memDistance + 0.6 * \bracketWid, -0.5*\height + \bracketHeight)$); 

    \draw [memEvent] (2 * \xstep + \memDistance, 0 * \ystep)  to node [midway,fill=white] {$I^r$ or $A^r_j$} ($ (t3-7) + (\memDistance, -\height/2) $);
    \draw[line width=1.5pt] ($(2 * \xstep, 0 * \ystep) + (\memDistance - 0.5 * \bracketWid, 0.5*\height - \bracketHeight)$) rectangle ($(2 * \xstep, 0 * \ystep) + (\memDistance + 0.5 * \bracketWid, 0.5*\height)$);    
    \draw[white, line width=3pt] ($(2 * \xstep, 0 * \ystep) + (\memDistance - 0.6 * \bracketWid, 0.5*\height - \bracketHeight)$) -- ($(2 * \xstep, 0 * \ystep) + (\memDistance + 0.6 * \bracketWid, 0.5*\height - \bracketHeight)$); 

    \draw[line width=1.5pt] ($(t3-7) + (\memDistance - 0.5 * \bracketWid, -0.5*\height)$) rectangle ($(t3-7) + (\memDistance + 0.5 * \bracketWid, -0.5*\height + \bracketHeight)$);    
    \draw[white, line width=3pt] ($(t3-7) + (\memDistance - 0.6 * \bracketWid, -0.5*\height + \bracketHeight)$) -- ($(t3-7) + (\memDistance + 0.6 * \bracketWid, -0.5*\height + \bracketHeight)$); 
    
    \node[threadName] at (0 * \xstep, -1.0 * \ystep) {$\thread_{1}$};
    \node[threadName] at (1 * \xstep, -1.0 * \ystep) {$\thread_{2}$}; 
    \node[threadName] at (2 * \xstep, -1.0 * \ystep) {$\thread_{3}$};
    \begin{scope}[on background layer]
        \node (t1-3) at (1 * \xstep, 4.8 * \ystep) [draw=colorClause, align=center, fill=white, line width=0.4mm, minimum width=4*\wid cm,minimum height=-10.5*\ystep cm, dashed, rounded corners] { $x_1$};
        \draw [thread] (0 * \xstep, -0.7 * \ystep) -- (0 * \xstep, 10.4 * \ystep);
        \draw [thread] (1 * \xstep, -0.7 * \ystep) -- (1 * \xstep, 10.4 * \ystep);
        \draw [thread] (2 * \xstep, -0.7 * \ystep) -- (2 * \xstep, 10.4 * \ystep);
    \end{scope}
\end{tikzpicture}
}
\caption{Events in Phase-$j$}
\figlabel{reduction-const-t-m-pahse}
\end{subfigure}
\caption{
Reduction from 3SAT to $\vchRf$ with capacity-unbounded channels. 
Events with double boundary do not appear in Phase-$0$.
Events marked with \textcolor{colorSTAR}{$\star$} only appear when  the $q^\text{th}$ literal in clause $C_j$ is over variable $x_i$
}
\figlabel{reduction-fix-t-m}
\end{figure} 

\myparagraph{Overview}{
Starting from a 3SAT instance $\psi$ with
$\numClause$ clauses $C_1, \ldots, C_{\numClause}$ over $\numVar$ propositional variables
$\set{x_1, \ldots, x_{\numVar}}$,
we construct a $\vchRf$ instance $\tuple{\AbstractExecution, \cpFunc, \rf{}}$
with $3$ threads $\thread_1, \thread_2, \thread_3$ and 
5 channels $\ch_1, \ch_2, c_1, c_2, c_3$.
Informally, $\tuple{\AbstractExecution, \cpFunc, \rf{}}$ consists of
$\numClause + 1$ phases, arranged sequentially. 
The first \emph{initialization} phase (`Phase-0') picks an assignment of boolean
values for each variable. 
The remaining $\numClause$ phases encode the requirement
that at least one literal from each clause is set to true.
Phase-$j$ (with $j \geq 1$) duplicates the assignment to all variables 
from the previous phase and checks if the chosen 
assignment makes clause $C_j$ true.
\figref{reduction-fix-t-m} depicts this scheme.
}

\myparagraph{Reduction}{
The sequence $\tr_r$ corresponding to events of thread $\thread_r$ ($r \in \set{1, 2, 3}$)
is of the form $\tr_r = I^r \cdot A^r_1 \cdot A^r_2 \cdots A^r_{\numClause}$.
The sequence corresponding to Phase-$0$ 
is of the form  $I^r = I^r_1 \cdots I^r_{\numVar}$,
where $I^r_p$ picks an assignment to variable $x_p$ in thread $\thread_r$:
\begin{align*}
\begin{array}{ccccc}
I^1_p = \snd^p_\bot(\ch_1) \cdot \snd^p_\bot(\ch_2) & \quad\quad & I^2_p = \snd^p_\top(\ch_2) \cdot \snd^p_\top(\ch_1) & \quad\quad & I^3_p = \epsilon
\end{array}
\end{align*}
Next, the sequence corresponding to thread $\thread_r$ and Phase-$j$ ($j \geq 1$)
is of the form $A^r_j = A^r_{j, 1} \cdots A^r_{j, \numVar} \cdot B^r_j$.
where $A^r_{j, p}$ corresponds to variable $x_p$ 
and $B^r_j$ encodes the satisfaction of clause $C_j$ (see \figref{reduction-fix-t-m} for illustration).
We describe these components next.
$A^3_{j, p} = \epsilon$ for every $j \in \set{1, \ldots, \numClause}$, $p \in \set{1, \ldots, \numVar}$.
When $r \in \set{1, 2}$, then $A^r_{j, p}$ is used to encode the variable $x_p$ in clause $C_j$ of thread $\thread_r$.
If $x_p$ appears in clause $C_j$ and it is the $p^\text{th}$ literal of $C_j$ (p $\in \set{1, 2, 3}$), 
then: 
\begin{align*}
\begin{array}{rcl}
A^1_{j, p} &=& \snd^p_\bot(\ch_1) \cdot \rcv^p_\bot(\ch_2) \cdot \snd_\bot(c_q) \cdot \rcv^p_\bot(\ch_1) \cdot \snd^p_\bot(\ch_2) \\
A^2_{j, p} &=& \snd^p_\top(\ch_2) \cdot  \rcv^p_\top(\ch_1) \cdot \snd_\top(c_q) \cdot \rcv^p_\top(\ch_2) \cdot \snd^p_\top(\ch_1)
\end{array}
\end{align*}
If $x_p$ is not in clause $C_j$, then:
\begin{align*}
\begin{array}{rcl}
A^1_{j, p} &=& \snd^p_\bot(\ch_1) \cdot  \rcv^p_\bot(\ch_2) \cdot \rcv^p_\bot(\ch_1) \cdot \snd^p_\bot(\ch_2) \\
A^2_{j, p} &=& \snd^p_\top(\ch_2) \cdot \rcv^p_\top(\ch_1) \cdot \rcv^p_\top(\ch_2) \cdot \snd^p_\top(\ch_1)
\end{array}
\end{align*}
Finally,
\begin{align*}
\begin{array}{ccccc}
B^1_j = \rcv_\top(c_1) \cdot  \rcv_\bot(c_2)
&\quad&
B^2_j = \rcv_\top(c_2) \cdot  \rcv_\bot(c_3)
&\quad&
B^3_j = \rcv_\top(c_3) \cdot \rcv_\bot(c_1)
\end{array}
\end{align*}
We now discuss the reads-from mappings. 

\begin{compactitem}
    \item The receive events $\rcv^p_\bot(\ch_2)$, $\rcv^p_\bot(\ch_1)$, $\rcv^p_\top(\ch_2)$ and $\rcv^p_\top(\ch_1)$ in $A^1_{j, p}, A^1_{j, p}, A^2_{j, p}, A^2_{j, p}$ are respectively mapped
to the send events $\snd^p_\bot(\ch_2)$, $\snd^p_\bot(\ch_1)$, $\snd^p_\top(\ch_2)$, $\snd^p_\top(\ch_1)$ in $A^1_{j-1, p}, A^1_{j-1, p}, A^2_{j-1, p}, A^2_{j-1, p}$ (or in $I^1_p, I^1_p, I^2_p, I^2_p$ if $j=1$).

    \item Let $C_j = \gamma_1 \lor \gamma_2 \lor \gamma_3$ such that $\gamma_q$ is either $x_{j_q}$ or $\neg x_{j_q}$.
For each $q \in \set{1, 2, 3}$, we have the following. If $\gamma_q = x_{j_q}$, then we require that the receive event $\rcv_\top(c_q)$ reads from send $\snd_\top(c_q)$ in $A^2_{j, j_q}$, and $\rcv_\bot(c_q)$ reads from $\snd_\top(c_q)$ in $A^1_{j, j_q}$.
Otherwise, we require that $\rcv_\top(c_q)$ reads from $\snd_\bot(c_q)$ in $A^1_{j, j_q}$ 
and $\rcv_\top(c_q)$ reads from $\snd_\top(c_q)$ in $A^2_{j, j_q}$. 
\end{compactitem}

The following lemma states the correctness of the above construction.

\begin{restatable}{lemma}{vchRfConstTM}
    \lemlabel{vch-rf-t=3-m=5}
    $\psi$ is satisfiable iff 
    $\tuple{\AbstractExecution, \cpFunc, \rf{}}$ is consistent.
\end{restatable}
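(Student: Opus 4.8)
The plan is to prove the two directions of the biconditional by relating a truth assignment of $\psi$ to the relative order, for each variable $x_p$, of thread~$\thread_1$'s and thread~$\thread_2$'s sends on the channels $\ch_1,\ch_2$. Before either direction I would isolate two auxiliary claims that carry the whole reduction. The first is an \emph{assignment-propagation invariant}: for each variable $x_p$ and each phase $j\ge 1$, the relative trace order of the two phase-$j$ sends $s^j_1=\snd^p_\bot(\ch_1)$ and $s^j_2=\snd^p_\top(\ch_1)$ is forced to agree with that of phase $j-1$. I would prove $s^j_1\stricttrord{\tr} s^j_2 \Leftrightarrow s^{j+1}_1\stricttrord{\tr} s^{j+1}_2$ by chaining the FIFO constraint on $\ch_1$ (the phase-$(j{+}1)$ receives read back the phase-$j$ sends via $\rf{}$, so their order is preserved) with the program-order edges inside a phase that sandwich each receive between the two sends, yielding $s^{j+1}_1\stricttrord{\tr}\rcv^p_\bot(\ch_1)\stricttrord{\tr}\rcv^p_\top(\ch_1)\stricttrord{\tr} s^{j+1}_2$; the symmetric argument on $\ch_2$ propagates the opposite ordering, and the phase-$0$ program order of $I^1_p,I^2_p$ ties the two channels together, so the two channels exist precisely to propagate the two possible orderings.

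The second claim concerns the \emph{clause gadget} formed by $c_1,c_2,c_3$ and the cyclic reads of $B^1_j,B^2_j,B^3_j$. The pivotal statement is that ``literal $\gamma_q$ of $C_j$ is false'' forces $\rcv_\bot(c_q)\stricttrord{\tr}\rcv_\top(c_q)$. I would derive this by combining the propagation invariant (which fixes the relative order of $\snd_\bot(c_q)$ and $\snd_\top(c_q)$ according to the truth value of the underlying variable) with the FIFO constraint on $c_q$ and the reads-from wiring, which is deliberately flipped between positive and negative occurrences so that falsity produces the same receive order in both cases. Granting this, if all three literals of $C_j$ are false then the $B$-gadget program-order edges $\rcv_\top(c_1)\stricttrord{\tr}\rcv_\bot(c_2)$, $\rcv_\top(c_2)\stricttrord{\tr}\rcv_\bot(c_3)$, $\rcv_\top(c_3)\stricttrord{\tr}\rcv_\bot(c_1)$ close the cycle $\rcv_\bot(c_1)\stricttrord{\tr}\rcv_\top(c_1)\stricttrord{\tr}\rcv_\bot(c_2)\stricttrord{\tr}\cdots\stricttrord{\tr}\rcv_\bot(c_1)$, ruling out any linearization; a single true literal breaks one $c_q$-edge and leaves the constraints acyclic.

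With these claims, the backward direction is immediate: from a concretization $\tr$, define $\alpha(x_p)=\text{true}$ iff thread $\thread_2$'s phase-$0$ send precedes thread $\thread_1$'s; the invariant makes this well-defined across phases, and the clause claim guarantees each $C_j$ has a true literal. For the forward direction I would, given a satisfying $\alpha$, schedule phases in order and, within each phase, place the thread-$\thread_1$ block before the thread-$\thread_2$ block for each false variable and vice versa, then order the $B^r_j$ blocks according to the (now acyclic) constraints; since every channel is capacity-unbounded, well-formedness reduces to FIFO value-matching and the prefix condition $\lvert\proj{\prefix}{\rcv(\ch)}\rvert\le\lvert\proj{\prefix}{\snd(\ch)}\rvert$, both of which I would check directly against $\po{}'$ and $\rf{}'$.

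The main obstacle I anticipate is not the cyclic argument itself but the global soundness of the per-phase FIFO matchings. Because $\ch_1,\ch_2,c_1,c_2,c_3$ are each reused across all $\numClause+1$ phases, I must show that the phase structure acts as a barrier, i.e., that all phase-$j$ operations on a given channel precede all phase-$(j{+}1)$ operations on it, so that the reads-from edges (which are always within-phase) are realizable under global FIFO without cross-phase interference. Establishing this phase separation, together with the careful bookkeeping of which sends are unmatched versus matched on each channel, is where the bulk of the technical care will lie.
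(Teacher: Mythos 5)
Your proposal is correct and follows essentially the same route as the paper's proof: your assignment-propagation invariant is exactly the paper's pair of auxiliary claims (FIFO on $\ch_1$/$\ch_2$ combined with the program-order sandwich, the two channels handling the two possible orderings), your clause-gadget claim and the all-false cycle through the cyclic $\rcv_\top(c_q)$/$\rcv_\bot(c_{q'})$ program-order edges is the paper's central observation, and both directions (assignment read off the Phase-$0$ send order; explicit phase-sequential schedule) coincide with the paper's argument. The only slip is your parenthetical that reads-from edges are ``always within-phase'': on $\ch_1$ and $\ch_2$ they connect phase-$(j{-}1)$ sends to phase-$j$ receives (only the $c_q$ edges are within-phase), but this does not affect your plan, since your sequential scheduling realizes these cross-phase edges exactly as the paper's construction does.
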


Finally, the number of events in $\tuple{\AbstractExecution, \cpFunc, \rf{}}$ is $O(\numVar+\numClause)$,
which concludes case (iii) of \thmref{vch-rf-hardness}. 


\subsection{Quadratic Hardness with $2$ Threads}
\seclabel{two-threads-lower-bound-sec}

Finally, in this section we prove the quadratic hardness of $\vchRf$ over just $2$ threads when either all channels have capacity $1$ or have no capacity restrictions.
We achieve this by establishing a fine-grained reduction from the Orthogonal Vectors problem (OV)~\cite{Williams2005}.

\myparagraph{The Orthogonal Vectors problem}{
The OV problem takes as input two sets $A = \set{a_1, a_2 \ldots, a_n}, B = \set{b_1, b_2 \ldots, b_n} \subseteq 2^{\set{0, 1}^d}$,
each containing $n$ boolean vectors in $d$ dimensions.
The task is to determine whether there are two vectors $a \in A, b \in B$ such that $a$ and $b$ are orthgonal, i.e., $\langle a \cdot b\rangle = \sum_{i=1}^d a[i] \cdot b[i] = 0$.
Under the SETH, OV cannot be solved in time $O(n^{2-\epsilon})$, for every fixed $\epsilon > 0$, as long as $d=\omega(\log n)$~\cite{Williams2005}.
}

\input{figures/reduction-two-threads}

\myparagraph{Overview}{
We construct a $\vchRf$ instance $\tuple{\AbstractExecution, \cpFunc, \rf{}}$ which is consistent iff $A$ and $B$ contain an orthogonal vector pair.
$\AbstractExecution$ comprises two threads $\thread_A$ and $\thread_B$, respectively containing events encoding the vectors of $A$ and $B$.
\figref{reduction-two-threads}  illustrates the overall scheme.
At a high level, the reads-from edge due to the pair $\tuple{\snd(\gamma), \rcv(\gamma)} \in \rf{}$  triggers an orthogonality check between the vectors $a_1$ and $b_1$.
The reduction is built in such a way that this process of inference, called \emph{saturation} (formally described in \secref{saturation}), simulates orthogonality comparisons of the vectors.
If $a_1[i]=b_1[i]=1$ for some $i$, witnessing that $a_1$ and $b_1$ are not orthogonal, the corresponding events encoding $a_1$ and $b_1$ will contain two sends on the same channel, 
which triggers an orthogonality check between $a_1$ and $b_2$. 
If $a_1$ and $b_2$ are also not orthogonal, 
then $a_1$ and $b_3$ are compared, and so on. 
If the check between $a_1$ and $b_n$ fails, this triggers the check between $a_2$ and $b_1$, and the process continues, until an orthogonal pair is found, or the check between $a_n$ and $b_n$ does not identify an orthogonal pair.  
The fact that $a_n, b_n$ are not orthogonal implies $\rcv(\delta)$ must be ordered before $\snd(\delta)$, which contradicts with $\tuple{\snd(\delta), \rcv(\delta)}\in \rf{}$, 
implying that the constructed instance is not consistent. 
\revision{
We now formally describe the reduction. 
For simplicity, for certain events $e$, we use subscripts to indicate the specific vector that $e$ represents. 
}
}

\myparagraph{Reduction for capacity-unbounded channels}{
Given the OV instance $A, B$, we construct the corresponding $\vchRf$ instance using two threads $\thread_A$ and $\thread_B$ and channels $\set{\ch_1, \ch_2, \ldots, \ch_d, \alpha, \beta, \gamma, \delta}$, all having unbounded capacity.
We describe the events next, while using subscripts in the event operations that ensure that the combination of the operation, the subscript and the channel uniquely identify each event.
Send and receive events on the same channel that also have the same subscript are implicitly related by $\rf{}$.
The events of threads $\thread_A$ and $\thread_B$ are organized as follows:
\begin{align*}
\thread_A=A_{\sf init} \cdot A_1 \cdot A_2 \cdots A_n \qquad\text{and}
 \qquad 
\thread_B=B_{\sf init} \cdot B_n \cdot B_{n-1} \cdots B_1
\end{align*}
Observe that the order of appearance of $A_1, \ldots, A_n$ is the reverse of that of $B_n, \ldots, B_1$.
We next describe the contents of each block.
We use the notation $\snd_{a_i}(\ch_{a_i})$ to denote the sequence $\snd_{a_i}(\ch_{j_1}) \cdot \snd_{a_i}(\ch_{j_2}) \cdots \snd_{a_i}(\ch_{j_k})$, where $j_1, j_2, \ldots, j_k$ is the unique increasing sequence of indices in $\set{1, 2, \ldots, d}$ corresponding to non-zero entries in the vector $a_i$.
Likewise, $\snd_{b_i}(\ch_{b_i})$, $\rcv_{a_i}(\ch_{a_i})$ and $\rcv_{b_i}(\ch_{b_i})$ expand in a similar fashion.
The init block in $\thread_A$ contains send events for each vector $a \in A$ (on all those channels $\ch_i$ such that $a[i]=1$)  with alternating send events on channel $\alpha$, and likewise in $\thread_B$ (but in reverse order):
\begin{align*}
\begin{array}{rcl}
A_{\sf init} 
&=&
\snd_{a_1}(\ch_{a_1}) \cdot \snd_{a_1}(\alpha) \cdots \snd_{a_n}(\ch_{a_n}) \cdot \snd_{a_n}(\alpha) \\
B_{\sf init}
&=&
\snd_{b_n}(\alpha) \cdot \snd_{b_n}(\ch_{b_n}) \cdots  \snd_{b_1}(\alpha) \cdot \snd_{b_1}(\ch_{b_1})
\end{array}
\end{align*}
The extremal blocks ($A_1, B_1, A_n, B_n$) and internal blocks ($A_i, B_i$ where $2 \leq i \leq n-1$) are as follows.
\begin{align*}
\begin{array}{rcl}
A_1 
&=&
\rcv_{a_1}(\alpha) \cdot \snd(\gamma) \cdot \snd_{a_1}(\beta) \cdot \rcv_{a_1}(\ch_{a_1}) \\
A_n
&=&
\rcv_{a_n}(\alpha) \cdot \rcv_{a_{n-1}}(\beta) \cdot \rcv(\delta) \cdot \rcv_{a_n}(\ch_{a_n}) \\
B_n
&=&
\rcv_{b_n}(\ch_{b_n}) \cdot \snd(\delta) \cdot \snd_B(\beta) \\
B_1
&=&
\rcv_{b_1}(\ch_{b_1}) \cdot \rcv_{b_2}(\alpha) \cdot \rcv_B(\beta) \cdot \rcv(\gamma) \cdot \rcv_{b_1}(\alpha) \\
A_i 
&=&
\rcv_{a_i}(\alpha) \cdot \rcv_{a_{i-1}}(\beta) \cdot \snd_{a_i}(\beta) \cdot \rcv_{a_i}(\ch_{a_i}) \\
B_i 
&=&
\rcv_{b_i}(\ch_{b_i}) \cdot \rcv_{b_{i+1}}(\alpha)
\end{array}
\end{align*}


\input{figures/reduction-two-threads-demo}

The following lemma states the correctness of the construction.

\begin{restatable}{lemma}{ovcorrectness}
\lemlabel{ov-correctness}
$\tuple{\AbstractExecution, \cpFunc, \rf{}}$ is consistent iff $A$ and $B$ contain an orthogonal vector pair.
\end{restatable}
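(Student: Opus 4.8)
The plan is to route both directions through the saturation characterization of \secref{saturation}. Writing $\satord$ for the closure of $\po{}\cup\rf{}$ under transitivity together with the FIFO biconditional (for any two matched pairs $(\snd_1,\rcv_1),(\snd_2,\rcv_2)$ on a common channel one has $\snd_1\satord\snd_2$ iff $\rcv_1\satord\rcv_2$), the instance $\tuple{\AbstractExecution,\cpFunc,\rf{}}$ is consistent iff $\satord$ is acyclic. Since every channel is capacity-unbounded, no capacity or unmatched-send clauses enter (every send here is matched by its same-subscript receive), so saturation is exactly the $\po{}+\rf{}+$FIFO closure and it suffices to prove that $\satord$ is acyclic iff $A$ and $B$ contain an orthogonal pair.

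The heart of the argument is a single \emph{comparison-step} claim that I would establish one pair $(a_i,b_j)$ at a time. The saturation starts from the edge $\snd(\gamma)\satord\rcv(\gamma)$, which activates the comparison of $a_1$ and $b_1$; the auxiliary channels $\alpha$ and $\beta$ thread this comparison through the blocks, advancing it from one pair to the next along the scan order $(a_1,b_1),(a_1,b_2),\dots,(a_1,b_n),(a_2,b_1),\dots,(a_n,b_n)$. The claim is: whenever the comparison of $a_i$ and $b_j$ is active and $\langle a_i\cdot b_j\rangle\neq 0$, there is a coordinate $k$ with $a_i[k]=b_j[k]=1$, so both $A_{\sf init}$ and $B_{\sf init}$ contain a send on the shared channel $\ch_k$; applying the FIFO biconditional on $\ch_k$ to the orderings already derived forces the comparison of the next pair to become active (advancing $j\mapsto j+1$, or wrapping to $(a_{i+1},b_1)$ when $j=n$ via the $\beta$-token). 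Conversely, if $\langle a_i\cdot b_j\rangle=0$ then $a_i$ and $b_j$ share no send channel and the propagation halts at $(i,j)$.

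I would then package this into an induction over the scan order. For completeness I argue the contrapositive: if no pair is orthogonal, the comparison-step claim fires at every pair, so the chain propagates through the entire scan and, at the final pair $(a_n,b_n)$, forces $\rcv(\delta)\satord\snd(\delta)$; together with the reads-from edge $\snd(\delta)\satord\rcv(\delta)$ this closes a cycle, so $\satord$ is cyclic and the instance is inconsistent. For soundness I take the first orthogonal pair $(a_{i^\ast},b_{j^\ast})$ in scan order; the chain provably halts at $(i^\ast,j^\ast)$, and I would exhibit an explicit interleaving of $\thread_A$ and $\thread_B$ that respects $\po{}$ and $\rf{}$ and schedules $\snd(\delta)$ before $\rcv(\delta)$ (equivalently, verify that no propagated ordering closes a cycle, so $\satord$ stays acyclic). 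The worked instance in \figref{reduction-two-threads-demo} serves as a sanity check of the bookkeeping.

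The main obstacle will be the comparison-step claim and its inductive invariant: one must state precisely which family of $\satord$-orderings means ``the comparison of $a_i$ and $b_j$ is active'', and verify that this invariant is (i) established by the $\gamma$-edge for $(a_1,b_1)$, (ii) propagated by a FIFO application on the shared coordinate channel $\ch_k$ at every non-orthogonal pair, and (iii) handles the wrap-around from $(a_i,b_n)$ to $(a_{i+1},b_1)$ through the $\alpha$- and $\beta$-tokens, all while the $B$-blocks appear in the reversed order $B_n\cdots B_1$ in $\thread_B$. Pinning down the index bookkeeping of the $\alpha$/$\beta$ sends and receives exactly right --- so that each FIFO step lands on the intended successor pair and never short-circuits the scan --- is the delicate part of the proof.
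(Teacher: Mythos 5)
Your ``no orthogonal pair $\Rightarrow$ inconsistent'' direction matches the paper's argument: the same induction along the scan order, with non-orthogonality supplying a shared channel $\ch_k$ at each step, and the $\delta$ edge closing a cycle at $(a_n,b_n)$. The genuine gaps are both in the soundness direction.

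First, your whole plan rests on the asserted equivalence ``$\tuple{\AbstractExecution,\cpFunc,\rf{}}$ is consistent iff $\satord$ is acyclic.'' Only one implication of this is available (\lemref{saturation-equivalence}: consistency implies the saturated order is acyclic); the converse is not a theorem, and it is exactly the content you would need to prove. Acyclicity of an order closed under the FIFO biconditional does not by itself yield a concretization: a linearization may still place two $\satord$-unordered sends in one order and their matching receives in the other, violating FIFO. The paper closes precisely this gap by a construction, not by an appeal to saturation: it defines an explicit order via seven rules, proves it acyclic, proves it \emph{saturated}, and then shows that the specific greedy linearization preferring $\thread_A$ is well-formed --- an argument that works only because, in this instance, every matched send/receive pair except those on $\gamma$ and $\delta$ lies within a single thread, so pairs left unordered by $\satord$ get linearized consistently on both the send side and the receive side. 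Your parenthetical ``(equivalently, verify that no propagated ordering closes a cycle)'' is therefore not an equivalence, and the actual witness execution is missing from your proof.

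Second, your soundness sketch tracks only the forward chain launched by $\snd(\gamma)\satord\rcv(\gamma)$ and argues it halts at the first orthogonal pair in scan order. But the reads-from edge on $\delta$ launches a second chain that propagates comparisons in \emph{reverse} scan order (the paper's worked example states this explicitly: $b_n$ is compared with $a_n$, then $b_{n-1}$ with $a_n$, and so on), and that chain halts only at the lexicographically \emph{last} orthogonal pair. Establishing acyclicity of the derived orderings requires showing that these two propagations, taken together, never meet to form a cycle; this is why the paper's explicit order is parameterized by both the first orthogonal pair $(a_{i_1},b_{j_1})$ and the last one $(a_{i_2},b_{j_2})$, and why its case analysis bounds indices against both. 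An invariant that only covers the forward chain cannot complete this direction.
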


Regarding the time complexity, the reduction takes time proportional to $|A|+|B|$ i.e., $O(n\cdot d)$.
Hence, a subquadratic algorithm for deciding the consistency of $\tuple{\AbstractExecution, \cpFunc, \rf{}}$ would imply a subquadratic algorithm for solving OV, thereby contradicting SETH.
We thus arrive at case (i) of \thmref{vch-rf-two-threads-lower-bound}.
}

\begin{example}
\figref{reduction-two-threads-demo} illustrates an example with $n = d = 2$. 
The OV instance consists of the two sets $A = \{ a_1 = \tuple{0,1}, a_2 = \tuple{1,0}\}$ and $B = \{ b_1= \tuple{0,1}, b_2 = \tuple{1,1} \}$.
Since $a_2$ and $b_1$ are orthogonal, the constructed instance
$\Gamma = \tuple{\AbstractExecution, \cpFunc, \rf{}}$ is consistent.

We now explain how $\Gamma$ encodes orthogonality checks between the 
vectors of $A$ and $B$ via saturation.
Let us use $\satord$ to denote the saturated order of $\Gamma$;
its formal definition is presented in \secref{saturation}, though it can be intuitively understood
to be a set of orderings between events of $\Gamma$ that must hold
in every consistent concretization of $\Gamma$, if one exists (\lemref{saturation-equivalence}).
Since $(\po{} \cup \rf{})^+ \subseteq \satord$, we have $\rcv_{a_1}(\alpha) \satord \rcv_{b_1}(\alpha)$, which means $\snd_{a_1}(\alpha) \satord \snd_{b_1}(\alpha)$. 
This signifies that $a_1$ and $b_1$ are compared for orthogonality. Since $a_1$ and $b_2$ both have value 1 in dimension 2, they are not orthogonal.
This is witnessed by $\snd_{a_1}(\ch_2) \satord \snd_{b_1}(\ch_2)$, further leading to $\rcv_{a_1}(\ch_2) \satord \rcv_{b_1}(\ch_2)$. 
Due to $\po{}$, we also have $\rcv_{a_1}(\alpha) \satord \rcv_{a_1}(\ch_2) \satord \rcv_{b_1}(\ch_2) \satord \rcv_{b_2}(\alpha)$, 
and therefore $\snd_{a_1}(\alpha) \satord \snd_{b_2}(\alpha)$, which means that $a_1$ and $b_2$ are now compared for orthogonality. 
As before, $a_1$ and $b_2$ are not orthogonal, so we get the following sequence of inferences:
\begin{compactenum}
    \item $\snd_{a_1}(\ch_2) \satord \snd_{a_1}(\alpha) \satord \snd_{b_2}(\alpha) \satord \snd_{b_2}(\ch_2) \implies \rcv_{a_1}(\ch_2) \satord \rcv_{b_2}(\ch_2)$
    \item $\snd_{a_1}(\beta) \satord \rcv_{a_1}(\ch_2) \satord \rcv_{b_2}(\ch_2) \satord \snd_{B}(\beta)  \implies \rcv_{a_1}(\beta) \satord \rcv_{B}(\beta)$
    \item $\rcv_{a_2}(\alpha) \satord \rcv_{a_1}(\beta) \satord \rcv_{b}(\beta) \satord \rcv_{b_1}(\alpha) \implies \snd_{a_2}(\alpha) \satord \snd_{b_1}(\alpha)$
\end{compactenum}
Notice that we now start to check orthogonality between $a_2$ and $b_1$.

As $a_2$ and $b_1$ are orthogonal, the sequences $\ch_{a_2}(\ch_{a_2})$ and $\ch_{b_1}(\ch_{b_1})$ do not share any channels. Therefore, saturation stops inferring orderings at this point. However, the receive on $\delta$ also implies orderings via saturation. 
In fact, this also leads to orthogonality comparisons, but in a reversed order: $b_2$ is compared with $a_2$, then $b_1$ with $a_2$, and so on. 
The following sequence of inferences illustrates this:
\begin{compactenum}
    \item $\rcv_{b_2}(\ch_1) \satord \snd(\delta) \satord \rcv(\delta) \satord \rcv_{a_2}(\ch_1) \implies \snd_{b_2}(\ch_1) \satord \snd_{a_2}(\ch_1)$
    \item $\snd_{b_2}(\alpha) \satord \snd_{b_2}(\ch_1) \satord \snd_{a_2}(\ch_1) \satord \snd_{a_2}(\alpha) \implies \rcv_{b_2}(\alpha) \satord \rcv_{a_2}(\alpha)$
\end{compactenum}
The ordering of $\rcv_{b_2}(\alpha)$ before $\rcv_{a_2}(\alpha)$ is then what compares $b_1$ to $a_2$ (since $\rcv_{b_1}(\ch_2) \satord \rcv_{b_2}(\alpha)$). Again, the orthogonality of $a_1$ and $b_2$ stops the saturation process. 

We claim that the resulting $\satord$ contains no cycle and is strong enough to fully sequentialize $\AbstractExecution$. 
\end{example}

\myparagraph{Channels with capacity $1$}{
Finally, we argue about the quadratic hardness of $\vchRf$ when every channel has capacity $1$. 
This result follows a recent result that verifying sequential consistency with a reads-from mapping ($\vscRd$ problem) is OV-hard for $2$ threads~\cite{Mathur2020}. 
In \secref{vch-rf-np-hard-k=1},  
we have shown for any $\vscRd$ instance with $\NumEvents$ events, 
we can construct an equivalent $\vchRf$ instance with $O(m_{\mathcal{R}} \cdot \NumEvents)$ events, 
where $m_{\mathcal{R}}$ is the maximal number of read events that observe the same write event. 
Fortunately, in the reduction developed in~\cite{Mathur2020}, 
$m_{\mathcal{R}}$ is a constant, 
and therefore our $\vchRf$ instance is of linear size as the input $\vscRd$ problem. 
Since $\vscRd$ under two threads is OV-hard, 
$\vchRf$ cannot be solved in $O(\NumEvents^{2 - \epsilon})$ time for any $\epsilon > 0$, 
when there are 2 threads and every channel has capacity 1. 
Item (ii) of \thmref{vch-rf-two-threads-lower-bound} is thus proven. 
}



\section{Evaluation}
We have implemented our frontier graph algorithm for $\vchRf$
and evaluated its performance on 103 $\vchRf$ instances
 and compare against a constraint-solving approach for $\vchRf$.
Before we discuss our experimental setup (\secref{exp_setup}) and our evaluation results (\secref{exp_positive} and \secref{exp_mutate}),
we describe a crucial optimization, \emph{saturation} in \secref{saturation}.


\subsection{Saturation}
\seclabel{saturation}
Saturation, widely used as a pre-processing step in consistency checking~\cite{BoostingSC2020} 
and dynamic race detection~\cite{Pavlogiannis2020}, for shared register based concurrency, can be adapted to channel based concurrency.
At a high level, saturation infers additional event orderings in polynomial time before the core consistency-checking algorithm executes. 
For a $\vchRf$ instance, saturation infers orderings beyond 
the input program order and reads-from relations. 
For example, if two send events on the same channel are ordered by $\po{}$, 
then their corresponding receive events must also be ordered.

Formally, let $\Gamma = \tuple{\AbstractExecution, \cpFunc, \rf{}}$ 
be an input $\vchRf$ instance, 
where $\AbstractExecution = \tuple{\eventSet, \po{}}$.
The saturated order $\satPO$ of $\Gamma$
is defined to be the smallest transitive relation on events such that 
$(\po{} \cup \rf{})^+ \subseteq \satPO$ and:
\begin{compactenum}
    \item For each channel $\ch$, and pairs 
    $(\snd_1(\ch), \rcv_1(\ch)) \in \rf{}$, $(\snd_2(\ch), \rcv_2(\ch)) \in \rf{}$, we have
    $\snd_1(\ch) \satord \snd_2(\ch)$ iff $\rcv_1(\ch) \satord \rcv_2(\ch)$. 

    \item For each channel $\ch$, if $\snd_1(\ch)$ is a matched send event and $\snd_2(\ch)$ is an unmatched send event, then $\snd_1(\ch) \satord \snd_2(\ch)$.

    \item For each synchronous channel $\ch$, pair $(\snd(\ch), \rcv(\ch)) \in \rf{}$, 
    and event $ e \in \eventSet$, we have
    $e \satord \rcv(\ch)$ iff $e \satord \snd(\ch)$ and $\snd(\ch) \satord e$ iff $\rcv(\ch) \satord e$.

    \item For each channel $\ch$ with $\cp{\ch} = 1$, pair $(\snd_1(\ch), \rcv_1(\ch)) \in \rf{}$, and event $\snd_2(\ch) \neq \snd_1(\ch)$, we have $\snd_1(\ch) \satord \snd_2(\ch)$ iff $\rcv_1(\ch) \satord \snd_2(\ch)$.
\end{compactenum}
The saturated order $\satPO$ preserves the consistency of $\AbstractExecution$, as formalized in 
\lemref{saturation-equivalence}. 

\begin{lemma}
    \lemlabel{saturation-equivalence}
    Let $\Gamma = \tuple{\AbstractExecution, \cpFunc, \rf{}}$ be a $\vchRf$ instance and let
    $\satPO$ be the saturated order of $\Gamma$.
    If $\Gamma$ is consistent, then $\satPO$ is a partial order (i.e., does not contain cycles),
    and further, each concretization $\trace$ of $\Gamma$ (if any) must respect $\satPO$. 
\end{lemma}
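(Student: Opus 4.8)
The plan is to reduce both assertions to a single containment. Fix any concretization $\trace$ of $\Gamma$ (one exists since $\Gamma$ is consistent), and let $\trord{\trace}$ denote its trace order, a strict total order on $\eventSet$. I claim that $\satPO \subseteq \trord{\trace}$. Granting this, both conclusions follow immediately: since $\trord{\trace}$ is irreflexive and $\satPO$ is transitive, $\satPO$ is a transitive subrelation of an irreflexive total order, hence itself irreflexive and acyclic, i.e. a (strict) partial order; and $\satPO \subseteq \trord{\trace}$ is precisely the statement that $\trace$ respects $\satPO$, so as $\trace$ was arbitrary, every concretization respects $\satPO$.

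To prove $\satPO \subseteq \trord{\trace}$ I would invoke the fixpoint definition of $\satPO$ as the smallest transitive relation containing $(\po{} \cup \rf{})^+$ and closed under the four conditions. (The intersection of all such relations again satisfies them, using that intersection preserves both Horn-style implications and the two-sided biconditionals, so the smallest relation is well-defined.) It therefore suffices to exhibit $\trord{\trace}$ as one member of this family, i.e. to check that $\trord{\trace}$ is transitive (immediate, being a total order), contains the base relation, and satisfies conditions (1)--(4).

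For the base relation, $\po{} = \po{\trace} \subseteq \trord{\trace}$ by definition of concretization, and for each $(\snd, \rcv) \in \rf{}$ well-formedness forces a sent message to precede its receipt, so $\rf{} \subseteq \trord{\trace}$; transitivity then absorbs the transitive closure. Condition (1) is the FIFO law: since $\rf{\trace} = \rf{}$, two matched pairs $(\snd_i, \rcv_i), (\snd_j, \rcv_j)$ on $\ch$ are the $i$-th and $j$-th send/receive on $\ch$, whence $\snd_i \trord{\trace} \snd_j \iff i < j \iff \rcv_i \trord{\trace} \rcv_j$. Condition (2) holds because the matched sends on $\ch$ are exactly the first $|\proj{\trace}{\rcv(\ch)}|$ sends in send-order and the unmatched sends form the remaining suffix, so every matched send precedes every unmatched send in $\trace$. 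Condition (3) uses that on a synchronous channel a send is immediately followed by its matching receive in any well-formed $\trace$; hence $\snd$ and $\rcv$ are adjacent, and for every third event $e \notin \set{\snd, \rcv}$ we have $e \trord{\trace} \snd \iff e \trord{\trace} \rcv$ and $\snd \trord{\trace} e \iff \rcv \trord{\trace} e$ (the quantifier over $e$ must be read as ranging over events other than $\snd, \rcv$, the only reading consistent with irreflexivity).

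The one condition requiring real work, and which I expect to be the main obstacle, is condition (4) for capacity-$1$ channels: the biconditional $\snd_1 \trord{\trace} \snd_2 \iff \rcv_1 \trord{\trace} \snd_2$ for a matched pair $(\snd_1, \rcv_1)$ and a distinct send $\snd_2$ on $\ch$. The direction $\rcv_1 \trord{\trace} \snd_2 \Rightarrow \snd_1 \trord{\trace} \snd_2$ is trivial via $\snd_1 \trord{\trace} \rcv_1$. For the converse, suppose $\snd_1 \trord{\trace} \snd_2$ but, for contradiction, $\snd_2 \trord{\trace} \rcv_1$. Letting $\snd_1$ be the $i$-th send on $\ch$ (so $\rcv_1$ is the $i$-th receive), the assumption $\snd_1 \trord{\trace} \snd_2$ makes $\snd_2$ at least the $(i+1)$-th send; then in the prefix $\prefix$ of $\trace$ ending just before $\rcv_1$ all of sends $1, \dots, i+1$ but only receives $1, \dots, i-1$ have occurred, so $|\proj{\prefix}{\snd(\ch)}| - |\proj{\prefix}{\rcv(\ch)}| \ge 2 > 1 = \cp{\ch}$, violating the capacity constraint. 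Hence $\rcv_1 \trord{\trace} \snd_2$. With all four conditions verified, minimality of $\satPO$ yields $\satPO \subseteq \trord{\trace}$, completing the argument.
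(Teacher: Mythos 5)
Your proposal is correct, and in fact the paper never spells out a proof of this lemma (it is stated in Section 5.1 as a plain lemma, and unlike the other lemmas it has no counterpart in the appendix), so there is nothing to compare against; your argument fills that gap soundly. The route you take — showing that the trace order $\trord{\trace}$ of an arbitrary concretization is itself transitive, contains $(\po{}\cup\rf{})^+$, and is closed under saturation rules (1)--(4), then invoking minimality of $\satPO$ to get $\satPO \subseteq \trord{\trace}$ and hence both acyclicity and the respect property at once — is the natural one, and it is essentially the same style of reasoning the authors use elsewhere (e.g., in the proof of the OV-reduction correctness, where a constructed order is shown to be ``saturated''). Two points in your write-up deserve explicit credit: first, you correctly observe that rule (3) as literally quantified over all $e \in \eventSet$ would force $\snd \satord \snd$ from $\snd \satord \rcv$, so the quantifier must exclude $e \in \set{\snd, \rcv}$ — this is a genuine imprecision in the paper's definition, and your reading is the only one consistent with the lemma; second, your counting argument for rule (4) (sends $1,\dots,i{+}1$ present but at most receives $1,\dots,i{-}1$ in the prefix before $\rcv_1$, violating $\cp{\ch}=1$) is exactly the right use of the capacity constraint, and your remark that the biconditional closure conditions are preserved under intersection correctly disposes of the well-definedness of the ``smallest'' saturated relation.
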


\myparagraph{Constructing the saturated order}
We remark that $\satPO$ can be constructed in at most $O(\NumEvents^3)$
using a straightforward fix point algorithm.
Our actual implementation is optimized and crucially makes use of
the recently proposed data structure
Collective Sparse Segment Trees (CSSTs)~\cite{tuncc2024cssts}
designed for saturation-like fix point computations.

\myparagraph{Using the saturated partial order}
Based on \lemref{saturation-equivalence}, saturation enhances 
the decision procedure for $\vchRf$ in two key ways.
First, if the constructed order $\satPO$ is cyclic, then
we return NO directly, i.e., without explicitly running any further
consistency checking  (such as our frontier graph-based or a constraint solving based) procedure.
Second, when $\satPO$ is acyclic, then it can be used for
a more aggressive on-the-fly pruning
of the frontier graph.
More specifically, in our frontier graph algorithm, 
when exploring outgoing edges of a node $u = \tuple{\fgEventSet, \fgChanMap, \fgSyncSend}$,
we can prune all edges $u \xrightarrow{e} v$
for which $e$ is not enabled in $u$ according to the saturated order,
i.e., when $\exists e'$, s.t. $(e', e) \in \satPO$
but $e'$ is not in the set $\fgEventSet$ of events executed so far.
This optimization significantly reduces the number of paths the algorithm must explore. 
We remark that we also augment the constraints (corresponding to the $\vchRf$ problem)
with additional constraints induced by $\satPO$.
However, as we observed in our evaluation, solver-based algorithms cannot benefit 
from this acceleration, as saturation increases the number of such constraints, 
slowing down performance. 

\myparagraph{Discussion} 
\revision{
A saturation-style algorithm for consistency checking in \emph{Message Sequence Charts} (MSCs) is proposed in \cite{DiGiusto2023}. 
Our setting generalizes this approach to a more expressive model. 
In MSCs, channels are assumed to be unbounded, and thus no saturation rules are needed to capture capacity constraints.
In contrast, our model supports both bounded and unbounded channels, requiring valid interleavings to respect the capacity limits.
This is reflected in our saturation rules -- specifically, rules (3) and (4) correspond to capacities 0 and 1.
While additional rules could model higher capacities, we omit them to preserve the efficiency of the heuristic saturation procedure.
}



\subsection{Experimental Setup}
\seclabel{exp_setup}

Given that the problem of consistency checking arises in practical program
analyses, the goal of our evaluation is to gauge how
our algorithms perform in practice and how they compare
against generic solutions such as the use of $\smt$ solvers.

\myparagraph{Compared methods and implementations}
For our evaluation we focus on the $\vchRf$ problem and skip
evaluation for $\vch$ for two reasons --- $\vchRf$ is more prominent
in applications such as model checking and predictive analyses, and 
further, in our experimental setup, accurate and fast logging of
values in executions is challenging.
Recall that $\vchRf$ is an $\NP$-complete problem in general
and can alternatively be solved using a constraint solving approach such as
the use of $\smt$ solvers, and we also implement such a solution for our evaluation.
We compare the performance between two approaches: 
(1) the frontier graph algorithm $\fgAlgo$ and (2) $\smt$-based solvers $\smt$, along with their respective saturated variants ($\fgAlgoSat$ and $\smtSat$).
In our $\smt$ encoding, for each event $e$, we use an integer variable 
$0 \le x_e \le \NumEvents - 1$, representing the position of $e$ in a potential concretization.
For each channel $\ch$, we introduce $2\NumEvents + 2$ 
auxiliary integer variables to model (1) the cumulative count of send events and (2) the cumulative count of receive events across all prefixes in a potential concretization
(complete encoding details can be found in \appref{smt-encoding}).
Recall that the saturation-augmented versions $\smtSat$ and $\fgAlgoSat$ 
pre-emptively reject if saturation produces a cycle.
If saturation succeeds, then in $\smtSat$,
we augment the $\smt$ formula with the constraint $x_e < x_{e'}$,
where $(e, e') \in \satPO$ and $e'$ is the earliest event in $\ThreadOf{e'}$
with this property.
All algorithms are implemented in Java 23 and $\smt$/$\smtSat$ use
the Java bindings of \texttt{Z3}-4.12.2~\cite{de2008z3}.

\myparagraph{Benchmark programs}
Our evaluation subjects comprise two distinct groups. 
The first group is primarily derived from GoBench \cite{yuan2021gobench}, a widely used Golang concurrency bug benchmark suite. 
GoBench includes 82 real-world bugs from 9 popular open-source projects (GoReal) and 103 bug kernels (GoKer)~\cite{saioc2025dynamic, jiang2023effective}. 
From GoReal, we selected 6 projects for evaluation;
the remaining 3 projects were excluded either because we could not log execution traces
from them or the generated logs were too short. 
Similarly, we omitted GoKer benchmarks because they produce executions with too few channel operations to be meaningful for our analysis.
The second group consists of additional prominent Golang open-source projects, namely 
{\tt rpcx}, {\tt raft}, {\tt go-dsp}, {\tt bigcache}, {\tt telegraf}, {\tt ccache}, and {\tt v2ray}, selected to further validate our approach.

\myparagraph{Generation of positive $\vchRf$ instances}
For each benchmark, we first select 1–3 test cases and generate a 
totally ordered log of channel events in their runtime execution,
by modifying $\tsan$ \cite{serebryany2009threadsanitizer}. 
We performed a linear time sanity check that each 
recorded execution satisfies channel capacity constraints. 
Each execution log can now be translated to a $\vchRf$ instance by discarding 
the total order  between events and only retaining program order and the inferred reads-from relations. 
As a result, the instances thus obtained are positive instances, 
since the original execution serves as a valid concretization of them.
To evaluate algorithmic scalability, 
we additionally generate new $\vchRf$ instances by
keeping varying length prefixes of existing instances
corresponding to long executions (containing thousands to millions of channel accesses).
Detailed statistics of these instances can be found in \appref{pos-statistics}.

\myparagraph{Generation of mutated instances}
Recall from the previous paragraph that $\vchRf$ instances
obtained from real executions are bound to be positive.
To obtain negative instances,
we mutate the previously obtained positive instances by performing targeted modifications 
to their reads-from relation as follows.
We randomly select a reads-from pair 
$(\snd_1(\ch), \rcv_1(\ch))$ and another send event $\snd_2(\ch)$ on the same channel.
If $\snd_2(\ch)$ has a matching receive event $\rcv_2(\ch)$, 
then we swap the two reads-from mappings, i.e. enforce
$(\snd_1(\ch), \rcv_2(\ch)), (\snd_2(\ch), \rcv_1(\ch)) \in \rf{}$. 
Otherwise, when $\snd_2(\ch)$ is not received, 
we enforce $(\snd_2(\ch), \rcv_1(\ch)) \in \rf{}$ 
and delete the reads-from pair $(\snd_1(\ch), \rcv_1(\ch))$.
For each positive $\vchRf$ instance with $\NumEvents$ events, 
we mutate it $\max(5, 0.05 \NumEvents)$ times. 
While these mutations do not theoretically guarantee inconsistency, 
our experimental results show that 88.3\% (91/103) of mutated instances 
become inconsistent, 8.7\% (9/103) remain consistent, while
the consistency status of the remaining 2.9\% (3/103) instances could not
be determined due to algorithm timeouts.

\myparagraph{Machine configuration and metrics reported}
The experiments are conducted on a 2.0GHz 64-bit Linux machine. 
We set the heap size of JVM to be 100GB and timeout to be 3 hours. 
For each $\vchRf$ instance, we report key parameters, such as number of events, 
threads, channels and maximal channel capacity, 
as well as the running time of each algorithm. 
Times reported denote average running time over 3 repeated runs. 

\subsection{Evaluation Results for Consistent Instances}
\seclabel{exp_positive}


\begin{figure*}
\begin{subfigure}{.47\textwidth}
  \centering
  \scalebox{0.70}{
  \centering
  \begin{tikzpicture}
    \centering
    \begin{loglogaxis}[
        xtick distance=10^1,
        ytick distance=10^1,
        xlabel={\large $\fgAlgo$ time (s)},
        ylabel={\large SMT time (s)},
        ymax = 2*10^4,
        ymin = 0.01,
        xmin = 0.01,
        xmax = 2*10^4,
        legend style={at={(0.48,0.1)},anchor=south west, font=\footnotesize}
    ]
    
    \addplot[
        only marks,
        mark=*,
        mark options={fill=green},
        mark size=3.0pt, 
    ]
    table[x=FG,y=SMT,col sep=comma] {tables/exp_pos_data_FG_finish_SMT_timeout.csv};
    \addplot[
        only marks,
        mark=triangle*,
        mark options={fill=yellow},
        mark size=3.5pt
    ]
    table[x=FG,y=SMT,col sep=comma] {tables/exp_pos_data_FG_finish_SMT_finish.csv};
    \addplot[
        only marks,
        mark=diamond*,
        mark options={fill=red},
        mark size=3.5pt
    ]
    table[x=FG,y=SMT,col sep=comma] {tables/exp_pos_data_FG_timeout_SMT_finish.csv};
    \addplot[
        only marks,
        mark=square*,
        mark options={fill=blue},
        mark size=2.0pt,
        scatter/classes={
            a={green},
            b={red},
            c={yellow},
            d={blue}
        }
    ]
    table[x=FG, y=SMT,col sep=comma] {tables/exp_pos_data_FG_timeout_SMT_timeout.csv};
    \legend{SMT times out (35), Both finish (7), $\fgAlgo$ times out (2), Both time out (59)}
    \addplot [
        domain=0.01:1e6, 
        samples=100, 
        color=black,
    ]{x};
    \end{loglogaxis}
    \end{tikzpicture}
  }
  \vspace{-0.1in}
  \caption{$\fgAlgo$ speedup on consistent instances}
  \figlabel{speedup-pos-fg}
\end{subfigure}
\begin{subfigure}{.47\textwidth}
  \centering
  \scalebox{0.70}{
  \begin{tikzpicture}
    \begin{loglogaxis}[
        xtick distance=10^1,
        ytick distance=10^1,
        xlabel={\large $\fgAlgoSat$ time (s)},
        ylabel={\large SMT-Sat time (s)},
        ymax = 2*10^4,
        ymin = 0.01,
        xmin = 0.01,
        xmax = 2*10^4,
        legend style={at={(0.40, 0.1)},anchor=south west, font=\footnotesize}
    ]
    \addplot[
        only marks,
        mark=*,
        mark options={fill=green},
        mark size=3.0pt, 
    ]
    table[x=FG-Sat,y=SMT-Sat,col sep=comma] {tables/exp_pos_data_FG-Sat_finish_SMT-Sat_timeout.csv};
    \addplot[
        only marks,
        mark=triangle*,
        mark options={fill=yellow},
        mark size=3.5pt
    ]
    table[x=FG-Sat,y=SMT-Sat,col sep=comma] {tables/exp_pos_data_FG-Sat_finish_SMT-Sat_finish.csv};
    \addplot[
        only marks,
        mark=square*,
        mark options={fill=blue},
        mark size=2.0pt,
        scatter/classes={
            a={green},
            c={blue},
            d={yellow}
        }
    ]
    table[x=FG-Sat,y=SMT-Sat,col sep=comma] {tables/exp_pos_data_FG-Sat_timeout_SMT-Sat_timeout.csv};
    \legend{SMT-Sat times out (86) , Both finish (10), Both time out (7)}
    \addplot [
        domain=0.01:1e6, 
        samples=100, 
        color=black,
    ]
    {x};
    \end{loglogaxis}
    \end{tikzpicture}
  }
  \vspace{-0.1in}
  \caption{$\fgAlgoSat$ speedup on consistent instances}
  \figlabel{speedup-pos-fg-sat}
\end{subfigure}
\caption{Running time of $\smt, \smtSat, \fgAlgo, \fgAlgoSat$ on consistent instances. 
The legend indicates the number of instances in each class. 
The running time for each instance can be found in \appref{pos-statistics}.
}
\figlabel{speedup-pos}
\end{figure*}

\myparagraph{Comparison between $\fgAlgo$ and $\smt$ (\figref{speedup-pos-fg})}
$\smt$ times out on 35 instances due to excessive memory consumption;
$\fgAlgo$ solves all of these successfully.
$\fgAlgo$ fails on only 2 instances that $\smt$ completes.
In addition, when both algorithms succeed, $\fgAlgo$ outperforms $\smt$
in running time by a factor of 5–50,000$\times$ 
(full statistics can be found in \appref{pos-statistics}). 
These results demonstrate that $\fgAlgo$ scales significantly better than $\smt$ on most benchmarks.

\myparagraph{Comparison between $\fgAlgoSat$ and $\smtSat$ (\figref{speedup-pos-fg-sat})}
$\smtSat$ times out on 90.3\% (93/103) instances due to increased formula size, 
which leads to higher memory consumption compared to standard $\smt$. 
In contrast, $\fgAlgoSat$ successfully completes 93.2\% (96/103) of instances 
and can often scale to instances with 50k events. 
These results demonstrate that $\fgAlgoSat$ 
achieves significantly better scalability than $\smtSat$ on consistent benchmarks.

\myparagraph{Impact of saturation}
Saturation substantially enhances the performance of 
our frontier graph algorithm --- $\fgAlgoSat$ solves 54 more instances than $\fgAlgo$.
Saturation induces a slight slowdown on some instances 
(due to the overhead of the fixpoint computation), but this is
largely limited to smaller instances where
$\fgAlgo$ already finishes very quickly.
The impact of saturation on $\smt$ solvers is limited. 
$\smtSat$ successfully solves only 1 additional instance compared to $\smt$, 
and demonstrates significant speed improvements on just 3 benchmarks. 
We hypothesize that this marginal gain occurs because saturation increases the $\smt$ formula size. 


\subsection{Evaluation Results for Mutated Instances}
\seclabel{exp_mutate}


\begin{figure*}
\begin{subfigure}{.47\textwidth}
  \centering
  \scalebox{0.70}{
  \centering
  \begin{tikzpicture}
    \centering
    \begin{loglogaxis}[
        xtick distance=10^1,
        ytick distance=10^1,
        xlabel={\large $\fgAlgo$ time (s)},
        ylabel={\large SMT time (s)},
        ymax = 2*10^4,
        ymin = 0.01,
        xmin = 0.01,
        xmax = 2*10^4,
        legend style={at={(0.45,0.1)},anchor=south west}
    ]
    \addplot[
        only marks,
        mark=*,
        mark options={fill=green},
        mark size=3.0pt, 
    ]
    table[x=FG,y=SMT,col sep=comma] {tables/exp_mut_data_FG_finish_SMT_timeout.csv};
    \addplot[
        only marks,
        mark=triangle*,
        mark options={fill=yellow},
        mark size=3.5pt
    ]
    table[x=FG,y=SMT,col sep=comma] {tables/exp_mut_data_FG_finish_SMT_finish.csv};
    \addplot[
        only marks,
        mark=diamond*,
        mark options={fill=red},
        mark size=3.5pt
    ]
    table[x=FG,y=SMT,col sep=comma] {tables/exp_mut_data_FG_timeout_SMT_finish.csv};
    \addplot[
        only marks,
        mark=square*,
        mark options={fill=blue},
        mark size=2.0pt,
        scatter/classes={
            a={green},
            b={red},
            c={yellow},
            d={blue}
        }
    ]
    table[x=FG, y=SMT,col sep=comma] {tables/exp_mut_data_FG_timeout_SMT_timeout.csv};
    \legend{SMT times out (32), Both finish (6), $\fgAlgo$ times out (6), Both time out (59)}
    \addplot [
        domain=0.01:1e6, 
        samples=100, 
        color=black,
    ]{x};
    \end{loglogaxis}
    \end{tikzpicture}
  }
  \vspace{-0.1in}
  \caption{$\fgAlgo$ speedup on mutated instances}
  \figlabel{speedup-mut-fg}
\end{subfigure}
\begin{subfigure}{.47\textwidth}
  \centering
  \scalebox{0.70}{
  \begin{tikzpicture}
    \begin{loglogaxis}[
        xtick distance=10^1,
        ytick distance=10^1,
        xlabel={\large $\fgAlgoSat$ time (s)},
        ylabel={\large SMT-Sat time (s)},
        ymax = 2*10^4,
        ymin = 0.01,
        xmin = 0.01,
        xmax = 2*10^4,
        legend style={at={(0.42, 0.1)},anchor=south west}
    ]
    \addplot[
        only marks,
        mark=*,
        mark options={fill=green},
        mark size=3.0pt, 
    ]
    table[x=FG-Sat,y=SMT-Sat,col sep=comma] {tables/exp_mut_data_FG-Sat_finish_SMT-Sat_timeout.csv};
    \addplot[
        only marks,
        mark=triangle*,
        mark options={fill=yellow},
        mark size=3.5pt
    ]
    table[x=FG-Sat,y=SMT-Sat,col sep=comma] {tables/exp_mut_data_FG-Sat_finish_SMT-Sat_finish.csv};
    \addplot[
        only marks,
        mark=square*,
        mark options={fill=blue},
        mark size=2.0pt,
        scatter/classes={
            a={green},
            c={blue},
            d={yellow}
        }
    ]
    table[x=FG-Sat,y=SMT-Sat,col sep=comma] {tables/exp_mut_data_FG-Sat_timeout_SMT-Sat_timeout.csv};
    \legend{SMT-Sat times out (6) , Both finish (94), Both time out (3)}
    \addplot [
        domain=0.01:1e6, 
        samples=100, 
        color=black,
    ]
    {x};
    \end{loglogaxis}
    \end{tikzpicture}
  }
    \vspace{-0.1in}
  \caption{$\fgAlgoSat$ speedup on mutated instances}
  \figlabel{speedup-mut-fg-sat}
\end{subfigure}
\caption{Running time of $\smt, \smtSat, \fgAlgo, \fgAlgoSat$ on mutated instances. 
The legend indicates the number of instances in each class. 
The running time for each instance can be found in \appref{pos-statistics}.
}
\figlabel{speedup-mut}
\end{figure*}

\myparagraph{Comparison between $\fgAlgo$ and $\smt$ (\figref{speedup-mut-fg})}
$\fgAlgo$ successfully solves 26 more instances than $\smt$. 
For instances where both algorithms complete, $\fgAlgo$ achieves a speedup ranging from 3$\times$ to 3000$\times$
(full statistics can be found in \appref{pos-statistics}). 
Compared to its performance on consistent instances, 
$\fgAlgo$ solves $4$ fewer cases. 
This happens for inconsistent $\vchRf$ instances where $\fgAlgo{}$ performs a complete 
traversal of the frontier graph, resulting in increased computational time.

\myparagraph{Comparison between $\fgAlgoSat$ and $\smtSat$ (\figref{speedup-mut-fg-sat})}
Both algorithms demonstrate strong performance, 
successfully completing most instances, with $\smtSat$ solving 91.3\% (94/103) 
instances and $\fgAlgoSat$ solving 97.1\% (100/103).
The superior performance can be attributed to saturation's ability to efficiently reject nearly all inconsistent instances before initiating the core consistency checking algorithm. 
Notably, among the 6 instances where $\smtSat$ times out but $\fgAlgoSat$ succeeds, 
all are consistent instances (recall that mutated instances are not guaranteed to be inconsistent). 
In these cases, saturation not only fails to benefit $\smt$ solvers 
but actually degrades their performance due to the increased formula size.

\begin{figure*}
\begin{subfigure}{.48\textwidth}
\centering
\scalebox{0.70}{
\centering
\begin{tikzpicture}
\begin{axis}[
    xmode=log,
    xmin=0.05, xmax=10800,  
    ymin=0, ymax=110,            
    xlabel=Time (s),
    ylabel=Finished Instances,
    legend style={at={(0.03,0.55)},anchor=south west, font=\footnotesize},
]

\addplot [black, thick] coordinates {(0.05,103) (10800,103)};
\addlegendentry{total 103}

\addplot [
    mark=diamond*,
    mark options={fill=red},
    mark size=3pt,
    thick
] table [
x=Time, 
y=SMT, 
col sep=comma,
] {tables/survival_plot_pos.csv};
\addlegendentry{SMT}

\addplot [
    mark=*,
    mark options={fill=green},
    mark size=3pt,
    thick
] table [x=Time, y=FG, col sep=comma] {tables/survival_plot_pos.csv};
\addlegendentry{FG}

\addplot [
    mark=triangle*,
    mark options={fill=yellow},
    mark size=3pt,
    thick
] table [x=Time, y=SMT-Sat, col sep=comma] {tables/survival_plot_pos.csv};
\addlegendentry{SMT-Sat}

\addplot [
    mark=square*,
    mark options={fill=blue},
    mark size=2pt,
    thick
] table [x=Time, y=FG-Sat, col sep=comma] {tables/survival_plot_pos.csv};
\addlegendentry{FG-Sat}

\end{axis}
\end{tikzpicture}
}
\caption{Survival plot of consistent instances}
\figlabel{survival-plot-pos}
\end{subfigure}
\begin{subfigure}{.48\textwidth}
\centering
\scalebox{0.70}{
\centering
\begin{tikzpicture}
\begin{axis}[
    xmode=log,
    xmin=0.05, xmax=10800,  
    ymin=0, ymax=110,            
    xlabel=Time (s),
    ylabel=Finished Instances,
    legend style={at={(0.03,0.55)},anchor=south west, font=\footnotesize}
]

\addplot [black, thick] coordinates {(0.05,103) (10800,103)};
\addlegendentry{total 103}

\addplot [
    mark=diamond*,
    mark options={fill=red},
    mark size=3pt,
    thick
] table [
x=Time, 
y=SMT, 
col sep=comma,
] {tables/survival_plot_mut.csv};
\addlegendentry{SMT}

\addplot [
    mark=*,
    mark options={fill=green},
    mark size=3pt,
    thick
] table [x=Time, y=FG, col sep=comma] {tables/survival_plot_mut.csv};
\addlegendentry{FG}

\addplot [
    mark=triangle*,
    mark options={fill=yellow},
    mark size=3pt,
    thick
] table [x=Time, y=SMT-Sat, col sep=comma] {tables/survival_plot_mut.csv};
\addlegendentry{SMT-Sat}

\addplot [
    mark=square*,
    mark options={fill=blue},
    mark size=2pt,
    thick
] table [x=Time, y=FG-Sat, col sep=comma] {tables/survival_plot_mut.csv};
\addlegendentry{FG-Sat}

\end{axis}
\end{tikzpicture}
}
\caption{Survival plot of mutated instances}
\figlabel{survival-plot-mut}
\end{subfigure}
\caption{
Cumulative number of finished instances over time for each algorithm.
}
\figlabel{survival-plot}
\end{figure*}
\myparagraph{Survival analysis}{
\revision{
In \figref{survival-plot}, we present the cumulative number of finished instances over time for each algorithm.
All algorithms reach saturation after approximately 100 seconds,
illustrating the NP-hard nature of the $\vchRf$ problem -- additional time does not lead to solving more instances.
For consistent instances, $\fgAlgoSat$ initially completes fewer instances than $\fgAlgo$ due to the overhead introduced by the saturation process.
However, as time progresses, $\fgAlgoSat$ overtakes $\fgAlgo$ by efficiently pruning infeasible execution paths.
For mutated instances, $\fgAlgoSat$ behaves almost identically to $\smtSat$, benefiting from the saturation phase, which filters out inconsistent instances before the main solving procedure begins.
Overall, the survival analysis demonstrates that $\fgAlgoSat$ consistently solves the largest number of instances within the same time budget.
}
}

In summary, the frontier graph algorithm demonstrates 
better performance over $\smt$ solving based approach, both with or without saturation.
Frontier graph algorithm successfully completes more instances across all benchmarks. 
When both approaches terminate, 
the frontier graph algorithm achieves significant speedups ranging from 3$\times$ to 50,000$\times$. 
Further, we observe that
saturation can significantly enhance the effectiveness of consistency checking in practice.


\section{Other Related Work}
\seclabel{sec:related}

\myparagraph{Verifying linearizability}{
Verifying channel consistency resembles verifying linearizability (\textsc{VL})~\cite{herlihy1990linearizability,gibbons1997testing,bouajjani2018reducing,bouajjani2017proving,emmi2019violat,linearizabilityParosh2025,gibbons2002VL,linearizabilityLee2025}, which checks whether a concurrent history over a (queue) object is equivalent to some sequential one.
Unlike $\vch$, \textsc{VL} enjoys locality, enabling a linear-time decomposition into per-object histories.
Moreover, \textsc{VL} typically operates on interval partial orders, making it simpler than $\vch$.
Finally, \textsc{VL} over queues reduces to $\vch$ in linear time.
}

\myparagraph{Message sequence charts}{
A closely related problem is that of Message Sequence Charts (MSCs)~\cite{Madhusudan2001,Alur2005,DiGiusto2023,GenestMuschollSurvey2005}, where threads communicate via peer-to-peer channels. Among these, the work of Di~Giusto et al.~\cite{DiGiusto2023} is most relevant. However, our problem strictly generalizes MSCs:
(i)~send and receive events in $\vch$ need not be paired a priori;
(ii)~both $\vch$ and $\vchRf$ allow channels shared by more than two threads;
(iii)~the same thread pair may communicate over multiple channels; and
(iv)~channels may have bounded capacities.
\revision{
Our communication model, inspired by languages like Go, assumes channel-based FIFO semantics where all threads can access any channel, with operations on each channel totally ordered. 
Consequently, their results do not directly extend to our setting. 
Indeed, in \cite{DiGiusto2023}, it is shown that the consistency checking problem for these communication models can be solved in polynomial time, because the consistency predicate is MSO definable.
In contrast, the consistency checking problem for our setting is NP-hard.
}
}

\myparagraph{Model checking}{
\revision{
The \texttt{SPIN} model checker~\cite{SPIN1997} provides support for programs with message passing, and the \texttt{GOMELA} \cite{Dilley2022} bounded model checker extends it to support Go programs and uses \texttt{SPIN} as its backend. 
Given an LTL formula $\psi$ and a program $P$, \texttt{SPIN} constructs an automaton for $\neg \psi$ and $P$, 
and then searches for traces accepted by their product. 
If such a trace exists, the property $\psi$ does not hold for $P$. 
That is, \texttt{SPIN} does not explicitly address the consistency-checking problem. 
Moreover, $\mathsf{MUST}$~\cite{MUST2024} is another model checker that operates under the message-passing semantics of MSCs. 
While $\mathsf{MUST}$ implicitly performs consistency checks as part of its model checking algorithm, 
\cite{MUST2024} does not investigate the complexity-theoretic aspects of these checks.
}
}

\myparagraph{Register consistency checking}{
The consistency checking problem for registers has been extensively studied in prior work ~\cite{gibbons1997testing, gibbons1994testing, cantin2005complexity, chen2012program}. 
As demonstrated in this paper, channel consistency checking is strictly harder than register consistency checking due to a key difference in their semantics: registers can only retain the most recent write event, whereas channels can remember up to capacity send events. 
Related algorithms have also been developed for consistency checking under weak memory models, including TSO ~\cite{manovit2006completely, hu2011linear,chini2020} and C11 ~\cite{tuncc2023optimal,  chakraborty2024hard}.
}

\myparagraph{Predictive analysis}{
    Predictive analysis is a dynamic analysis technique that takes a program execution as input and reorders it to expose potential concurrency bugs. 
    Recent work has developed predictive algorithms for detecting data races ~\cite{mathur2021optimal, Pavlogiannis2020, flanagan2009fasttrack,ang2025enhanceddataraceprediction,OSR2024,SHB2018,FarzanMathur2024,Kulkarni2021FineGrained}, deadlocks ~\cite{kalhauge2018sound, tuncc2023sound}, atomicity violations ~\cite{flanagan2008velodrome, mathur2020atomicity} and more general properties~\cite{AngMathur2024Trace,AngMathur2024Pattern}. 
    These algorithms typically compute a candidate set of events and attempt to serialize them into a witness execution, often using a consistency checking oracle. 
    Thus, predictive analysis can be viewed as a downstream application of consistency checking. 
    However, existing prediction algorithms almost exclusively target shared-memory concurrency, neglecting executions involving message-passing via channels. 
    Our work bridges this gap by establishing theoretical foundations for channel-based predictive analysis.
}


\section{Conclusion}
Consistency testing is a fundamental task for analyses of concurrent programs such as model checking and predictive testing. 
We conducted a thorough complexity-theoretic investigation for
this problem for the message-passing programming paradigm with $\fifo$ channels.
We have developed novel algorithms and established hardness results for a range of inputs parameters.
We further implemented and empirically evaluated the performance of our algorithms.
Together, our upper and lower bounds reveal an intricate complexity landscape and our evaluation
of our algorithms demonstrate their effectiveness in practice.
Future work includes applying these algorithms to partial order reduction based model checking and predictive testing for message-passing languages such as Go.

\begin{acks}
Zheng Shi and Umang Mathur are partially supported by the National Research Foundation, Singapore, and Cyber Security Agency of Singapore under its National Cybersecurity R\&D Programme (Fuzz Testing <NRF-NCR25-Fuzz-0001>). Any opinions, findings and conclusions, or recommendations expressed in this material are those of the author(s) and do not reflect the views of National Research Foundation, Singapore, and Cyber Security Agency of Singapore. 
Lasse M\o{}ldrup and Andreas Pavlogiannis were partially supported by a research grant (VIL42117) from VILLUM FONDEN, and by a research grant from STIBOFONDEN.
\end{acks}

\bibliographystyle{ACM-Reference-Format}
\bibliography{ref}

\newpage
\appendix

\section{Proofs for \secref{hardness-algo}}
\subsection{Proof for \secref{general-solution}}

\applabel{proof-frontier-algo}

\vchConsToReach*
\begin{proof}
For convenience, for a node $v = \tuple{Y, Q, I}$ in $\gfrontier$, 
we use $v.Y, v.Q, v.I$ to denote $Y, Q, I$. 
We prove each direction separately.

\myparagraph{Correctness(Reachability $\Rightarrow$ Consistency)}
We now show if there exists a sink node $v = \tuple{\eventSet, Q, \bot}$ for some $Q$ and $v$ is reachable from the source node $v' = \tuple{\emptyset, \lambda~\ch. \epsilon, \bot}$ in $\gfrontier$, 
then there is a concretization of $\tuple{\AbstractExecution, \cpFunc}$. 
Let $\pi$ be a path from $v'$ to $v$ in $\gfrontier$. 
We directly give out the concretization $\tr$ of $\tuple{\AbstractExecution, \cpFunc}$ as the sequence of labelling events corresponding to $\pi$. 

We now show $\tr$ is indeed a valid concretization. 
Firstly, $\events{\trace} = \eventSet$, 
because each edge $v_1 \xrightarrow{e} v_2$ guarantees $e \notin v_1.Y$ and $v_2.Y = v_1.Y \cup \set{e}$. 
Since we start from $v'.Y = \varnothing$ and end at $v.Y = \eventSet$, 
the path $\pi$ must contain all events in $\eventSet$. 
Therefore, we have $\events{\trace} = \eventSet$. 

Secondly, $\trace$ satisfies $\po{}$, 
otherwise if $e_1 \trord{\tr} e_2$ and $e_2$ is program ordered before $e_1$, 
then the event set $Y$ extended by $e_1$ is not $\po{}$-closed, 
which violates our definition for the node. 

Thirdly, every receive operation should observe a send operation with the same value, 
and this property is already captured when we define the edges of $\gfrontier$. 

Lastly, $\trace$ should meet the capacity constraints. 
For synchronous channels, 
we already guarantee that no events can execute between and send and the its corresponding receiver on a synchronous channel, 
because send and receive to any channel can execute only when $I \pointsTo \bot$, 
so that it's impossible for send or receive events on other synchronous channels to interleave. 
The capacity constraints for asynchronous channels are also met, 
because when we define the edges of $\gfrontier$, 
an arbitrary node $u = \tuple{Y_u, Q_u, I_u}$ can only be extended by a send event on channel $\ch$, 
if the number of buffered send events to $\ch$ in $Y_u$ is below $\cp{\ch}$. 
With all these observations combined, $\trace$ is indeed a correct concretization. 

\myparagraph{Correctness(Consistency $\Rightarrow$ Reachability)}
We now show if there is a concretization $\trace$ of $\tuple{\AbstractExecution, \cpFunc}$, 
then there exists a sink node $v = \tuple{\eventSet, Q, \bot}$ for some $Q$ and $v$ is reachable from the source node $v' = \tuple{\emptyset, \lambda~\ch. \epsilon, \bot}$ in $\gfrontier$. 
We can start from the source node $v'$, 
and in the $i$-th step, 
we just extend current node by an edge, 
which is labelled by the $i$-th event in $\trace$. 
By the definition of edges in the frontier graph, 
every step of extension is allowed. 
Moreover, since $\trace$ is a valid concretization, 
then $\events{\trace} = \eventSet$ and thus this path ends at a node whose event set is exactly $\eventSet$.
Therefore, the correctness is guaranteed. 
\end{proof}



 


\subsection{Proof for \secref{vch-rf-sync-solution-sec}}

\vchRfSyncConsToCyclic*
\begin{proof}
We prove each direction separately.

\myparagraph{Consistency $\Rightarrow$ Acyclicity}
Suppose $\gsync$ has a cycle, then $\tuple{\AbstractExecution, \cpFunc, \rf{}}$ is not consistent.
We assume there is a cycle in $\gsync$, 
which contains two nodes $\tuple{\snd_1, \rcv_1}$ and $\tuple{\snd_2, \rcv_2}$. 
In any concretization $\tr$ of $\tuple{\AbstractExecution, \cpFunc, \rf{}}$, 
we must have $\snd_1 \trord{\tr} \rcv_1 \trord{\tr} \snd_2 \trord{\tr} \rcv_2$, 
because there is a path from $\tuple{\snd_1, \rcv_1}$ to $\tuple{\snd_2, \rcv_2}$. 
Similarly, we have $\snd_2 \trord{\tr} \rcv_2 \trord{\tr} \snd_1 \trord{\tr} \rcv_1$, 
because there is a path from $\tuple{\snd_2, \rcv_2}$ to $\tuple{\snd_1, \rcv_1}$. 
No total order $\trord{\tr}$ can satisfy both requirements. 

\myparagraph{Acyclicity $\Rightarrow$ Consistency}
Suppose $\gsync$ is acyclic.
Consider an arbitrary topological sort $\pi = \tuple{\snd_1, \rcv_1} \cdot \tuple{\snd_2, \rcv_2}$ $\cdots \tuple{\snd_k, \rcv_k}$ of $\gsync$
and using it, define $\tr = \snd_1 \cdot \rcv_1 \cdots \snd_k \cdot \rcv_k$.
We will show that this $\tr$ defined is a concretization of $\tuple{\AbstractExecution, \cpFunc, \rf{}}$.
First, $\tr$ respects $\rf{}$, because a send event is immediately followed by its receiver. 
We now prove it also satisfies $\po{}$. 
Assume on the contrary that this is not the case.
Then, there are two events $e, e'$, s.t. $(e, e') \in \po{}$, 
but  $e' \trord{\tr} e$.
First, $e$ and $e'$ cannot be matching send-receive events since 
they belong to the same thread.
Let $e_\snd$ be either $e$ if $\OpOf{e} = \snd$, and $\rf{}(e)$ otherwise,
and let $e_\rcv = \rf{}(e_\snd)$.
Likewise, let $e'_\snd$ be either $e'$ if $\OpOf{e'} = \snd$, and $\rf{}(e')$ otherwise,
and let $e'_\rcv = \rf{}(e'_\snd)$.
Also, by virtue of how $\tr$ was constructed, there is no path from $\tuple{e_\snd, e_\rcv}$ 
to $\tuple{e'_\snd, e'_\rcv}$ in $\gsync$; or else we will have $e \trord{\tr} e'$.
But this is a contradiction since $\gsync$ must add an edge
from $\tuple{e_\snd, e_\rcv}$ to $\tuple{e_\snd, e_\rcv}$ because $(e, e') \in \po{}$.
\end{proof}


\subsection{Proof for \secref{tree-topo-upper-bound}}

\applabel{proof-tree-topology}

\acyclictopologycompositionality*
\begin{proof}
We prove each direction separately.

\myparagraph{Correctness ($\tuple{\AbstractExecution, \cpFunc, \rf{}}$ $\Rightarrow$ $\tuple{\proj{\AbstractExecution}{\thread_i, \thread_j}, \proj{\cpFunc}{\thread_i, \thread_j}, \proj{\rf{}}{\thread_i, \thread_j}}$)}
If $\tuple{\AbstractExecution, \cpFunc, \rf{}}$ is consistent, 
then $\tuple{\proj{\AbstractExecution}{\thread_i, \thread_j}, \proj{\cpFunc}{\thread_i, \thread_j}, \proj{\rf{}}{\thread_i, \thread_j}}$ must be consistent for any $({\thread_i, \thread_j}) \in E$. 
Otherwise, assuming $\tuple{\proj{\AbstractExecution}{\thread_i, \thread_j}, \proj{\cpFunc}{\thread_i, \thread_j}, \proj{\rf{}}{\thread_i, \thread_j}}$ is not consistent, 
any concretization $\trace$ of $\tuple{\AbstractExecution, \cpFunc, 
\rf{}}$ will not be consistent, 
because $\trace$ is not a valid concretization for thread ${\thread_i, \thread_j}$. 
It contradicts with the fact that $\tuple{\AbstractExecution, \cpFunc, \rf{}}$ is consistent. 

\myparagraph{Correctness ($\tuple{\proj{\AbstractExecution}{\thread_i, \thread_j}, \proj{\cpFunc}{\thread_i, \thread_j}, \proj{\rf{}}{\thread_i, \thread_j}}$ $\Rightarrow$ $\tuple{\AbstractExecution, \cpFunc, \rf{}}$)}
Let $G$ be the topology graph of the input $\vchRf$ instance. 
To construct the concretization $\trace$ for $\tuple{\AbstractExecution, \cpFunc, \rf{}}$, 
we define a graph $G' = \tuple{V', E'}$, 
where $V'$ is the set of all events in $\eventSet$. 
We have $(e_1, e_2) \in E'$, 
iff $(e_1, e_2) \in \po{}$ or $x_{e_1, e_2}$ exists in a 2SAT formula for some thread ${\thread_1, \thread_2}$ and $x_{e_1, e_2}$ is assigned true. 
We claim $G'$ is acyclic and any linearization of $G'$ is a valid concretization. 

If $G'$ is cyclic, then we pick an arbitrary cycle $C$. 
Assuming the size of $C$ is $r$, 
then let $C = e_{c_1} \rightarrow \dots \rightarrow e_{c_r} \rightarrow e_{c_1}$. 
We look at the threads of events in $C$, 
i.e. $\ThreadOf{e_{c_1}}, \dots, \ThreadOf{e_{c_r}}, \ThreadOf{e_{c_1}}$. 
Clearly, each pair of adjacent threads in this sequence shares at least one common channel. 
Since the topology graph $G$ is acyclic, 
we claim there are at most two distinct threads among the threads of all events in $C$,  
as otherwise, $G$ has a cycle. 
All events in the same thread are already ordered by $\po{}$, 
so that $C$ cannot contain events only from one thread. 
Therefore, $C$ contains exactly two threads (say $\thread_1, \thread_2$). 
Note that $e_i \rightarrow e_j$ is an edge in $G'$ iff $e_i, e_j$ are either in the same thread or access one of the common channels between $\ThreadOf{e_i}, \ThreadOf{e_j}$. 
This means $e_{c_1}, \dots, e_{c_r}$ all access the common channels between ${\thread_1, \thread_2}$. 
However, this contradicts with the fact that $\tuple{\proj{\AbstractExecution}{\thread_1, \thread_2}, \proj{\cpFunc}{\thread_1, \thread_2}, \proj{\rf{}}{\thread_1, \thread_2}}$ is consistent, 
so that $G'$ must be acyclic.

Now we show an arbitrary linearization $\trace$ of $G'$ is a valid concretization. 
$\trace$ respects $\po{}$ and $\rf{}$, 
because if $(e_1, e_2) \in (\po{} \cup \rf{})$, then $(e_1, e_2)$ must be an edge in $G'$, 
so that $e_1 \trord{\trace} e_2$. 
The capacity constrains, $\fifo$ property are already taken care of in each sub-instance, 
because a channel is at most accessed by two threads. 
Therefore, $\trace$ is indeed a valid concretization of $\tuple{\AbstractExecution, \cpFunc, \rf{}}$ and it is indeed consistent. 
\end{proof}

\twosatcorrectness*
\begin{proof}
We prove each direction separately.

\myparagraph{Correctness (Satisfiability $\Rightarrow$ Consistency)}
Now assuming there $\form_{\tuple{\AbstractExecution, \cpFunc, \rf{}}}$ can be satisfied, 
then the input $\vchRf$ instance $\tuple{\AbstractExecution, \cpFunc, \rf{}}$ is consistent and we sketch one concretization $\trace$ as following. 
For every event pair $(e, f)$, 
if $x_{e, f}$ is true, then we order $e$ before $f$ in $\trace$. 
Firstly, $\trace$ is indeed a linear trace, 
because $\form_{\exactlyOne}$ guarantees $x_{e, f} = \neg x_{f, e}$, 
so that for events from different thread,
we have a unique assignment for their relative ordering in $\trace$. 
Also, 
$\form_{\trans}$ guarantees that the transitivity of events orderings is taken care of. 
That is if $e_1 \trord{\trace} e_2$ and $e_2 \trord{\trace} e_3$, 
then $e_1 \trord{\trace} e_3$. 
To see this, we enumerate all 4 possible situations. 
\begin{enumerate}
    \item $\ThreadOf{e_1} = \ThreadOf{e_2} = \ThreadOf{e_3}$. 
    This implies $(e_1, e_2), (e_2, e_3) \in \po{}$, 
    so that $(e_1, e_3) \in \po{}$ and thus $e_1 \trord{\trace} e_3$. 

    \item $\ThreadOf{e_1} = \ThreadOf{e_2} \neq \ThreadOf{e_3}$. 
    Since $(e_1, e_2) \in \po{}$, 
    $\form_{\po{}}$ guarantees $e_1 \trord{\trace} \pred{}{e_2}$, 
    and $\form_{\trans}$ guarantees $\pred{}{e_2} \trord{\trace} e_3$. 
    Therefore, $e_1 \trord{\trace} e_3$.

    \item $\ThreadOf{e_1} \neq \ThreadOf{e_2} = \ThreadOf{e_3}$. 
    Since $(e_2, e_3) \in \po{}$, 
    $\form_{\po{}}$ guarantees $\sucr{}{e_2} \trord{\trace} e_3$, 
    and $\form_{\trans}$ guarantees $e_1 \trord{\trace} \sucr{}{e_2}$. 
    Therefore, $e_1 \trord{\trace} e_3$.

    \item $\ThreadOf{e_1} = \ThreadOf{e_3} \neq \ThreadOf{e_2} \neq $. 
    Since $e_2 \trord{\trace} e_3$, 
    by transitivity we have
    $e_2 \trord{\trace} \sucr{}{e_3} $. 
    If $e_3 \trord{\trace} e_1$, 
    then $(e_3, e_1) \in \po{}$, 
    and we would have $e_2 \trord{\trace} e_3 \trord{e_1}$, 
    which contradicts with the fact that $e_1 \trord{\trace} e_2$. 
\end{enumerate}
Therefore, $\trace$ cannot be cyclic. 

Secondly, 
the capacity constraints are also met, 
because a channel $\ch$ is capacity-unbounded, capacity 1 or capacity 0. 
If $\ch$ is unbounded, 
then the capacity constraints are already satisfied 
and $\form_{\capOne}, \form_{\sync}$ are designed to satisfy the capacity constraints for channels with capacity 1 or 0. 

Now we argue $\trace$ also respects $\po{}$ and $\rf{}$. 
$\trace$ respects $\po{}$, 
because if $(e_1, e_2) \in \po{}$, 
then $x_{e_1, e_2}$ must be true, 
so that $e_1$ appears earlier than $e_2$ in $\trace$. 
On the other hand, $\trace$ respects $\rf{}$, because $\form_{\rf{}}$ orders all send events before their receive events. 
$\form_{\fifo}$ guarantees for each channel $\ch$, $\snd_1(\ch)$ is before $\snd_2(\ch)$, iff $\rcv_1(\ch)$ is before $\rcv_2(\ch)$, 
where $(\snd_i(\ch), \rcv_i(\ch)) \in \rf{}$ for $i = 1, 2$. 
Therefore, $\rf{}$ is also satisfied. 
We have so far proved $\trace$ is indeed a concretization of $\tuple{\AbstractExecution, \cpFunc, \rf{}}$ and therefore $\tuple{\AbstractExecution, \cpFunc, \rf{}}$ is consistent. 

\myparagraph{Correctness (Consistency $\Rightarrow$ Satisfiability)}
If $\tuple{\AbstractExecution, \cpFunc, \rf{}}$ is consistent, 
then we pick an arbitrary concretization $\trace$. 
We assign $x_{e, f}$ to be true and $x_{f, e}$ to be false, 
iff in $\trace$, $e$ is ordered before $f$.
We now show this assignment satisfies $\form_{\tuple{\AbstractExecution, \cpFunc, \rf{}}}$. 
Firstly, $\form_{\exactlyOne}$ is obviously satisfied, 
because we guarantee $x_{e, f} = \neg x_{f, e}$ 
by our assignments. 
Secondly, $\form_{\po{}}$ and $\form_{\rf{}}$ are satisfied, 
because $\trace$ must respect $\form_{\po{}}$ and $\rf{}$ relation. 
Thirdly, $\trace$ guarantees the $\fifo$ property of each channel $\ch$, 
and for each channel $\ch$, 
$\trace$ orders all send events to $\ch$ with no receivers after all send events to $\ch$ with a receiver. 
Otherwise, these unmatched send events will block the channel. 
Therefore, $\form_{\fifo}$ and $\form_{\unmatched}$ are satisfied. 
$\form_{\trans}$ is also satisfied, 
because if $e \le_{\s{tr}}^\trace f$, then 
(1) $e' = \s{pred}(e)$ is ordered before $f$, 
and (2) $f' = \sucr{}{f}$ is ordered after $e$, 
otherwise, 
$\trace$ is not a linear trace. 
Finally, $\form_{\capOne}$ and $\form_{\sync}$ are satisfied, 
because $\trace$ satisfies the capacity constraints. 
\end{proof}

\section{Lower bounds of $\vch$}
\applabel{lower-bounds-vch}
We now turn our attention to the hardness of $\vch$. 
In \secref{atomic-gadget}, we introduce atomicity gadgets, which is a construction to ensure a sequence of events to be executed atomically, and will be frequently used in later sections. 
In \secref{lower-vch-sameValues} we prove \thmref{vch-same-value-hardness}, namely that the problem is intractable even when all send/receive events use the same value.
In \secref{hardness-vch-two-threads} we prove \thmref{vch-two-threads-hardness}, stating that hardness for $\vch$ arises already with $2$ threads, and even if there are no capacity constraints on the channels.
Finally, in \secref{lower-vch-chanel=1-cap=0-or-cap=1} we prove \thmref{vch-one-chan-hardness}, showing that the problem is also hard already with $1$ channel.



\subsection{Hardness with Same Values}
\seclabel{lower-vch-sameValues}

We consider the $\vch$ problem for instances where all events (no matter what channel they access) send/receive the same value.
We remark that, in the case of shared memory, this problem is known to  be solvable in linear time ---- simply check if, for each memory location $x$, there is some thread that writes to $x$ before reading from it.
In the case of channels, $\fifo$ and capacity constraints turn this problem intractable, as we prove here.

\myparagraph{Overview}{
Our proof is via a reduction from the Hamiltonian cycle problem on an directed graph $G$.
Given $G$ with $\numNode$ nodes, we construct a $\vch$ instance $\tuple{\AbstractExecution, \cpFunc}$
which is consistent iff $G$ has a Hamiltonian cycle.
In high level, $\tuple{\AbstractExecution, \cpFunc}$ is constructed so that any concretization of $\tuple{\AbstractExecution, \cpFunc}$ can be conceptually split into three phases, based on the following scheme.
The initial phase picks an arbitrary node $v_1$ as the start of the Hamiltonian cycle,
and also sends $\numNode$ messages to a channel, which act as a counter to keep track of the length of the Hamiltonian cycle constructed in the next phase.
The second phase guesses the Hamiltonian cycle edge-by-edge while decrementing the counter and also ensuring no node repeats.
The last phase executes residual send/receive events and verifies that the sequence of edges guessed in the second phase is indeed a Hamiltonian cycle.
See \figref{push-pop-reduction} for an illustration on a small example.
}


\begin{figure}[t]
\centering
\newcommand{\xstep}{1.7}
\newcommand{\ystep}{-0.7}
\newcommand{\height}{0.5}
\newcommand{\wid}{1.5}

\centering
\scalebox{0.83}{
\begin{tikzpicture}[font=\small]

\begin{scope}[shift={(-1.5*\xstep,5*\ystep)}]
\node [graphNode] (n1) at (0, 0) {$u$};
\node [graphNode] (n2) at (0, -1.5) {$v$};
\node [graphNode] (n3) at (0, -3) {$w$};
\draw [->, very thick, ] (n1) to (n2);
\draw [->, very thick, ] (n2) to [bend left] (n3);
\draw [->, very thick, ] (n3) to [bend left] (n1);
\draw [->, very thick, left] (n3) -- (n2);
\node[text width=3cm, align=center] at (0, -4.5) {A graph $G$ with Hamiltonian cycle $u \to v \to w \to u$};
\node[fill=white, minimum height=0.2cm] at (-1, 0) {};
\end{scope}

\node[event] (n1) at (0 * \xstep, 0 * \ystep) {$\snd({\color{red} \lk})$};
\node[event] (n1) at (0 * \xstep, 1 * \ystep) {$\snd(\s{cnt})$};
\node[event] (n1) at (0 * \xstep, 2 * \ystep) {$\snd(\s{cnt})$};
\node[event] (n1) at (0 * \xstep, 3 * \ystep) {$\snd(\s{cnt})$};
\node[event] (n1) at (0 * \xstep, 4 * \ystep) {$\snd(\ch_u)$};
\node[event] (n1) at (0 * \xstep, 5 * \ystep) {$\snd(\ch'_u)$};
\node[event] (n1) at (0 * \xstep, 6 * \ystep) {$\snd(\ch'_v)$};
\node[event] (n1) at (0 * \xstep, 7 * \ystep) {$\snd(\ch'_w)$};
\node[event] (n1) at (0 * \xstep, 8 * \ystep) { $\rcv({\color{red} \lk})$};

\begin{scope}[on background layer]
\draw [thread] (0 * \xstep, -0.6 * \ystep) -- (0 * \xstep, 13.8 * \ystep);
\end{scope}

\node[threadName] at (0 * \xstep, -1.0 * \ystep) { $\thread_{\sf init}$};
    
\node[event] (n1) at (1 * \xstep, 0 * \ystep) { $\snd({\color{red} \lk})$};
\node[event] (n1) at (1 * \xstep, 1 * \ystep) { $\rcv(\ch_u)$};
\node[event] (n1) at (1 * \xstep, 2 * \ystep) {$\rcv(\s{cnt})$};
\node[event] (n1) at (1 * \xstep, 3 * \ystep) {$\rcv(\ch'_v)$};
\node[event] (n1) at (1 * \xstep, 4 * \ystep) {$\snd(\ch_v)$};
\node[event] (n1) at (1 * \xstep, 5 * \ystep) { $\rcv({\color{red} \lk})$};

\begin{scope}[on background layer]
\draw [thread] (1 * \xstep, -0.6 * \ystep) -- (1 * \xstep, 13.8 * \ystep);
\end{scope}

\node[threadName] at (1 * \xstep, -1.0 * \ystep) {$\thread_{(u,v)}$};

\begin{scope}[shift={(-0.25*\xstep, 0)}]
\node[event] (n1) at (2 * \xstep, 6 * \ystep) {$\snd({\color{red} \lk})$};
\node[event] (n1) at (2 * \xstep, 7 * \ystep) {$\rcv(\ch_v)$};
\node[event] (n1) at (2 * \xstep, 8 * \ystep) {$\rcv(\s{cnt})$};
\node[event] (n1) at (2 * \xstep, 9 * \ystep) {$\rcv(\ch'_w)$};
\node[event] (n1) at (2 * \xstep, 10 * \ystep) {$\snd(\ch_w)$};
\node[event] (n1) at (2 * \xstep, 11 * \ystep) { $\rcv({\color{red} \lk})$};

\begin{scope}[on background layer]
\draw [thread] (2 * \xstep, -0.6 * \ystep) -- (2 * \xstep, 13.8 * \ystep);
\end{scope}
\node[threadName] at (2 * \xstep, -1.0 * \ystep) {$\thread_{(v,w)}$};
\end{scope}
    
\begin{scope}[shift={(-0.5*\xstep, 0)}]
\node[event] (n1) at (3 * \xstep, 0 * \ystep) { $\snd({\color{red} \lk})$};
\node[event] (n1) at (3 * \xstep, 1 * \ystep) { $\rcv(\ch_w)$};
\node[event] (n1) at (3 * \xstep, 2 * \ystep) {$\rcv(\s{cnt})$};
\node[event] (n1) at (3 * \xstep, 3 * \ystep) {$\rcv(\ch'_u)$};
\node[event] (n1) at (3 * \xstep, 4 * \ystep) {$\snd(\ch_u)$};
\node[event] (n1) at (3 * \xstep, 5 * \ystep) { $\rcv({\color{red} \lk})$};

\begin{scope}[on background layer]
\draw [thread] (3 * \xstep, -0.6 * \ystep) -- (3 * \xstep, 13.8 * \ystep);
\end{scope}
\node[threadName] at (3 * \xstep, -1.0 * \ystep) {$\thread_{(w,u)}$};
\end{scope}

\begin{scope}[shift={(-0.75*\xstep, 0)}]
\node[event] (n1) at (4 * \xstep, 6 * \ystep) { $\snd({\color{red} \lk})$};
\node[event] (n1) at (4 * \xstep, 7 * \ystep) { $\rcv(\ch_w)$};
\node[event] (n1) at (4 * \xstep, 8 * \ystep) {$\rcv(\s{cnt})$};
\node[event] (n1) at (4 * \xstep, 9 * \ystep) {$\rcv(\ch'_v)$};
\node[event] (n1) at (4 * \xstep, 10 * \ystep) {$\snd(\ch_v)$};
\node[event] (n1) at (4 * \xstep, 11 * \ystep) { $\rcv({\color{red} \lk})$};

\begin{scope}[on background layer]
\draw [thread] (4 * \xstep, -0.6 * \ystep) -- (4 * \xstep, 13.8 * \ystep);
\end{scope}
\node[threadName] at (4 * \xstep, -1.0 * \ystep) {$\thread_{(w,v)}$};
\end{scope}

\begin{scope}[shift={(-0.75*\xstep, 0)}]
\node[event] (n1) at (5 * \xstep, 0 * \ystep) { $\snd({\color{red} \lk})$};
\node[event] (n1) at (5 * \xstep, 1 * \ystep) {$\snd(\s{cnt})$};
\node[event] (n1) at (5 * \xstep, 2 * \ystep) {$\snd(\s{cnt})$};
\node[event] (n1) at (5 * \xstep, 3 * \ystep) {$\snd(\s{cnt})$};
\node[event] (n1) at (5 * \xstep, 4 * \ystep) {$\snd(\s{cnt})$};
\node[event] (n1) at (5 * \xstep, 5 * \ystep) {$\rcv(\ch_u)$};
\node[event] (n1) at (5 * \xstep, 6 * \ystep) {$\rcv(\s{cnt})$};
\node[event] (n1) at (5 * \xstep, 7 * \ystep) { $\rcv(\s{cnt})$};
\node[event] (n1) at (5 * \xstep, 8 * \ystep) {$\rcv(\s{cnt})$};
\node[event] (n1) at (5 * \xstep, 9 * \ystep) {$\rcv(\s{cnt})$};
\node[event] (n1) at (5 * \xstep, 10 * \ystep) { $\rcv({\color{red} \lk})$};
\node[event] (n1) at (5 * \xstep, 11 * \ystep) {$\snd(\alpha)$};
\node[event] (n1) at (5 * \xstep, 12 * \ystep) {$\snd(\alpha)$};
\node[event] (n1) at (5 * \xstep, 13 * \ystep) {$\snd(\alpha)$};

\begin{scope}[on background layer]
\draw [thread] (5 * \xstep, -0.6 * \ystep) -- (5 * \xstep, 13.8 * \ystep);
\end{scope}

\node[threadName] at (5 * \xstep, -1.0 * \ystep) {$\thread_{\sf free}$};
\end{scope}

\begin{scope}[shift={(-0.75*\xstep, 0)}]
\node[event] (n1) at (6 * \xstep, 0 * \ystep) {$\rcv(\alpha)$};
\node[event] (n1) at (6 * \xstep, 1 * \ystep) {$\snd(\ch_u)$};
\node[event] (n1) at (6 * \xstep, 2 * \ystep) {$\snd(\ch'_v)$};
\node[event] (n1) at (6 * \xstep, 3 * \ystep) {$\snd(\s{cnt})$};
\begin{scope}[on background layer]
\draw [thread] (6 * \xstep, -0.6 * \ystep) -- (6 * \xstep, 13.8 * \ystep);
\end{scope}
\node[threadName] at (6 * \xstep, -1.0 * \ystep) {$\thread_{u}$};
\end{scope}

\begin{scope}[shift={(-1*\xstep, 0)}]
\node[event] (n1) at (7 * \xstep, 7 * \ystep) {$\rcv(\alpha)$};
\node[event] (n1) at (7 * \xstep, 8 * \ystep) {$\snd(\ch_v)$};
\node[event] (n1) at (7 * \xstep, 9 * \ystep) {$\snd(\ch'_w)$};
\node[event] (n1) at (7 * \xstep, 10 * \ystep) {$\snd(\s{cnt})$};

\begin{scope}[on background layer]
\draw [thread] (7 * \xstep, -0.6 * \ystep) -- (7 * \xstep, 13.8 * \ystep);
\end{scope}

\node[threadName] at (7 * \xstep, -1.0 * \ystep) {$\thread_{v}$};
\end{scope}

\begin{scope}[shift={(-1.25*\xstep, 0)}]
\node[event] (n1) at (8 * \xstep, 0 * \ystep) {$\rcv(\alpha)$};
\node[event] (n1) at (8 * \xstep, 1 * \ystep) {$\snd(\ch_w)$};
\node[event] (n1) at (8 * \xstep, 2 * \ystep) {$\snd(\ch'_u)$};
\node[event] (n1) at (8 * \xstep, 3 * \ystep) {$\snd(\s{cnt})$};
\node[event] (n1) at (8 * \xstep, 4 * \ystep) {$\snd(\ch_w)$};
\node[event] (n1) at (8 * \xstep, 5 * \ystep) {$\snd(\ch'_v)$};
\node[event] (n1) at (8 * \xstep, 6 * \ystep) {$\snd(\s{cnt})$};

\begin{scope}[on background layer]
\draw [thread] (8 * \xstep, -0.6 * \ystep) -- (8 * \xstep, 13.8 * \ystep);
\end{scope}

\node[threadName] at (8 * \xstep, -1.0 * \ystep) {$\thread_{w}$};
\end{scope}

\end{tikzpicture}
}

\caption{
A graph $G$ with a Hamiltonian cycle (left) and the corresponding $\vch$ instance in which all send/receive events use the same value (right). 
A concretization is $\trace = [{\color{black} \thread_{\s{init}}}] \cdot 
[{\color{black} \thread_{u, v}} \cdot 
{\color{black} \thread_{v, w}} \cdot 
{\color{black} \thread_{w, u}}] \cdot 
[{\color{black} \thread_{\s{free}}} \cdot 
{\color{black} \thread_{u}} \cdot 
{\color{black} \thread_{v}} \cdot 
{\color{black} \thread_{w}} \cdot 
{\color{black} \thread_{w, v}}$],
obtained by fully executing every thread according to this sequence. 
Brackets separate the three phases.
}
\figlabel{push-pop-reduction}
\end{figure}




\myparagraph{Reduction}{
We now make the above idea formal.
Let $G = (V, E)$ be the instance of the Hamiltonian cycle problem with $\numNode$ nodes and $\numEdge$ edges.
We construct $\vch$ instance $\tuple{\AbstractExecution, \cpFunc}$ that uses
$\numEdge + \numNode + 2$ threads $\setpred{\thread_{(u, v)}}{(u, v) \in E} \cup \setpred{\thread_v}{v \in V} \cup \set{\thread_{\sf init}, \thread_{\sf free}}$, and $2\numNode + 3$ channels $\setpred{\ch_v, \ch'_v}{v \in V} \cup \set{\lk, \alpha, \chanCnt}$,
with the following capacities:
$\cp{\ch_{v}} = \outDegree{v} + \inDegree{v}$, $\cp{\ch'_{v}} = \inDegree{v}$, 
$\cp{\chanCnt} = \numEdge$, $\cp{\lk} = 1$, $\cp{\alpha} = \numNode$. 
We use $\outDegree{v}$ and $\inDegree{v}$ to denote the  out-degree and in-degree of a node $v \in V$. 
For each $v \in V$, the sequence of events in thread $\thread_v$ comprises $\outDegree{v_i}$ blocks.
\begin{align*}
\thread_v = \pop(\alpha) \cdot A^1_v \cdots A^{\outDegree{v}}_v
\end{align*}
The $j^\text{th}$ block $A^j_v$ encodes the $j^\text{th}$ outgoing edge $(v, w)$ from $v$.
 \begin{align*}
A^j_v = \push(\ch_v) \cdot \push(\ch'_w) \cdot \push(\chanCnt)
\end{align*}
For each edge $(u, v) \in E$, the sequence of events in $\thread_{(u, v)}$ is:
\begin{align*}
\thread_{(u, v)} = \push(\s{\lk}) \cdot  \pop(\ch_u) \cdot \pop(\chanCnt) \cdot \pop(\ch'_v) \cdot \push(\ch_v) \cdot \pop(\s{\lk}) 
\end{align*}
The thread $\thread_{\s init}$ sends a message to the channel $\ch_v$ of a designated initial node $v$ of the Hamiltonian cycle, sends $\numNode$ messages to the channel $\chanCnt$, and also sends one message to each channel $\set{\ch'_u}_{u \in V}$.
After the Hamiltonian cycle has been constructed,  the thread $\thread_{\s{free}}$, together with $\set{\thread_{u}}_{u \in V}$ empties all channels while ensuring that $v$ is reached by a path that visits every node once.
\begin{align*}
\begin{array}{rcl}
\thread_{\sf init} &=& \push(\s{\lk}) \cdot \push_1(\chanCnt) \cdots \push_{\numNode}(\chanCnt) \cdot \push(\ch_{v_1}) \cdot \push(\ch'_{v_1}) \cdots \push(\ch'_{v_\numNode}) \cdot \pop(\s{\lk}) \\
\thread_{\sf free} &=& \push(\s{\lk}) \cdot \push_1(\chanCnt) \cdots \push_{\numEdge}(\chanCnt) \cdot \pop(\ch_{v_1}) \cdot \pop_1(\chanCnt) \cdots \pop_{\numEdge}(\chanCnt) \cdot  \pop(\s{\lk}) \\
&&\cdot \push_1(\alpha) \cdots \push_{\numNode}(\alpha)
\end{array}
\end{align*}
The following lemma states the correctness of the construction.
\begin{restatable}{lemma}{samevaluereductioncorrectness}
\lemlabel{same-value-reduction-correctness}
$\tuple{\AbstractExecution, \cpFunc}$ is consistent iff $G$ contains a Hamiltonian cycle.
\end{restatable}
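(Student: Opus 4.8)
The plan is to prove both directions by reasoning about the three conceptual phases described above, treating the lock channel $\lk$ as an atomicity gadget (\secref{atomic-gadget}): since $\cp{\lk}=1$, the regions delimited by $\push(\lk)$ and $\pop(\lk)$ in $\thread_{\sf init}$, in each edge thread $\thread_{(u,v)}$, and in $\thread_{\sf free}$ cannot overlap in any concretization, so these $\numEdge+2$ \emph{locked blocks} are totally ordered by $\trord{\tr}$. The forward direction (Hamiltonian cycle $\Rightarrow$ consistent) is the routine one: I would exhibit the explicit schedule sketched in \figref{push-pop-reduction} and verify it is well-formed. The reverse direction (consistent $\Rightarrow$ Hamiltonian cycle) requires extracting the cycle from an arbitrary concretization and is where the real work lies.

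For the forward direction, given a Hamiltonian cycle $v_1 \to v_2 \to \cdots \to v_\numNode \to v_1$, I would schedule: (i) run $\thread_{\sf init}$ atomically, seeding one token in $\ch_{v_1}$, $\numNode$ messages in $\chanCnt$, and one message in each $\ch'_u$; (ii) run the $\numNode$ cycle edge threads $\thread_{(v_i,v_{i+1})}$ in cycle order, each atomically, maintaining the invariant that before the $i$-th step the single token sits in $\ch_{v_i}$, that $\ch'_{v_{i+1}}$ is nonempty, and that $\chanCnt$ holds $\numNode-i+1$ messages; (iii) run $\thread_{\sf free}$ (which pops the token the last edge returned to $\ch_{v_1}$, drains $\chanCnt$, and then emits $\numNode$ messages on $\alpha$), followed by the vertex threads and the remaining $\numEdge-\numNode$ edge threads, interleaved so that tokens produced by vertex threads feed the leftover edge threads. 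The only nontrivial checks are that no channel ever exceeds its capacity; these hold because the capacities were chosen to fit exactly (e.g. $\cp{\ch_v}=\outDegree{v}+\inDegree{v}$ and $\cp{\chanCnt}=\numEdge$), and I would verify them phase by phase.

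For the reverse direction, let $\trace$ be any concretization. The key preliminary observation is that every vertex-thread event occurs after the entire locked block of $\thread_{\sf free}$: each $\thread_v$ begins with $\pop(\alpha)$, and the only sends on $\alpha$ come from $\thread_{\sf free}$ \emph{after} it releases the lock. Consequently, in the prefix of $\trace$ up to and including $\thread_{\sf free}$'s locked block, the only active threads are $\thread_{\sf init}$ and edge threads. I would then chain the following deductions. First, since $\thread_{\sf free}$'s block pushes $\numEdge$ messages on $\chanCnt$ before draining it and $\cp{\chanCnt}=\numEdge$, the counter must be \emph{empty} just before this block; as $\thread_{\sf init}$ is the unique counter-producer in the prefix (contributing exactly $\numNode$, since edge threads only consume from $\chanCnt$), this forces $\thread_{\sf init}$ to precede $\thread_{\sf free}$ and forces exactly $\numNode$ edge threads to lie between them. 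Second, token count is conserved ($\thread_{\sf init}$ injects one token into $\ch_{v_1}$, each edge thread moves one token, $\thread_{\sf free}$ removes one), so the single token exists only after $\thread_{\sf init}$ runs, and the $\numNode$ intermediate edge threads trace a closed walk $v_1 = w_0 \to w_1 \to \cdots \to w_\numNode = v_1$ (closed because $\thread_{\sf free}$ pops $\ch_{v_1}$). Third, each $\ch'_v$ holds exactly one init-produced message in the prefix, and the edge entering $w_i$ consumes $\ch'_{w_i}$; for all $\numNode$ such pops to succeed the $w_i$ must be pairwise distinct, whence the walk visits every node exactly once and is a Hamiltonian cycle.

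The main obstacle is the reverse-direction orchestration: establishing that the counter-capacity constraint, together with the $\alpha$-induced ordering of the vertex threads, pins down \emph{exactly} which locked blocks precede $\thread_{\sf free}$ and in \emph{exactly} what number. Making this airtight — ruling out spurious interleavings in which vertex threads or stray edge threads replenish the token, counter, or $\ch'$ channels prematurely — is the crux; once the prefix is shown to contain only $\thread_{\sf init}$ and $\numNode$ token-moving edge threads, the token-conservation and $\ch'$-distinctness arguments that recover Hamiltonicity are comparatively direct.
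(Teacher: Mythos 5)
Your proposal is correct and follows essentially the same route as the paper's proof: the same explicit phase-by-phase schedule for the forward direction, and for the reverse direction the same chain of observations (lock-enforced atomicity and total ordering of blocks, the $\alpha$-channel forcing all vertex threads after $\thread_{\sf free}$'s block, the $\chanCnt$ capacity argument pinning exactly $\numNode$ edge blocks between $\thread_{\sf init}$ and $\thread_{\sf free}$, token movement yielding a closed walk from $v_1$, and the single message per $\ch'_v$ forcing distinctness). The only nitpick is that counter-emptiness alone does not force $\thread_{\sf init}$ to precede $\thread_{\sf free}$; you also need that $\thread_{\sf free}$'s block pops $\ch_{v_1}$ (a fact you already use for closedness of the walk), since otherwise $\thread_{\sf free}$ could start with an empty counter before $\thread_{\sf init}$ runs.
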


Finally, observe that the size of $\tuple{\AbstractExecution, \cpFunc}$ is $O(\numNode + \numEdge)$, thereby concluding the proof of \thmref{vch-same-value-hardness}.
}


\applabel{proof-np-same-value}

\samevaluereductioncorrectness*
\begin{proof}
We prove each direction separately.

\myparagraph{Proof of correctness(Hamiltonian cycle $\Rightarrow$ Consistency)}
Given a Hamiltonian cycle in $G$, 
assuming it is of form $v_1 \rightarrow v_2 \rightarrow \dots \rightarrow v_{\numNode} \rightarrow v_1$, 
we sketch the concretization $\trace$ as following. 
Here $\trace$ can be represented as a sequence of threads. 
That is, we execute all events in each thread according to the thread sequence. 
\begin{equation}
    \trace = A_{cycle} \circ A_1 \circ \dots \circ A_{\numNode}
    \nonumber
\end{equation}
where $A_{cycle}$ is a sequence of threads constructing the cycle 
and each $A_j$ is a sequence of threads executing the encoded events for unselected outgoing edges from $v_j$. 
\begin{equation}
    A_{cycle} = \thread_{\s{init}} \cdot \thread_{(v_1, v_2)} \cdot \thread_{(v_2, v_3)} \cdots \thread_{(v_{\numNode-1}, v_{\numNode})}, \thread_{(v_{\numNode}, v_1)}, \thread_{\s{free}}
    \nonumber
\end{equation}
We assume the outgoing edges from $v_j$ are $v_{j_1}, \dots, v_{j_q}$, where $q$ is the out-degree of $v_j$, 
and $v_{j_p}$ is an edge in the Hamiltonian cycle, 
then $A_j$ can be represented as following. 
\begin{equation}
    A_{j} = \thread_{v_j} \cdot \thread_{(v_j, v_{j_1})} \cdots \thread_{(v_j, v_{j_{p-1}})} \cdot  \thread_{(v_j, v_{j_{p+1}})} \cdots \thread_{(v_j, v_{j_{q}})}
    \nonumber
\end{equation}
Since every send/receive event has the same value, 
we mainly discuss the capacity constraints. 
In the first stage, the capacity constraints are clearly met. 
That is, the in-degree and out-degree of every node should clearly $\ge 1$, 
as otherwise $G$ has no Hamiltonian cycle. 
Therefore, it's perfectly fine to send once to $\ch'_v$ and $\ch_v$ in the first phase. 
Similarly, $\numEdge \ge \numNode$, 
as otherwise $G$ has no Hamiltonian cycle, 
so that it's also fine to send $\numNode$ times to $\s{cnt}$. 

In the second stage, we  receive once for every $\ch'_v$, 
and only receive $\ch_u$ if there is a message inside. 
We receive exactly $\numNode$ times for $\s{cnt}$. 
Therefore, the capacity constraints for the second stage is also met. 

Lastly, 
for the third phase, a channel $\ch'_u$ is sent at most $\inDegree{u}$ times, 
i.e., once by each $\thread_{v}$ for all $(v, u) \in E$. 
Moreover, 
a channel $\ch_u$ is sent at most $\inDegree{u} + \outDegree{u}$ times, 
i.e., once by each $\thread_{v}$ for all $(u, v) \in E$, 
and once by each $\thread_{w, u}$ for all $(w, u) \in E$. 
Therefore, the capacity constrains for the third stage is also met. 

\myparagraph{Proof of correctness(Consistency $\Rightarrow$ Hamiltonian cycle)}
In this direction, we prove if the input problem is consistent, 
then there is a Hamiltonian cycle.  
Firstly we argue that 
any concretization of this instance will order all events in $\thread_{v_i}$ after the $\rcv(\s{\lk})$ in $\thread_{\s{free}}$, 
because all $\thread_{v_i}$ starts with $\rcv(\alpha)$ and only $\thread_{\s{free}}$ sends to $\alpha$. 
Secondly we argue that 
the first thread to execute must be $\thread_{\s{init}}$. 
This is because all other threads start with receive events on some channels 
and all channels have 0 messages at the beginning. 
Therefore, so far we can conclude that in any concretization, 
the events executed between $\thread_{\s{init}}$ and $\thread_{\s{free}}$ must be from thread $\thread_{(v_i, v_j)}$ for some $i, j$, 
which encodes edge $v_i \rightarrow v_j$.

Also, as $\thread_{v_i}$, $\thread_{\s{init}}, \thread_{\s{free}}$ are composed of event blocks protected by $\snd(\lk)$ and $\rcv(\lk)$, 
we claim that none of these blocks can overlap with each other, 
because $\s{\lk}$ has capacity 1 (see the construction in \figref{lock-demo-capacity-one}). 
If any two blocks overlap, this means there are two continuous sends to $\s{\lk}$, 
which violates the capacity constraints. 
Finally, we can conclude that in any concretization $\trace$ of the instance, 
it must be of the following form. 
$\trace = \trace_1 \circ \trace_2$, 
where $\trace_1$ is a sequence of atomic blocks from $\thread_{\s{init}}, \thread_{\s{free}}$ or $\thread_{(v_i, v_j)}$ and $\trace_2$ is a sequence of other events not in $\trace_1$. 
Moreover $\trace_1$ is of the following form
\begin{equation}
    \trace_1 = \thread_{\s{init}} \circ \thread_{(1, p_1)} \circ \thread_{(p_1, p_2)}, \dots \circ \thread_{(p_{\numNode-2}, p_{\numNode-1})} \circ \thread_{(p_{\numNode-1}, 1)} \circ \thread^1_{\s{free}}
    \nonumber
\end{equation}
where $\thread^1_{\s{free}}$ is the atomic block in $\thread_{\s{free}}$. 
To verify the form of $\trace_1$, we have the following observations.
\begin{itemize}
    \item There must be exactly $\numNode$ atomic blocks between $\thread_{\s{init}}$ and $\thread^1_{\s{free}}$, because $\thread_{\s{init}}$ sends $\numNode$ times on $\s{cnt}$ and $\thread^1_{\s{free}}$ receives $\numEdge$ times on $\s{cnt}$. 
    Every thread $\thread_{(v_i, v_j)}$ receives once from $\s{cnt}$, 
    so that as $\cp{\s{cnt}} = \numEdge$, 
    there must be exactly $\numNode$ edge threads between $\thread_{\s{init}}$ and $\thread^1_{\s{free}}$. 
    Otherwise, the $\numEdge$ send events in $\thread^1_{\s{free}}$ cannot be executed. 

    \item If two threads between $\thread_{\s{init}}$ and $\thread^1_{\s{free}}$ are next to each other in $\trace_1$, 
    then they must share a common node. 
    For example, if $\thread_{(v_i, v_j)}$ is immediately before $\thread_{(v_p, v_q)}$ in $\trace_1$, then $j = p$. 
    To show this, $\thread_{(v_p, v_q)}$ will receive $\ch_{v_p}$ once and only those encoded threads for edges that end in $v_p$ will send once to $\ch_{v_p}$. 
    This observation proves that the threads between $\thread_{\s{init}}$ and $\thread^1_{\s{free}}$ correspond to a walk in the graph. 

    \item The first thread immediately after $\thread_{\s{free}}$ in $\trace_1$ must correspond to an edge starting from $v_1$, 
    because $\thread_{(v_i, v_j)}$ receives once to $\ch_{v_i}$ and after $\thread_{\s{free}}$ being executed, 
    only $\ch_{v_i}$ is not empty. 
    Also, the last thread immediately before $\thread^1_{\s{free}}$ must correspond to an edge ending in $v_1$, 
    because $\thread^1_{\s{free}}$ receives once to $\ch_{v_1}$ and only edge threads ending in $v_1$ will send $v_1$ once. 
    This observation proves that the threads between $\thread_{\s{init}}$ and $\thread^1_{\s{free}}$ correspond to a cycle in the graph $v_1 \rightarrow v_{p_1} \rightarrow v_{p_2} \rightarrow \dots \rightarrow v_{p_{\numNode-1}} \rightarrow v_1$. 

    \item Lastly, we show that every node will appear exactly once in the walk except $v_1$ appearing twice, 
    as it is both the starting and ending node of the walk. 
    We send once to every $\ch'_{v_i}$ in $\thread_{\s{init}}$, and every thread $\thread_{(v_i, v_j)}$ receives $\ch'_{v_j}$ once. 
    Therefore, we cannot have two edges in the walk that end in the same node $v_j$, 
    as there is only one message in $\ch'_{v_j}$. 
\end{itemize}
\end{proof}
Given the observations above, 
it is proved that the walk we found is indeed a Hamiltonian cycle. 



\subsection{Hardness with $2$ Threads and no Capacity Restrictions}
\seclabel{hardness-vch-two-threads}

\vchRf takes quadratic time when $\NumThreads=2$ and channels have unrestricted capacity or capacity $\leq 1$ (as per \thmref{vch-rf-tree-topology}).
Does this advantage of limiting threads carry over to $\vch$?
We show that this is not the case as $\vch$ remains $\NP$-hard when $\NumThreads=2$, even with no capacity restrictions.

\myparagraph{Overview}{
Our reduction is from positive 1-in-3 SAT, which takes as input a 3CNF formula $\psi$ for which every clause contains three distinct positive literals, and the task is to determine whether there is a truth assignment that makes exactly 1-in-3 literals true in each clause.
Given $\psi$, we construct a corresponding $\vch$ instance $\tuple{\AbstractExecution, \cpFunc}$, where events in $\AbstractExecution$ comprise two phases.
The first phase guesses a truth assignment for the propositional variables of $\psi$,
while the second phase verifies that every clause satisfies the 1-in-3 property.
}

\input{figures/VCH-two-threads-reduction}

\myparagraph{Reduction}{
Given a formula $\psi$ with $\numVar$ propositional variables and $\numClause$ clauses,
we construct a $\vch$ instance $\tuple{\AbstractExecution, \cpFunc}$ where $\AbstractExecution$ comprises $2$ threads $\thread_\top$ and $\thread_\bot$,
and $\numClause + 3$ capacity-unbounded channels 
$\set{\lk_1, \lk_2, \alpha, C_1, \dots, C_{\numClause}}$. 
\figref{reduction-two-threads-vch} illustrates the overall scheme.
For each $p \in \set{\top, \bot}$, the thread $\thread_p$ consists of two sequential phases,
corresponding to propositional variables and clauses on $\psi$.
\begin{align*}
\thread_p = A^p_1 \cdots A^p_{\numVar} \cdot B^p_1 \cdots B^p_{\numClause}
\end{align*}
The events in $A^p_i$ correspond to the $i^\text{th}$ variable $x_i$,
while the events in $B^p_j$ correspond to the $j^\text{th}$ clause $C_j$.
Let $C_{k_1}, C_{k_2}, \ldots, C_{k_{f_i}}$ be an ordered list of clauses in which variable $x_i$ appears.
The above sequences make use of the atomicity gadget (\figref{lock-demo-unbounded}), and are defined as follows.

\begin{align*}
\begin{array}{rcl}
A^\top_i &=& \!\!\!\!\!
\begin{aligned}
\begin{array}{l}
\;\snd(\alpha, v_i^3) \cdot \rcv(\alpha, v_i^4) \cdot \snd(\lk_1, v^1_i) \cdot 
\snd(\lk_2, v^1_i)) \cdot \rcv(\lk_2, v^1_i)) \\ 
\cdot \snd(C_{k_1}, \top)  \cdots \snd(C_{k_{f_i}}, \top) 
\cdot \rcv(\lk_1, v^1_i)
\end{array}
\end{aligned} \\
\vspace{0.1cm}
A^\bot_i &=& \!\!\!\!\!
\begin{aligned}
\begin{array}{l}
\;\snd(\alpha, v_i^4) \cdot \rcv(\alpha, v_i^3) \cdot \snd(\lk_2, v^2_i) \cdot \snd(\lk_1, v^2_i) \cdot \rcv(\lk_1, v^2_i) \\ \cdot \snd(C_{k_1}, \bot) \cdots \snd(C_{k_{f_i}}, \bot)
\cdot \rcv(\lk_2, v^2_i) 
\end{array}
\end{aligned} \\
\vspace{0.1cm}
B^\top_j &=& \!\!\!\!\!
\begin{aligned}
\begin{array}{l}
\;\snd(\alpha, w^4_j) \cdot \rcv(\alpha, w^5_j) \cdot \snd(\lk_1, w^1_j) \cdot \snd(\lk_2, w^1_j) \cdot \rcv(\lk_2, w^1_j) \\ \cdot \rcv(C_j, \top) \cdot \rcv(C_j, \bot)  
\cdot \rcv(\lk_1, w^1_j)
\end{array}
\end{aligned} \\
\vspace{0.1cm}
B^\bot_j &=& \!\!\!\!\!
\begin{aligned}
\begin{array}{l}
\;\snd(\alpha, w^5_j) \cdot \rcv(\alpha, w^4_j) \cdot \snd(\lk_2, w^2_j) \cdot \snd(\lk_1, w^2_j)) \cdot \rcv(\lk_1, w^2_j)) \cdot \rcv(C_j, \bot) \cdot \rcv(C_j, \top)  \\ 
\cdot \rcv(\lk_2, w^2_j))
\cdot \snd(\lk_2, w^3_j)) \cdot \snd(\lk_1, w^3_j) \cdot \rcv(\lk_1, w^3_j) \cdot \rcv(C_j, \bot) \cdot \rcv(C_j, \top) \cdot \rcv(\lk_2, w^3_j)
\end{array}
\end{aligned}
\end{array}
\end{align*}
where $v^1_i, v^2_i, v^3_i, v^4_i$ and  $w^1_j, w^2_j, w^3_j, w^4_j, w^5_j$ are distinct values associated with $i$ and $j$, 
guaranteeing atomicity and ensuring that the encoded events for different variables or clauses appear sequentially. 

\begin{restatable}{lemma}{vchTwoThreadsNPHard}
\lemlabel{vch-two-threads-np-hard}
$\tuple{\AbstractExecution, \cpFunc}$ is consistent iff $\psi$ is 1-in-3 satisfiable. 
\end{restatable}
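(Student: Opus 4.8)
The plan is to prove both directions through a single structural characterization: I would show that in any concretization the top/bottom block pair $A^\top_i, A^\bot_i$ is totally ordered, and read off a truth assignment by setting $x_i = \top$ exactly when $A^\top_i$ precedes $A^\bot_i$. The two directions then reduce to showing that a concretization exists iff this induced assignment is a 1-in-3 witness, which in turn reduces to a small combinatorial check on each clause channel $C_j$. Since the assignment is a single global fact (one relative order per index), it is automatically consistent across all clauses containing $x_i$, which is what makes the extracted assignment well-defined.

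First I would pin down the ordering skeleton forced by the gadget channels. Because every block is guarded by the two-channel atomicity gadget of \figref{lock-demo-unbounded}, and because the construction assigns pairwise-distinct values ($v_i^1,v_i^2$, $w_j^1,w_j^2,w_j^3$ on $\lk_1,\lk_2$, and $v_i^3,v_i^4,w_j^4,w_j^5$ on $\alpha$) to the events of distinct blocks, the value-matching constraint of $\vch$ forces each receive to pair with the unique send carrying its value; this recovers exactly the reads-from information assumed by the gadget, so the gadget's conclusion transfers verbatim to the $\vch$ setting. I would then derive three consequences: (i) each block core executes atomically; (ii) for each index the two blocks sharing it are totally ordered, and the three clause cores $B^\top_j$ and the two atomic sub-blocks of $B^\bot_j$ are totally ordered with the first sub-block before the second (by program order); and (iii) the $\alpha$ handshakes, being FIFO with per-block distinct values, chain the blocks into a global lock-step respecting program order, so that all variable-phase activity precedes all clause-phase activity and, within the variable phase, each pair $A^\top_i, A^\bot_i$ is contiguous and the pairs occur in index order.

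Given this skeleton, the heart of the argument is local to each clause channel $C_j$ on literals $x_a, x_b, x_c$ with $a<b<c$. Since all six sends to $C_j$ (three $\top$'s and three $\bot$'s, one pair per literal-variable) precede all receives, and since the pairs are contiguous and index-ordered, the sent value sequence is exactly $P_a P_b P_c$, where $P_x = \top\bot$ if $x$ is true and $P_x = \bot\top$ if $x$ is false. On the receive side, the only admissible orderings of the three clause cores (the first sub-block of $B^\bot_j$ before the second, with $B^\top_j$ in one of three slots) produce exactly the three value sequences $\top\bot\,\bot\top\,\bot\top$, $\bot\top\,\top\bot\,\bot\top$, and $\bot\top\,\bot\top\,\top\bot$. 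The FIFO/value constraint on $C_j$ forces the sent and received sequences to coincide, and direct inspection shows $P_a P_b P_c$ equals one of these three sequences iff exactly one of $x_a, x_b, x_c$ is true. This delivers both directions: a 1-in-3 assignment lets me schedule the variable pairs and clause cores to realize the matching sequence (all remaining constraints being vacuous since the channels are capacity-unbounded), and conversely any concretization forces each clause to have exactly one true literal.

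I expect the main obstacle to be the rigorous transfer of the atomicity gadget to the value-based $\vch$ setting together with the resulting global lock-step argument, i.e.\ items (ii)--(iii). FIFO is a global constraint on each of the shared channels $\lk_1,\lk_2,\alpha$, so I must argue carefully that the distinct per-block values genuinely prevent messages of different blocks from interleaving on these channels, thereby isolating each gadget instance and chaining them in program order. Once this scheduling structure is secured, the per-clause combinatorial matching is routine.
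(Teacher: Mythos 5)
Your proposal is correct and follows essentially the same route as the paper's proof: extract the assignment from the relative order of $A^\top_i$ and $A^\bot_i$ (forced to be well-defined by the $\lk_1,\lk_2$ atomicity gadget), use the $\alpha$-handshakes to serialize the variable pairs in index order ahead of the clause phase, and let FIFO value-matching on each $C_j$ equate the sent pattern $P_aP_bP_c$ with one of the three admissible receive patterns, which holds iff exactly one literal is true. If anything, you make the lock-step/contiguity structure more explicit than the paper's proof does, but the decomposition and the per-clause combinatorial check are the same.
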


Finally, our reduction takes $O(\numVar + \numClause)$ time, thereby concluding \thmref{vch-two-threads-hardness}.
}

\applabel{proof-np-t-k=unbounded}



\vchTwoThreadsNPHard*
\begin{proof}
We prove each direction separately.

    \myparagraph{Correctness (Satisfiability $\Rightarrow$ Consistency)} 
Given an assignment that satisfies $\psi$, 
we encode a concretization $\trace$ of $\tuple{\AbstractExecution, \cpFunc}$ as following. 
\begin{equation}
    \trace = A_1 \cdots A_n \cdot B_1 \cdots B_m
    \nonumber
\end{equation}
where $A_i$ is an interleaving of $A^\top_i, A^\bot_i$ 
and $B_j$ is an interleaving of $B^\top_j, B^\bot_j$. 
Moreover, if $x_i$ is assigned to be true, 
then 
\begin{equation}
    \begin{aligned}
        A_i &= \wt(\alpha, v_i^3) \cdot \wt(\alpha, v_i^4) \cdot \rd(\alpha, v_i^3) \cdot \rd(\alpha, v_i^4) \\
        & \cdot \wt(\lk_1, v_i^1) \cdot \wt(\lk_2, v_i^1) \cdot \rd(\lk_2, v_i^1) \cdot \wt(C_{i, 1}, {\top}) \cdots \wt(C_{i, k_i}, {\top}) \cdot \rd(\lk_1, v_i^1) \; (\s{from} \; \thread_{\top}) \\
        & \cdot \wt(\lk_2, v_i^2) \cdot \wt(\lk_1, v_i^2) \cdot \rd(\lk_1, v_i^2) \cdot \wt(C_{i, 1}, {\bot}) \cdots \wt(C_{i, k_i}, {\bot}) \cdot \rd(\lk_2, v_i^2) \; (\s{from} \; \thread_{\bot})
        \nonumber
    \end{aligned}
\end{equation}
Otherwise if $x_i$ is assigned to be false, 
then 
\begin{equation}
    \begin{aligned}
        A_i &= \wt(\alpha, v_i^3) \cdot \wt(\alpha, v_i^4) \cdot \rd(\alpha, v_i^3) \cdot \rd(\alpha, v_i^4) \\
        & \cdot \wt(\lk_2, v_i^2) \cdot \wt(\lk_1, v_i^2) \cdot \rd(\lk_1, v_i^2) \cdot \wt(C_{i, 1}, {\bot}) \cdots \wt(C_{i, k_i}, {\bot}) \cdot \rd(\lk_2, v_i^2) \; (\s{from} \; \thread_{\bot}) \\
        & \cdot \wt(\lk_1, v_i^1) \cdot \wt(\lk_2, v_i^1) \cdot \rd(\lk_2, v_i^1) \cdot \wt(C_{i, 1}, {\top}) \cdots \wt(C_{i, k_i}, {\top}) \cdot \rd(\lk_1, v_i^1) \; (\s{from} \; \thread_{\top})
        \nonumber
    \end{aligned}
\end{equation}
For $B_j$, we sort the variables in clause $C_j$ by the variable index. 
Then there are three possibilities, 
i.e., the variable assigned to be true can be the first, second or third variable in $C_j$. 
If it is the first one, then 
\begin{equation}
    \begin{aligned}
        B_j &= \wt(\alpha, w_j^4) \cdot  \wt(\alpha, w_j^5) \cdot \rd(\alpha, w_j^4) \cdot \rd(\alpha, w_j^5) \\
        & \cdot \wt(\lk_1, w_j^1) \cdot \wt(\lk_2, w_j^1) \cdot \rd(\lk_2, w_j^1) \cdot \rd(C_j, {\top}) \cdot \rd(C_j, {\bot}) \cdot \rd(\lk_1, w_j^1) \; (\s{from} \; \thread_{\top}) \\
        & \cdot \wt(\lk_2, w_j^2) \cdot \wt(\lk_1, w_j^2) \cdot \rd(\lk_1, w_j^2) \cdot \rd(C_j, {\bot}) \cdot \rd(C_j, {\top}) \cdot \rd(\lk_2, w_j^2) \; (\s{from} \; \thread_{\bot}) \\ 
        & \cdot \wt(\lk_2, w_j^3) \cdot \wt(\lk_1, w_j^3) \cdot \rd(\lk_1, w_j^3) \cdot \rd(C_j, {\bot}) \cdot \rd(C_j, {\top}) \cdot \rd(\lk_2, w_j^3) \; (\s{from} \; \thread_{\bot})
        \nonumber
    \end{aligned}
\end{equation}
If it is the second one, then 
\begin{equation}
    \begin{aligned}
        B_j &= \wt(\alpha, w_j^4) \cdot  \wt(\alpha, w_j^5) \cdot \rd(\alpha, w_j^4) \cdot \rd(\alpha, w_j^5) \\
        & \cdot \wt(\lk_2, w_j^2) \cdot \wt(\lk_1, w_j^2) \cdot \rd(\lk_1, w_j^2) \cdot \rd(C_j, {\bot}) \cdot \rd(C_j, {\top}) \cdot \rd(\lk_2, w_j^2) \; (\s{from} \; \thread_{\bot}) \\ 
        & \cdot \wt(\lk_1, w_j^1) \cdot \wt(\lk_2, w_j^1) \cdot \rd(\lk_2, w_j^1) \cdot \rd(C_j, {\top}) \cdot \rd(C_j, {\bot}) \cdot \rd(\lk_1, w_j^1) \; (\s{from} \; \thread_{\top}) \\
        & \cdot \wt(\lk_2, w_j^3) \cdot \wt(\lk_1, w_j^3) \cdot \rd(\lk_1, w_j^3) \cdot \rd(C_j, {\bot}) \cdot \rd(C_j, {\top}) \cdot \rd(\lk_2, w_j^3) \; (\s{from} \; \thread_{\bot})
        \nonumber
    \end{aligned}
\end{equation}
If it is the third one, then 
\begin{equation}
    \begin{aligned}
        B_j &= \wt(\alpha, w_j^4) \cdot  \wt(\alpha, w_j^5) \cdot \rd(\alpha, w_j^4) \cdot \rd(\alpha, w_j^5) \\
        & \cdot \wt(\lk_2, w_j^2) \cdot \wt(\lk_1, w_j^2) \cdot \rd(\lk_1, w_j^2) \cdot \rd(C_j, {\bot}) \cdot \rd(C_j, {\top}) \cdot \rd(\lk_2, w_j^2) \; (\s{from} \; \thread_{\bot}) \\ 
        & \cdot \wt(\lk_2, w_j^3) \cdot \wt(\lk_1, w_j^3) \cdot \rd(\lk_1, w_j^3) \cdot \rd(C_j, {\bot}) \cdot \rd(C_j, {\top}) \cdot \rd(\lk_2, w_j^3) \; (\s{from} \; \thread_{\bot}) \\
         & \cdot \wt(\lk_1, w_j^1) \cdot \wt(\lk_2, w_j^1) \cdot \rd(\lk_2, w_j^1) \cdot \rd(C_j, {\top}) \cdot \rd(C_j, {\bot}) \cdot \rd(\lk_1, w_j^1) \; (\s{from} \; \thread_{\top})
        \nonumber
    \end{aligned}
\end{equation}

The $\po{}$ and capacity constraints are clearly satisfied. 
We now argue the value constraints are also satisfied. 
For $\alpha, \lk_1, \lk_2$, the value constraints are satisfied in each $A_i, B_j$. 
Now for each $C_j$,  we consider the value sent to this channel, 
and claim it is one of the three situations, 
i.e., 
[$\top, \bot, \bot, \top, \bot, \top$] (first literal in $C_j$ is true), 
[$\bot, \top, \top, \bot, \bot, \top$](second literal in $C_j$ is true),  
[$\bot, \top, \bot, \top, \top, \bot$](third literal in $C_j$ is true). 
This is because every clause has distinct variables and we schedule $A_i$ sequentially. 
This value pattern is exactly matched by $B_j$.

\myparagraph{Correctness (Consistency $\Rightarrow$ Satisfiability)} 
Now we show if $\tuple{\AbstractExecution, \cpFunc}$ is consistent, 
then $\psi$ is 1-in-3 satisfiable. 
Given a concretization $\trace$ of $\tuple{\AbstractExecution, \cpFunc}$, 
we assign values for each $x_i$ as following. 
Since $A^\top_i, A^\bot_i$ both contain send events to $C_{i, 1}, \dots, C_{i, k_i}$, 
which are protected by channel $\lk_1$ and $\lk_2$,  
these send events should be executed atomically (see \figref{lock-demo-unbounded} for explanation). 
That is, for encoded events of each variable $x_i$, 
either all send events with value $\top$ are before all send events with value $\bot$ or 
all send events with value $\bot$ are before all send events with value $\top$. 
We assign $x_i$ to be true iff in $\trace$, 
all send events with value $\top$ are before all send events with value $\bot$. 

Now we prove that this assignment satisfies that in $\psi$, 
there is one and only one variable in an arbitrary clause $C_j$ being assigned true. 
There are totally six messages being sent to each channel $C_j$. 
That is, encoded events for each variable in clause $C_j$ will send two messages to channel $C_j$ 
and there are three variables in clause $C_j$. 
We consider the value sending to channel $C_j$, 
and can observe the $k$-th and $k+1$-th value are either $[\bot, {\top}]$ or $[\top, {\bot}]$ for all $k = 1, 3, 5$, 
because a variable will send both $\top, \bot$ once and $C_j$ has no duplicated variables. 
If $x_i$ is assigned to be true, then it corresponds to a message sequence of $[\top, {\bot}]$, 
and otherwise if $x_i$ is assigned to be false, then it corresponds to a message sequence of$[\bot, {\top}]$. 
In $B^\top_j, B^\bot_j$, 
we require in the three message sequences, 
exactly one of them should be $[\top, {\bot}]$ and the other two should be $[\bot, {\top}]$, 
which guarantees exactly one of the three variables in clause $C_j$ is assigned to be true. 
Therefore, $\psi$ is 1-in-3 satisfiable. 
\end{proof}


\subsection{Hardness with $1$ Channel}
\seclabel{lower-vch-chanel=1-cap=0-or-cap=1}
Finally, in this section we prove the hardness of $\vch$ even when threads communicate over a single channel, which can be either synchronous or have capacity $1$ (as per \thmref{vch-one-chan-hardness}).

\myparagraph{Overview}{
Our reduction is from positive 1-in-3 SAT.
Given $\psi$, we construct a corresponding $\vch$ instance $\tuple{\AbstractExecution, \cpFunc}$ with only one channel (either synchronous, or asynchronous with capacity $1$), where events in $\AbstractExecution$ comprise two phases.
The first phase guesses an assignment of the propositional variables of $\psi$, 
while the second phase verifies that every clause satisfies the 1-in-3 property, and executes residual events from the first phase.
The construction works for both when the unique channel is synchronous and when it has capacity $1$).
}


\begin{figure}[t]
\centering
\begin{subfigure}[b]{0.40\textwidth}
\centering
\newcommand{\xstep}{2.0}
\newcommand{\ystep}{-0.7}
\newcommand{\height}{0.5}
\newcommand{\wid}{1.5}
\scalebox{0.73}{
\begin{tikzpicture}
    \node (n1) at (0 * \xstep, 0 * \ystep) [event] {$\wtJustVal(x_1)$};
    \node (n1) at (0 * \xstep, 1 * \ystep) [event] {$\dots$};
    \node (n1) at (0 * \xstep, 2 * \ystep) [event] {$\wtJustVal(x_{\numVar})$};

    \node (n1) at (0.5 * \xstep, 4 * \ystep) [event] { $\wtJustVal(\Bar{x}_1)$};
    \node (n1) at (0.5 * \xstep, 5 * \ystep) [event] { $\dots$};
    \node (n1) at (0.5 * \xstep, 6 * \ystep) [event] { $\wtJustVal(\Bar{x}_{\numVar})$};

    \node (n1) at (1.3 * \xstep, 0 * \ystep) [event] {$\rdJustVal_1(\gamma)$};
    \node (n1) at (1.3 * \xstep, 1 * \ystep) [event] {$\dots$};
    \node (n1) at (1.3 * \xstep, 2 * \ystep) [event] {$\rdJustVal_{\numVar}(\gamma)$};
    \node (n1) at (1.3 * \xstep, 3 * \ystep) [event] {$\rdJustVal_1(\beta)$};
    \node (n1) at (1.3 * \xstep, 4 * \ystep) [event] {$\dots$};
    \node (n1) at (1.3 * \xstep, 5 * \ystep) [event] {$\rdJustVal_\numClause(\beta)$};
    \node (n1) at (1.3 * \xstep, 6 * \ystep) [event] {$\wtJustVal_1(\alpha)$};
    \node (n1) at (1.3 * \xstep, 7 * \ystep) [event] {$\dots$};
    \node (n1) at (1.3 * \xstep, 8 * \ystep) [event] {$\wtJustVal_{2{\numVar} + \numClause + 1}(\alpha)$};

    \node (n1) at (2.3 * \xstep, 0 * \ystep) [event] {$\rdJustVal_{2{\numVar} + \numClause + 1}(\alpha)$};
    \node (n1) at (2.3 * \xstep, 1 * \ystep) [event] {$\rdJustVal_1(\gamma)$};
    \node (n1) at (2.3 * \xstep, 2 * \ystep) [event] {$\dots$};
    \node (n1) at (2.3 * \xstep, 3 * \ystep) [event] {$\rdJustVal_{\numVar}(\gamma)$};

    \node[threadName] at (0 * \xstep, -1.0 * \ystep) {$\thread_\top$};
    \node[threadName] at (0.5 * \xstep, -1.0 * \ystep) {$\thread_\bot$};
    \node[threadName] at (1.3 * \xstep, -1.0 * \ystep) {$\thread_{\sf conn}$};
    \node[threadName] at (2.3 * \xstep, -1.0 * \ystep) {$\thread_{\gamma}$};
    
    \begin{scope}[on background layer]
        \draw [thread] (0 * \xstep, -0.6 * \ystep) -- (0 * \xstep, 8.8 * \ystep);
        \draw [thread] (0.5 * \xstep, -0.6 * \ystep) -- (0.5 * \xstep, 8.8 * \ystep);
        \draw [thread] (1.3 * \xstep, -0.6 * \ystep) -- (1.3 * \xstep, 8.8 * \ystep);
        \draw [thread] (2.3 * \xstep, -0.6 * \ystep) -- (2.3 * \xstep, 8.8 * \ystep);
    \end{scope}
    
\end{tikzpicture}
}
\figlabel{vch-one-channel-reduction-helper}
\caption{Auxiliary threads}
\end{subfigure}
\hfill
\begin{subfigure}[b]{0.36\textwidth}
\centering
\newcommand{\xstep}{1.4}
\newcommand{\ystep}{-0.7}
\newcommand{\height}{0.5}
\newcommand{\wid}{1.7}
\scalebox{0.73}{
\begin{tikzpicture}
    \node (n1) at (0 * \xstep, 0 * \ystep) [event] {$\rdJustVal(x_i)$};
    \node (n1) at (0 * \xstep, 1 * \ystep) [event] {$\rdJustVal(\Bar{x}_i)$};
    \node (n1) at (0 * \xstep, 2 * \ystep) [event] {$\wtJustVal(C_{i,1})$};
    \node (n1) at (0 * \xstep, 3 * \ystep) [event] {$\dots$};
    \node (n1) at (0 * \xstep, 4 * \ystep) [event] {$\wtJustVal(C_{i,f_i})$};
    \node (n1) at (0 * \xstep, 5 * \ystep) [event] {$\wtJustVal(\gamma)$};

    \node (n1) at (1 * \xstep, 6 * \ystep) [event] {$\rdJustVal(\Bar{x}_i)$};
    \node (n1) at (1 * \xstep, 7 * \ystep) [event] { $\rdJustVal(x_i)$};
    \node (n1) at (1 * \xstep, 8 * \ystep) [event] { $\wtJustVal(\gamma)$};

    \node (n1) at (2 * \xstep, 0 * \ystep) [event] {$\rdJustVal_{2i-1}(\alpha)$};
    \node (n1) at (2 * \xstep, 1 * \ystep) [event] {$\wtJustVal(x_i)$};

    \node (n1) at (3 * \xstep, 4 * \ystep) [event] {$\rdJustVal_{2i}(\alpha)$};
    \node (n1) at (3 * \xstep, 5 * \ystep) [event] {$\wtJustVal(\Bar{x}_i)$};

    \node[threadName] at (0 * \xstep, -1.0 * \ystep) {$\thread_{i,1}$};
    \node[threadName] at (1 * \xstep, -1.0 * \ystep) {$\thread_{i,2}$};
    \node[threadName] at (2 * \xstep, -1.0 * \ystep) {$\thread_{i,3}$};
    \node[threadName] at (3 * \xstep, -1.0 * \ystep) {$\thread_{i,4}$};

    \begin{scope}[on background layer]
        \draw [thread] (0 * \xstep, -0.6 * \ystep) -- (0 * \xstep, 8.8 * \ystep);
        \draw [thread] (1 * \xstep, -0.6 * \ystep) -- (1 * \xstep, 8.8 * \ystep);
        \draw [thread] (2 * \xstep, -0.6 * \ystep) -- (2 * \xstep, 8.8 * \ystep);
        \draw [thread] (3 * \xstep, -0.6 * \ystep) -- (3 * \xstep, 8.8 * \ystep);
    \end{scope}
\end{tikzpicture}
}
\figlabel{vch-one-channel-reduction-var}
\caption{Threads for variable $x_i$}
\end{subfigure}
\hfill
\begin{subfigure}[b]{0.22\textwidth}
\newcommand{\xstep}{1.6}
\newcommand{\ystep}{-0.7}
\newcommand{\height}{0.5}
\newcommand{\wid}{1.8}
\centering
\scalebox{0.73}{
\begin{tikzpicture}
    \node (n1) at (0 * \xstep, 0 * \ystep) [event] {$\rdJustVal(C_j)$};
    \node (n1) at (0 * \xstep, 1 * \ystep) [event] {$\wtJustVal(\beta)$};

    \node (n1) at (1 * \xstep, 4 * \ystep) [event] {$\rdJustVal_{2n + j}(\alpha)$};
    \node (n1) at (1 * \xstep, 5 * \ystep) [event] {$\rdJustVal(C_j)$};
    \node (n1) at (1 * \xstep, 6 * \ystep) [event] {$\rdJustVal(C_j)$};
    
    \node[threadName] at (0 * \xstep, -1.0 * \ystep) {$\thread_{j}^1$};
    \node[threadName] at (1 * \xstep, -1.0 * \ystep) {$\thread_{j}^2$};

    \begin{scope}[on background layer]
        \draw [thread] (0 * \xstep, -0.6 * \ystep) -- (0 * \xstep, 8.8 * \ystep);
        \draw [thread] (1 * \xstep, -0.6 * \ystep) -- (1 * \xstep, 8.8 * \ystep);
    \end{scope}
    \end{tikzpicture}
}
\figlabel{vch-one-channel-reduction-clause}
\caption{Threads for clause $C_j$}
\end{subfigure}
\caption{An example of the reduction from positive one-in-three satisfiability. }
\figlabel{one-channel-reduction}
\end{figure}

\myparagraph{Reduction}{
\figref{one-channel-reduction} illustrates this construction.
Given a formula $\psi$ with $\numVar$ variables and $\numClause$ clauses, 
$\tuple{\AbstractExecution, \cpFunc}$ has $4 + 4\numVar + 2\numClause$ threads
$\set{\thread_\top, \thread_\bot, \thread_\gamma, \thread_{\sf conn}}$
$\uplus \setpred{\thread_{i, 1}, \thread_{i, 2}, \thread_{i, 3}, \thread_{i, 4}}{1 \leq i \leq \numVar}$
$\uplus \setpred{\thread_{j}^1, \thread_{j}^2}{1 \leq j \leq \numClause}$.
Since we use a single channel $\ch$, we will omit explicitly mention it
and use the shorthand $\wtJustVal(\val)$ or $\rdJustVal(\val)$ to denote
send and receive events on $\ch$ with value $\val$.
We first describe the sequences of the $4$ auxiliary threads:
\begin{align*}
\begin{array}{l}
\thread_\top = \wtJustVal(x_1) \cdots \wtJustVal(x_{\numVar})
\quad
\thread_\bot = \wtJustVal(\Bar{x}_1) \cdots \wtJustVal(\Bar{x}_{\numVar})
\quad
\thread_{\gamma} = \rdJustVal_{2\numVar + \numClause + 1}(\alpha) \cdot \rdJustVal_1(\gamma) \cdots \rdJustVal_{\numVar}(\gamma) \\
\thread_{\sf conn} = \rdJustVal_1(\gamma) \cdots \rdJustVal_{\numVar}(\gamma) \cdot \rdJustVal_1(\beta) \cdots \rdJustVal_{\numClause}(\beta) \cdot  \wtJustVal_1(\alpha) \cdots \wtJustVal_{2\numVar + \numClause + 1}(\alpha) \\
\end{array}
\end{align*}
Here, $\alpha, \beta, \gamma, x_1, \ldots, x_{n_1}, \Bar{x}_{1}, \ldots, \Bar{x}_{\numVar}$
are distinct values.
Next, we describe the content of thread the four threads corresponding to each variable $x_i$ in $\psi$:
\begin{align*}
\begin{array}{rclcrcl}
\thread_{i, 1} &=& \rdJustVal(x_i) \cdot \rdJustVal(\Bar{x}_i) \cdot \wtJustVal(C_{i, 1}) \cdots \wtJustVal(C_{i, f_i}) \cdot \wtJustVal(\gamma) 
&
\quad
&
\thread_{i, 2} &=& \rdJustVal(\Bar{x}_i) \cdot \rdJustVal(x_i) \cdot \wtJustVal(\gamma) \\
\thread_{i, 3} &=& \rdJustVal_{2i-1}(\alpha) \cdot \wtJustVal(x_i)
&
\quad
&
\thread_{i, 4} &=& \rdJustVal(\alpha)_{2i} \cdot \wtJustVal(\Bar{x}_i)
\end{array}
\end{align*}
where $f_i$ is the frequency of $x_i$ in $\psi$,
and $C_{i, p}$ is the clause in which $x_i$ appears for the $p^\text{th}$ time. 
Finally, we have two threads for each clause $C_j$:
\begin{align*}
\begin{array}{ccc}
\thread_{j}^1 = \rdJustVal(C_j) \cdot \wtJustVal(\beta)
&
\quad
& \thread_{j}^2 = \rdJustVal_{2\numVar + j}(\alpha) \cdot \rdJustVal(C_j) \cdot \rdJustVal(C_j)
\end{array}
\end{align*}

The following lemma states the correctness of the construction.

\begin{restatable}{lemma}{vchOneChannelNPHard}
\lemlabel{vch-one-channel-np-hard}
$\tuple{\AbstractExecution, \cpFunc}$ is consistent iff $\psi$ is 1-in-3 satisfiable. 
\end{restatable}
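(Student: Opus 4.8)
The plan is to prove the two directions of the equivalence separately, but first to normalize the shape of any concretization. Since $\tuple{\AbstractExecution, \cpFunc}$ uses a single channel $\ch$ with $\cp{\ch} \le 1$, I would begin by observing that every well-formed concretization is forced into a strictly alternating sequence $\wt_1 \cdot \rd_1 \cdot \wt_2 \cdot \rd_2 \cdots$ of matched send/receive pairs: with capacity $1$ the prefix inequality $|\proj{\prefix}{\rcv(\ch)}| \le |\proj{\prefix}{\snd(\ch)}| \le |\proj{\prefix}{\rcv(\ch)}| + 1$ forbids two consecutive sends or two consecutive receives, while with capacity $0$ the synchronous rule places each receive immediately after its matching send. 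In both cases the value constraint collapses to $\ValueOf{\wt_k} = \ValueOf{\rd_k}$, so a concretization is exactly a pairing of sends with equal-value receives that respects $\po{}$ (and, for $\MaxCapacity=0$, pairs events of distinct threads, a property I would verify holds for every pair the construction can form). This common normal form lets me treat $\MaxCapacity=0$ and $\MaxCapacity=1$ uniformly.

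Next I would pin down the \emph{phase structure} enforced by the token value $\alpha$. The thread $\thread_{\sf conn}$ emits its $2\numVar+\numClause+1$ sends of $\alpha$ only after consuming $\numVar$ copies of $\gamma$ and $\numClause$ copies of $\beta$, and every $\alpha$-receive in $\thread_{i,3}, \thread_{i,4}, \thread_j^2, \thread_\gamma$ is the first event of its thread. Hence every event guarded by an $\alpha$-receive (``phase~$2$'') is ordered after all $\gamma$- and $\beta$-producing events (``phase~$1$''). I would then establish the key counting facts: in phase~$1$ there is exactly one send of value $x_i$ (from $\thread_\top$) and one of $\bar{x}_i$ (from $\thread_\bot$), whereas the second copies arrive only in phase~$2$ (from $\thread_{i,3}, \thread_{i,4}$). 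Because $\thread_{i,1}$ reads $x_i$ then $\bar{x}_i$ while $\thread_{i,2}$ reads $\bar{x}_i$ then $x_i$, exactly one of these two threads can reach its final $\wt(\gamma)$ in phase~$1$; I take $x_i$ to be \textbf{true} iff it is $\thread_{i,1}$ that completes in phase~$1$, so that its clause-writes $\wt(C_{i,1}),\dots,\wt(C_{i,f_i})$ occur before any $\alpha$ is sent. This yields a well-defined assignment from any concretization.

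For the backward direction (consistency $\Rightarrow$ 1-in-3 satisfiability) I would analyze each clause value $C_j$, which is sent exactly three times, once by the $\thread_{i,1}$ of each variable of $C_j$. Since $\thread_j^1 = \rd(C_j)\cdot\wt(\beta)$ must deliver its $\beta$ before $\thread_{\sf conn}$ can leave phase~$1$, at least one $C_j$-send must occur in phase~$1$; and since the $\alpha$-guarded $\thread_j^2$ must receive $C_j$ twice in phase~$2$, and capacity-$1$ forces each phase-$1$ send to be received by the only phase-$1$ reader $\thread_j^1$, at most one $C_j$-send can occur in phase~$1$. Thus exactly one $C_j$-send is in phase~$1$ and two in phase~$2$, and a phase-$1$ $C_j$-send exists iff some variable of $C_j$ is true, giving exactly one true literal per clause. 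For the forward direction I reverse this: given a 1-in-3 satisfying assignment I construct the alternating trace explicitly, scheduling in phase~$1$ the assignment gadget and clause-writes of every true variable (so each clause emits exactly one phase-$1$ $C_j$), draining the $\numVar$ phase-$1$ $\gamma$'s and $\numClause$ $\beta$'s into $\thread_{\sf conn}$, then releasing the $\alpha$ tokens and completing all phase-$2$ reads; I verify $\po{}$, value equality, and cross-thread pairing pair-by-pair.

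The main obstacle I anticipate is the rigidity argument in the backward direction: showing that \emph{every} concretization, not merely the intended ones, must realize the two-phase schedule and must complete exactly one of $\thread_{i,1},\thread_{i,2}$ per variable in phase~$1$. This requires carefully chaining the single-channel FIFO/value and capacity constraints with the $\alpha/\beta/\gamma$ counting so that no out-of-phase interleaving can slip an extra $C_j$-write into phase~$1$ or satisfy a thread's reads from the wrong send. The alternation normal form derived in the first step is precisely what makes this chain of forced orderings tractable to verify.
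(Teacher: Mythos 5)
Your proposal is correct and takes essentially the same route as the paper's proof: the two-phase structure enforced by the $\alpha$/$\beta$/$\gamma$ tokens on the single capacity-$\leq 1$ channel, the pigeonhole argument that exactly one of $\thread_{i,1},\thread_{i,2}$ completes per variable before the first $\alpha$-send, and the per-clause counting of available $C_j$-receivers to extract the 1-in-3 property. One phrasing slip to fix when writing it up: phase-2 events need not follow \emph{all} $\gamma$-producing events (the $\numVar$ $\gamma$-sends consumed by $\thread_\gamma$ occur in phase 2); the correct --- and sufficient --- statement, which your counting facts actually rely on, is that the first $\alpha$-send is preceded by $\numVar$ $\gamma$-sends and $\numClause$ $\beta$-sends.
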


Overall, $\tuple{\AbstractExecution, \cpFunc}$ has $O(\numVar + \numClause)$ events, 
concluding \thmref{vch-one-chan-hardness}. 

}


\applabel{proof-np-m=1}

\vchOneChannelNPHard*

\begin{proof}
We prove each direction separately.

    \myparagraph{Proof of correctness (Satisfiability $\Rightarrow$ Consistency)}
Given an assignment that satisfies $\psi$, we sketch the concretization $\trace$ as following. 
In general $\trace = \trace_1 \circ \trace_2$, 
where $\trace_1$ encodes the execution of $\thread_{\top}, \thread_{\bot}, \thread_{j_1}$ for all $1 \le j \le m$ and one of $\thread_{i, 1}, \thread_{i, 2}$ for each $1 \le i \le n$, 
and $\trace_2$ is a sequence of the rest events. 
\begin{equation}
    \trace_1 = S_1 \circ \dots \circ S_{\numVar}
    \nonumber
\end{equation}
where $S_i$ is a sequence of events we encode for variable $x_i$. 
For convenience, when multiple threads contain send or receive events with the same value, then we denote events as $\wtJustVal(a, t)$ to show that this is an event $\wtJustVal(a)$ from thread $t$. 
If $x_i$ is assigned to be true, 
let clauses $C_{i, 1}, \dots, C_{i, f_i}$ be all the clauses $x_i$ appears in. 
\begin{equation}
\begin{aligned}
    S_i &= \wtJustVal(x_i, \thread_{\top}) \cdot \rdJustVal(x_i, \thread_{i, 1}) \cdot \wtJustVal(\Bar{x}_i, \thread_{\bot}) \cdot \rdJustVal(\Bar{x}_i, \thread_{i, 1}) \\
    & \cdot 
    \wtJustVal(C_{i, 1}, \thread_{i, 1}) \cdot 
    \rdJustVal(C_{i, 1}, \thread_{i_1}^1) \cdots 
    \wtJustVal(C_{i, f_i}, \thread_{i, 1}) \cdot \rdJustVal(C_{i, f_i}, \thread_{f_i}^1) \cdot \wtJustVal(\gamma, \thread_{i, 1}) \cdot \rdJustVal(\gamma, \thread_{conn})
    \nonumber
\end{aligned}
\end{equation}
Otherwise 
\begin{equation}
    S_i = \wtJustVal(\Bar{x}_i, \thread_{\bot}) \cdot \rdJustVal(\Bar{x}_i, \thread_{i, 2}) \cdot \wtJustVal(x_i, \thread_{\top}) \cdot \rdJustVal(x_i, \thread_{i, 2}) \cdot \wtJustVal(\gamma, \thread_{i, 2}) \cdot \rdJustVal(\gamma, \thread_{conn})
    \nonumber
\end{equation}

Now we describe the details of $\trace_2 = B_0 \circ B_1 \circ \dots \circ B_n$, 
where $B_0$ is a sequence of events to link two phases and $B_i$ ($i>0$) is the encoded events for $x_i$. 
\begin{equation}
\begin{aligned}
    B_0 &= \wtJustVal(\beta, \thread^1_{1}) \cdot \rdJustVal_1(\beta) \cdots \wtJustVal(\beta, \thread^1_{\numClause}) \cdot \rdJustVal_\numClause(\beta) \\
    & \cdot \wtJustVal_1(\alpha) \cdot \rdJustVal_1(\alpha) \cdots \wtJustVal_{2\numVar+\numClause+1}(\alpha) \cdot \rdJustVal_{2\numVar+\numClause+1}(\alpha)
    \nonumber
\end{aligned}
\end{equation}
Here $\rcv(\alpha)$ are just the first event in every $\thread_{i, 3}, \thread_{i, 4}, \thread_j^2, \thread_\gamma$. 
Any permutations of these events suffice. 

If $x_i$ is assigned to be true, then 
\begin{equation}
\begin{aligned}
    B_i &= \wtJustVal(\Bar{x}_i, \thread_{i, 4}) \cdot 
    \rdJustVal(\Bar{x}_i, \thread_{i, 2}) \cdot \wtJustVal(x_i, \thread_{i, 3}) \cdot \rdJustVal(x_i, \thread_{i, 2}) \cdot \wtJustVal(\gamma, \thread_{i, 2}) \cdot \rdJustVal(\gamma, \thread_{\gamma}) 
    \nonumber
\end{aligned}
\end{equation}
Otherwise, 
\begin{equation}
\begin{aligned}
    B_i &= \wtJustVal(x_i, \thread_{i, 3}) \cdot \rdJustVal(x_i, \thread_{i, 1}) \cdot \wtJustVal(\Bar{x}_i, \thread_{i, 4}) \cdot \rdJustVal(\Bar{x}_i, \thread_{i, 1}) 
    \cdot \wtJustVal(C_{i, 1}, \thread_{i, 1}) \cdot \rdJustVal(C_{i, 1}, \thread_{i_1}^2) \\
    & \cdot \wtJustVal(C_{i, 2}, \thread_{i, 1}) \cdot \rdJustVal(C_{i, 2}, \thread_{i_2}^2) \cdot \wtJustVal(C_{i, 3}, \thread_{i, 1}) \cdot \rdJustVal(C_{i, 3}, \thread_{i_3}^2), \wtJustVal(\gamma, \thread_{i, 1}) \cdot \rdJustVal(\gamma, \thread_{\gamma})
    \nonumber
\end{aligned}
\end{equation}
The program order is clearly satisfied. 
For value constraints, we can observe that in $\trace$, 
a send event is immediately followed by a receive event with the same value, 
so that $\trace$ must satisfy value constraints. 
Moreover, we note the above completion is also valid when the channel is synchronous, 
because a send event is immediately followed by a receive event with the same value from another thread. 

\myparagraph{Proof of correctness (Consistency $\Rightarrow$ Satisfiability)}
Given a completion $\trace$, 
we assign an arbitrary variable $x_q = T$ iff in $\trace$, 
$\wtJustVal(x_q)$ in $\thread_{\top}$ is ordered before $\wtJustVal(\Bar{x}_q)$ in $\thread_{\bot}$. 
Now we prove this assignment makes $\psi$ one-in-three satisfiable. 

Firstly, we show some simple observations. 
Because of the value $\alpha, \beta$, 
one must execute $\rdJustVal(C_j)$ in $\thread^1_j$ before all events in $\thread_{j}^2, \thread_{i, 3}, \thread_{i, 4}$ for all $i, j$. 
Then we consider the value $\gamma$, 
and notice that only $\thread_{i, 1}, \thread_{i, 2}$ send value $\gamma$ once per thread. 
This means, 
in order to execute the events in $\thread_{i, 3}$, $\thread_{i, 4}$, $\thread_{j}^2$ for all $i, j$, 
we have to fully execute at least $n$ threads among $\thread_{i, 1}, \thread_{i, 2}$ for all $i$. 
For each fixed $i$, 
we must execute exactly one of $\thread_{i, 1}$, $\thread_{i, 2}$, 
because there is only one $\wtJustVal(x_i), \wtJustVal(\Bar{x}_i)$ in $\thread_{\top}, \thread_{\bot}$ and the other two are in thread $\thread_{i, 3}, \thread_{i, 4}$. 

Secondly, we show $\psi$ must be satisfied. 
Given the observations above, 
one must execute the sent event at least once with value $C_j$ for all $1 \le j \le m$ before $\thread_{j}^2, \thread_{i, 3}, \thread_{i, 4}$ can be executed. 
By our assignment, this means for each clause $C_j$, 
there is at least one variable in $C_j$ being assigned true, 
so that $C_j$ is satisfied. 

Thirdly, we show each clause is satisfied by exactly one variable. 
If more than one variable are assigned to be true in $C_j$, 
then value $C_j$ must be sent more than once before $\thread_{j}^2, \thread_{i, 3}, \thread_{i, 4}$ can be executed. 
However, because of value $\gamma, \alpha$, we cannot immediately receive the second (or third) $C_j$ value, 
which makes the $\vch$ instance not consistent. 
Therefore, exactly one literal per clause is assigned to be true. 
The same reasoning works for synchronous channel. 
\end{proof}

\section{Lower bounds for $\vchRf$}
\applabel{lower-bounds-vch-rf}
\subsection{Hardness with Asynchronous Channels of Capacity 1}

\applabel{proof-np-k=1}


\vchRfConstK*

\begin{proof}
We prove each direction separately.

\myparagraph{Correctness ($\vscRd \Rightarrow \vchRf$)}
If $\vscRd$ instance $\AbstractExecution = \tuple{\eventSet, \po{}, \rf{}}$ is consistent, then $\vchRf$ instance $\tuple{\AbstractExecution', \cpFunc', \rf{}'}$ is consistent. 
For a sequence of events $\pi = e_1 \cdots e_n$ in $\eventSet$, 
we define the mapping of $\pi$ using $M$ as $M(\pi) = M(e_1) \cdots M(e_n)$. 
Let $\reordering$ be a linear sequence concretizing $\AbstractExecution$, 
and we show $\trace = M(\reordering)$ is the concretization of $\tuple{\AbstractExecution', \cpFunc', \rf{}'}$. 
For convenience, we define the reverse map of $M$ as $M^{-1}$, 
where $M^{-1}(e) = f$ iff $e$ is in $M(f)$. 
That is, $M^{-1}$ maps a event $e$ in $\eventSet'$ back to the event $f \in \eventSet$, 
such that $e \in M(f)$. 

Firstly, we argue that $\trace$ respects $\po{}'$. 
Assuming $(e_1, e_2) \in \po{}'$ and $e_1$ is ordered after $e_2$ in $\trace$, 
then there are two possible situations. 
(1) $M^{-1}(e_1) = M^{-1}(e_2) = f$. 
This is impossible, 
because $\trace$ doesn't reorder events in $M(f)$. 
(2) $M^{-1}(e_1) \neq M^{-1}(e_2)$, 
then by definition of $\po{}'$, we have $(M^{-1}(e_1), M^{-1}(e_2)) \in \po{}$.  
In this case, $\trace$ should order $e_1$ before $e_2$, 
so that it's also impossible. 
Therefore, $\trace$ must respect $\po{}'$. 

Secondly, we argue the $\rf{}'$ is also satisfied. 
Assuming there is a receive event $\rd(\ch) \in \eventSet'$, 
it should observe $\wt(\ch)$, 
but turns out to observe the wrong send event $\wt'(\ch)$ in $\trace$. 
First, we argue $\ch$ cannot be $\lk$, 
because for each event $e \in \eventSet$, 
$M(e)$ contains exactly one send and its receiver to $\lk$, 
and $M(e)$ doesn't interleave with $M(e')$ in $\trace$ for all $e' \neq e$.  
Therefore, $\ch$ can only be $\ch_x^i$ for some register $x$ and index $i$. 
In this case, 
since every $\ch_x^i$ has capacity 1, 
we claim there is no event $\wtMem(x)$, 
such that $M^{-1}(\snd'(\ch)) \trord{\reordering} \wtMem(x) \trord{\reordering} M^{-1}(\rcv(\ch))$. 
Otherwise, there will be two continuous send events to $\ch_x^i$. 
Then $\reordering$ fails to meet the $\rf{}$ relation, 
because $(M^{-1}(\snd(\ch)), M^{-1}(\rcv(\ch))) \in \rf{}$, 
but $M^{-1}(\rcv(\ch))$ observes $M^{-1}(\snd'(\ch))$ in $\reordering$, 
which is impossible. 


Lastly, $\trace$ is well-formed. 
Indeed, by our construction, 
for each write event $\wtMem(x) \in \eventSet$ together with all read events observing $\wtMem(x)$, 
there will be exactly $m_x$ send and receive events. 
That is, we construct one send and receive event to each $\ch_x^i$. 
After executing all of them, $\ch_x^i$ will be empty again. 
Since the reads-from relation is satisfied, 
then $\trace$ should be well-formed.

\myparagraph{Correctness ($\vchRf \Rightarrow \vscRd$)}
Secondly, if $\tuple{\AbstractExecution', \cpFunc', \rf{}'}$ is consistent, 
then $\AbstractExecution$ is consistent. 
Let $\trace$ be a concretization of $\tuple{\AbstractExecution', \cpFunc', \rf{}'}$, 
we construct $\reordering$ as a concretization of $\AbstractExecution$ as following. 
We note that because of the channel $\lk$, 
every event sequence $M(e)$ should not interleave with each other for all $e \in \eventSet$ (see \figref{lock-demo-capacity-one} for explanation). 
Otherwise, there will be at least two continuous send events to channel $\lk$, 
which only has capacity 1. 
This implies we can map $\trace$ back into a serialized sequence $\reordering$ of $\eventSet$, 
s.t. $M(\reordering) = \trace$. 
We argue $\reordering$ is a valid concretization of ${\AbstractExecution}$. 

Firstly, we argue that $\po{}$ is satisfied. 
Assuming $(e_1, e_2) \in \po{}$ and $e_1$ is ordered after $e_2$ in $\reordering$, 
then it implies $M(e_1)$ should be ordered after $M(e_2)$ in $\trace$, 
which violates $\rf{}'$. 

Secondly, we argue $\rf{}$ is also satisfied. 
Assuming $\rdMem(x)$ should observe $\wtMem(x)$, 
but it observes $\wtMem'(x)$ in $\reordering$, 
we now consider the mapped event sequences in $\trace$. 
This implies one of the following two situations should happen. 
(1) $M(\wtMem(x)) \trord{\trace} M(\rdMem(x))$, 
which violates $\rf{}'$. 
(2) $M(\wtMem(x)) \trord{\trace} M(\wtMem'(x)) \trord{\trace} M(\rdMem(x))$. 
In this case, there will be two continuous send events to some channel $\ch_x^i$, 
which is impossible as well. 
Therefore $\rf{}$ must be satisfied and thus ${\AbstractExecution}$ is indeed consistent. 
\end{proof}



\subsection{Hardness with $3$ Threads and Small Channel Capacity}

\applabel{proof-np-t=3-k=2}

\seclabel{vch-rf-np-hard-t=3-k=2}
Here we show that $\vchRf$ is $\NP$-hard already with $3$ threads and maximum channel capacity $\MaxCapacity\leq 2$.

\myparagraph{Overview}{
Our reduction is from the 3SAT problem, and constructs
a $\vchRf$ instance 
$\tuple{\AbstractExecution, \cpFunc, \rf{}}$, 
where $\AbstractExecution = \tuple{\eventSet, \po{}}$ 
starting from a given 3CNF formula $\psi$,
such that $\tuple{\AbstractExecution, \cpFunc, \rf{}}$ is consistent iff $\psi$ is satisfiable. 
Let $\psi = C_1 \land C_2 \cdots C_{\numClause}$ be a conjunction of $\numClause$
clauses over $\numVar$ propositional variables $x_1, \ldots, x_\numVar$.
At a high level, ${\AbstractExecution}$ is structured in $2$
phases.
The first phase, divided into $\numVar$ sub-phases arranged sequentially, 
picks an assignment for each variable $x_i$. 
The second phase, divided into $\numClause$ sub-phases arranged sequentially,
encode the constraint that for clause $C_j$, the assignment to
at least one of three literals in $C_j$ was picked to be true in the first phase. 
\figref{reduction-fix-t-k} shows the schema of our hardness construction.
}


\begin{figure}[t]
\centering
\begin{subfigure}{0.20\textwidth}
\centering
\newcommand{\xstep}{0.8}
\newcommand{\ystep}{-0.5}
\newcommand{\height}{0.68}
\newcommand{\wid}{2.2}
\tikzstyle{var}=[event, draw=colorVar, rounded corners]
\tikzstyle{clause}=[event, draw=colorClause, rounded corners]
\scalebox{0.7}{
\begin{tikzpicture}
\begin{scope}[shift={(0,-0.5*\ystep)}]
\node[var] (t1-3) at (1 * \xstep, 0 * \ystep) {$x_1$};
\node[var] (t1-4) at (1 * \xstep, 2 * \ystep) {$\dots$};
\node[var] (t1-5) at (1 * \xstep, 4 * \ystep) {$x_{\numVar}$};
\node[clause] (t1-5) at (1 * \xstep, 6 * \ystep) {$C_1$};
\node[clause] (t1-5) at (1 * \xstep, 8 * \ystep) {$\dots$};
\node[clause] (t1-5) at (1 * \xstep, 10 * \ystep){$C_{\numClause}$};
\end{scope}

\node[threadName] at (0 * \xstep, -2.6 * \ystep) {$\thread_{1}$ };
\node[threadName] at (1 * \xstep, -2.6 * \ystep) {$\thread_{2}$ };
\node[threadName] at (2 * \xstep, -2.6 * \ystep) {$\thread_{3}$ };

\begin{scope}[on background layer]
\draw[thread] (0 * \xstep, -2.0 * \ystep) -- (0 * \xstep, 11.3 * \ystep);
\draw[thread] (1 * \xstep, -2.0 * \ystep) -- (1 * \xstep, 11.3 * \ystep);
\draw[thread] (2 * \xstep, -2.0 * \ystep) -- (2 * \xstep, 11.3 * \ystep);
\end{scope}
\end{tikzpicture}
}
\caption{Overall Scheme.}
\figlabel{reduction-fix-t-k-2-overall}
\end{subfigure}
\begin{subfigure}[b]{0.35\textwidth}
\centering
\newcommand{\xstep}{2.6}
\newcommand{\ystep}{-0.8}
\newcommand{\height}{0.6}
\newcommand{\wid}{1.9}
\newcommand{\memDistance}{1.6}
\newcommand{\bracketWid}{0.45}
\newcommand{\bracketHeight}{0.25}
\tikzstyle{memEvent}=[draw=none]
\scalebox{0.7}{
\begin{tikzpicture}

\draw[rounded corners, line width=0.4mm , color=colorVar, dashed]  (-0.45 * \xstep, -0.6 * \ystep) rectangle (1.7 * \xstep, 6.6 * \ystep) {};

\node[event] (n1-1) at (0 * \xstep, 0 * \ystep) {$\wt(\alpha)$};
\node[event] (n1-2) at (0 * \xstep, 1 * \ystep) {$\rd(\alpha)$};
\node[event] (n1-3) at (0 * \xstep, 2 * \ystep) {$\wt(\lk)$};
\node[event] (n1-4) at (0 * \xstep, 3 * \ystep) {$\wt_\top(\ch_i^1)$};
\node[event] (n1-5) at (0 * \xstep, 4 * \ystep) {$\dots$};
\node[event] (n1-6) at (0 * \xstep, 5 * \ystep) {$\wt_\top(\ch_i^{f_i})$};
\node[event] (n1-7) at (0 * \xstep, 6 * \ystep) {$\rd(\lk)$};

\node[event] (n2-1) at (1 * \xstep, 0 * \ystep) {$\wt(\alpha)$};
\node[event] (n2-2) at (1 * \xstep, 1 * \ystep) {$\rd(\alpha)$};
\node[event] (n2-3) at (1 * \xstep, 2 * \ystep) {$\wt(\lk)$};
\node[event] (n2-4) at (1 * \xstep, 3 * \ystep) {$\wt_\bot(\ch_i^1)$};
\node[event] (n2-5) at (1 * \xstep, 4 * \ystep) {$\dots$};
\node[event] (n2-6) at (1 * \xstep, 5 * \ystep) {$\wt_\bot(\ch_i^{f_i})$};
\node[event] (n2-7) at (1 * \xstep, 6 * \ystep) {$\rd(\lk)$};

\draw [memEvent] ($ (n1-1) + (-\memDistance, 0) $)  to node [midway,fill=white] {$A^r_i$} ($ (n1-7) + (-\memDistance, 0) $);
    \draw[line width=1.5pt] ($(n1-1) + (-\memDistance - 0.5 * \bracketWid, 0.5*\height - \bracketHeight)$) rectangle ($(n1-1) + (-\memDistance + 0.5 * \bracketWid, 0.5*\height)$);    
    \draw[white, line width=3pt] ($(n1-1) + (-\memDistance - 0.6 * \bracketWid, 0.5*\height - \bracketHeight)$) -- ($(n1-1) + (-\memDistance + 0.6 * \bracketWid, 0.5*\height - \bracketHeight)$); 

    \draw[line width=1.5pt] ($(n1-7) + (-\memDistance - 0.5 * \bracketWid, -0.5*\height)$) rectangle ($(n1-7) + (-\memDistance + 0.5 * \bracketWid, -0.5*\height + \bracketHeight)$);    
    \draw[white, line width=3pt] ($(n1-7) + (-\memDistance - 0.6 * \bracketWid, -0.5*\height + \bracketHeight)$) -- ($(n1-7) + (-\memDistance + 0.6 * \bracketWid, -0.5*\height + \bracketHeight)$); 

\node[threadName] at (0 * \xstep, -1.5 * \ystep) {$\thread_{1}$};
\node[threadName] at (1 * \xstep, -1.5 * \ystep) {$\thread_{2}$};
\node[threadName] at (1.6 * \xstep, -1.5 * \ystep) {$\thread_{3}$};

\draw[rfEdge] (n1-1) to (n2-2);
\draw[rfEdge] (n2-1) to (n1-2);

\begin{scope}[on background layer]
\draw [thread] (0 * \xstep, -1 * \ystep) -- (0 * \xstep, 7.3 * \ystep);
\draw [thread] (1 * \xstep, -1 * \ystep) -- (1 * \xstep, 7.3 * \ystep);
\draw [thread] (1.6 * \xstep, -1 * \ystep) -- (1.6 * \xstep, 7.3 * \ystep);
\end{scope}
\end{tikzpicture}
}
\caption{Sub-phase for variable $x_i$.}
\figlabel{reduction-fix-t-k-1}
\end{subfigure}
\begin{subfigure}[b]{0.43\textwidth}
\centering
\newcommand{\xstep}{2.5}
\newcommand{\ystep}{-1.0}
\newcommand{\height}{0.65}
\newcommand{\wid}{2.0}
\newcommand{\memDistance}{1.5}
\newcommand{\bracketWid}{0.45}
\newcommand{\bracketHeight}{0.25}
\tikzstyle{memEvent}=[draw=none]
\scalebox{0.7}{
\begin{tikzpicture}
\draw[rounded corners, line width=0.4mm , color=colorClause, dashed]  (-0.47 * \xstep, -0.6 * \ystep) rectangle (2.5 * \xstep, 5.5 * \ystep) {};

\node[event] (t1-3) at (0 * \xstep, 0 * \ystep)  {$\wt(\beta_1)$};
\node[event] (t1-4) at (0 * \xstep, 3 * \ystep)  {$\rd(\beta_4)$};
\node[event] (t1-5) at (0 * \xstep, 4 * \ystep)  {$\rd_\top(\ch_{j_1}^{m_1})$};
\node[event] (t1-6) at (0 * \xstep, 5 * \ystep)  {$\rd_\bot(\ch_{j_2}^{m_2})$};

\node[event] (t2-5) at (1 * \xstep, 0 * \ystep) {$\rd(\beta_1)$};
\node[event] (t2-6) at (1 * \xstep, 1 * \ystep) {$\wt(\beta_2)$};
\node[event] (t2-7) at (1 * \xstep, 2 * \ystep) {$\rd(\beta_3)$};
\node[event] (t2-8) at (1 * \xstep, 3 * \ystep) {$\wt(\beta_4)$};
\node[event] (t2-9) at (1 * \xstep, 4 * \ystep) {$\rd_\top(\ch_{j_2}^{m_2})$};
\node[event] (t2-10) at (1 * \xstep, 5 * \ystep) {$\rd_\bot(\ch_{j_3}^{m_3})$};

\node[event] (t3-3) at (2 * \xstep, 1 * \ystep) {$\rd(\beta_2)$};
\node[event] (t3-4) at (2 * \xstep, 2 * \ystep) {$\wt(\beta_3)$};
\node[event] (t3-5) at (2 * \xstep, 4 * \ystep) {$\rd_\top(\ch_{j_3}^{m_3})$};
\node[event] (t3-6) at (2 * \xstep, 5 * \ystep) {$\rd_\bot(\ch_{j_1}^{m_1})$};

\draw [memEvent] ($ (t1-3) + (-\memDistance, 0) $)  to node [midway,fill=white] {$B^r_i$} ($ (t1-6) + (-\memDistance, 0) $);
    \draw[line width=1.5pt] ($(t1-3) + (-\memDistance - 0.5 * \bracketWid, 0.5*\height - \bracketHeight)$) rectangle ($(t1-3) + (-\memDistance + 0.5 * \bracketWid, 0.5*\height)$);    
    \draw[white, line width=3pt] ($(t1-3) + (-\memDistance - 0.6 * \bracketWid, 0.5*\height - \bracketHeight)$) -- ($(t1-3) + (-\memDistance + 0.6 * \bracketWid, 0.5*\height - \bracketHeight)$); 

    \draw[line width=1.5pt] ($(t1-6) + (-\memDistance - 0.5 * \bracketWid, -0.5*\height)$) rectangle ($(t1-6) + (-\memDistance + 0.5 * \bracketWid, -0.5*\height + \bracketHeight)$);    
    \draw[white, line width=3pt] ($(t1-6) + (-\memDistance - 0.6 * \bracketWid, -0.5*\height + \bracketHeight)$) -- ($(t1-6) + (-\memDistance + 0.6 * \bracketWid, -0.5*\height + \bracketHeight)$); 
    
\node[threadName] at (0 * \xstep, -1.2 * \ystep) {$\thread_{1}$ };
\node[threadName] at (1 * \xstep, -1.2 * \ystep) {$\thread_{2}$ };
\node[threadName] at (2 * \xstep, -1.2 * \ystep) {$\thread_{3}$ };

\draw [rfEdge] (t1-3) to (t2-5);
\draw [rfEdge] (t2-6) to (t3-3);
\draw [rfEdge] (t2-8) to (t1-4);
\draw [rfEdge] (t3-4) to (t2-7);

\begin{scope}[on background layer]
\draw [thread] (0 * \xstep, -0.9 * \ystep) -- (0 * \xstep, 5.8 * \ystep);
\draw [thread] (1 * \xstep, -0.9 * \ystep) -- (1 * \xstep, 5.8 * \ystep);
\draw [thread] (2 * \xstep, -0.9 * \ystep) -- (2 * \xstep, 5.8 * \ystep);
\end{scope}
\end{tikzpicture}
}
\caption{Sub-phase for clause $C_j$.}
\figlabel{reduction-fix-t-k-2-subphase}
\end{subfigure}
\caption{
Reduction from 3SAT to $\vchRf$. 
Here $\cp{\lk} = \cp{\beta_i} = 1$, 
and $\cp{\alpha} = \cp{\ch_i^s} = 2$.
Reads-from relations are either depicted using red arrows or are described in texts. 
}
\figlabel{reduction-fix-t-k}
\end{figure}

\myparagraph{Reduction}{
The $\vchRf$ instance $\tuple{\AbstractExecution, \cpFunc, \rf{}}$ we construct 
has $3$ threads $\thread_1, \thread_2, \thread_3$.  
It uses the following sets of distinct channels
$\mathcal{C}_{1} \uplus \mathcal{C}_{2}$,
where $\mathcal{C}_1 = \set{\lk, \beta_1, \beta_2, \beta_3, \beta_4}$
is the set of asynchronous channels with capacity $1$,
while $\mathcal{C}_2 = \set{\alpha} \uplus \setpred{\ch_i^s}{1 \leq s \leq f_i}$
is the set of asynchronous channels with capacity $2$,
where $f_i$ denotes the number of occurrences of variable $x_i$ in formula $\psi$,
and the channel $\ch_i^{s}$ will represent the $s^\text{th}$ occurrence of $x_i$. 
For each thread $\thread_r$ ($r \in \set{1, 2, 3}$),
the sequence $\rho_r$ of events in $\thread_r$ is of the form $\rho_r = A^r \cdot B^r$,
where $A^r$ and $B^r$ are sequences of events corresponding
to the first and second phases respectively and have the form
\begin{align*}
\begin{array}{rcl}
A^r = A^r_1 \cdot A^r_2 \cdots A^r_\numVar & \quad & B^r = B^r_1 \cdot B^r_2 \cdots B^r_\numClause
\end{array}
\end{align*}
The sequence $A^r_i$ encodes some choice of assignments
to variable $x_i$. 
Each $A^r_i$ contains an atomic event sequence for $r = 1, 2$ 
and the atomicity is guaranteed by channel $\lk$ with capacity 1 (see \figref{lock-demo-capacity-one}). 
In particular, $A^3_i = \epsilon$ is the empty sequence, while
$A^1_i$ and $A^2_i$ are described next:
\begin{align*}
\begin{array}{rcl}
A^1_i &=& \snd(\alpha) \cdot \rcv(\alpha) \cdot \snd(\lk) \cdot \snd_\top(\ch_i^1) \cdots \snd_\top(\ch_i^{f_i}) \cdot \rcv(\lk) \\
A^2_i &=& \snd(\alpha) \cdot \rcv(\alpha) \cdot \snd(\lk) \cdot \snd_\bot(\ch_i^1) \cdots \snd_\bot(\ch_i^{f_i}) \cdot \rcv(\lk)
\end{array}
\end{align*} 
Consider the clause $C_j = \gamma_1 \lor \gamma_2 \lor \gamma_3$,
where $\gamma_s$ is a literal over variable $x_{j_s}$
(we assume $j_1 < j_2 < j_3$),
and let $C_j$ be respectively the $m_1^\text{th}$, $m_2^\text{th}$ and $m_3^\text{th}$
occurrence of $x_{j_1}, x_{j_2}, x_{j_3}$ in $\psi$.
Then, $B^1_j, B^2_j, B^3_j$ are the following sequences corresponding to $C_j$ in threads $\thread_1, \thread_2, \thread_3$ respectively:
\begin{align*}
\begin{array}{rcl}
B^1_j &=& \snd(\beta_1) \cdot \rcv(\beta_4) \cdot \rcv_\top(\ch^{m_1}_{j_1}) \cdot \rcv_\bot(\ch^{m_2}_{j_2}) \\
B^2_j &=& \rcv(\beta_1) \cdot \snd(\beta_2) \cdot \rcv(\beta_3) \cdot\snd(\beta_4) \cdot \rcv_\top(\ch^{m_2}_{j_2})\cdot \rcv_{\bot}(\ch^{m_3}_{j_3}) \\
B^3_j &=& \rcv(\beta_2) \cdot \snd(\beta_3) \cdot \rcv_\top(\ch^{m_3}_{j_3}) \cdot \rcv_\bot(\ch^{m_1}_{j_1})
\end{array}
\end{align*}
We now specify the reads-from relation: 
\begin{itemize}
\item $\rcv(\lk)$ in $A^r_i$ observes $\snd(\lk)$ in $A^r_i$ ($r \in \set{1, 2}, i \in \set{1, \ldots, \numVar}$).
    
\item $\rcv(\alpha)$ in $A^{r}_i$ observes $\snd(\alpha)$ in $A^{\bar{r}}_i$ ($\set{r, \bar{r}} = \set{1, 2}, i \in \set{1, \ldots, \numVar}$).
The events on channel $\alpha$ thus ensure that all events (belonging to the first phase) of $x_i$ will appear before those of $x_{i+1}$ in any concretization.

\item $\rcv(\beta_s)$ in $B^r_j$ observes the event $\snd(\beta_s)$ in $B^s_j$ ($s \in \set{1, 2, 3, 4}, r \in \set{1, 2}, j \in \set{1, 2, \ldots, \numClause}$).

\item Recall that in clause $C_j = \gamma_1 \lor \gamma_2 \lor \gamma_3$
are such that the literal $\gamma_p$ is either $x_{j_p}$ or $\neg x_{j_p}$, and $C_j$ is the $m_p^\text{th}$
occurrence of $x_{j_p}$ in $\psi$ ($p \in \set{1, 2, 3}$).
In the former case (i.e., $\gamma_p = x_{j_p}$),
we pair the receive events $\rcv_\top(\ch^{m_p}_{j_p})$ and $\rcv_\bot(\ch^{m_p}_{j_p})$ 
to the send events $\snd_\top(\ch^{m_p}_{j_p})$ and $\snd_\bot(\ch^{m_p}_{j_p})$ in $A^1_{j_p}$ and $A^2_{j_p}$, respectively.
Otherwise (i.e., $\gamma_p = \neg x_{j_p}$),
we pair the receive events $\rcv_\top(\ch^{m_p}_{j_p})$ and $\rcv_\bot(\ch^{m_p}_{j_p})$ 
to the send events $\snd_\bot(\ch^{m_p}_{j_p})$ and $\snd_\top(\ch^{m_p}_{j_p})$ in $A^2_{j_p}$ and $A^1_{j_p}$, respectively.
\end{itemize}

The following lemma states the correctness of the above construction.

\begin{restatable}{lemma}{vchRfConstTK}
    \lemlabel{vch-rf-t=3-k=2}
    $\psi$ is satisfiable iff $\tuple{\AbstractExecution, \cpFunc, \rf{}}$ is consistent. 
\end{restatable}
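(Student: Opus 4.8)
The plan is to prove the two directions of the equivalence separately, anchored on the semantic reading that the relative scheduling of the two atomic write-blocks for a variable $x_i$ encodes its truth value, and that a clause sub-phase admits a valid linearization exactly when one of its literals is satisfied.

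First I would pin down the behaviour of the phase-$1$ gadgets. The capacity-$1$ channel $\lk$, together with the reads-from pairing $\rcv(\lk)\leftrightarrow\snd(\lk)$ inside each block, makes every block $A^1_i$ and $A^2_i$ mutually atomic in any concretization, exactly as in the atomicity gadget of \figref{lock-demo-capacity-one}, while the $\alpha$-channel with its cross-thread reads-from acts as a per-variable barrier forcing both threads to enter sub-phase $i$ before proceeding. Hence in a concretization $\trace$ the blocks $A^1_i,A^2_i$ do not overlap, so for every occurrence channel $\ch_i^s$ the two sends $\snd_\top(\ch_i^s)$ (from $\thread_1$) and $\snd_\bot(\ch_i^s)$ (from $\thread_2$) are ordered uniformly across all $s$ by whichever block runs first; this yields a well-defined assignment $\nu(x_i)=\top$ iff $\snd_\top(\ch_i^s)\trord{\trace}\snd_\bot(\ch_i^s)$. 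The crucial normalization I would prove is that, since each $\ch_i^s$ carries exactly two sends and two receives and the FIFO-plus-reads-from discipline matches the $p$-th send to the $p$-th receive, a literal $\gamma_p$ over $x_{j_p}$ is \emph{satisfied} by $\nu$ iff $\rcv_\top(\ch^{m_p}_{j_p})\trord{\trace}\rcv_\bot(\ch^{m_p}_{j_p})$ -- this holds uniformly for positive and negative literals precisely because the negative case flips the $\top/\bot$ reads-from pairing.

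For the backward direction (consistency $\Rightarrow$ satisfiability), I would read off $\nu$ as above and show every clause $C_j$ is satisfied by a cycle argument. Suppose toward a contradiction that all three literals are false, i.e. $\rcv_\bot(\ch^{m_p}_{j_p})\trord{\trace}\rcv_\top(\ch^{m_p}_{j_p})$ for $p=1,2,3$. Combining these three forced orderings with the intra-thread program-order edges of the clause blocks -- namely $\rcv_\top(\ch^{m_1}_{j_1})\trord{\trace}\rcv_\bot(\ch^{m_2}_{j_2})$ in $\thread_1$, $\rcv_\top(\ch^{m_2}_{j_2})\trord{\trace}\rcv_\bot(\ch^{m_3}_{j_3})$ in $\thread_2$, and $\rcv_\top(\ch^{m_3}_{j_3})\trord{\trace}\rcv_\bot(\ch^{m_1}_{j_1})$ in $\thread_3$ -- produces the chain $\rcv_\top(\ch^{m_1}_{j_1})\trord{\trace}\rcv_\bot(\ch^{m_2}_{j_2})\trord{\trace}\rcv_\top(\ch^{m_2}_{j_2})\trord{\trace}\rcv_\bot(\ch^{m_3}_{j_3})\trord{\trace}\rcv_\top(\ch^{m_3}_{j_3})\trord{\trace}\rcv_\bot(\ch^{m_1}_{j_1})\trord{\trace}\rcv_\top(\ch^{m_1}_{j_1})$, which is a cycle and contradicts that $\trord{\trace}$ is a total order. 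The $\beta$-channels only serialize the prefixes of the three blocks (and, across clauses, enforce the clause order) so that all six receives occur after the barrier; they play no role in the cycle. Thus some literal of $C_j$ is true and $\psi$ is satisfiable.

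For the forward direction (satisfiability $\Rightarrow$ consistency) I would build an explicit witness $\trace$ by concatenating the phase-$1$ sub-phases in the order $x_1,\dots,x_{\numVar}$ and then the phase-$2$ sub-phases in the order $C_1,\dots,C_{\numClause}$: within sub-phase $i$ I schedule $A^1_i$ before $A^2_i$ when $\nu(x_i)=\top$ and otherwise the reverse (each preceded by the $\alpha$-barrier, whose micro-ordering is forced but satisfiable), so the send order on every $\ch_i^s$ agrees with $\nu$; within clause sub-phase $j$ I play out the forced $\beta$-chain and then order the six receives by any linear extension of the orderings of the first paragraph, which is acyclic because some literal is true. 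It then remains to check well-formedness: the capacity-$1$ channels $\lk,\beta_1,\dots,\beta_4$ only ever buffer a single matched message, each $\ch_i^s$ accumulates its two sends in phase $1$ (exactly its capacity $2$) and is drained by its two receives in phase $2$, and the fixed reads-from is realized since the receive order matches the send order by construction. The main obstacle is the backward direction: making rigorous that FIFO together with the fixed reads-from forces the claimed order on each pair of clause-receives \emph{solely} from the phase-$1$ send order, and that the three program-order edges close the cycle exactly when no literal holds; the forward direction is comparatively routine bookkeeping, the only delicate point being that the capacity-$2$ channels are saturated to exactly their bound between the two phases.
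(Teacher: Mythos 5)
Your proposal is correct and follows essentially the same route as the paper's proof: both directions rest on the same three pillars, namely (i) the lock-channel atomicity making the per-variable send order a well-defined assignment, (ii) the FIFO-plus-reads-from discipline (with the flipped pairing for negative literals) translating literal truth into the ordering of the two clause receives, and (iii) the three program-order edges of the clause blocks closing a cycle exactly when all three literals are false. The paper's forward direction is merely more explicit (it enumerates the four scheduling cases for a clause block rather than appealing to an arbitrary linear extension of an acyclic order), but the substance is identical.
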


Finally, the number of events in $\tuple{\AbstractExecution, \cpFunc, \rf{}}$ is $O(\numVar+\numClause)$,
which concludes case (ii) of \thmref{vch-rf-hardness}.

\begin{proof}
We prove each direction separately.

\myparagraph{Correctness (Satisfiability $\Rightarrow$ Consistency)}
If $\psi$ is satisfiable, then there is a concretization $\trace$, 
and we sketch it as following. 
In general, $\trace = \trace_1 \circ \trace_2$. 
We first describe a linear sequence $\trace_1$ of first phase ($A^r_i$).
Then we describe the linear sequence $\trace_2$ of the second phase ($B^r_j$).
In general, $\trace_1$ is of the following form.
\begin{equation}
    \sigma_1 = A_1 \circ A_2 \circ \dots \circ A_\numVar
    \nonumber
\end{equation}
where $A_i$ is a linear sequence of $A^1_i, A^2_i$ in $\thread_1, \thread_2$. 
Here we use superscript to denote the thread each event belongs to.
If $x_i$ is assigned to be true, then 
\begin{equation} 
\begin{aligned}
    A_i &= \snd^{\thread_2}(\alpha) \cdot \snd^{\thread_1}(\alpha) \cdot \rcv^{\thread_1}(\alpha) \cdot
            \rcv^{\thread_2}(\alpha) \\
        & \cdot 
            \snd^{\thread_1}(\lk) \cdot \snd_{\top}^{\thread_1}(\ch_i^1) \cdots \snd_{\top}^{\thread_1}(\ch_i^{f_i}) \cdot \rcv^{\thread_1}(\lk) \\
        & \cdot 
            \snd^{\thread_2}(\lk) \cdot \snd_{\bot}^{\thread_2}(\ch_i^1) \cdots \snd_{\bot}^{\thread_2}(\ch_i^{f_i}) \cdot \rcv^{\thread_2}(\lk)
    \nonumber
\end{aligned}
\end{equation}
If $x_i$ is assigned to be false, then 
\begin{equation} 
\begin{aligned}
    A_i &= \snd^{\thread_2}(\alpha) \cdot \snd^{\thread_1}(\alpha) \cdot \rcv^{\thread_1}(\alpha) \cdot
            \rcv^{\thread_2}(\alpha) \\
        & \cdot 
            \snd^{\thread_2}(\lk) \cdot \snd_{\bot}^{\thread_2}(\ch_i^1) \cdots \snd_{\bot}^{\thread_2}(\ch_i^{f_i}) \cdot \rcv^{\thread_2}(\lk) \\
        & \cdot 
            \snd^{\thread_1}(\lk) \cdot \snd_{\top}^{\thread_1}(\ch_i^1) \cdots \snd_{\top}^{\thread_1}(\ch_i^{f_i}) \cdot \rcv^{\thread_1}(\lk)
    \nonumber
\end{aligned}
\end{equation}
One can easily verify these two concretizations satisfy the reads-from relation within each $A^r_i$. 
Now we turn to $\trace_2$ and 
$\trace_2$ is of the following pattern.
\begin{equation}
    \trace_2 = B_1 \circ \dots \circ B_\numClause
    \nonumber
\end{equation}
where $B_j$ is a linear sequence of all events in $B^r_j$ for all $1 \le r \le 3$. 
$B_j$ depends on the value of each literal in $C_j = \gamma_1 \lor \gamma_2 \lor \gamma_3$. 
Since $C_j$ is satisfied, there exists one literal to be true. 
Without loss of generality, we assume $\gamma_1$ is true (other cases can be solved similarly). 
Then we have four possibilities, as the value of $\gamma_2, \gamma_3$ can be either true of false. 
\begin{itemize}
    \item If $\gamma_2=$ true and $\gamma_3$= true, then 
    \begin{equation}
        B_j = \rcv_{\top}(\ch^{m_1}_{j_1}) \cdot \rcv_{\top}(\ch^{m_2}_{j_2})
        \cdot \rcv_{\top}(\ch^{m_3}_{j_3})
        \cdot \rcv_{\bot}(\ch^{m_1}_{j_1})
        \cdot \rcv_{\bot}(\ch^{m_2}_{j_2})
        \cdot \rcv_{\bot}(\ch^{m_3}_{j_3})
        \nonumber
    \end{equation}

    \item If $\gamma_2=$ true and $\gamma_3$= false, then 
    \begin{equation}
        B_j = \rcv_{\top}(\ch^{m_1}_{j_1}) \cdot \rcv_{\top}(\ch^{m_2}_{j_2}) 
        \cdot \rcv_{\bot}(\ch^{m_3}_{j_3})
        \cdot \rcv_{\top}(\ch^{m_3}_{j_3})
        \cdot \rcv_{\bot}(\ch^{m_1}_{j_1}) 
        \cdot \rcv_{\bot}(\ch^{m_2}_{j_2})
        \nonumber
    \end{equation}

    \item If $\gamma_2=$ false and $\gamma_3$= true, then 
    \begin{equation}
        B_j = \rcv_{\top}(\ch^{m_1}_{j_1}) \cdot \rcv_{\bot}(\ch^{m_2}_{j_2}) \cdot \rcv_{\top}(\ch^{m_2}_{j_2}) \cdot \rcv_{\top}(\ch^{m_3}_{j_3})  \cdot \rcv_{\bot}(\ch^{m_3}_{j_3}) \cdot \rcv_{\bot}(\ch^{m_1}_{j_1})
        \nonumber
    \end{equation}

    \item If $\gamma_2=$ false and $\gamma_3$= false, then 
    \begin{equation}
        B_j = \rcv_{\top}(\ch^{m_1}_{j_1}) \cdot \rcv_{\bot}(\ch^{m_2}_{j_2}) \cdot \rcv_{\top}(\ch^{m_2}_{j_2}) \cdot \rcv_{\bot}(\ch^{m_33}_{j_3}) \cdot \rcv_{\top}(\ch^{m_3}_{j_3}) \cdot \rcv_{\bot}(\ch^{m_1}_{j_1})
        \nonumber
    \end{equation}
\end{itemize}
One can easily verify each $B_j$ satisfies $\po{}$ and 
we show they also satisfy the reads-from relation for channel $\ch_{j_l}^{m_l}$. 
\begin{itemize}
    \item If $\gamma_l = x_{j_l}$ and $x_{j_l}$ is assigned to be true, then the send to $\ch_{j_l}^{m_l}$ in $A^1_{{j_l}}$ gets ordered before the send in $A^2_{{j_l}}$, 
    which is inline with $\rcv_{\top}(\ch_{j_l}^{m_l})$ getting ordered before $\rcv_{\bot}(\ch_{j_l}^{m_l})$.
    \item If $\gamma_l = x_{j_l}$ and $x_{j_l}$ is assigned to be false, then the send to $\ch_{j_l}^{m_l}$ in $A^1_{{j_l}}$ gets ordered after the send in $A^2_{{j_l}}$, 
    which is inline with $\rcv_{\top}(\ch_{j_l}^{m_l})$ getting ordered after $\rcv_{\bot}(\ch_{j_l}^{m_l})$. 
    \item If $\gamma_l = \neg x_{j_l}$ and $x_{j_l}$ is assigned to be true, then the send to $\ch_{j_l}^{m_l}$ in $A^1_{{j_l}}$ gets ordered after the send in $A^2_{{j_l}}$, 
    which is inline with $\rcv_{\top}(\ch_{j_l}^{m_l})$ getting ordered after $\rcv_{\bot}(\ch_{j_l}^{m_l})$.
    \item If $\gamma_l = \neg x_{j_l}$ and $x_{j_l}$ is assigned to be false, then the send to $\ch_{j_l}^{m_l}$ in $A^1_{{j_l}}$ gets ordered before the send in $A^2_{{j_l}}$, 
    which is inline with $\rcv_{\top}(\ch_{j_l}^{m_l})$ getting ordered before $\rcv_{\bot}(\ch_{j_l}^{m_l})$.
\end{itemize}
Therefore, the relative order of receive events to $\ch_{j_l}^{m_l}$ in the second phase matches the send events to $\ch_{j_l}^{m_l}$ in the first phase 
and $\trace$ is a valid concretization. 

\myparagraph{Correctness (Consistency $\Rightarrow$ Satisfiability)}
Now we show the reverse direction, 
i.e. if there is a concretization $\reordering$, 
then $\psi$ can be satisfied. 
First we construct the valuation function for each variable $x_i$ and 
then proceed to show this assignment makes $\psi$ true. 

We consider the events in the first phase, 
and can notice that for an arbitrary fixed $1 \le j \le n$, 
$\snd_{\top}(\ch_j^1)$ in $\thread_1$ gets ordered before $\snd_{\bot}(\ch_j^1)$ in $\thread_2$, 
iff $\snd_{\top}(\ch_j^k)$ in $\thread_1$ gets ordered before $\snd_{\bot}(\ch_j^k)$ in $\thread_2$ for all $1 \le k \le f_j$. 
This is because the channel $\lk$ behaves like a lock, 
so that $\snd(\ch_j^k)$ in $\thread_1, \thread_2$ must be executed atomically  (see \figref{lock-demo-capacity-one} for explanation). 
Our valuation function will assign $x_i = $ true iff in $\reordering$, $\snd_{\top}(\ch_j^1)$ in $\thread_1$ gets ordered before $\snd_{\bot}(\ch_j^1)$ in $\thread_2$. 

Now we proceed to show this assignment makes $\psi$ true. 
That is, we need to prove each clause $C_j$ is satisfied. 
By our encoding, the concretization of each clause is sequential, 
i.e. for any possible concretization, 
all events encoded for $C_j$ must be executed before events encoded for $C_{j'}$, s.t. $j' > j$. 
Events from different clauses cannot overlap, 
because of channels $\beta_1, \beta_2, \beta_3, \beta_4$. 
Then we pick an arbitrary clause $C_j$ and prove it is satisfied. 

For the encoding of $C_j$, 
the $\rf{}$ ensures that if $\rcv_{\top}(\ch^{m_l}_{j_l})$ is ordered before $\rcv_{\bot}(\ch^{m_l}_{j_l})$ in $\trace$, 
then the literal $\gamma_l$ corresponding to $x_{j_l}$ in clause $C_j$ will be true and otherwise false. 
We also guarantee that at least one of $\rcv_{\top}(\ch^{m_l}_{j_l})$ will be before $\rcv_{\bot}(\ch^{m_l}_{j_l})$ in $\trace$, 
otherwise, $\trace$ violates program order. 
Therefore, there is at least one literal in $C_j$ being assigned true and $C_j$ is satisfied. 
This completes the reduction. 
\end{proof}



\subsection{Proof for \secref{hardness-const-t-m}}

\applabel{proof-np-m=5-t=3-k=unbounded}


\vchRfConstTM*
\begin{proof}
We prove each direction separately.

\myparagraph{Proof of correctness (Consistency $\Rightarrow$ Satisfiability)}
Assuming there is a valid concretization $\trace$, 
we construct a valuation function that satisfies $\psi$,
by checking the relative order of events in the first phase. 
For variable $x_i$, we consider events in $I^1_i, I^2_i$, 
and assign $x_i$ to be true iff $\snd_{\top}^i(\ch_1) \trord{\tr} \snd_{\bot}^i(\ch_1)$. 
Now we prove this assignment makes an arbitrary clause true. 
Without loss of generality, we show an arbitrary clause $C_j$ can be satisfied. 
We assume $C_j = \gamma_1 \wedge \gamma_2 \wedge \gamma_3$, and 
the variable in $\gamma_1, \gamma_2, \gamma_3$ are $x_{j_1}, x_{j_2}, x_{j_3}$. 

The outline of the proof is the following. 
We have an observation that in $\trace$, 
for an arbitrary $q \in \set{1, 2, 3}$, $\rcv_{\top}(c_q) \trord{\tr} \rcv_{\bot}(c_q)$ iff $\gamma_q$ is true. 
Given the fact that this observation holds, 
if all $\gamma_q$ are false, 
then $\rcv_{\bot}(c_q) \trord{\tr} \rcv_{\top}(c_q)$ holds for all $q \in \set{1, 2, 3}$. 
In this case, $\trace$ is not a valid concretization, as it violates $\po{}$. 
Therefore, we have at least one of $\gamma_1, \gamma_2, \gamma_3$ is true, which satisfies $C_j$. 
Now in the following content, 
we prove $\rcv_{\top}(c_q) \trord{\tr} \rcv_{\bot}(c_q)$ iff $\gamma_q$ is true, 
and we first prove the following two lemmas. 

Firstly, we show that, for all $1 \le i \le \numVar$, 
in $I^1_i$ and $I^2_i$, 
$\snd_{\bot}^i(\ch_1) \trord{\tr} \snd_{\top}^i(\ch_1)$ iff $\snd_{\bot}^i(\ch_2) \trord{\tr} \snd_{\top}^i(\ch_2)$. 
If $\snd_{\bot}^i(\ch_1) \trord{\tr} \snd_{\top}^i(\ch_1)$ in $I^1_i, I^2_i$, 
then we have $\rcv_{\bot}^i(\ch_1)$ in $A^1_{1, i}$ is ordered before $\rcv_{\top}^i(\ch_1)$ in $A^2_{1, i}$. 
As $\rcv_{\bot}^i(\ch_2) \poord{\trace} \rcv_{\bot}^i(\ch_1)$ in $A^1_{1, i}$, 
and $\rcv_{\top}^i(\ch_1) \poord{\trace} \rcv_{\top}^i(\ch_2)$ in $A^2_{1, i}$, 
by transitivity, we have $\rcv_{\bot}^i(\ch_2) \trord{\tr} \rcv_{\top}^i(\ch_2)$, 
and thus $\snd_{\bot}^i(\ch_2) \trord{\tr} \snd_{\top}^i(\ch_2)$ in $I^1_i, I^2_i$. 
In the other direction, if $\snd_{\top}^i(\ch_1) \trord{\tr} \snd_{\bot}^i(\ch_1)$ in $I^1_i, I^2_i$,  
then since $\snd_{\bot}^i(\ch_1) \poord{\trace} \snd_{\bot}^i(\ch_2)$ in $I^1_i$, 
and $\snd_{\top}^i(\ch_2)$ is program ordered before $\snd_{\top}^i(\ch_1)$ in $I^2_i$, 
we have $\snd_{\top}^i(\ch_2) \trord{\trace} \snd_{\bot}^i(\ch_2)$ by transitivity. 

The same reasoning above can be used to show in $A^1_{j, l}$ and $A^2_{j, l}$, 
$\snd_{\bot}^l(\ch_1) \trord{\tr} \snd_{\top}^l(\ch_1)$ iff $\snd_{\bot}^l(\ch_2) \trord{\tr} \snd_{\top}^l(\ch_2)$ 
for all $j$, where $1 \le j \le \numClause$. 

Secondly, we show that for all $j$, s.t. $1 \le j \le \numClause$, 
in $A^1_{j, l}$ and $A^2_{j, l}$, 
$\snd^l_{\top}(\ch_1) \trord{\tr} \snd^l_{\bot}(\ch_1)$ iff in $A^1_{j-1, l}$ and $A^2_{j-1, l}$, 
$\snd^l_{\top}(\ch_1) \trord{\tr} \snd^l_{\bot}(\ch_1)$. 
That is, the assignments for values are consistent across phases. 
The reasoning is similar to previous one. 
If $\snd^l_{\top}(\ch_1) \trord{\tr} \snd^l_{\bot}(\ch_1)$ in $A^1_{j-1, l}$ and $A^2_{j-1, l}$, 
then we consider the events in $A^1_{j, l}$ and $A^2_{j, l}$. 
If $\snd^l_{\top}(\ch_1) \trord{\tr} \snd^l_{\bot}(\ch_1)$, 
since $\rcv^l_{\top}(\ch_1) \poord{\trace} \snd^l_{\top}(\ch_1)$ 
and $\snd^l_{\bot}(\ch_1) \poord{\trace} \rcv^l_{\bot}(\ch_1)$, 
by transitivity, 
we have $\rcv^l_{\top}(\ch_1) \trord{\trace} \rcv^l_{\bot}(\ch_1)$. 
Therefore, we have in $A^1_{j-1, l}$ and $A^2_{j-1, l}$, 
$\snd^l_{\top}(\ch_1) \trord{\tr} \snd^l_{\bot}(\ch_1)$. 
In the other direction, if in $A^1_{j-1, l}$ and $A^2_{j-1, l}$, 
$\snd^l_{\top}(\ch_1) \trord{\tr} \snd^l_{\bot}(\ch_1)$, 
then we have $\rcv^l_{\top}(\ch_1) \trord{\trace} \rcv^l_{\bot}(\ch_1)$ in $A^1_{j, l}, A^2_{j, l}$. 
As $\snd^i_{\top}(\ch_2) \poord{\trace} \rcv^l_{\top}(\ch_1)$, 
and $\rcv^l_{\top}(\ch_1) \poord{\trace} \rcv^l_{\bot}(\ch_1)$, 
by transitivity, we have $\snd^l_{\top}(\ch_2) \trord{\tr} \snd^l_{\bot}(\ch_2)$ in $A^1_{j, l}, A^2_{j, l}$. 
Then following the lemma we proved previously, 
we have in $A^1_{j, l}$ and $A^2_{j, l}$, 
$\snd^l_{\top}(\ch_1) \trord{\tr} \snd^l_{\bot}(\ch_1)$. 
Intuitively, this observation captures the fact that 
our valuation for every variable is properly maintained across phases. 
That is, in each phase, 
we will copy the valuation once and use copied valuation to satisfy the clause constraints. 

Combining these two lemmas, 
we prove that in each phase, 
$\snd_{\bot}(c_q) \trord{\tr} \snd_{\top}(c_q)$ iff $x_{j_q}$ = false. 
If $\snd_{\bot}(c_q) \trord{\tr} \snd_{\top}(c_q)$, then by transitivity, 
we have $\rcv_{\bot}(\ch_2) \trord{\tr} \rcv_{\top}(\ch_2)$ in $A^1_{j, j_q}, A^2_{j, j_q}$, 
which means $\snd_{\bot}^i(\ch_1) \trord{\tr} \snd_{\top}^i(\ch_1)$ in $I^1_i, I^2_i$, 
so that $x_{j_q}$ = false. 
In the other direction, 
if $x_{j_q}$ = false, then $\snd_{\bot}^i(\ch_1) \trord{\tr} \snd_{\top}^i(\ch_1)$ in $I^1_i, I^2_i$. 
By the previous lemmas, we have $\rcv_{\bot}^{j_q}(\ch_1) \trord{\tr} \rcv_{\top}^{j_q}(\ch_1)$ in $A^1_{j, j_q}, A^2_{j, j_q}$ 
and thus $\snd_{\bot}(c_q) \trord{\tr} \snd_{\top}(c_q)$. 

Then we are ready to show that $\rcv_{\top}(c_q) \trord{\tr} \rcv_{\bot}(c_q)$ iff $\gamma_q$ is true. 
If $\gamma_q = x_{j_q}$, then $\rcv_{\top}(c_q), \rcv_{\bot}(c_q)$ observe $\snd_{\top}(c_q), \snd_{\bot}(c_q)$, respectively. 
Since $\snd_{\bot}(c_q) \trord{\tr} \snd_{\top}(c_q)$ iff $x_{j_q}$ = false and $\gamma_q$ is true iff $x_{j_q}$ is true, 
we have $\snd_{\top}(c_q) \trord{\tr} \snd_{\bot}(c_q)$ iff $\gamma_q$ is true 
and thus $\rcv_{\top}(c_q) \trord{\tr} \rcv_{\bot}(c_q)$ iff $\gamma_q$ is true. 
Otherwise, if $\gamma_q = \neg x_{j_q}$, 
then $\rcv_{\top}(c_q), \rcv_{\bot}(c_q)$ observe $\snd_{\bot}(c_q), \snd_{\top}(c_q)$, respectively. 
Since $\snd_{\bot}(c_q) \trord{\tr} \snd_{\top}(c_q)$ iff $x_{j_q}$ = false and $\gamma_q$ is true iff $x_{j_q}$ is false, 
we have $\snd_{\bot}(c_q) \trord{\tr} \snd_{\top}(c_q)$ iff $\gamma_q$ is true 
and thus $\rcv_{\top}(c_q) \trord{\tr} \rcv_{\bot}(c_q)$ iff $\gamma_q$ is true. 

This completes one direction of the reduction. 

\myparagraph{Proof of correctness (Satisfiability $\Rightarrow$ Consistency)}
Now assuming there is a valuation function which satisfies $\psi$, 
we construct a valid concretization $\trace$. 
In general, $\trace$ is of the following pattern. 
\begin{equation}
    \trace = I \circ A_1 \circ \dots \circ A_{\numClause}
    \nonumber
\end{equation}
where $I$ is a linear sequence of $I^1, I^2, I^3$ and $A_i$ is a linear sequence of $A^1_i, A^2_i, A^3_i$ for all $1 \le i \le \numClause$. 
We first discuss the details of $I$. 
\begin{equation}
    I = I_1 \circ \dots I_{\numVar}
    \nonumber
\end{equation}
where $I_j$ is a linear sequence of $I^1_j$ and $I^2_j$. 
If $x_j$ is assigned to be true, then 
\begin{equation}
    I_j = \snd_{\bot}^j(\ch_1)
    \cdot \snd_{\bot}^j(\ch_2) \cdot \snd_{\top}^j(\ch_2) \cdot \snd_{\top}^j(\ch_1)
    \nonumber
\end{equation}
If $x_j$ is assigned to be false, then 
\begin{equation}
    I_j = \snd_{\top}^j(\ch_2) \cdot \snd_{\top}^j(\ch_1) \cdot \snd_{\bot}^j(\ch_1) \cdot  \snd_{\bot}^j(\ch_2)
    \nonumber
\end{equation}
It is obvious that the program order is satisfied. 
Next we discuss the details of $A_j$. 
Generally, all $A_j$ are of the following pattern. 
\begin{equation}
    A_j = A_{j, 1} \circ \dots \circ A_{j, n} \circ B_j
    \nonumber
\end{equation}
where $A_{j, l}$ is a linear sequence of $A^1_{j, l}, A^2_{j, l}$ 
and $B_j$ is a linear sequence of $B^1_j, B^2_j, B^3_j$. 
If $x_l$ is assigned to be true, then $A_{j, l}$ is of the following form. 
\begin{equation}
\begin{aligned}
    A_{j, l} &= \snd_{\top}^l(\ch_2) \cdot \rcv_{\top}^l(\ch_1) \cdot {\snd_{\top}(c_q)} \cdot  \rcv_{\top}^l(\ch_2) \cdot \snd_{\top}^l(\ch_1) \\
    & \; \cdot \snd_{\bot}^l(\ch_1) \cdot \rcv_{\bot}^l(\ch_2) \cdot  {\snd_{\bot}(c_q)} \cdot  \rcv_{\bot}^l(\ch_1) \cdot   \snd_{\bot}^l(\ch_2) 
    \nonumber
\end{aligned}
\end{equation}
where {$\snd_{\top}(c_q), \snd_{\bot}(c_q)$} exist iff $x_l$ appears as the $q$-th literal of clause $C_j$. 
Otherwise, if $x_l$ is assigned to be false, then 
\begin{equation}
\begin{aligned}
    A_{j, l} &= \snd_{\bot}^l(\ch_1) \cdot  \rcv_{\bot}^l(\ch_2) \cdot  {\snd_{\bot}(c_q)} \cdot  \rcv_{\bot}^l(\ch_1) \cdot  \snd_{\bot}^l(\ch_2) \\
    & \; \cdot \snd_{\top}^l(\ch_2) \cdot  \rcv_{\top}^l(\ch_1) \cdot {\snd_{\top}(c_q)} \cdot 
 \rcv_{\top}^l(\ch_2)  \cdot \snd_{\top}^l(\ch_1) 
    \nonumber
\end{aligned}
\end{equation}
where {$\snd_{\top}(c_q), \snd_{\bot}(c_q)$} exists if $x_l$ appears as the $q$-th literal of $C_j$. 
Finally, we discuss the linear sequence of $B^1_j, B^2_j, B^3_j$. 
Since $C_j = \gamma_1 \wedge \gamma_2 \wedge \gamma_3$ is satisfied, then at least one literal will be true. 
Without loss of generality, 
we assume $\gamma_1$ = true. 
Then we have four possible situations, depending on the value of $\gamma_2$ and $\gamma_3$. 

If $\gamma_2$ = true and $\gamma_3$ = true, then 
\begin{equation}
    C_j = \rcv_{\top}(c_1) \cdot \rcv_{\top}(c_2) \cdot \rcv_{\top}(c_3) \cdot \rcv_{\bot}(c_2) \cdot \rcv_{\bot}(c_3) \cdot \rcv_{\bot}(c_1)
    \nonumber
\end{equation}

If $\gamma_2$ = true and $\gamma_3$ = false, then 
\begin{equation}
    C_j = \rcv_{\top}(c_1) \cdot \rcv_{\top}(c_2) \cdot \rcv_{\bot}(c_2) \cdot \rcv_{\bot}(c_3) \cdot \rcv_{\top}(c_3) \cdot \rcv_{\bot}(c_1)
    \nonumber
\end{equation}

If $\gamma_2$ = false and $\gamma_3$ = true, then 
\begin{equation}
    C_j = \rcv_{\top}(c_1) \cdot \rcv_{\bot}(c_2) \cdot \rcv_{\top}(c_2) \cdot \rcv_{\top}(c_3) \cdot \rcv_{\bot}(c_3) \cdot \rcv_{\bot}(c_1)
    \nonumber
\end{equation}

If $\gamma_2$ = false and $\gamma_3$ = false, then 
\begin{equation}
    C_j = \rcv_{\top}(c_1) \cdot \rcv_{\bot}(c_2) \cdot \rcv_{\top}(c_2) \cdot \rcv_{\bot}(c_3) \cdot \rcv_{\top}(c_3) \cdot \rcv_{\bot}(c_1)
    \nonumber
\end{equation}

Firstly, it's easy to verify these linear sequences respect program order. 
To argue that they also satisfy the reads-from relation, 
we have the following observations. 
\begin{itemize}
    \item If $x_l$ is assigned to be true, then $\snd^l_{\top}(\ch_1) \trord{\tr} \snd^l_{\bot}(\ch_1)$ and $\snd^l_{\top}(\ch_2) \trord{\tr} \snd^l_{\bot}(\ch_2)$ in $A_{i, l}$ for any $l, i$. 

    \item If $x_l$ is assigned to be false, then $\snd^l_{\bot}(\ch_1) \trord{\tr} \snd^l_{\top}(\ch_1)$ and $\snd^l_{\bot}(\ch_2) \trord{\tr} \snd^l_{\top}(\ch_2)$ in $A_{i, l}$ for any $l, i$. 

    \item The way we assign reads-from relation for $\rcv_{\top}(c_q), \rcv_{\bot}(c_q)$ matches the order we send to $c_q$ in $A_j$ 
    and in $A_j$, 
    $\snd_{\bot}(c_q) \trord{\tr} \snd_{\top}(c_q)$ iff $x_{j_q}$ = false.
\end{itemize}
Therefore, the reads-from relation is also satisfied. 
This proves $\trace$ is indeed a valid concretization and thus the $\vchRf$ problem is consistent. 

\end{proof}



\subsection{Proof for \secref{two-threads-lower-bound-sec}}


\applabel{proof-lower-bound-t=2}
\ovcorrectness*
\begin{proof}
In the following proofs, we refer to the lexicographical order of pairs of indices $\tuple{i,j}$. A pair $\tuple{i_1,j_1}$ is lexicographically before $\tuple{i_2,j_2}$ if $i_1 < i_2$, or, in the case where $i_1 = i_2$, if $j_1 < j_2$. To say that $\tuple{i_1,j_1}$ is lexicographically before $\tuple{i_2,j_2}$, we write $\tuple{i_1, j_1} <_\textit{lex} \tuple{i_2, j_2}$, and we use $\leq_\textit{lex}$ as the reflexive closure of $<_\textit{lex}$. This can be extended to pairs of vectors $\tuple{a_i,b_j} \in A \times B$, referring to the indices of the vectors.

\myparagraph{Proof of correctness (Orthogonal pair $\Rightarrow$ Consistency)}
We first prove that if an orthogonal pair $a_i \in A, b_j \in B$ exists, the resulting total execution ${\AbstractExecution} = \tuple{\eventSet, \po{}, \rf{}}$ is consistent. Let $\tuple{a_{i_1}, b_{j_1}}$ be the lexicographically earliest pair of orthogonal vectors, and let $\tuple{a_{i_2}, b_{j_2}}$ be the lexicographically last pair of orthogonal vectors. We now define a partial order $<_\textit{sat}$ on the events of the total execution based on these vectors. Afterwards, we show that there exists a linearization of $<_\textit{sat}$ that is a well-formed execution.
It is defined by (the transitive closure of):
\begin{enumerate}
    \item $e_1 <_\textit{sat} e_2$ for all $e_1, e_2$ where $\tuple{e_1, e_2} \in (\po{} \cup \rf{})^+$.
    \item $\snd_{a_i}(\alpha) <_\textit{sat} \snd_{b_j}(\alpha)$ for all $\tuple{a_i,b_j} \in A \times B$, where $\tuple{i, j} \leq_\textit{lex} \tuple{i_1, j_1}$.
    \item $\rcv_{a_i}(\ch_k) <_\textit{sat} \rcv_{b_j}(\ch_k)$ for all $\tuple{a_i,b_j} \in A \times B$ and $1 \leq k \leq d$, where both events exist and $\tuple{i,j} <_\textit{lex} \tuple{i_1,j_1}$.
    \item $\rcv_{a_i}(\beta) <_\textit{sat} \rcv_b (\beta)$ for all $i < i_1$.
    \item $\rcv_{b_j}(\alpha) <_\textit{sat} \rcv_{a_i}(\alpha)$ for all $\tuple{a_i,b_j} \in A \times B$, where $\tuple{i_2, j_2} <_\textit{lex} \tuple{i,j}$.
    \item $\snd_{b_j}(\ch_k) <_\textit{sat} \snd_{a_i}(\ch_k)$ for all $\tuple{a_i,b_j} \in A \times B$ and $1 \leq k \leq d$, where both events exist and $\tuple{i_2, j_2} <_\textit{lex} \tuple{i,j}$.
    \item $\snd_b(\beta) <_\textit{sat} \snd_{a_i}(\beta)$ for all $i \geq i_2$.
\end{enumerate}

To verify that this is indeed a (strict) partial order, we need to show that it is asymmetric, or, if we think of the execution as a graph, acyclic. To show this, we first show that if there is a cycle, then there is a cycle of the following form: First a $\po{}$ step in $\thread_1$, then a step from rules (2)-(4), followed by a $\po{}$ step in $\thread_2$, and finally a step from (5)-(7).

It is trivial from the construction that (1) itself does not create a cycle, so a rule from (2)-(7) is needed, but these are all edges between $\thread_1$ and $\thread_2$. Furthermore, a $\rf{} \setminus \po{}$ edge cannot be part of the cycle for the following reason: The only two cross-thread $\rf{}$ edges are on the $\gamma$ and $\delta$ channels. Examining the read of $\gamma$, the only edge from $\thread_2$ to $\thread_1$ that could form a cycle based on this read would have to go from $\rcv_{b_1}(\alpha)$ to $\rcv_{a_1}(\alpha)$, since only rule (5) could apply. But this is impossible, since $\tuple{a_1, b_1}$ is the lexicographically first pair. Next, the read of $\delta$ fails for a similar reason: The only possible cycle caused by this read would be from $\rcv_{a_n}(\ch_k)$ to $\rcv_{b_n}(\ch_k)$ for some $k$, but only rule (3) could apply, which is impossible, since $\tuple{a_n, b_n}$ is lexicographically last.
Therefore, we need a non-$\rf{}$ edge from $\thread_1$ to $\thread_2$ and a non-$\rf{}$ edge from $\thread_2$ to $\thread_1$, which, by inspection, can only come from (2)-(4) and (5)-(7), respectively. We can notice that no edges go both ways on the same combination of event type (read or write) and channel, so we need a $\po{}$ step on both threads. Finally, if there is a cycle with more than four events, it is easy to convince oneself that it is possible to find a subset of four of those events that also form a cycle. 

We can now look at all combinations of rules between (2)-(4) and (5)-(7) to show that none of them can cause a cycle. Two of the cases are not possible due to the source event kind (event type and channel) of the (2)-(4) edge only ever appearing before the destination event kind of the (5)-(7) edge in the construction. This is the case for (2) and (5) as well as (2) and (7). Similarly, sometimes the destination event kind of the (2)-(4) edge only appears after the source event kind of the (5)-(7) edge. These pairs are (3) and (6), (4) and (6), and (4) and (7). We look at the remaining pairs below:
\begin{description}
    \item[(2) and (6).] We must have $\snd_{a_i}(\alpha) <_\textit{sat} \snd_{b_j}(\alpha)$ and $\snd_{b_{j'}}(\ch_k) <_\textit{sat} \snd_{a_{i'}}(\ch_k)$ for some $i$, $i'$, $j$, $j'$, and $k$. Due to the $\po{}$ ordering, we have $i' \leq i$ and $j' \leq j$. This contradicts $\tuple{i,j} \leq_\textit{lex} \tuple{i_1,j_1} \leq_\textit{lex} \tuple{i_2,j_2} <_\textit{lex} \tuple{i', j'}$.
    \item[(3) and (5).] We have $\rcv_{a_i}(\ch_k) <_\textit{sat} \rcv_{b_j}(\ch_k)$ and $\rcv_{b_{j'}}(\alpha) <_\textit{sat} \rcv_{a_{i'}}(\alpha)$ for some $i$, $i'$, $j$, $j'$, and $k$, where $i' \leq i$ and $j' \leq j + 1$ by the $\po{}$ ordering. This contradicts $\tuple{i,j} <_\textit{lex} \tuple{i_1,j_1} \leq_\textit{lex} \tuple{i_2,j_2} <_\textit{lex} \tuple{i', j'}$, since $\tuple{i', j'}$ can at most be $\tuple{i, j+1}$, but there has to be a pair between $\tuple{i,j}$ and $\tuple{i',j'}$.
    \item[(3) and (7).] We have $\rcv_{a_i}(\ch_k) <_\textit{sat} \rcv_{b_j}(\ch_k)$ and $\snd_b(\beta) <_\textit{sat} \snd_{a_{i'}}(\beta)$ for $j = n$ and some $i$, $i'$, and $k$, where $i_2 \leq i' \leq i$ by $\po{}$. This contradicts $\tuple{i, n} = \tuple{i, j} <_\textit{lex} \tuple{i_1, j_1}$.
    \item[(4) and (5).] Finally, we have $\rcv_{a_i}(\beta) <_\textit{sat} \rcv_b (\beta)$ and $\rcv_{b_j}(\alpha) <_\textit{sat} \rcv_{a_{i'}}(\alpha)$ for some $i$, $i'$, $j$, and $k$, where $i' - 1 \leq i < i_1$ by $\po{}$, and thereby $i' \leq i_1$. This contradicts $\tuple{i_2, j_2} <_\textit{lex} \tuple{i', j}$, unless $i' = i_1 = i_2$ and $j > j_2$. But the $\po{}$ ordering of $\rcv_b (\beta)$ before $\rcv_{b_j}(\alpha)$ means that $j=1$, and thus, $j \leq j_2$. 
\end{description}

With this, we have shown that $<_\textit{sat}$ is indeed a partial order. What remains is to show that there is a linearization $\tr$ of $<_\textit{sat}$ (respecting ${\AbstractExecution}$) that is a well-formed execution.
To do so, we first show that $<_\textit{sat}$ is \emph{saturated}, i.e. $\snd_1(\ch) <_\textit{sat} \snd_2(\ch)$ iff $\rcv_1(\ch) <_\textit{sat} \rcv_2(\ch)$. The method will be to look at each rule in turn and verify the property for all events ordered by this rule.
Note that the property is easy to verify if the two events are on the same thread, since, except for $\gamma$ and $\delta$ (which are trivial), all reads are on the same channels as the writes.
Therefore, we look at rule ($i$), ordering $e_1 <_\textit{sat} e_2$ across both threads, and consider all pairs of events where the first is $\po{}$ before $e_1$ (including $e_1$) and the second is $\po{}$ after $e_2$ (including $e_2$).
The reason we do not have to consider e.g. other events $(<_\textit{sat} \setminus\ \po{})$-before $e_1$ is that such events on the same thread as $e_2$, and vice versa for events after $e_2$.

\begin{enumerate}
    \item Per the reasoning above, it is sufficient to consider $\tuple{e_1, e_2} \in (\rf{} \setminus \po{})$. Considering $(\snd(\gamma) <_\textit{sat} \rcv(\gamma))$ first, we can see that $\rcv_{a_1}(\alpha)$ and $\rcv_{b_1}(\alpha)$ are ordered, so we have to show that $\snd_{a_1}(\alpha) <_\textit{sat} \snd_{b_1}(\alpha)$. This follows immediately from rule (2). Next, we consider $(\snd(\delta) <_\textit{sat} \rcv(\delta))$, which orders $\rcv_{b_n}(\ch_k)$ before $\rcv_{a_n}(\ch_k)$ for all $k$ where both events exist. If there is such a $k$, $a_n$ and $b_n$ must not be orthogonal. Thus, $\tuple{i_2, j_2} <_\textit{lex} \tuple{n,n}$, and the rest follows from rule (6).
    
    \item $\snd_{a_i}(\alpha) <_\textit{sat} \snd_{b_j}(\alpha)$: We have to show $\rcv_{a_i}(\alpha) <_\textit{sat} \rcv_{b_j}(\alpha)$. We look at three cases for the value of $\tuple{i,j}$: $\tuple{1, 1}$, $\tuple{i',1}$, and $\tuple{i, j'}$, where $i' \neq 1$ and $j' \neq 1$.
    \begin{itemize}
        \item[$\tuple{1,1}$] Follows immediately from the read of $\gamma$.
        \item[$\tuple{i',1}$] It follows from $\tuple{i', 1} \leq_\textit{lex} \tuple{i_1, j_1}$ that $i'-1 < i_1$. The ordering then follows transitively from rule (4).
        \item[$\tuple{i,j'}$] From $\tuple{i,j'} \leq_\textit{lex} \tuple{i_1, j_1}$ we get that $\tuple{i, j'-1} <_\textit{lex} \tuple{i_1, j_1}$, which, by transitivity, gives the correct ordering from rule (3).
    \end{itemize}
    We also have to consider anything $\po{}$-before $\snd_{a_i}(\alpha)$ against anything $\po{}$-after $\snd_{b_j}(\alpha)$. This may include other writes to $\alpha$, but these cases are already covered transitively by what we have shown. For writes to $\ch_k$ we still need to show that $\rcv_{a_i}(\ch_k) <_\textit{sat} \rcv_{b_j}(\ch_k)$ (other such writes less than $\tuple{i,j}$ are covered transitively). The fact that $\snd_{a_i}(\ch_k)$ and $\snd_{b_j}(\ch_k)$ both exist means that $a_i$ and $b_j$ are not orthogonal, which means that $\tuple{i_1,j_1} \neq \tuple{i,j}$. Together with $\tuple{i,j} \leq_\textit{lex} \tuple{i_1,j_1}$, this means that rule (3) applies, and we are done.
    
    \item $\rcv_{a_i}(\ch_k) <_\textit{sat} \rcv_{b_j}(\ch_k)$: We must show $\snd_{a_i}(\ch_k) <_\textit{sat} \snd_{b_j}(\ch_k)$. Rule (2) can be applied immediately, which transitively gives the right ordering. 
    Reads of $\alpha$ and reads/writes for $\beta$ could also be ordered by this rule. For $\alpha$, we have to show $\snd_{a_i}(\alpha) <_\textit{sat} \snd_{b_{j+1}}(\alpha)$ (for $j < n$). This follows from rule (2), since $\tuple{i,j} <_\textit{lex} \tuple{i_1, j_1}$, so $\tuple{i,j+1} \leq_\textit{lex} \tuple{i_1, j_1}$.
    For reads of $\beta$ we have to show $\snd_{a_{i-1}}(\beta) <_\textit{sat} \snd_{b}(\beta)$ (for $i > 1$). It holds that $\tuple{i-1,n} <_\textit{lex} \tuple{i,j} <_\textit{lex} \tuple{i_1, j_1}$, so we can apply rule (3) to get $\rcv_{a_{i-1}}(\ch_k) <_\textit{sat} \rcv_{b_n}(\ch_k)$, from which the ordering follows transitively. Finally, for writes to $\beta$ (where $i > 1$ and $j = n$), we have to show $\rcv_{a_i}(\beta) <_\textit{sat} \rcv_b(\beta)$. $\tuple{i,n} = \tuple{i,j} <_\textit{lex} \tuple{i_1,j_1}$ implies that $i < i_1$, so rule (4) gives us the ordering.

    \item $\rcv_{a_i}(\beta) <_\textit{sat} \rcv_b(\beta)$: We have to show $\snd_{a_i}(\beta) <_\textit{sat} \snd_{b}(\beta)$. From $i < i_1$ we get $\tuple{i,n} <_\textit{lex} \tuple{i_1, i_2}$, so the ordering follows transitively from (3). This also transitively orders $\rcv_{a_{i+1}}(\alpha)$ with $\rcv_{b_1}(\alpha)$ (for $i < n$), so we show $\snd_{a_{i+1}}(\alpha) <_\textit{sat} \snd_{b_1}(\alpha)$. Since $i+1 \leq i_1$, $\tuple{i+1,1} \leq_\textit{lex} \tuple{i_1, j_1}$, which means that rule (2) applies.

    \item $\rcv_{b_j}(\alpha) <_\textit{sat} \rcv_{a_i}(\alpha)$: We must show $\snd_{b_j}(\alpha) <_\textit{sat} \snd_{a_i}(\alpha)$. Since $\tuple{i_2, j_2} <_\textit{lex} \tuple{i,j}$, $a_i$ and $b_j$ are not orthogonal, which means that $\snd_{b_j}(\ch_k)$ and $\snd_{a_i}(\ch_k)$ exist for some $k$. Furthermore, these are ordered by rule (6), which transitively orders the $\alpha$ writes.
    Orderings between reads of $\alpha$ can transitively order reads of $\ch_k$ (for some $k$) as well as reads/writes for $\beta$.
    For $\ch_k$, we must show $\snd_{b_{j-1}}(\ch_k) <_\textit{sat} \snd_{a_i}(\ch_k)$ (for $j > 1$). Since $a_i$ and $b_{j-1}$ have reads to $k$ in common, they are not orthogonal. From this, along with $\tuple{i,j-1}$ being the pair just before $\tuple{i,j}$, we get $\tuple{i_2,j_2} <_\textit{lex} \tuple{i, j-1}$, which means rule (6) applies.
    Looking now at $\rcv_b(\beta) <_\textit{sat} \rcv_{a_{i-1}}(\beta)$ (for $j = 1$ and $i > 1$), the ordering on writes follows from rule (7), since $\tuple{i_2, j_2} <_\textit{lex} \tuple{i,1}$ and, hence, $i - 1 \geq i_2$.
    Finally, writes to $\beta$ can also be ordered, so we must show $\rcv_b(\beta) <_\textit{sat} \rcv_{a_i}(\beta)$ (for $i < n$). We instead show $\rcv_{b_1}(\alpha) <_\textit{sat} \rcv_{a_{i+1}}(\alpha)$, from which the ordering follows transitively. This follows by rule (5), since $\tuple{i_2, j_2} <_\textit{lex} \tuple{i,j} <_\textit{lex} \tuple{i+1, 1}$.

    \item $\snd_{b_j}(\ch_k) <_\textit{sat} \snd_{a_i}(\ch_k)$: We first show $\rcv_{b_j}(\ch_k) <_\textit{sat} \rcv_{a_i}(\ch_k)$. There are three cases for $\tuple{i,j}$: $\tuple{n,n}$, $\tuple{i',n}$ and $\tuple{i,j'}$, where $i' \neq n$ and $j' \neq n$.
    \begin{itemize}
        \item[$\tuple{n,n}$] Follows immediately from the read of $\delta$.
        \item[$\tuple{i',n}$] It follows from $\tuple{i_2,j_2} <_\textit{lex} \tuple{i',n}$ that $i' \geq i_2$. From here, we apply rule (7) and get the ordering transitively.
        \item[$\tuple{i,j'}$] We have $\tuple{i_2,j_2} <_\textit{lex} \tuple{i,j'} <_\textit{lex} \tuple{i,j'+1}$, which means rule (5) applies, transitively ordering the reads to $\ch_k$.
    \end{itemize}
    The ordering of writes to $\ch_k$ may also order writes to $\alpha$, but $\rcv_{b_j}(\alpha) <_\textit{sat} \rcv_{a_i}(\alpha)$ follows immediately from rule (5).

    \item $\snd_b(\beta) <_\textit{sat} \snd_{a_i}(\beta)$: We must show that $\rcv_b(\beta) <_\textit{sat} \rcv_{a_i}(\beta)$, which we do by showing that $\rcv_{b_1}(\alpha) <_\textit{sat} \rcv_{a_{i+1}}(\alpha)$.
    From $i \geq i_2$ it follows that $\tuple{i_2, j_2} <_\textit{lex} \tuple{i+1,1}$, which means that rule (5) can be applied to get the desired ordering. A read $\rcv_{b_n}(\ch_k)$ could be ordered before another read $\rcv_{a_i}(\ch_k)$, so we show that $\snd_{b_n}(\ch_k) <_\textit{sat} \snd_{a_i}(\ch_k)$.
    It follows from $i \geq i_2$ that $\tuple{i_2, j_2} \leq_\textit{lex} \tuple{i, n}$. But the existence of $\rcv_{b_n}(\ch_k)$ and $\rcv_{a_i}(\ch_k)$ means that $a_i$ and $b_n$ are not orthogonal, so $\tuple{i_2, j_2} <_\textit{lex} \tuple{i, n}$, and rule (6) applies.
\end{enumerate}

Define $\tr$ as follows: Given two events $e_1, e_2 \in S$, order $e_1$ before $e_2$ if $e_1 <_\textit{sat} e_2$ and vice versa, otherwise order the event from $\thread_1$ first (if they are on the same thread, they are ordered by $<_\textit{sat}$). This is a total order because it is the order you get by greedily picking events from $\thread_1$ as long as no unpicked event from $\thread_2$ is $<_\textit{sat}$-before.

We have to show that for each channel $\ch \in \channels{\tr}$ the $i$-th read of $\proj{\tr}{\ch}$ reads the $i$-th write. To do so, we show the equivalent property that, if $\snd_1(\ch) \stricttrord{\tr} \snd_2(\ch)$, then $\rcv_1(\ch) \stricttrord{\tr} \rcv_2(\ch)$.
Let $\snd_1, \snd_2 \in \proj{\tr}{\snd(\ch)}$ be two writes such that $\snd_1 \stricttrord{\tr} \snd_2$. If $\snd_1 <_\textit{sat} \snd_2$, then $\rcv_1 <_\textit{sat} \rcv_2$ and thus $\rcv_1 \stricttrord{\tr} \rcv_2$, since $<_\textit{sat}$ is saturated.
If $\snd_1 \not<_\textit{sat} \snd_2$, assume for contradiction that $\rcv_1$ and $\rcv_2$ are ordered by $<_\textit{sat}$. Either ordering ($\rcv_1 <_\textit{sat} \rcv_2$ or $\rcv_2 <_\textit{sat} \rcv_1$) would imply that $\snd_1$ and $\snd_2$ are also ordered, since $<_\textit{sat}$ is saturated. 
In the case of $\snd_1 <_\textit{sat} \snd_2$ this directly contradicts the premise, and $\snd_2 <_\textit{sat} \snd_1$ contradicts $\snd_1 \stricttrord{\tr} \snd_2$. Hence, both the reads and the writes are unordered by $<_\textit{sat}$.
In all channels except $\gamma$ and $\delta$ the reads are on the same threads as the corresponding writes, thus they will be ordered the same in $\tr$. Lastly, the $\gamma$ and $\delta$ channels only have one write each, so the property is trivial for these.

\myparagraph{Proof of correctness (Consistency $\Rightarrow$ Orthogonal pair)}
Next, we prove that if the total execution is consistent, there is an orthogonal pair. To show this, we prove the contra-positive statement: If there are no orthogonal pairs, the total execution is not consistent. More specifically, we will show that the lack of an orthogonal pair leads to the derivation of a cyclic ordering between $\snd_{a_n}(\alpha)$ and $\snd_{b_n}(\alpha)$ by saturation.

To show one direction, we prove by induction that $\snd_{a_i}(\alpha)$ is ordered before $\snd_{b_j}(\alpha)$ for all $i$ and $j$ less than $n$. The induction is in the lexicographical order of $\tuple{i,j}$, i.e. $\tuple{1,1}$ is the base case, and the next element after $\tuple{i,j}$ is either $\tuple{i,j+1}$ if $j \neq n$ or $\tuple{i+1,1}$ if $j = n$.
\begin{itemize}
    \item \textbf{Base Case:} Initially, $\rcv_{a_1}(\alpha)$ is ordered before $\rcv_{b_1}(\alpha)$ through the read of $\gamma$, which orders $\snd_{a_1}(\alpha)$ before $\snd_{b_1}(\alpha)$ by saturation.
    \item \textbf{Induction:} We prove that $\snd_{a_i}(\alpha)$ is ordered before $\snd_{b_j}(\alpha)$ (for $\tuple{i,j} \neq \tuple{1,1}$). We do case distinction on (1) if $j = 1$ or (2) if $j \neq 1$.
    \begin{enumerate}
        \item If $j = 1$, the induction hypothesis states that $\snd_{a_{i-1}}(\alpha)$ is ordered before $\snd_{b_n}(\alpha)$. Since $a_{i-1}$ and $b_n$ are not orthogonal, they must both have a 1 in some dimension $k$, and there are therefore writes $\snd_{a_{i-1}}(\ch_k)$ and $\snd_{b_n}(\ch_k)$.
        By the induction hypothesis, these writes are ordered, which, by saturation, orders $\rcv_{a_{i-1}}(\ch_k)$ before $\rcv_{b_n}(\ch_k)$.
        This orders $\snd_{a_{i-1}}(\beta)$ before $\snd_{b}(\beta)$. Applying saturation, the reads of $\alpha$ for $a_i$ and $b_1$ are thus ordered.
        By a final application of saturation, $\snd_{a_{i}}(\alpha)$ is thus ordered before $\snd_{b_1}(\alpha)$.
        \item In the case of $j \neq 1$, the induction hypothesis states that $\snd_{a_i}(\alpha)$ is ordered before $\snd_{b_{j-1}}(\alpha)$. As before, $a_i$ and $b_{j-1}$ are not orthogonal, so for some $k$, $\snd_{a_i}(\ch_k)$ and $\snd_{b_{j-1}}(\ch_k)$ exist and are ordered.
        By saturation, $\rcv_{a_i}(\ch_k)$ is ordered before $\snd_{b_{j-1}}(\ch_k)$, which orders $\rcv_{a_i}(\alpha)$ before $\rcv_{b_j}(\alpha)$. A final application of saturation then gives the desired ordering.
    \end{enumerate} 
\end{itemize}

The last thing to show is that $\snd_{b_n}(\alpha)$ is ordered before $\snd_{a_n}(\alpha)$. The read of $\delta$ orders $\rcv_{b_n}(\ch_k)$ before $\rcv_{a_n}(\ch_k)$ for some $k$ (both of which exist, since $a_n$ and $b_n$ are not orthogonal). The proof is then concluded by an application of saturation.

\end{proof}

%


\section{Details in evaluation}
\applabel{evaluation-details}
\subsection{$\smt$ encodings}
\applabel{smt-encoding}
Given a $\vchRf$ input $\tuple{\AbstractExecution, \cpFunc, \rf{}}$, 
where ${\AbstractExecution} = \tuple{\eventSet, \po{}}$. 
We encode a $\smt$ formula $\psi$, s.t. $\psi$ is satisfiable iff $\AbstractExecution$ is consistent. 

We first discuss the variables in $\psi$. 
For each event $e \in \eventSet$, we allocate an integer variable $0 \le x_e \le \NumEvents-1$, where $\NumEvents$ is the total number of events in $\AbstractExecution$. 
$x_e$ indicates the position of $e$ in a possible concretization. 
More over, for each channel $\ch$, 
we allocate $2\NumEvents + 2$ variables $y^{\ch}_{\snd, i}$ and $y^{\ch}_{\rcv, i}$, where $0 \le i \le n$. 
Intuitively, $y^{\ch}_{\snd, i}$ and $y^{\ch}_{\rcv, i}$ stands for the total number of send / receive events to $\ch$ at the prefix with $i$ events of a possible concretization. 

Now we discuss the content of $\psi$. 
At a high level, $\psi$ can be decomposed into several components. 
\begin{equation}
    \psi = \psi_{unique} \wedge \psi_{porf} \wedge \psi_{\fifo} \wedge \psi_{cap}
    \nonumber
\end{equation}
where $\psi_{unique}$ ensures for all $e \in \eventSet$, $x_e$ is between $0$ and $\NumEvents - 1$, and $x_e$ is unique among all events (i.e. if $e \neq e'$, then $x_e \neq x_{e'}$). 
$\psi_{porf}$ ensures program order and reads-from relation. 
$\psi_{\fifo}$ ensures the $\fifo$ property of channels. 
$\psi_{cap}$ ensures capacity constraints of channels. 

$\psi_{unique}$ is of the following form.
\begin{equation}
    \psi_{unique} = (\bigwedge_{e \in \eventSet}{0 \le x_e \le \NumEvents - 1}) \wedge (\bigwedge_{e, e' \in \eventSet, \; e \neq e'}{x_e \neq x_{e'}})
    \nonumber
\end{equation}

Recall that we use $\sucr{}{e}$ to denote the immediate thread successor of $e$. 
$\psi_{porf} = \psi^{po}_{porf} \wedge \psi^{rf-sync}_{porf} \wedge \psi^{rf-async}_{porf}$, where 
\begin{equation}
    \psi^{po}_{porf} = \bigwedge_{e \in \eventSet, \; \sucr{}{e} \neq \bot}{x_e < x_{\sucr{}{e}}}
    \nonumber
\end{equation}
\begin{equation}
    \psi^{rf\text{-}sync}_{porf} = \bigwedge_{(\snd(\ch), \rcv(\ch) \in \rf{}, \cpFunc(\ch) = 0}{x_{\snd(\ch)} + 1 = x_{\rcv(\ch)}}
    \nonumber
\end{equation}

\begin{equation}
    \psi^{rf\text{-}async}_{porf} = \bigwedge_{(\snd(\ch), \rcv(\ch) \in \rf{}, \cpFunc(\ch) > 0}{x_{\snd(\ch)} < x_{\rcv(\ch)}}
    \nonumber
\end{equation}
$\psi^{po}_{porf}$ ensures program order. 
$\psi^{rf\text{-}sync}_{porf}$ requires that for any synchronous channel $\ch$, all receive events should be immediately after its matching send event. 
$\psi^{rf\text{-}async}_{porf}$ requires that for any asynchronous channel $\ch$, all receive events should be after its matching send event, but there can be some other events in between. 

$\psi_{\fifo} = \psi^{matched}_{\fifo} \wedge \psi^{unmatched}_{\fifo}$, where  
\begin{equation}
    \psi^{matched}_{\fifo} = \bigwedge_{\ch \in \Channels}{\bigwedge_{
    \scriptsize
    \begin{aligned}
    \begin{array}{c}
    (\snd_1(\ch), \rcv_1(\ch)) \in \rf{} \\
    (\snd_2(\ch), \rcv_2(\ch)) \in \rf{} \\ 
    \snd_1(\ch) \neq \snd_2(\ch)
    \end{array}
    \end{aligned}
    }
    {
    \begin{aligned}
    \begin{array}{c}
    (x_{\snd_1(\ch)} < x_{\snd_2(\ch)} \wedge x_{\rcv_1(\ch)} < x_{\rcv_2(\ch)}) \vee \\
    (x_{\snd_1(\ch)} > x_{\snd_2(\ch)} \wedge x_{\rcv_1(\ch)} > x_{\rcv_2(\ch)})
    \end{array}
    \end{aligned}
    }}
    \nonumber
\end{equation}

\begin{equation}
    \psi^{unmatched}_{\fifo} = \bigwedge_{\ch \in \Channels}{\bigwedge_{
    \scriptsize
    \begin{aligned}
    \begin{array}{c}
    (\snd_1(\ch), \rcv_1(\ch)) \in \rf{} \\
    \snd_2(\ch) \; is \; unmatched
    \end{array}
    \end{aligned}
    }
    {
    \begin{aligned}
    \begin{array}{c}
    x_{\snd_1(\ch)} < x_{\snd_2(\ch)}
    \end{array}
    \end{aligned}
    }}
    \nonumber
\end{equation}
In other words, $\psi^{matched}_{\fifo}$ requires that for every channel $\ch$, for every two distinct send/receive pairs $(\snd_1(\ch), \rcv_1(\ch)), (\snd_2(\ch), \rcv_2(\ch))$, 
either $\snd_1(\ch)$ is before $\snd_2(\ch)$ and $\rcv_1(\ch)$ is before $\rcv_2(\ch)$, 
or $\snd_1(\ch)$ is after $\snd_2(\ch)$ and $\rcv_1(\ch)$ is after $\rcv_2(\ch)$. 
This encoding exactly captures the $\fifo$ property of channels. 
Moreover $\psi^{unmatched}_{\fifo}$ requires that for any channel $\ch$, all unmatched sends to $\ch$ should be ordered after the matched sends to $\ch$.

$\psi_{cap} = \psi^{cap}_{cap} \wedge \psi^{\snd}_{cap} \wedge \psi^{\rcv}_{cap}$, 
where 
\begin{equation}
    \psi^{cap}_{cap} = \bigwedge_{0 \le i \le n, \; \ch \in \Channels}{y^{\rcv}_{i} \le y^{\snd}_{i} \le y^{\rcv}_{i} + \cpFunc(\ch)}
    \nonumber
\end{equation}

\begin{equation}
    \psi^{\snd}_{cap} = (\bigwedge_{\ch \in \Channels}{y^{\ch}_{\snd, 0} = 0}) \wedge (\bigwedge_{0 \le i \le n-1, \ch \in \Channels}{
    \begin{aligned}
    \begin{array}{c}
    ((\exists \snd(\ch) \in \eventSet, x_{
    \snd(\ch)} = i) \wedge y^{\ch}_{\snd, i} + 1 = y^{\ch}_{\snd, i+1}) \vee \\
    ((\nexists \snd(\ch) \in \eventSet, x_{\snd(\ch)} = i) \wedge y^{\ch}_{\snd, i} = y^{\ch}_{\snd, i+1})
    \end{array}
    \end{aligned}
    })
    \nonumber
\end{equation}

\begin{equation}
    \psi^{\rcv}_{cap} = (\bigwedge_{\ch \in \Channels}{y^{\ch}_{\rcv, 0} = 0}) \wedge (\bigwedge_{0 \le i \le n-1, \ch \in \Channels}{
    \begin{aligned}
    \begin{array}{c}
    ((\exists \rcv(\ch) \in \eventSet, x_{
    \rcv(\ch)} = i) \wedge y^{\ch}_{\rcv, i} + 1 = y^{\ch}_{\rcv, i+1}) \vee \\
    ((\nexists \rcv(\ch) \in \eventSet, x_{\rcv(\ch)} = i) \wedge y^{\ch}_{\rcv, i} = y^{\ch}_{\rcv, i+1})
    \end{array}
    \end{aligned}
    })
    \nonumber
\end{equation}

$\psi^{cap}_{cap}$ explicitly encodes the capacity constraints, 
i.e., at any prefix of a possible concretization, for any channel $\ch$, we require $num_{\rcv(\ch)} \le num_{\snd(\ch)} \le num_{\rcv(\ch)} + \cpFunc(\ch)$, 
where $num_{\rcv(\ch)}$ and $num_{\snd(\ch)}$ denote the number of receive/send events to $\ch$ in this prefix. 
$\psi^{\snd}_{cap}$ poses constraints on $y^{\snd}_{cap}$, 
where we require (1) $y^{\ch}_{\snd, 0}$ equals 0 for the prefix with no events, (2) $y^{\ch}_{\snd, i} + 1= y^{\ch}_{\snd, i+1}$, if there exists a send to $\ch$ whose position is $i$, and (3) $y^{\ch}_{\snd, i} = y^{\ch}_{\snd, i+1}$, if no send event to $\ch$ is located at position $i$. 
Similarly encoding $\psi^{\rcv}_{cap}$ are also applied to $y^{\ch}_{\rcv, i}$. 

Now we show $\psi$ is satisfiable iff $\tuple{\AbstractExecution, \cpFunc, \rf{}}$ is consistent.

\myparagraph{Satisfiability $\Rightarrow$ consistency}
Assuming $\psi$ is satisfiable, then each event $e \in \eventSet$ must have been assigned a unique index $0 \le x_e \le \NumEvents - 1$, as required by $\psi_{unique}$. 
We claim that a valid concretization $\trace$ of $\AbstractExecution$ can be obtained by ordering all events by their assigned integer variable, i.e. the $i$-th event in $\trace$ is the event $e$, s.t. $x_e = i$. 
Clearly, $\trace$ satisfies program order, as in $\psi_{porf}$, we require every event to be ordered before their immediate thread successors. 
Secondly, $\trace$ satisfies the capacity constraints for channels. 
Following the encoding of $\psi^{\snd}_{cap}$ and $\psi^{\rcv}_{cap}$, for any channel $\ch$, 
$y^{\ch}_{\snd, i}$ and $y^{\ch}_{\rcv, i}$ represent the number of send/receive events to $\ch$ in the prefix $\pi$ of $\trace$, where $\pi$ contains $i$ events. 
Then $\psi^{cap}_{cap}$ explicitly ensures the capacity constraints.
Lastly, we show $\rf{\trace} = \rf{}$. 
For synchronous channels, the receive event is immediately after its matching send event, as required by $\psi^{rf\text{-}sync}_{porf}$. 
Therefore, the reads-from constraints is satisfied. 
For asynchronous channels, $\psi^{rf\text{-}async}_{porf}$ guarantees every receive event is after its matching send event. 
Moreover, $\psi^{unmatched}_{\fifo}$ ensures for any channel $\ch$, every unmatched send to $\ch$ is ordered after every match send event. 
Finally, $\psi^{matched}_{\fifo}$ explicitly encodes the $\fifo$ property of all send/receive pairs for all channels. 
Therefore, we conclude $\trace$ is indeed a valid concretization. 

\myparagraph{Consistency $\Rightarrow$ satisfiability}
Now we show if $\tuple{\AbstractExecution, \cpFunc, \rf{}}$ is consistent, then $\psi$ is satisfiable. 
This direction is easier. 
We take an arbitrary valid concretization $\trace$ of $\AbstractExecution$. 
Based on $\trace$, we assign $x_e = i$ iff $e$ is the $i$-th event in $\trace$. 
Moreover, we assign $y^{\ch}_{\snd, i} = j$ ($y^{\ch}_{\rcv, i} = j$) iff there are $j$ send (receive) events to $\ch$ in the prefix $\pi$ of $\trace$, 
where $\pi$ contains $i$ events. 
Following the definition of valid concretization, $\psi$ is obviously satisfied. 

\subsection{Statistics of consistent/mutated instances}
\applabel{pos-statistics}
For each consistent instance, 
we report the instance name (instance), consistency (cc), event number ($\NumEvents$), thread number ($\NumThreads$), channel number ($\NumChannels$), maximal capacity ($\MaxCapacity$) as well as the running time of each algorithm. 
For algorithm that times out or run out of memory, the running time is set to be 10800s (the timeout value). 
The details of statistics can be found via an artifact uploaded to  \href{https://zenodo.org/records/17322633}{Zenodo}. 


\end{document}